\definecolor{darkgreen}{rgb}{0.0, 0.5, 0.0}
\numberwithin{equation}{section}
\renewcommand{\c}{\cdot}
\newcommand{\UU}{\mathcal{U}}
\def\pr{\partial}
\def\ub{{\underline{u}}}
\newcommand{\omb}{\underline{\omega}}
\newcommand{\etab}{\underline{\eta}}
\def\sld{\slashed{d}}
\def\LL{\mathcal{L}}
\def\Fl{F_l}
\def\Fr{F_r}
\def\phit{\widetilde{\phi}}
\def\Phit{\widetilde{\Phi}}
\def\AAt{\widetilde{A}}
\def\Eg{\mathfrak{E}}
\def\Ssl{\slashed{\S}}
\def\Sslh{\widehat{\Ssl}}
\def\trchbt{\widetilde{\trchb}}
\def\Tk{\mathfrak{T}}
\def\T{\mathbf{T}}
\def\Tr{\mathbf{Tr}}
\def\FF{\mathcal{F}}
\def\FFb{\underline{\FF}}
\def\bbb{{\underline{b}}}
\def\Bg{\mathfrak{B}}
\def\W{\mathbf{W}}
\def\Egt{\widetilde{\Eg}}
\def\Bgt{\widetilde{\Bg}}
\def\AA{\mathcal{A}}
\def\BB{\mathcal{B}}
\def\bom{{\boldsymbol{\om}}}
\def\ins{\slashed{\in}}
\def\bu{\bullet}
\def\etabf{\bm{\eta}}
\def\psit{\widetilde{\psi}}
\def\slg{\slashed{g}}
\def\sdivs{\slashed{\sdiv}}
\def\curls{\slashed{\curl}}
\def\ik{\mathfrak{i}}
\def\af{a^\frac{1}{2}}
\def\RRR{\mathbb{R}}
\def\SSS{\mathbb{S}}
\def\S{\mathbf{S}}
\def\Hb{\underline{H}}
\def\aa{{\underline{\a}}}
\def\sk{\mathfrak{s}}
\def\th{\theta}
\def\cb{\mathbf{c}}
\def\At{\widetilde{\AA}}
\def\ev{\vec{e}}
\def\slJ{\widehat{J}}
\def\NNN{\mathbb{N}}
\def\dual{{}^*}
\def\Et{\widetilde{E}}
\def\Bt{\widetilde{B}}
\def\go{\mathring{g}}
\def\ko{\mathring{k}}
\def\Mt{\widetilde{M}}
\def\Nt{\widetilde{N}}
\def\bF{{}^{(F)}\hspace{-1.5pt}\b}
\def\bbF{{}^{(F)}\hspace{-1.5pt}\bb}
\def\rhoF{{}^{(F)}\hspace{-1.5pt}\rho}
\def\siF{{}^{(F)}\hspace{-1.5pt}\si}
\def\nabs{\slashed{\nab}}
\def\Lt{\widetilde{L}}
\def\Kt{\widetilde{K}}
\def\dag{\dagger}
\def\afd{a^{-\frac{1}{2}}}
\def\trcht{\widetilde{\trch}}
\def\Rl{R_{l}}
\def\Rr{R_{r}}
\def\Rkb{{\underline{\Rk}}}
\def\slg{{\slashed{g}}}
\def\afd{a^{-\frac{1}{2}}}
\def\XX{\mathcal{X}}
\def\Xt{\widetilde{X}}
\def\YY{\mathcal{Y}}
\def\Xb{\ov{X}}
\def\slgc{\widecheck{\slg}}
\def\ins{\slashed{\in}}
\def\Rk{\mathfrak{R}}
\def\Ok{\mathfrak{O}}
\def\cuvss{{H_{u_0}^{(\de,\ub)}}}
\def\ucuvss{{\Hb_\de^{(u_0,u)}}}
\newcommand{\cuvs}{{H_u^{(\de,\ub)}}}
\newcommand{\ucuvs}{{\Hb_\ub^{(u_0,u)}}}
\newcommand{\M}{\mathcal{M}}
\newcommand{\D}{\mathbf{D}}
\newcommand{\dk}{{\mathfrak{d}}}
\DeclareMathOperator{\sRic}{Ric}
\newcommand{\g}{\mathbf{g}}
\newcommand{\F}{\mathbf{F}}
\newcommand{\R}{{\mathbf{R}}}
\newcommand{\K}{\mathbf{K}}
\def\xja{\langle x\rangle}
\def\kt{\widetilde{k}}
\def\ep{\varepsilon}
\def\eps{\epsilon}
\def\ins{\slashed{\in}}
\def\hot{\widehat{\otimes}}
\def\La{\Lambda}
\def\Lb{\underline{L}}
\def\la{\lambda}
\def\MM{\mathcal{M}}
\def\om{\omega}
\def\ze{\zeta}
\def\Om{\Omega}
\def\bb{{\underline{\b}}}
\def\xb{\mathbf{x}}
\def\cb{\mathbf{c}}
\def\Si{\Sigma}
\def\si{\sigma}
\def\bnab{\bm{\nab}}
\def\dkb{\slashed{\dk}}
\def\ga{\gamma}
\def\Ga{\Gamma}
\def\xib{\underline{\xi}}
\def\a{\alpha}
\def\b{\beta}
\def\hch{\widehat{\chi}}
\def\hchb{\widehat{\chib}}
\def\trch{\tr\chi}
\def\trchb{\tr\chib}
\def\trchc{\widecheck{\trch}}
\def\trchbc{\widecheck{\trchb}}
\def\ka{\kappa}
\def\de{\delta}
\def\De{\Delta}
\def\nab{\nabla}
\def\ov{\overline}
\def\ho{\mathring{h}}
\def\pio{\mathring{\pi}}
\def\les{\lesssim}
\def\slA{\slashed{A}}
\newcommand{\blue}{\textcolor{blue}}
\def\f{\mathbf{f}}
\DeclareMathOperator{\curl}{curl}
\DeclareMathOperator{\grad}{grad}
\DeclareMathOperator{\supp}{supp}
\DeclareMathOperator{\tr}{tr}
\DeclareMathOperator{\sdiv}{div}
\def\Qo{\mathring{Q}}
\def\bdiv{\mathbf{Div}}
\def\Ric{\mathbf{Ric}}
\def\ef{\mathfrak{e}}
\def\A{\mathbf{A}}
\def\Asl{\slashed{A}}
\def\hb{\ov{h}}
\def\Xo{\mathring{X}}
\def\hti{\widetilde{h}}
\def\pit{\widetilde{\pi}}
\def\slep{\slashed{\ep}}
\def\gt{\widetilde{g}}
\def\QQ{\mathcal{Q}}
\def\E{\mathbf{E}}
\def\P{\mathbf{P}}
\def\C{\mathbf{C}}
\def\J{\mathbf{J}}
\def\Q{\mathbf{Q}}
\def\V{\mathbf{V}}
\def\chib{{\underline{\chi}}}
\def\T{\mathbf{T}}
\def\Eo{{\mathring{E}}}
\def\Bo{{\mathring{B}}}
\def\Ub{\underline{U}}
\def\Psisl{\slashed{\Psi}}
\def\Psit{\widetilde{\Psi}}
\def\cb{\mathbf{c}}
\newcommand{\x}{\mathbf{x}}
\def\y{\mathbf{y}}
\def\NN{\mathbb{N}}
\def\Nb{\underline{N}}
\newtheorem{thm}{Theorem}[section]
\newtheorem{prop}[thm]{Proposition}
\newtheorem{lem}[thm]{Lemma}
\newtheorem{cor}[thm]{Corollary}
\newtheorem{rk}[thm]{Remark}
\newtheorem{df}[thm]{Definition}
\newtheorem{proposition}[thm]{Proposition}
\newcommand{\red}{\textcolor{red}}
\newcommand{\green}{\textcolor{darkgreen}}
\begin{document}
\title{Cauchy data for multiple collapsing boson stars}
 \author[1]{Elena Giorgi\footnote{elena.giorgi@columbia.edu}}
 \author[2]{Dawei Shen\footnote{ds4350@columbia.edu}}
  \author[3]{Jingbo Wan\footnote{jingbo.wan@sorbonne-universite.fr}}
\affil[1,2]{\small Department of Mathematics, Columbia University, New York, USA}
\affil[3]{\small Laboratoire Jacques-Louis Lions, Sorbonne Universit\'e, Paris, France}
\maketitle
\begin{abstract}
We construct Cauchy initial data for the Einstein-Maxwell-Klein-Gordon (EMKG) system, which evolves in finite time into spacetimes containing multiple trapped surfaces. From a physical perspective, this corresponds to preparing multiple well-separated boson stars, each of which collapses to form a spacelike black hole region. In particular, this extends the result of \cite{ShenWan2} on the formation of multiple trapped surfaces in vacuum to the EMKG system. 
\end{abstract}
{\hypersetup{linkcolor=black}\tableofcontents}
\section{Introduction}
The Einstein-Maxwell-Klein-Gordon (EMKG) system is a fully nonlinear and strongly coupled field theory that models the interaction of gravity, electromagnetism, and a charged scalar field. Bosonic matter bound by gravity is expected to form compact objects called \emph{boson stars}, which are modeled by solutions to the EMKG system, see already Section \ref{sec:physical} for more details. The EMKG system provides a natural framework for studying self-gravitating (charged) bosonic matter and, in particular, the dynamics and collapse of several interacting boson-star configurations.
Constructing asymptotically flat initial data describing multiple such objects is a fundamental problem in mathematical general relativity.

This paper constructs Cauchy initial data for the EMKG system, which evolves to produce \emph{multiple causally independent trapped surfaces}. Initial data for the EMKG system comprise of
\[
(\Si,g,k,E,B,\psi,\phi,A,\Phi),
\]
where $(\Si, g)$ is a Riemannian 3-manifold, $k$ is a symmetric 2-tensor, $E,B$ are electromagnetic vectorfields, $\psi, \phi$ are complex scalar functions as Cauchy data for the Klein-Gordon field, $A$ is the magnetic potential $1$--form, and $\Phi$ is the electric potential, a real scalar function on $\Sigma$ satisfying the EMKG constraint equations
\begin{equation}\label{EMKGconstraint:intro}
\begin{split}
R(g)+(\tr_g k)^2-|k|_g^2&=2\left(|E|_g^2+|B|_g^2+|D_0\psi|_g^2+|D\psi|_g^2\right), \\
\nab^j\!\left(k_{ij}-g_{ij}\tr_g k\right)&=2\Re\left((D_i\psi)^\dag D_0\psi\right)+2(E\times B)_i,\\
\sdiv_gE&=-2\ef\Im\left(\psi^\dag D_0\psi\right),\\
\sdiv_gB&=0,
\end{split}
\end{equation}
where $(E\times B)_i:={\in_{i}}^{jk}E_j B_k$ with $\in$ the volume element of $(\Si,g)$, $\ef$ is a coupling constant and
\[
D_0\psi:=\phi+\ik\ef \Phi \psi,\qquad\quad D_i\psi:=\pr_i\psi+\ik\ef A_i\psi.
\]
A simple version of our main theorem is stated as follows. A precise version of this theorem is stated later as Theorem~\ref{mainCauchy}.
\begin{thm}[Informal version]\label{maintheoremintro}
Fix $N\in \mathbb{N}$. On $\Sigma=\mathbb{R}^3$, there exist smooth asymptotically flat initial data that satisfy the EMKG constraints \eqref{EMKGconstraint:intro} and describe $N$ well-separated boson stars, whose evolution leads to the formation of $N$ causally independent black holes in finite time.
\end{thm}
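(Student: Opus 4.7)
The plan is to extend the strategy of \cite{ShenWan2} from vacuum to the Einstein-Maxwell-Klein-Gordon setting. The construction has three main components: first, produce single-star Cauchy data whose evolution forms a trapped surface in finite time; second, place $N$ such localized data sets in well-separated balls $B_1,\ldots,B_N\subset\mathbb{R}^3$ and glue them to an asymptotically flat exterior while solving the coupled EMKG constraints \eqref{EMKGconstraint:intro}; third, invoke finite speed of propagation to decouple the $N$ evolutions, so that $N$ causally independent trapped surfaces form.

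For the single-star step, I would adapt the short-pulse/focused data mechanism of Christodoulou type, and its scale-critical refinements by An-Luk, but now with the focused energy carried jointly by the curvature, the electromagnetic field $(E,B)$, and the charged scalar $(\psi,\phi)$. Characteristic initial data on a pair of null cones with the usual hierarchical smallness in a small parameter would be prescribed and evolved locally for EMKG, yielding a trapped sphere in finite advanced time. Restricting the resulting spacetime to a spacelike hypersurface and pulling back the geometric and matter fields then produces focused Cauchy data supported in a bounded region $B\subset\mathbb{R}^3$ satisfying \eqref{EMKGconstraint:intro} in its interior. For the multi-star step, rescaled copies of this data are placed in disjoint balls $B_j$, and the main task is to deform them in thin transition annuli so that the four EMKG constraints are satisfied globally on $\mathbb{R}^3$ and the data are asymptotically flat (e.g.\ matching to a Brill-Lindquist-type superposition of Reissner-Nordström exteriors, or to Minkowski outside a large ball). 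The natural route is a Corvino-Schoen-type localized deformation, performed for the full coupled system: first, adjust $(A,\Phi,\psi,\phi,E,B)$ so that $\sdiv_g E=-2\ef\Im(\psi^\dag D_0\psi)$ and $\sdiv_g B=0$ hold with the desired sources; then solve the momentum constraint for the trace-free part of $k$ by a vectorial equation whose right-hand side involves the charge current and the Poynting vector $(E\times B)$; finally, perform a conformal deformation of $g$ (together with a trace adjustment of $k$) to restore the Hamiltonian constraint. Since all source terms depend on the unknowns $(g,k)$, these four steps are coupled and are to be closed by a fixed-point/contraction argument on the transition annuli, following the vacuum scheme of \cite{ShenWan2}. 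Causal independence is then a standard consequence of the domain-of-dependence property of EMKG: choosing $\mathrm{dist}(B_j,B_k)$ much larger than the single-star collapse time ensures that within the relevant slab the future evolution from each $B_j$ is isometric to an independent copy of the single-star solution produced in the first step, hence contains its own trapped surface disjoint from the others.

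The principal obstacle is the constraint-gluing step in the presence of charged matter. In the vacuum setting of \cite{ShenWan2} only $(g,k)$ must be deformed, satisfying two constraints; here all of $(g,k,E,B,\psi,\phi,A,\Phi)$ enter the four coupled constraints of \eqref{EMKGconstraint:intro}, with matter source terms depending nonlinearly on the unknown metric and with gauge freedom in $(A,\Phi)$ that must be fixed in a manner compatible with the asymptotics. A further subtlety is that the focused interior data carry a hierarchical small parameter (the short-pulse scaling) which must be preserved by the gluing deformation; standard Corvino-Schoen estimates are not scale-critical with respect to this hierarchy, so a refined version of the deformation, adapted to the short-pulse scaling, is required in order for the trapped-surface formation mechanism from the single-star step to survive after gluing. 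This is, in my view, the main analytical hurdle of the proof.
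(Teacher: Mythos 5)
Your high-level architecture (short-pulse collapse data for each star, spacelike gluing to a multi-centered charged exterior, finite speed of propagation for causal independence) matches the paper's, but the two key technical steps are handled by genuinely different means, and your version of the gluing step has concrete gaps. The paper does \emph{not} use a Corvino--Schoen conformal/sequential deformation. Instead it (a) first localizes the Klein--Gordon field inside a ``star shell'' so that the entire gluing annulus is purely electrovacuum, whereupon the Maxwell constraints become the \emph{linear} equations $\pr_i(\sqrt{\det g}\,E^i)=\pr_i(\sqrt{\det g}\,B^i)=0$ and glue with no nonlinear coupling; and (b) runs the obstruction-free gluing scheme of \cite{MOT}, based on Bogovskii-type divergence inverses with annular support and a $10$-parameter admissible family of exteriors whose ADM charges absorb the compatibility conditions. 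Moreover, between the short pulse and the gluing annulus the paper inserts a finite-region \emph{stability} (transition) argument, so that the constant-time slice is $O(a^{-1})$-close to Euclidean on a ``barrier annulus'' before any deformation is attempted. This buys two things your plan lacks: the smallness hypotheses needed for any gluing scheme to close, and complete insulation of the short-pulse hierarchy from the deformation --- the gluing never touches the pulse region, so the ``main analytical hurdle'' you identify (scale-critical compatibility of Corvino--Schoen estimates with the short-pulse hierarchy) simply does not arise in the paper's setup.

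Two points in your gluing step are genuine gaps rather than stylistic differences. First, any localized deformation of the constraints has a finite-dimensional cokernel: compactly supported corrections cannot change the localized ADM energy, momentum, center of mass, or angular momentum measured on the gluing annulus, nor --- and this is specific to the Maxwell sector --- the electric and magnetic fluxes through spheres in a source-free region. You propose to ``adjust $(A,\Phi,\psi,\phi,E,B)$ so that $\sdiv_g E=-2\ef\Im(\psi^\dag D_0\psi)$ holds,'' but once the scalar field is (or must be made) zero on the matching annulus, the electric flux is a conserved quantity that must \emph{already agree} between the inner data and the exterior; no deformation supported in the annulus can fix a mismatch. The paper handles this by an implicit-function-theorem adjustment of the charge parameters $Q_I$ of the charged Brill--Lindquist exterior, and handles the ten gravitational obstructions by varying the exterior ADM charges within an admissible family (or, in the obstruction-free version, by the six-bump construction of \cite{MOT}). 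Your proposal names neither mechanism. Second, you do not explain how the Klein--Gordon field is terminated: the scalar constraint sources $\Re((D_i\psi)^\dag D_0\psi)$ and $\Im(\psi^\dag D_0\psi)$ persist wherever $\psi$ is nonzero, so without an explicit cutoff-and-recorrection step (the paper's star shell, Theorem on the near-star annulus) the ``exterior'' you glue to cannot literally be an electrovacuum Brill--Lindquist/Reissner--Nordstr\"om superposition. Finally, the precise theorem also requires the initial slice to be free of trapped surfaces (so that the black holes genuinely \emph{form}); this needs a mean-curvature comparison argument on the glued data, which your plan omits.
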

The configuration in Theorem~\ref{maintheoremintro} is intrinsically an $N$--body problem: no symmetry reduction is available, and the construction must instead rely on well-separated, localized configurations that remain effectively decoupled at the level of the constraint equations. This localization is a fundamental geometric feature of the initial data we build. More precisely, consider charged Brill-Lindquist data $(g_{BL},0, E_{\mathrm{BL}}, 0)$ with $N$ well-separated poles $\{\cb_I\}_{I=1,\dots,N}$ and small individual masses. Then Theorem \ref{maintheoremintro} constructs an EMKG initial data set $(\Si,g,k,E,B,\psi,\phi,A,\Phi)$ solving \eqref{EMKGconstraint:intro}, free of trapped surfaces on $\Sigma$, such that for each center $\cb_I$ the following regions are present:
\begin{itemize}
\item \textbf{Euclidean inner core}, i.e. $(g,k,E,B)=(e,0,0,0)$ (where $e$ is the Euclidean metric) on the ball $\BB_{1-2\delta_I}(\cb_I)$ of radius $1-2\delta_I$ centered at $\cb_I$ for a fixed $0<\de_I\ll 1$;
\item \textbf{short-pulse annulus} on $\BB_1(\cb_I)\setminus\overline{\BB_{1-2\delta_I}(\cb_I)}$, which is obtained from the evolution of short-pulse characteristic initial data of length $\de_I$ and size $a_I$ producing a trapped surface in the future domain of dependence $D^+(\BB_1(\cb_I))$;
\item \textbf{barrier annulus} on $\BB_\frac{3}{2}(\cb_I)\setminus\ov{\BB_1(\cb_I)}$, where the data is $\varepsilon_0$--close to Euclidean with $\varepsilon_0\sim a_I^{-1}$.
\item \textbf{star shell and near-star annulus}, extending outward, we obtain a \emph{star shell} $\BB_2(\cb_I)\setminus\ov{\BB_\frac{3}{2}(\cb_I)}$ on which $(\psi,\phi)$ may be nonzero, and a \emph{near-star} annulus $\BB_4(\cb_I)\setminus\ov{\BB_2(\cb_I)}$ on which the Klein-Gordon fields vanish; therefore, the solution is electrovacuum. We call $\BB_2(\cb_I)$ the \emph{star region} centered at $\cb_I$.
\item \textbf{exterior Brill-Lindquist data}, i.e. $(g,k,E,B)=(g_{BL},0,k_{BL},0)$ on $\BB_{32}^c(\cb_I)$.
\end{itemize}
\begin{figure}[H]
\centering
\tdplotsetmaincoords{70}{120} 
\begin{tikzpicture}[scale=0.4, tdplot_main_coords]
\fill[blue!20, opacity=0.3] (-13,-12,0) -- (12,-12,0) -- (12,12,0) -- (-13,12,0) -- cycle; 
\node at (2,-9,0) {\scalebox{1}{$g_{\text{\tiny BL}},E_{\text{\tiny BL}}$}};
\newcommand{\AnnulusUnit}[4]{%
    \begin{scope}[shift={(#1,#2,0)}, scale=#3]
        \begin{scope}
            \fill[green!50!black!50, opacity=0.6] (0,0,0) ellipse (6.3 and 6.3);
            \clip (0,0,0) ellipse (4 and 4);
            \fill[white] (0,0,0) ellipse (4 and 4);
        \end{scope}    
\begin{scope}
    \fill[blue!40, opacity=0.8] (0,0,0) ellipse (4 and 4);
    \clip (0,0,0) ellipse (2 and 2);
    \fill[white] (0,0,0) ellipse (2 and 2); 
\end{scope}
\begin{scope}
    \fill[red!90, opacity=0.8] (0,0,0) ellipse (2 and 2);
    \clip (0,0,0) ellipse (1.5 and 1.5);
    \fill[white] (0,0,0) ellipse (1.5 and 1.5); 
\end{scope}
\fill[black] (0,0,0) circle (2pt);
\node[left] at (0,0,0) {\scalebox{0.7}{$\cb_{#4}$}};
\draw (0,0,0) ellipse (2 and 2); 
\draw (0,0,0) ellipse (1.5 and 1.5); 
\draw (0,0,0) ellipse (4 and 4);
\draw (0,0,0) ellipse (8 and 8);
\draw[->, thin, rounded corners=4pt] (0,-0.5,8) to[out=-90,in=90] (0,1,0);
\node[above] at (0.1,-0.7,8) {\scalebox{0.65}{Euclidean}};
\draw[->, thin, rounded corners=4pt] (0,2.8,11) to[out=-90,in=90] (0,1.9,0);
\node[above] at (0,2.8,11) {\scalebox{0.65}{\red{Short-pulse annulus}}};
\draw[->, thin, rounded corners=4pt] (0,-4,5) to[out=-90,in=90] (0,-3,0);
\node[above] at (0,-4,5) {\scalebox{0.65}{\blue{Barrier annulus}}};
\draw[->, thin, rounded corners=4pt] (0,-6.7,2.5) to[out=-90,in=90] (0,-5.3,0);
\node[above] at (0,-8.5,2.3) {\scalebox{0.65}{\green{Star shell}}};
\draw[->, thin, rounded corners=4pt] (0,-9,0.7) to[out=-90,in=90] (0,-7,0);
\node[above] at (0,-11,0.3) {\scalebox{0.65}{Near-star annulus}};
\draw[dashed] (0,0,-0.25) ellipse (2*11/12 and 2*11/12); 
\foreach \angle in {-40,110}
{\draw[dashed] (0,0,-3) -- ({2*cos(\angle)},{2*sin(\angle)},0);}
\end{scope}
}
\newcommand{\AnnulusUnitplain}[4]{%
    \begin{scope}[shift={(#1,#2,0)}, scale=#3]
    \begin{scope}
        \fill[green!50!black!50, opacity=0.6] (0,0,0) ellipse (6.5 and 6.5);
        \clip (0,0,0) ellipse (4 and 4);
        \fill[white] (0,0,0) ellipse (4 and 4);
    \end{scope}
\begin{scope}
    \fill[blue!40, opacity=0.8] (0,0,0) ellipse (4 and 4);
    \clip (0,0,0) ellipse (2 and 2);
    \fill[white] (0,0,0) ellipse (2 and 2); 
\end{scope}
\begin{scope}
    \fill[red!90, opacity=0.8] (0,0,0) ellipse (2 and 2);
    \clip (0,0,0) ellipse (1.5 and 1.5);
    \fill[white] (0,0,0) ellipse (1.5 and 1.5); 
\end{scope}
\fill[black] (0,0,0) circle (2pt);
\node[left] at (0,0.8,0) {\scalebox{0.5}{$\cb_{#4}$}};
\draw (0,0,0) ellipse (2 and 2); 
\draw (0,0,0) ellipse (1.5 and 1.5); 
\draw (0,0,0) ellipse (4 and 4);
\draw (0,0,0) ellipse (8 and 8);
\draw[dashed] (0,0,-0.25) ellipse (2*11/12 and 2*11/12); 
\foreach \angle in {-40,110}
{\draw[dashed] (0,0,-3) -- ({2*cos(\angle)},{2*sin(\angle)},0);}
\end{scope}
\begin{scope}[shift={(#1,#2)}, scale=#3]

    \draw[dashed] (0,6.5,0) -- (0,6.5,7);       
    \draw[dashed] (0,-6.5,0) -- (0,-6.5,7);   

    \draw[
        decorate,
        decoration={brace, amplitude=6pt, mirror}
    ] 
        (0,6.5,7) -- (0,-6.5,7)
        node[midway, yshift=0.55cm]{\scalebox{0.7}{Star region}};
\end{scope}
}
\AnnulusUnit{5}{5.4}{0.73}{1}
\AnnulusUnitplain{-8}{6}{0.45}{2}
\AnnulusUnitplain{-7}{-8}{0.36}{3}
\draw[->] (9,-8,0) -- (9,-8,9) node[anchor=south]{\scalebox{0.5}{$t$}};
\end{tikzpicture}
\caption{Geometric illustration of $(\Sigma,g,k,E,B,\psi,\phi,A,\Phi)$ from Theorem \ref{maintheoremintro} for $N=3$. From the center outward: Euclidean disk; short-pulse annulus (red); barrier annulus (blue); star shell (green); near-star annulus; Brill-Lindquist exterior. We refer the union of the Euclidean disk, short-pulse annulus, barrier annulus and star shell to be the star region.}
\label{IntroID}
\end{figure}

Our construction glues together three ingredients: (i) a short-pulse region that triggers trapped surface formation, (ii) a transition (\emph{barrier}) annulus where the geometry is a small perturbation of Euclidean space, and (iii) a charged Brill-Lindquist exterior carrying the desired ADM mass and electric charge near each center. The Klein-Gordon field is localized to the star region, so that the gluing reduces to an electrovacuum problem. 
Each part of the construction will be explored in more detail in Section \ref{sec:outline-proof}, where we provide a concise overview of the proof. 

\medskip 
Some remarks on Theorem \ref{maintheoremintro} are in order.

Since the short-pulse regions around different centers $\{\BB_1(\cb_I)\}_{I=1,\ldots,N}$ are mutually disjoint, the corresponding $N$ trapped surfaces $\{S_I\}_{I=1}^N$ are \emph{strongly causally independent}, in the sense that for $I\ne J$ and any spacelike fillings $M_I, M_J$ with $\pr M_I=S_I$ and $\pr M_J=S_J$, we have 
\begin{equation*}
    J^+(M_I) \cap J^-(M_J)=\emptyset,\qquad J^+(M_J)\cap J^-(M_I)=\emptyset.
\end{equation*}
By \cite{AnHan}, multiple MOTS must therefore form within a finite time.

The future evolution of the initial data constructed in Theorem~\ref{maintheoremintro} remains unknown. Understanding this evolution lies at the heart of the relativistic $N$--body problem and is closely related to the Final State Conjecture.  

Because the short-pulse method for the EMKG system requires a smallness condition on the coupling constant (see \cite{ShenWan} for the case of the Einstein-Maxwell-Charge Scalar Field (EMCSF) system), our construction only produces initial configurations in which the $N$ charged bodies are at rest and satisfy $|Q_I|\ll M_I,\ I=1,\dots,N.$ In this regime, the electromagnetic repulsion is dominated by the gravitational attraction, and it is therefore natural to conjecture that the future development of such data collapses to a single black hole spacetime formed by the eventual merger of the $N$ bodies.
This picture is very different from the Majumdar-Papapetrou (MP) family, which is a static solution to the Einstein-Maxwell equations. The MP spacetimes arise as the maximal Cauchy developments of the charged Brill-Lindquist initial data in the extremal regime $|Q_I|=M_I$. In this special case, the electromagnetic repulsion exactly balances the gravitational attraction, producing a static configuration of $N$ extremal black holes in equilibrium.  Such data represent a highly non-generic situation.

The recent breakthrough of Kehle-Unger \cite{KU} shows that an extremal Reissner-Nordstr\"om horizon can arise dynamically from regular asymptotically flat Cauchy data through a nonperturbative characteristic gluing scheme, providing, in particular, a definitive disproof of the classical third law of black hole thermodynamics. This suggests that a similar characteristic construction might be used to prepare individual extremal configurations in a multi-body setting. It is conceivable that one could combine such a characteristic gluing procedure, which creates the desired extremal behavior near each body, with a spacelike gluing scheme that attaches the resulting configuration to a Majumdar-Papapetrou exterior. This type of two-step construction could, in principle, produce Cauchy data whose future development approaches the extremal $N$--black hole equilibrium. We leave these possibilities for future investigation.
\subsection{Previous and related results}\label{sec:review}
A development closely related to the present work is the construction in \cite{ShenWan2} by the second and third named authors of smooth, asymptotically flat vacuum Cauchy data, whose future evolution produces multiple causally disjoint trapped surfaces. This result combines the short-pulse method with a multi-localized, obstruction-free gluing scheme, enabling several spatially separated collapse events to occur within a single initial data set. It extends the earlier construction of Li-Yu \cite{LY}, who produced a single trapped surface by matching it to a suitably chosen Kerr exterior, and provides the first rigorous realization of a vacuum $N$--body collapse scenario in which multiple black hole regions arise dynamically from regular spacelike initial data. For further background on trapped surface formation, stability analysis, and gluing methods for the constraint equations, we refer to Section 1.2 of \cite{ShenWan2}.

The EMKG system is the relativistic model for charged boson stars; see already the physical discussion in Section \ref{sec:physical}. On the analytical side, this system and its subsystems have been studied systematically through a series of nonlinear stability results. Following Christodoulou-Klainerman \cite{Ch-Kl}, Zipser \cite{zipser} proved the global stability of Minkowski spacetime for the Einstein-Maxwell system. In \cite{lo09}, Loizelet established stability for the Einstein-scalar-field-Maxwell system in $(n+1)$ dimensions for $n\ge 3$ by using the Lindblad-Rodnianski framework \cite{lr2}. Speck \cite{speck} obtained a related stability result for a coupled Einstein-electromagnetic system in $3+1$ dimensions. For the Einstein-Klein-Gordon system, stability was shown by LeFloch-Ma \cite{lefloch,leflochMa}, Wang \cite{wang}, and Ionescu-Pausader \cite{ionescu}. In Section \ref{secstability}, we establish a stability statement for the EMKG system in a finite region, sufficient for the construction presented in this paper.

The $N$--body problem in general relativity has been approached through different geometric and analytic mechanisms. In the vacuum setting, the Brill-Lindquist type gluing schemes of Anderson-Chru\'sciel-Pacard and Chru\'sciel-Corvino-Isenberg \cite{ACP,CCI} produce time-symmetric initial data with multiple localized ends, while the geometric construction of Chru\'sciel-Mazzeo \cite{CM} yields vacuum initial data sets with several apparent horizons, whose evolution has disconnected event-horizon intersections on the initial hypersurface. Hintz \cite{Hin23a,Hin24a,Hin24b,HintzIDS} developed a microlocal framework that permits the insertion of small black holes into a background spacetime, either directly at the level of initial data or along prescribed timelike geodesics. 

Static multi-black hole configurations arise through different balancing mechanisms. For the cosmic constant $\La=0$, the Majumdar-Papapetrou solutions demonstrate that electromagnetic repulsion in the Einstein-Maxwell system can balance gravitational attraction. For $\La>0$, Dias-Gibbons-Santos-Way \cite{DGSW23} constructed a fully vacuum static $2$--black hole solution stabilized by cosmic expansion, and rotating de Sitter binaries were subsequently obtained in \cite{DSW24}. Another construction in the de Sitter setting appears in Hintz's work \cite{Hin21}, where dynamical many-black-hole spacetimes are produced by gluing Kerr-de Sitter or Schwarzschild-de Sitter metrics near the future conformal boundary. These works demonstrate that multi-black hole and $N$--body configurations may arise through microlocal insertion, vacuum or EM gluing, electromagnetic balance, or cosmological expansion.

Further geometric results also relate to the structure of multi-black-hole initial data.  In the time-symmetric, nonnegative scalar curvature setting, the Riemannian Penrose inequality provides global constraints linking ADM mass to the area of the apparent horizon: Huisken-Ilmanen \cite{HI} established the connected-horizon case, and Bray \cite{Bray} proved the full multi-horizon version.  For charged configurations, the Majumdar-Papapetrou family furnishes static multi-black-hole solutions, and the corresponding rigidity in the Einstein-Maxwell system was shown by Chru\'sciel-Reall-Tod \cite{CRT}.  These geometric inputs provide complementary information about possible horizon geometries in multi-body configurations.

In contrast to incorporating preexisting black holes into a prescribed geometric setting or relying on static balance mechanisms, the present work follows the dynamical viewpoint initiated in \cite{ShenWan2} and constructs a $N$--body configuration directly at the level of \emph{Cauchy data} for the Einstein-Maxwell-Klein-Gordon system. Our initial data contain several collapsing charged boson stars, but no trapped surfaces or black hole regions; the resulting black holes form only in future development. This provides a matter-coupled and genuinely dynamical analog of the multi-black hole scenarios described above.
\subsection{Physical context and motivation}\label{sec:physical}
Here, we discuss the physical interpretation of our results, which clarifies the modeling choices and the role of the parameters that appear in the main statement.

A \emph{charged boson star} is a hypothetical compact object in General Relativity, formed from bosonic particles bound primarily by gravity. Unlike black holes, charged boson stars are completely regular---they possess neither a singularity nor a horizon. Charged boson stars are modeled by solutions to the EMKG system, derived from the action
\begin{align*}
\mathcal{S}= \int \big(\R+\mathcal{L}_{m}\big)\sqrt{-\g}\, d^4x,
\end{align*}
where $\R$ is the scalar curvature of spacetime, and the Lagrangian is defined by
\[
\LL_m:=-\D_\mu\psi\,(\D^\mu\psi)^\dag-V(|\psi|^2)-\frac{1}{4}\F_{\mu\nu}\F^{\mu\nu},
\]
with $\D_\mu:=\bnab_\mu+i\ef \A_\mu$ and $\ef$ as coupling constants. The potential $V(|\psi|^2)$ depends only on the field magnitude, consistent with the $U(1)$ symmetry of the Lagrangian, which yields, via Noether’s theorem, a conserved current representing the total number of bosonic particles. Varying the action gives the EMKG system
\begin{align}
\begin{split}\label{EMKG-phy}
\Ric_{\mu\nu}-\frac12\R\g_{\mu\nu}&=\T^{EM}_{\mu\nu}+\T^{KG}_{\mu\nu},\\
\bnab^\mu \F_{\mu\nu}&=-2\ef\Im\big(\psi (\D_\nu\psi)^\dag\big),\\
\g^{\mu\nu}\D_\mu\D_\nu\psi&=V'(|\psi|^2)\psi,
\end{split}
\end{align}
where the energy-momentum tensors are given by
\begin{align}
\begin{split}\label{Tdfeqintro}
\T^{EM}_{\mu\nu}&=2\left(\F_{\mu\a}{\F_{\nu}}^\a-\frac{1}{4}\g_{\mu\nu}\F_{\a\b} \F^{\a\b}\right),\\
\T^{KG}_{\mu\nu}&=2\left(\Re\left(\D_\mu\psi(\D_\nu\psi)^\dag\right)-\frac{1}{2} \g_{\mu\nu}\left( \D_\a \psi(\D^\a\psi)^\dag + V(|\psi|^2)\right)\right).
\end{split}
\end{align}
If the Maxwell field vanishes, the system becomes
\begin{align*}
\begin{split}
\Ric_{\mu\nu}-\frac12 \R\g_{\mu\nu} &= \T_{\mu\nu}^{KG},\\
\g^{\mu\nu}\bnab_\mu\bnab_\nu\psi &= V'(|\psi|^2)\psi.
\end{split}
\end{align*}
Depending on the scalar potential $V(|\psi|^2)$ and gauge coupling $\ef$, one obtains different families of boson stars:
\begin{itemize}
\item \textit{Free or ``mini'' boson stars} when $V(|\psi|^2)=m^2|\psi|^2$,
\item \textit{Self-interacting boson stars} when $$V(|\psi|^2)=m^2|\psi|^2+\frac{\lambda}{2}|\psi|^4$$ with $\la$ a dimensionless coupling constant, 
\item \textit{Charged boson stars} when the field couples to electromagnetism via $\D_\mu=\bnab_\mu+i\ef\A_\mu$, and the total electric charge coincides with the Noether charge associated with the global $U(1)$ symmetry. Stable configurations are known only for sufficiently small $\ef$ and suitable potentials.
\item \textit{Rotating and solitonic variants} obtained by including angular dependence or nonlinear potentials such as 
\[
V(|\psi|^2)=m^2|\psi|^2\!\left(1-\frac{|\psi|^2}{\psi_0^2}\right),
\]
or polynomial or periodic potentials, such as the \emph{axion} model
\[
V(|\psi|^2)=(m_a f_a)^2\!\left(1-\cos\frac{|\psi|^2}{f_a}\right),
\]
where $m_a$ and $f_a$ are fixed parameters.
\end{itemize}
In this work, we assume that $V$ is smooth near the vacuum and satisfies
\begin{equation}\label{assumptionV:intro}
V(0)=0, \qquad\quad |V^{(k)}(0)|\le m^2\ll 1\qquad \forall \; k\geq 1,
\end{equation}
so that $m$ acts as an effective small mass and the nonlinear terms remain perturbative on the finite time scales we consider. This covers the free and weakly self-interacting regimes, as well as the weak-coupling regimes of solitonic, polynomial, or axion-type models, and is compatible with the short-pulse framework. The smallness condition of $V$ introduced in \eqref{assumptionV:intro} is only used in the construction of short-pulse annuli and barrier annuli; see Sections \ref{sectrapped} and \ref{secstability}. The gluing arguments of the initial data are independent of \eqref{assumptionV:intro}.

Numerical and analytical evidence indicate that boson star perturbations may disperse, migrate to nearby equilibria, or collapse \cite{Liebling,LA23}. We address the last scenario and produce initial data for \emph{multiple} boson stars arranged such that each collapses to a trapped surface, with causal independence across the centers.
\subsection{Outline of the proof}\label{sec:outline-proof}
We sketch the six ingredients of the construction and indicate where the full statements appear.

\paragraph{Step 1: Short-pulse trapped surface mechanism.} 
In a double null foliation $(u,\ub)$, we adapt the short-pulse method to EMKG (under the Weyl gauge) with a small coupling constant $0<\ef\ll1$. Short-pulse data of size $a$ and width $0<\delta\leq a^{-2}$ on $H_{u_0}$, together with Minkowskian data on $\Hb_0$, produce a unique solution in a finite $u$--range whose terminal sphere $S_{-\frac{\delta a}{4},\delta}$ is trapped\footnote{Here, $H_u$ and $\Hb_\ub$ denote the outgoing and incoming null hypersurfaces, and $S_{u,\ub}=H_u\cap\Hb_\ub$.}, as indicated in Figure \ref{3Dshortpulseconeintro}. The $V$--dependent terms obey a smallness condition \eqref{assumptionV:intro} and fit into the familiar short-pulse hierarchy, extending the results of the second and third named authors \cite{ShenWan} for the EMCSF system to the EMKG system.
This step is obtained in Theorem \ref{thmtrapped}.
\begin{figure}[H]
\centering
\begin{tikzpicture}[scale=0.8, decorate]
    \draw[orange] (0,3.4) ellipse (0.6 and 0.06);
    \draw[->, thick, rounded corners=8pt] (1.5,4) to[out=-90, in=90] (0.2,3.4);
    \node[above] at (3.1,3.6) {\footnotesize Trapped Surface $S_{-\frac{\de a}{4},\de}$};
    \draw[dashed] (0.1,2.9) arc[start angle=0,end angle=180,x radius=0.1,y radius=0.01];
    \draw (-0.1,2.9) arc[start angle=180,end angle=360,x radius=0.1,y radius=0.01];
    \draw[dashed] (4,0) arc[start angle=0,end angle=180,x radius=4,y radius=0.4];
    \draw (-4,0) arc[start angle=180,end angle=360,x radius=4,y radius=0.4];
\begin{scope}
    \fill[white] 
        (3,0) arc[start angle=0,end angle=180,x radius=3,y radius=0.3] --
        (-3,0) arc[start angle=180,end angle=360,x radius=3,y radius=0.3] -- cycle;
\end{scope}
    \draw[dashed] (3.5,-0.5) arc[start angle=0,end angle=180,x radius=3.5,y radius=0.35];
    \draw (-3.5,-0.5) arc[start angle=180,end angle=360,x radius=3.5,y radius=0.35];
    \draw[dashed] (2,-2) arc[start angle=0,end angle=180,x radius=2,y radius=0.2];
    \draw (-2,-2) arc[start angle=180,end angle=360,x radius=2,y radius=0.2];
    \draw[dashed] (0,-5) -- (0,4.4) node[above]{\footnotesize $u=\ub$};
    \draw (0,-4) -- (4,0);
    \draw (-4,0) -- (0,-4);
    \draw (4,0) -- (0.6,3.4);
    \draw (-4,0) -- (-0.6,3.4);
    \draw[dashed] (3.5,-0.5) -- (0.1,2.9);
    \draw[dashed] (0.6,3.4) -- (0.1,2.9);
    \draw[dashed] (-0.6,3.4) -- (-0.1,2.9);
    \draw[dashed] (-3.5,-0.5) -- (-0.1,2.9);
    \draw[->, thick, rounded corners=8pt] (4,1) to[out=-90, in=90] (3,0);
    \node[above] at (4,1) {\footnotesize $\Hb_0$};
    \draw[->, thick, rounded corners=8pt] (5,1) to[out=-90, in=90] (4,0);
    \node[above] at (5,1) {\footnotesize $S_{u_0,\de}$};
    \node[below right] at (4,0) {\footnotesize $\ub=\de$};
    \node[below right] at (3.5,-0.5) {\footnotesize $\ub=0$};
    \node[below right] at (2.7,-1.3) {\footnotesize $H_{u_0}$};
\draw[->, decorate, decoration={snake, amplitude=0.5mm, segment length=2mm}, thin, red]  (4.1,-0.3) -- (3.4,0.4);
\draw[->, decorate, decoration={snake, amplitude=0.5mm, segment length=2mm}, thin, red]  (3.9,-0.5) -- (3.2,0.2);
\draw[->, decorate, decoration={snake, amplitude=0.5mm, segment length=2mm}, thin, red]  (-4.1,-0.3) -- (-3.4,0.4);
\draw[->, decorate, decoration={snake, amplitude=0.5mm, segment length=2mm}, thin, red]  (-3.9,-0.5) -- (-3.2,0.2);
\node[below left] at (-3.5,-0.5)  {\footnotesize short-pulse};
\fill[red!20,opacity=0.35](3.5,-0.5)--(0.1,2.9)--(0.6,3.4)--(4,0)--cycle;
\fill[red!20,opacity=0.35](-3.5,-0.5)--(-0.1,2.9)--(-0.6,3.4)--(-4,0)--cycle;
\end{tikzpicture}
\caption{Short-pulse cone. The red region is called the short-pulse region and the orange circle denotes the trapped surface $S_{-\frac{\de a}{4},\de}$.}
\label{3Dshortpulseconeintro}
\end{figure}
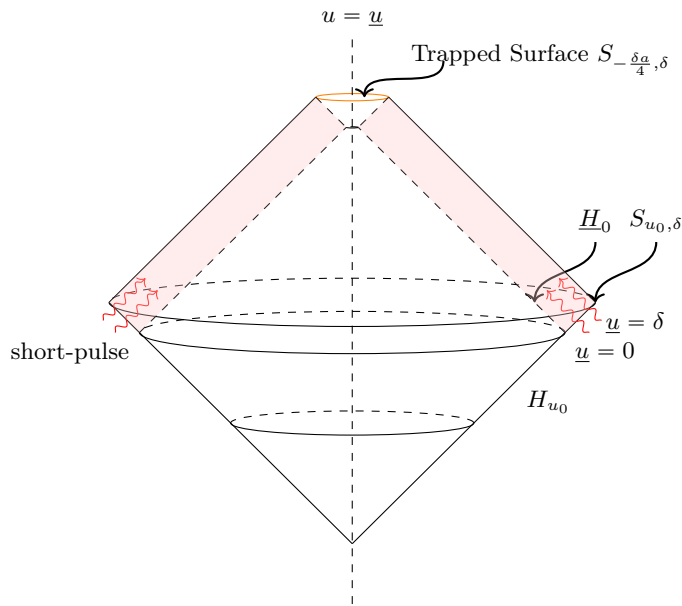
\paragraph{Step 2: Transition region and barrier annulus.}
We propagate the solution beyond the short-pulse region into a finite \emph{transition} zone where the geometry relaxes to an $\ep_0$--perturbation of Minkowski with $\ep_0= a^{-1}$. A constant-time slice $\Si$ through this region contains a short-pulse annulus bordered by a well-prepared \emph{barrier annulus}, as illustrated in Figure~\ref{fig:shortpulse+stabintro}, on which the data are sufficiently small and regular to serve as the setup for the gluing construction later.
This step is obtained in Theorems \ref{mainstability} and \ref{interiorsolution} as a finite-region stability result for the EMKG system. 
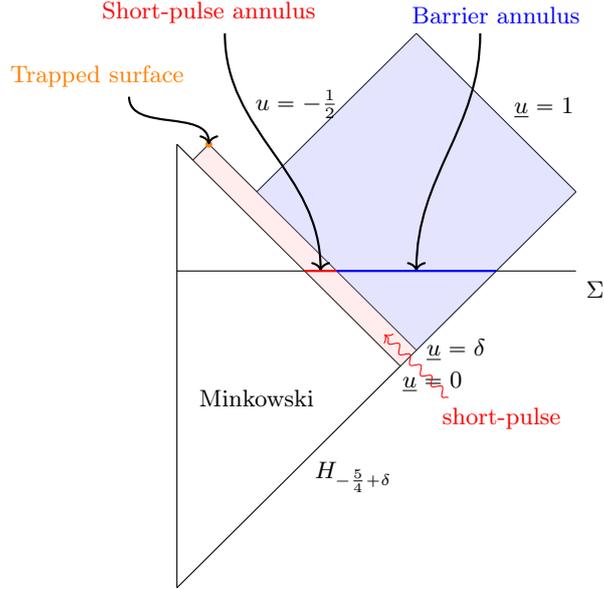
\begin{figure}[H]
\centering
\begin{tikzpicture}[scale=2.1, decorate]
  \coordinate (A) at (0,-2);
  \coordinate (B) at (0,0);
  \coordinate (C) at (0,0.8);
  \coordinate (D) at (0.1,0.7);
  \coordinate (E) at (0.2,0.8);
  \coordinate (F) at (1.4,-0.6);
  \coordinate (G) at (1.5,-0.5);
  \coordinate (H) at (2,0);
  \coordinate (I) at (2.5,0.5);
  \coordinate (J) at (1,0);
  \coordinate (K) at (0.5,0.5);
  \coordinate (L) at (1.5,1.5);
  \coordinate (M) at (2.5,0);
  \coordinate (N) at (0.8,0);
  \draw (A) -- (C);
  \draw (A) -- (I);
  \draw (F) -- (C);
  \draw (D) -- (E);
  \draw (G) -- (E);
  \draw (K) -- (L);
  \draw (I) -- (L);
  \draw (B) -- (M);
\fill[red!20,opacity=0.35](G) -- (E) -- (D) -- (F) -- cycle;
\fill[blue!30,opacity=0.35](G) -- (I) -- (L) -- (K) -- cycle;
  \node[right] at (1.35,-0.7) {\footnotesize $\ub = 0$};
  \node[right] at (G) {\footnotesize $\ub = \de$};
  \node[above] at (2.3,0.9) {\footnotesize $\ub = 1$};
  \node[above] at (0.75,0.9) {\footnotesize $u = -\frac{1}{2}$};
  \node[below right] at (M) {\footnotesize $\Si$};
  \node at (0.5, -0.8) {\footnotesize Minkowski};
  \node[right] at (0.8, -1.3) {\footnotesize $H_{-\frac{5}{4}+\de}$};
  \node[below right] at (1.6,-0.8) {\footnotesize \red{short-pulse}};
\filldraw[orange] (E) circle (0.5pt);
    \draw[->, thick, rounded corners=8pt] (-0.3,1.1) to[out=-90, in=90] (E);
    \node[above] at (-0.5,1.1) {\footnotesize {\color{orange}Trapped surface}};
    \draw[->, thick, rounded corners=8pt] (0.3,1.5) to[out=-90, in=90] (0.9,0);
    \node[above] at (0.2,1.5) {\footnotesize \red{Short-pulse annulus}};
    \draw[->, thick, rounded corners=8pt] (1.9,1.5) to[out=-90, in=90] (1.5,0);
    \node[above] at (2,1.5) {\footnotesize \blue{Barrier annulus}};
    \draw[->, decorate, decoration={snake, amplitude=0.5mm, segment length=2mm}, thin, red]  (1.7,-0.8) -- (1.3,-0.4);
    \draw[draw=blue, thick] (J) -- (H);
    \draw[draw=red, thick] (N) -- (J);
\end{tikzpicture}
    \caption{The red region marks the short-pulse domain, and the blue region indicates the transition zone. The hypersurface $\Si$ represents the target constant-time slice.}\label{fig:shortpulse+stabintro}
\end{figure}
\paragraph{Step 3: Charged Brill-Lindquist exterior.} 
The constant-time slice $\Si$ obtained from Step 2 evolves into trapped surfaces within a finite time. To produce \emph{multiple} such trapped regions, the exterior geometry must be arranged so that, around each designated center, the data approximate that of a charged isolated mass. This role is fulfilled by the charged Brill-Lindquist family \cite{BL}, which provides multi-centered electrovacuum solutions to the constraint equations~\eqref{EMKGconstraint:intro}.
For $N$ poles with sufficiently small individual masses and large mutual separations, the charged Brill-Lindquist data $(g_{BL},0,E_{\mathrm{BL}},0)$ furnish an exterior region enclosing the barrier annuli constructed in Step 2. Near each center $\cb_I$, the averaged ADM mass, center of mass, and electric charge are close to $(M_I, M_I\cb_I, Q_I)$, while the remaining ADM components are small, and the magnetic charge vanishes. On suitably large annuli, the data remain $O(M_I)$--close to Euclidean in relevant Sobolev norms. In particular, each Brill-Lindquist end contributes the correct physical charges while already being nearly Euclidean in a sufficiently large annulus suitable for the subsequent gluing with the inner \emph{transition region}.
This step follows from the estimates in Section \ref{secBL-EM}.

\paragraph{Step 4: Star shell and near-star annulus.}
Extending outward from the barrier annulus of Step 2, we introduce a \emph{star shell} and a \emph{near-star annulus} that lie strictly between the barrier annulus and the Brill-Lindquist annulus of Step 3. These regions localize the Klein-Gordon field: the star shell supports the scalar field, while on the near-star annulus the Klein-Gordon field vanishes, leaving a purely Einstein-Maxwell regime. This arrangement ensures smooth compatibility for the subsequent gluing to the exterior Brill-Lindquist. The data remain $O(a^{-1})$--close to Euclidean with controlled charges and continuity across interfaces. The region that includes the Euclidean disk, short-pulse annulus, barrier annulus, and star shell is called the \emph{star region}. This step is obtained in Theorem \ref{constructioninnerannulus}. 

\paragraph{Step 5: Electrovacuum gluing.} By Step 4, the Klein-Gordon field has been localized to the star region, so that the subsequent gluing reduces to the pure Einstein-Maxwell case. In the electrovacuum regime, Maxwell fields are glued linearly, subject only to matching electric and magnetic charges (the latter is zero here). The Einstein part then glues via obstruction--free spacelike gluing, as in \cite{MOT}. Under the smallness and compatibility of ADM data across the common annulus, we obtain a solution that equals the inner data on the inside and Brill-Lindquist data on the outside. This step is obtained in Theorem \ref{MOT1.7}.

\paragraph{Step 6: Assembly of the Cauchy data and conclusion.} 
In Section \ref{secconstruction}, we combine all the previous constructions to obtain the final Cauchy initial data set that satisfies the EMKG constraints and leads to the formation of multiple causally independent trapped surfaces.
\begin{enumerate}
\item \emph{Outer region.} Around each center $\cb_I$, the charged Brill-Lindquist solution provides a large outer annulus on which the averaged ADM charges satisfy
\[
\E \simeq M_I,\qquad\quad\C \simeq M_I\cb_I,\qquad\quad\Q_E\simeq Q_I,
\]
with all other ADM components small and $\Q_B=0$. Moreover, $(g_{BL},E_{\mathrm{BL}})$ is $O(M_I)$--close to Euclidean space (Step 3).
\item \emph{Inner region.} Inside this, the near-star annulus from Step 4 provides a purely electrovacuum region where the Klein-Gordon fields vanish. The data there are $O(a_I^{-1})$--close to Euclidean, with vanishing magnetic charge $\Q_B=0$. This ensures the compatibility of the inner and outer regions for gluing.
\item \emph{Matching the electromagnetic charge.} The electric flux through the outer Brill-Lindquist annuli depends smoothly on the parameters ${Q_I}$. Using the implicit function theorem, we can adjust these parameters so that the electric fluxes agree across each gluing interface. The magnetic charges already match and vanish identically.
\item \emph{Applying the Einstein-Maxwell gluing theorem.} Since the electromagnetic charges coincide and all remaining ADM quantities are small and mutually compatible, following Step 5, we produce a smooth Einstein-Maxwell initial data set that matches the inner (transition) data on the inside and the Brill-Lindquist data on the outside. This gluing preserves asymptotic flatness, the controlled ADM charges of each component, and the regularity of the resulting initial data set.
\item \emph{Conclusion:} Finite propagation from the short-pulse annuli constructed in Step 1 implies that each interior region evolves to form a trapped surface in finite time. The disjointness of the star regions guaranties that these $N$ trapped surfaces are strongly causally independent, as in the discussion after Theorem \ref{maintheoremintro}. On the initial slice, a mean curvature comparison argument confirms the absence of trapped surfaces. 
\end{enumerate}
This completes the proof of Theorem \ref{maintheoremintro}.
\subsection{Structure of the paper}
The remainder of this paper is organized as follows. Section \ref{secpre} fixes notation, recalls the charged Brill-Lindquist metric \cite{BL}, computes local ADM charges, and reviews trapped surfaces/MOTS. Section \ref{sectrapped} establishes the formation of trapped surfaces for EMKG via the short-pulse method. Section \ref{secstability} constructs the transition region and the barrier annulus via stability analysis. Section \ref{secgluing} proves the electrovacuum gluing theorem adapted from \cite{MOT} and implements the star shell/near-star construction. Section \ref{secconstruction} assembles the Cauchy data and proves the main theorem. Appendix \ref{doublenullfoliation} summarizes the double null framework and the main spacetime equations.
\paragraph{Acknowledgments} E.G. acknowledges the support of NSF Grants DMS-2306143, DMS-2336118 and of a grant of the Sloan Foundation. J.W. is supported by ERC-2023 AdG 101141855 BlaHSt.
\section{Preliminaries}\label{secpre}
In this section, we collect some basic definitions, fix notation and parameters, and summarize the background necessary for our main results.
\subsection{Notation}\label{sec:notation}
Throughout the paper, $(\MM, \g)$ denotes a smooth Lorentzian manifold $\MM$ of dimension $3+1$ with a Lorentzian metric $\g$ of signature $(-,+,+,+)$.
\begin{itemize}
\item Capital Latin indices $A,B,C,\ldots$ run from $1$ to $2$; lowercase Latin indices $i,j,k,\ldots$ run from $1$ to $3$; Greek indices $\a,\b,\ga,\ldots$ run from $0$ to $3$.
\item $\Re$ and $\Im$ denote the real and imaginary parts; $\psi^\dag$ denotes the complex conjugate of $\psi$.
\item For any matrix $X$, $X^\top$ denotes its transpose.
\item $(\g,\bnab,\R,\D)$ denotes the spacetime metric, Levi-Civita connection, curvature tensor, and gauge covariant derivative on $\M$ (associated with the electromagnetic potential $\A$).
\item $(g,\nab,R,D)$ denotes the induced metric, Levi-Civita connection, curvature tensor, and gauge covariant derivative on the initial hypersurface $\Si_0$ (associated with the electromagnetic potential $A,\Phi$).
\item $(\slg,\nabs)$ denotes the induced metric and Levi-Civita connection on the spheres $S_{u,\ub}$ (cf.~\eqref{eq:def-S-u-ub}); $\nabs$ is also used for the horizontal covariant derivative on $\M$.
\item $e$ is the Euclidean metric on $\RRR^3$ and $\etabf$ is the Minkowski metric on $\RRR^{1,3}$.
\item $\BB_r(\xb)$ denotes the open ball of radius $r$ and centered at $\xb$ in $\RRR^3$, and $\pr\BB_r(\bf{x})$ is its boundary sphere.
\item $\AA_r :=\BB_{2r}\setminus\ov{\BB_r}$ is the annulus between radii $r$ and $2r$; $\At_r :=\BB_{4r}\setminus\ov{\BB_{r/2}}$ is the enlarged annulus that strictly contains $\AA_r$.
\end{itemize}
\subsubsection{Key parameters}
The following parameters will be frequently used throughout this paper:
\begin{itemize}
    \item $s\geq 3$ denotes the regularity of the Cauchy data, which is fixed at the beginning.
    \item $m^2$ denotes the size of the potential function $V$ in the Klein-Gordon equation.
    \item $N\in\NNN$ denotes the number of poles of the Brill-Lindquist metric \eqref{eq:gBL-EM}.
    \item $\{M_I\}_{I=1}^N\in\RRR_+^N$ denotes the mass parameters of \eqref{eq:gBL-EM}. The total mass $M$ is defined by $M:=\sum_{I=1}^N M_I$.
    \item $\{Q_I\}_{I=1}^N$ denotes the charge parameters of \eqref{eq:gBL-EM}.
    \item $\{\cb_I\}_{I=1}^N\in\RRR^{3N}$ denotes the positions of the poles of \eqref{eq:gBL-EM}.
    \item $d_I:=\min_{J\ne I}|\cb_I-\cb_J|$ denotes the minimal Euclidean distance from $\cb_I$ to the other poles.
    \item $\{a_I\}_{I=1}^N\in\RRR_+^N$ denotes the size of the short-pulse ansatz near $\cb_I$.
    \item $\{\de_I\}_{I=1}^N\in\RRR_+^N$ denotes the length of the short-pulse posed on the characteristic initial data near $\cb_I$.
\end{itemize}
Let $I\in\{1,2,\ldots,N\}$ and we consider the transition region near the pole $\cb_I$, the following parameters are used in Section \ref{secstability}:
\begin{itemize}
    \item $\ep_0:=a_I^{-1}$ measures the difference from the transition region to Minkowski spacetime.
    \item $\ep:=m^{-1}\ep_0$ denotes the bootstrap bound assumed in Section \ref{secstability}.
\end{itemize}
\subsubsection{Smallness constants}
For any quantities $A$ and $B$:
\begin{itemize}
    \item We write $A\les B$ to indicate that there exists a universal constant $C(s)>0$ that depends only on the regularity parameter $s$ such that $A\leq C(s) B$.
    \item We write $A=O(B)$ to indicate that $|A|\les |B|$.
    \item We write $A\simeq B$ to indicate that $|A|\les |B|\les |A|$.
    \item We write $A\ll B$ to indicate that $CA<B$, where $C$ is the largest universal constant among all the constants involved in the paper by $\les$.
\end{itemize}
Throughout this paper, the constants are chosen such that, for all $I\in\{1,2,\dots,N\}$:
\begin{align*}
0<\de_I\leq a_I^{-2},\qquad\quad a_I^{-1},d_I^{-1}\ll m^2\ll M_I<M\ll 1.
\end{align*}
\subsection{Einstein-Maxwell-Klein-Gordon constraint equations}
We introduce here the main equations we solve in this paper, i.e., the constraint equations for an initial data set for the Einstein-Maxwell-Klein-Gordon (EMKG) system \eqref{EMKG-phy}. We first define an initial data set for the equations and then derive the form of the constraint equations obtained by projecting the EMKG system \eqref{EMKG-phy} onto $\Si$ and applying the Gauss-Codazzi identities.
\begin{df}\label{def:initial-data}
An \emph{initial data set} for the Einstein-Maxwell-Klein-Gordon (EMKG) system consists of a $9$--tuple $(\Si,g,k,E,B,\psi,\phi,A,\Phi)$, where
\begin{itemize}
\item $(\Si,g)$ is a $3$--dimensional Riemannian manifold,
\item $k$ is a symmetric $2$--tensor on $\Si$ (identified as the second fundamental form),
\item $E,B$ are vectorfields on $\Si$ (identified as the electric and magnetic fields),
\item $\psi,\phi:\Si\to\mathbb{C}$ are complex scalar functions (identified as the Klein-Gordon field and its time-derivative),
\item $A$ is a $1$--form on $\Si$ (identified as the spatial gauge potential),
\item $\Phi:\Si\rightarrow \mathbb{R}$ is a scalar function (identified as the time gauge potential).
\end{itemize}
When $\psi=\phi=A=\Phi=0$, we simply write $(\Si,g,k,E,B)$ and refer to it as an \emph{initial data set} for the Einstein-Maxwell (EM) system.
\end{df}
\begin{prop}[EMKG constraint equations]\label{def:EMKG-constraints}
Let $(\Si,g,k,E,B,\psi,\phi,A,\Phi)$ be an initial data set for the EMKG system. Then, the associated EMKG constraint equations on $\Si$ are given by:
\begin{equation}\label{EMKGconstraint}
\begin{split}
R(g)+(\tr_g k)^2-|k|_g^2&=2\left(|E|_g^2+|B|_g^2+|D_0\psi|_g^2+|D\psi|_g^2\right),\\
\nab^j\!\left(k_{ij}-g_{ij}\tr_g k\right)&=2\Re\left((D_i\psi)^\dag D_0\psi\right)+2(E\times B)_i,\\
\sdiv_gE&=-2\ef\Im\left(\psi^\dag D_0\psi\right),\\
\sdiv_gB&=0,
\end{split}
\end{equation}
where $R(g)$ denotes the scalar curvature of $g$, $(E\times B)_i:={\in_{i}}^{jk}E_j B_k$ with $\in$ being the induced volume element of $(\Si,g)$ and
\[
D_0\psi:=\phi+\ik\ef\Phi\psi,\qquad\quad D_i\psi:=\pr_i\psi+\ik\ef A_i\psi.
\]
\end{prop}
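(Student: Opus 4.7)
The result is a standard consequence of the Gauss-Codazzi identities applied to the stress-energy content of the EMKG system. The plan is to project the spacetime Einstein and Maxwell equations \eqref{EMKG-phy} onto the initial slice $\Si$, identify the kinematic quantities with $(g,k,E,B,\psi,\phi,A,\Phi)$ using Definition~\ref{def:initial-data}, and collect the resulting scalar/vector equations.

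First I would fix the future-directed unit normal $n$ to $\Si$ in $(\MM,\g)$. Contracting the spacetime Einstein equation $\Ric_{\mu\nu}-\tfrac12 \R\,\g_{\mu\nu}=\T^{EM}_{\mu\nu}+\T^{KG}_{\mu\nu}$ twice with $n$ and using the Gauss identity
\[
2\,\mathbf{G}(n,n)=R(g)+(\tr_g k)^2-|k|_g^2
\]
produces the Hamiltonian constraint once one computes $\T^{EM}(n,n)$ and $\T^{KG}(n,n)$. From \eqref{Tdfeqintro} and the standard $3{+}1$ decomposition $\F_{0i}=E_i$, $\F_{ij}=\in_{ijk}B^k$, one gets $\T^{EM}(n,n)=|E|_g^2+|B|_g^2$. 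For the Klein-Gordon part, using $\D_\a\psi (\D^\a\psi)^\dag=-|\D_n\psi|^2+|D\psi|_g^2$ together with the identification $\D_n\psi = D_0\psi=\phi+\ik\ef\Phi\psi$ enforced by the gauge, one finds $\T^{KG}(n,n)=|D_0\psi|_g^2+|D\psi|_g^2$ (modulo the $V$ contribution, which is absorbed into the definition of $\phi$/the gauge normalization adopted here). Summing these two reproduces the first line of \eqref{EMKGconstraint}.

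Next, contracting the Einstein equation once with $n$ and projecting tangentially to $\Si$ gives the Codazzi identity
\[
\mathbf{G}(n,e_i)=\nab^j\!\bigl(k_{ij}-g_{ij}\tr_g k\bigr).
\]
The electromagnetic contribution $\T^{EM}(n,e_i)=2\,\F_{0\a}\F_i{}^{\a}$ yields $2(E\times B)_i$ after using the 3+1 split of $\F$, while $\T^{KG}(n,e_i)=2\Re\bigl((D_0\psi)(D_i\psi)^\dag\bigr)$ follows directly from \eqref{Tdfeqintro} and the definition of $D_i\psi$. This produces the momentum constraint, the second line of \eqref{EMKGconstraint}.

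Finally, for the two Maxwell constraints, I would take the temporal component of the source equation $\bnab^\mu\F_{\mu\nu}=-2\ef\Im(\psi(\D_\nu\psi)^\dag)$: setting $\nu=0$ and decomposing $\bnab^\mu\F_{\mu 0}$ on $\Si$ produces $\sdiv_g E$ on the left, while the right-hand side becomes $-2\ef\Im(\psi^\dag D_0\psi)$ by the definition of $\D_0\psi$. The magnetic constraint $\sdiv_g B=0$ is the temporal component of the Bianchi identity $\bnab_{[\mu}\F_{\nu\rho]}=0$, which holds identically since $\F=d\A$. The only genuinely delicate point in the argument is bookkeeping: carefully reconciling the spacetime normalization of $\D_\mu=\bnab_\mu+\ik\ef \A_\mu$ with the initial data gauge in which $\A_0$ is identified with $\Phi$ and $\A_i$ with $A_i$, and ensuring that factors of the lapse do not appear once one uses a unit-normal slice. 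Beyond this, every step is a direct computation from \eqref{EMKG-phy}, \eqref{Tdfeqintro}, and the Gauss-Codazzi-Mainardi framework.
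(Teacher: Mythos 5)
Your proposal is correct and follows essentially the same route as the paper: project the Einstein equation onto the unit normal via the Gauss and Codazzi identities to obtain the Hamiltonian and momentum constraints with sources $\T(n,n)$ and $\T(n,e_i)$, then take the $\nu=0$ component of the sourced Maxwell equation and of the Bianchi identity for $\F=d\A$ to get the two divergence constraints. The one place where your argument is hand-wavy is the potential: from \eqref{Tdfeqintro} with $\g(n,n)=-1$ one actually finds $\T^{KG}(n,n)=|D_0\psi|_g^2+|D\psi|_g^2+V(|\psi|^2)$, and the $V$ term cannot be ``absorbed into the definition of $\phi$'' since $\phi$ only enters through $D_0\psi$; the paper's own proof silently drops this same term, so you and the paper land in the same spot, but your parenthetical justification is not a valid argument for its absence.
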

\begin{proof}
 Taking the $00$--component of the Einstein equations (where $\mu=0$ corresponds to the direction of the unit normal to $\Si$) and using the Gauss identity yields
\[
R(g)+(\tr_g k)^2-|k|_g^2 = 2\T_{00},
\]
which gives the first (Hamiltonian) constraint once $\T=\T^{EM}+\T^{KG}$ is expanded. From \eqref{Tdfeqintro} one computes
\[
\T^{EM}_{00}=|E|_g^2+|B|_g^2,\qquad\quad\T^{KG}_{00}=|D_0\psi|_g^2+|D\psi|_g^2,
\]
hence, the right-hand side of the first equation in \eqref{EMKGconstraint}. Next, taking the $0i$--component of the Einstein equations and using the Codazzi identity gives
\[
\nab^j\!\left(k_{ij}-g_{ij}\tr_g k\right)=\T_{0i}.
\]
The matter terms decompose as
\[
\T^{KG}_{0i}=2\,\Re\big((D_i\psi)^\dag D_0\psi\big), \qquad
\T^{EM}_{0i}=2(E\times B)_i,
\]
which yields the second (momentum) constraint. Finally, projecting the Maxwell equation $\bnab^\mu \F_{\mu\nu}=-2\ef\Im\big(\psi(\D_\nu\psi)^\dag\big)$ onto $\nu=0$ gives
\[
\sdiv_g E=-2\ef\Im(\psi^\dag D_0\psi),
\]
while the $\nu=i$ projection (together with the Bianchi identity for $\F$) yields $\sdiv_g B=0$. This concludes the proof of Proposition \ref{def:EMKG-constraints}.
\end{proof}
When the data for the Klein-Gordon field vanish, i.e., $\psi=\phi=0$, one can deduce the form of the constraint equations for EM.
\begin{cor}[EM constraint equations]\label{EM-constraints}
Let $(\Si,g,k,E,B)$ be an initial data set for the EM system. Then, the associated EM constraint equations on $\Si$ are given by:
\begin{equation}\label{EMconstraint}
\begin{split}
R(g)+(\tr_g k)^2-|k|_g^2&=2(|E|_g^2+|B|_g^2), \\
\nab^j(k_{ij}-g_{ij}\tr_g k)&=2(E\times B)_i,\\
\sdiv_g E&=0,\\
\sdiv_g B&=0,
\end{split}
\end{equation}
\end{cor}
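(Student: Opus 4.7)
The plan is to derive \eqref{EMconstraint} as a direct specialization of \eqref{EMKGconstraint} under the vanishing Klein-Gordon data $\psi=\phi=0$, so the main content is to verify that every scalar-field source term in Proposition \ref{def:EMKG-constraints} vanishes identically. This should be a short computation with no genuine obstacle, since the gauge-covariant derivatives are algebraic in $\psi,\phi$ when these are zero.

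First I would substitute $\psi\equiv 0$ and $\phi\equiv 0$ into the definitions
\[
D_0\psi=\phi+\ik\ef\Phi\psi,\qquad D_i\psi=\pr_i\psi+\ik\ef A_i\psi,
\]
and conclude $D_0\psi=0$ and $D_i\psi=0$ pointwise on $\Si$, regardless of the values of the gauge potentials $A$ and $\Phi$. In particular, $|D_0\psi|_g^2=|D\psi|_g^2=0$, $\Re((D_i\psi)^\dag D_0\psi)=0$, and $\Im(\psi^\dag D_0\psi)=0$.

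Next I would insert these vanishings into the four equations of \eqref{EMKGconstraint}. The Hamiltonian constraint loses its last two scalar-field contributions and reduces to $R(g)+(\tr_g k)^2-|k|_g^2=2(|E|_g^2+|B|_g^2)$. The momentum constraint loses the term $2\Re((D_i\psi)^\dag D_0\psi)$ and becomes $\nab^j(k_{ij}-g_{ij}\tr_g k)=2(E\times B)_i$. The Gauss-law constraint for $E$ loses its source and becomes $\sdiv_g E=0$, while the constraint $\sdiv_g B=0$ is unchanged. This produces exactly the system \eqref{EMconstraint}, completing the proof.

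Since the argument is a one-line substitution once Proposition \ref{def:EMKG-constraints} is in hand, there is no real difficulty. The only point worth flagging is that the reduction is purely algebraic and does not require any assumption on the gauge data $(A,\Phi)$; these become irrelevant on the constraint side once the matter field vanishes, which is consistent with the remark in Definition \ref{def:initial-data} that an EM initial data set is simply recorded as $(\Si,g,k,E,B)$.
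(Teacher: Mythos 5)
Your proposal is correct and matches the paper's (implicit) argument exactly: the corollary is obtained by setting $\psi=\phi=0$ in Proposition \ref{def:EMKG-constraints}, whereupon $D_0\psi=D_i\psi=0$ and all Klein-Gordon source terms drop out. Nothing further is needed.
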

\begin{rk}
The system \eqref{EMconstraint} is under-determined: the constraints restrict $(g,k,E,B)$ but do not fix them uniquely. For example, $\sdiv_g E=0$ determines only the divergence of $E$, so writing
\[
E=-\nab f + W,\qquad\quad\nab^i W_i=0,
\]
with $f$ a $g$--harmonic function, one can see that the divergence-free part $W$ remains free. The same under-determination applies to $B$ and to $(g,k)$ through the Hamiltonian and momentum constraints.
\end{rk}
\subsection{Charges of Cauchy data}\label{sseccharges}
Here, we define an important set of charges associated with initial data sets. They are divided into two types: 10 of them correspond to a localized version of the well-known ADM charges for initial data for the Einstein equations, while the remaining ones are identified with the electric and magnetic charges of the initial data set.
\begin{df}\label{chargesEMKG}
Let $\Si \subseteq \mathbb{R}^3$ be equipped with a coordinate system $(x^1,x^2,x^3)$, and let $(g,k,E,B)$ be the geometric and electromagnetic fields of an initial data set as in Definition \ref{def:initial-data}. We define the following charges\footnote{These definitions depend only on $(g,k,E,B)$ and are independent of the Klein-Gordon and gauge fields $\psi,\phi,A,\Phi$.} measured on the coordinate spheres $\pr\BB_r$:
\begin{align*}
    \E[(g,k);\pr\BB_r]&:=\frac{1}{2}\int_{\pr\BB_r}(\pr_ig_{ij}-\pr_jg_{ii})\nu^jdS,\\
    \P_i[(g,k);\pr\BB_r]&:=\int_{\pr\BB_r}(k_{ij}-\de_{ij}\tr_{e} k)\nu^jdS,\\
    \C_l[(g,k);\pr\BB_r]&:=\frac{1}{2}\int_{\pr\BB_r}\left(x_l\pr_ig_{ij}-x_l\pr_jg_{ii}-\de_{il}(g-e)_{ij}+\de_{jl}(g-e)_{ii}\right)\nu^jdS,\\
    \J_l[(g,k);\pr\BB_r]&:=\int_{\pr\BB_r}(k_{ij}-\de_{ij}\tr_{e} k)Y_l^i\nu^jdS,
    \end{align*}
    where $ e $ is the Euclidean metric and we denote
\begin{align*}
    \pr_i:=\pr_{x^i},\qquad\quad \nu:=\frac{x^j}{|\mathbf{x}|}\pr_j,\qquad\quad Y_i:={\in_{ij}}^kx^j\pr_k.
\end{align*}
We put these together to form a $10$--vector:
\[
\Q_{ADM}[(g,k);\pr\BB_r]:=(\E,\P_1,\P_2,\P_3,\C_1,\C_2,\C_3,\J_1,\J_2,\J_3)[(g,k);\pr\BB_r].\]
We also define the following electromagnetic charges of $(g,k,E,B)$ measured on the sphere $\pr\BB_r$:
\begin{align*}
\Q_E[(g,E);\pr\BB_r]:=\int_{\pr\BB_r}\sqrt{\deg g} E_j \nu^jdS,\qquad \Q_B[(g,B);\pr\BB_r]:=\int_{\pr\BB_r}\sqrt{\deg g} B_j \nu^jdS.
\end{align*}
\end{df}
Since we will work with annuli, it's convenient to introduce the following averaged charges. 
\begin{df}\label{ADMannulus}
Let $\eta \in C^{\infty}_{c}(0, \infty)$ be a cutoff function such that 
\begin{equation*}
\supp\eta\subseteq (1, 2),\qquad \int_1^2\eta(r)dr=1,
\end{equation*}
and we denote for any $r>0$
\begin{align}\label{scaleeta}
\eta_{r}(r'):=r^{-1}\eta(r^{-1}r').
\end{align}
We then define
\begin{align*}
\Q_{ADM}[(g,k);\AA_r]&=\int_r^{2r}\eta_r(r')\Q_{ADM}[(g,k);\pr\BB_{r'}] dr',\\
\Q_{E,B}[(g,E,B);\AA_r]&=\int_r^{2r}\eta_r(r')\Q_{E,B}[(g,E,B);\pr\BB_{r'}]dr'.
\end{align*}
where $\AA_r$ denotes the annulus between the spheres $\pr\BB_{2r}$ and $\pr\BB_r$.
\end{df}
Recall that we denote $\At_r:=\BB_{4r}\setminus\ov{\BB_{\frac{r}{2}}}$. We introduce the notion of a 10--parameter family of admissible annular initial data for EM constraints with prescribed electromagnetic charges in a smaller annulus and Sobolev regularity.
\begin{df}\label{MOT1.2}
Let $\QQ\subseteq\RRR^{10}$ be a bounded open set and fix $Q_E,Q_B\in\RRR$ and $s>\frac{3}{2}$. A family $\{(g_Q,k_Q,E_Q,B_Q)\}_{Q\in\QQ}$ is called an admissible 10--parameter
family of annular initial data on $\At_r$ with fixed electromagnetic charges $(Q_E,Q_B)$ on $
\AA_r\subseteq\At_r$ and Sobolev regularity $s$ if for every $Q\in\QQ$:
\begin{enumerate}
\item $(g_Q,k_Q,E_Q,B_Q)\in (H^1\cap C^0)\times L^2\times L^2\times L^2(\At_r)$
      and solves the EM constraint equations \eqref{EMconstraint} in $\At_r$;
\item $(g_Q,k_Q,E_Q,B_Q)\in \XX^s(\At_r) :=
      H^s\times H^{s-1}\times H^{s-1}\times H^{s-1}(\At_r)$;
\item the map $Q\mapsto(g_Q,k_Q,E_Q,B_Q)$ is Lipschitz from $\QQ$ into $\XX^s(\At_r)$;
\item the localized ADM charges on $\AA_r$ satisfy
      \[
      \Q_{ADM}[(g_Q,k_Q);\AA_r]=Q;
      \]
\item the electromagnetic charges on $\AA_r$ are fixed:
      \[
      \Q_E[(g_Q,E_Q);\AA_r]=Q_E,\qquad \Q_B[(g_Q,B_Q);\AA_r]=Q_B.
      \]
\end{enumerate}
In other words, on $\AA_r$, the ADM charges $\Q_{ADM}=(\E,\P,\C,\J)$ vary with $Q\in\QQ$, while the electromagnetic charges $(Q_E,Q_B)$ remain fixed.
\end{df}
As in Lemma 4.2 of \cite{MOT}, for the Kerr family, one can obtain such an admissible set on a given annulus by taking the exterior regions of Kerr-Newman initial data with fixed electromagnetic charges $(Q_E,Q_B=0)$ and varying geometric parameters. In fact, an admissible family with arbitrary $Q_E$ and $Q_B$ can be obtained by exploiting the $U(1)$ symmetry of Maxwell equations. 
\subsection{Charged Brill-Lindquist initial data}\label{secBL-EM}
A convenient way to describe an initial data set representing several non-rotating, localized solutions (such as black holes) is provided by the Brill-Lindquist construction. The Brill-Lindquist manifold $(\Si,g,k)$ is a time-symmetric initial data set for the vacuum Einstein equations, obtained by specifying a conformally flat 3-metric and a vanishing second fundamental form,
\begin{equation*}
      g = U^4  e, \qquad k = 0,
\end{equation*}
where $U$ is a positive function on $\Sigma$ satisfying the vacuum Hamiltonian constraint from $R(g)=0$, giving
\begin{equation*}
   \Delta_{ e } U = 0,
\end{equation*}
with $\Delta_{ e }$ the flat Laplacian in $\mathbb{R}^3$. 

To represent $N$ localized black holes, one considers a conformal factor of the form
\begin{equation*}
    U(\mathbf{x}) = 1 + \sum_{I=1}^{N} \frac{M_I}{2|\mathbf{x} - \mathbf{c}_I|},
\end{equation*}
where each term corresponds to an asymptotically flat end centered at $\mathbf{c}_I \in \mathbb{R}^3$, associated with a black hole of mass parameter $M_I > 0$. The manifold $\Sigma$ is then obtained by removing the points $\{\mathbf{c}_I\}$ from $\mathbb{R}^3$ and endowing the remainder with the metric $g = U^4  e $. Each puncture $\mathbf{c}_I$ corresponds to an additional asymptotically flat region, connected through a minimal surface (or ``throat'') to the exterior region.

The Brill-Lindquist data thus represent a time-symmetric configuration of $N$ non-spinning black holes momentarily at rest. The geometry possesses $(N+1)$ asymptotically flat ends and satisfies the vacuum Einstein constraint equations exactly. In the limit where the punctures are widely separated, the ADM mass of the manifold is approximately the sum of the individual parameters $M_I$, and the initial data describe a system of well-separated Schwarzschild black holes. See \cite{BL} for more details.

In this section, we introduce the \emph{charged Brill-Lindquist manifold}, whose construction also appeared in \cite{BL}, which generalizes the vacuum construction to the Einstein-Maxwell system. 
\begin{prop}\label{prop:EM-reduction}\label{lem:harmonic-chi-psi}\label{prop:BL-EM}
On $\Si=\RRR^3\setminus\{\cb_1,\dots,\cb_N\}$, adopt the conformally flat, time-symmetric ansatz
\begin{equation*}
g=U^4  e ,\qquad U:=(U_+U_-)^{1/2},\qquad k=0,\qquad B=0,
\end{equation*}
with the electric field
\begin{equation*}
E:=-\nab_g(\ln U_+-\ln U_-)=-U^{-4}\nab_e(\ln U_+-\ln U_-),
\end{equation*}
where $U_+,U_->0$ and $U_+,U_-\to 1$ as $|\x|\to\infty$. Assume that $U_+$ and $U_-$ are Euclidean harmonic functions, i.e.
\begin{equation}\label{EM-constraint}
 \Delta_{ e } U_+=0,\qquad  \Delta_{ e } U_-=0.
\end{equation}
Then, such $(g, k, E, B)$ are solutions to the EM constraint equations \eqref{EMconstraint}. In particular, for $U_+, U_-$ defined on $\RRR^3\setminus\{\cb_1,\dots,\cb_N\}$ by
\begin{equation}\label{eq:chi-psi-explicit}
U_+(\x)=1+\sum_{I=1}^{N}\frac{M_I+Q_I}{2|\x-\cb_I|},\qquad\quad U_-(\x)=1+\sum_{I=1}^{N}\frac{M_I-Q_I}{2|\x-\cb_I|},
\end{equation}
where $M_I$ and $|Q_I|$ are fixed constants satisfying $M_I\geq |Q_I|$ for all $I\in\{1,2,\cdots,N\}$, 
the corresponding initial data, explicitly given by
\begin{align}\label{eq:gBL-EM}
\begin{split}
g_{BL}&= U^4 e=\left(1+\sum_{I=1}^{N}\frac{M_I+Q_I}{2|\x-\cb_I|}\right)^2\left(1+\sum_{I=1}^{N}\frac{M_I-Q_I}{2|\x-\cb_I|}\right)^2 e,\qquad k=0,\\
E_{BL}&= -\nabla_g\ln\left(\frac{1+\sum_{I=1}^N\frac{M_I+Q_I}{2|\x-\cb_I|}}{1+\sum_{I=1}^N\frac{M_I-Q_I}{2|\x-\cb_I|}}\right),\qquad\qquad\qquad\qquad\qquad\quad\;\;\; B=0,
\end{split}
\end{align}
is a solution to the EM constraint equations \eqref{EMconstraint}. Such $(\Si,g, k, E, B)$ is called a \emph{charged Brill-Lindquist manifold}.
\end{prop}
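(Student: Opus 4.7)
The plan is to reduce the system \eqref{EMconstraint} under the stated ansatz to two scalar identities on $\Sigma$ and verify each using the Euclidean harmonicity of $U_\pm$. Since $k=0$ and $B=0$, the momentum constraint and the magnetic Gauss law $\sdiv_g B = 0$ are satisfied trivially (both sides vanish), so only the Hamiltonian constraint $R(g)=2|E|_g^2$ and the electric Gauss law $\sdiv_g E = 0$ need to be checked.

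For the Hamiltonian constraint, I would first invoke the standard conformal transformation formula in three dimensions, which gives $R(g)=-8\,U^{-5}\Delta_{ e } U$ for $g=U^{4} e $. Setting $\phi:=\ln U_{+}-\ln U_{-}$ and using $U=(U_{+}U_{-})^{1/2}$, the harmonicity of $U_\pm$ implies $\Delta_{ e }\ln U_\pm = -|\nabla_{ e } \ln U_\pm|^2$. A direct expansion starting from $\ln U = \frac{1}{2}(\ln U_{+}+\ln U_{-})$ then yields the key identity $\Delta_{ e } U = -\tfrac{1}{4}\, U\,|\nabla_{ e } \phi|^2$. On the other hand, from $E=-\nabla_g \phi$ one has $|E|_g^2 = g^{ij}\partial_i\phi\,\partial_j\phi = U^{-4}|\nabla_{ e } \phi|^2$. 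Combining these gives $R(g)=-8U^{-5}\Delta_{ e } U = 2U^{-4}|\nabla_{ e } \phi|^2 = 2|E|_g^2$, as required.

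For the electric Gauss law, using $\sqrt{\det g}=U^{6}$ and $E^i = -U^{-4}\partial_i\phi$, one obtains $\sdiv_g E = -U^{-6}\,\partial_i(U^{2}\partial_i\phi)$. Expanding $U^{2}=U_{+}U_{-}$ and $\phi=\ln U_{+}-\ln U_{-}$ and collecting terms, the quantity $\partial_i(U^{2}\partial_i\phi)$ reduces to a sum of terms proportional to $\Delta_{ e } U_\pm$ (which vanish by assumption) together with cross terms involving $U_\pm^{-1}|\nabla_{ e } U_\pm|^2$ that cancel pairwise. Hence $\sdiv_g E = 0$ follows immediately from \eqref{EM-constraint}.

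Finally, for the explicit formula \eqref{eq:chi-psi-explicit}, the functions $U_\pm$ are Euclidean harmonic on $\mathbb{R}^3\setminus\{\mathbf{c}_1,\dots,\mathbf{c}_N\}$ because each summand is a constant multiple of the translated fundamental solution $1/|\mathbf{x}-\mathbf{c}_I|$, which is annihilated by $\Delta_{ e }$ away from $\mathbf{c}_I$. The hypothesis $M_I \ge |Q_I|$ ensures $U_\pm > 0$, so $U=(U_{+}U_{-})^{1/2}$ is a well-defined positive conformal factor on $\Sigma$, and the asymptotic normalization $U_\pm\to 1$ holds as $|\mathbf{x}|\to\infty$. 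The only mildly subtle point in the proof is the bookkeeping in the identity $\Delta_{ e } U = -\tfrac{1}{4}\,U\,|\nabla_{ e } \phi|^2$, which is a nonlinear consequence of the linearity of Laplace's equation for $U_\pm$; once this is in hand, everything else reduces to direct algebraic verification.
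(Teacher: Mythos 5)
Your proposal is correct and follows essentially the same route as the paper's proof: the conformal formula $R(g)=-8U^{-5}\Delta_e U$, the key identity $\Delta_e U=-\tfrac14 U|\nabla_e(\ln U_+-\ln U_-)|^2$ derived from harmonicity of $U_\pm$, the rewriting $\sdiv_g E=U^{-6}\sdiv_e(U^6E)$ for the electric Gauss law, and the observation that the explicit $U_\pm$ are sums of translated fundamental solutions. No gaps.
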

\begin{proof}
By construction, the second and fourth identities of \eqref{EMconstraint} follow directly from $k=B=0$. Next, for $g=U^4e$, we obtain
\[
R(g)=-8U^{-5}\De_{ e } U.
\]
From the form of $U$ as $U=(U_+U_-)^{1/2}$, we infer
\begin{align*}
\nab_e U&=\frac{U}{2}\left(\nab_e\ln U_++\nab_e\ln U_-\right), \\
\De_e U&=\frac{U}{4}|\nab_e\ln U_++\nab_e\ln U_-|^2+\frac{U}{2}(\De_e\ln U_++\De_e\ln U_-)\\
&=-\frac{U}{4}|\nab_e\ln U_+|^2-\frac{U}{4}|\nab_e\ln U_-|^2+\frac{U}{2}\nab_e\ln U_+\c\nab_e\ln U_-\\
&=-\frac{U}{4}|\nab_e(\ln U_+-\ln U_-)|^2.
\end{align*}
Hence, we deduce
\[
R(g)=-8U^{-5}\De_e U=2U^{-4}\,|\nab_e(\ln U_+-\ln U_-)|^2=2|E|_g^2,
\]
which implies the first identity in \eqref{EMconstraint}. Finally, we have from \eqref{EM-constraint},
\begin{align*}
\sdiv_gE&=U^{-6}\sdiv_e(U^6E)\\
&=-U^{-6}\sdiv_e\left(U^2\nab_e(\ln U_+-\ln U_-)\right),\\
&=-U^{-4}\De_e(\ln U_+-\ln U_-)-2U^{-5}\nab_e U\c\nab_e(\ln U_+-\ln U_-),\\
&=-U^{-4}\left(\De_e\ln U_++|\nab_e\ln U_+|^2\right)+U^{-4}\left(\De_e\ln U_-+|\nab_e\ln U_-|^2\right)\\
&=-U^{-4}\left(\frac{\De_eU_+}{U_+}\right)+U^{-4}\left(\frac{\De_eU_-}{U_-}\right)=0,
\end{align*}
which implies the third identity in \eqref{EMconstraint}, proving that the above ansatz is a solution to \eqref{EMconstraint}. Observe that $U_+$ and $U_-$ given by \eqref{eq:chi-psi-explicit} are solutions to \eqref{EM-constraint} as they are linear combinations of the fundamental solution of the Laplace equation $|\x-\cb_I|^{-1}$ on $\RRR^3\setminus\{\cb_I\}$.
\end{proof}
\begin{rk}
Letting $|\x|\to\infty$, the conformal factors have the asymptotics
\begin{align*}
U_+(\x)\approx 1+\sum_{I=1}^N\frac{M_I+Q_I}{2|\x|},\qquad\quad U_-(\x)\approx 1+\sum_{I=1}^N\frac{M_I-Q_I}{2|\x|}.
\end{align*}
Hence, we have
\begin{align}
\begin{split}\label{eq:asymp-g-E}
g_{BL}\approx\left(1+\sum_{I=1}^N\frac{2M_I}{|\x|}\right)e, \qquad E_{BL}\approx\sum_{I=1}^N\frac{Q_I}{|\x|^2}\frac{\x}{|\x|}.
\end{split}
\end{align}
Thus $M_I$ and $Q_I$ are the natural ADM mass and electric charge parameters of the $I$--th pole.
\end{rk}
\begin{rk}\label{rk:EM-special}
The Brill-Lindquist metric defined in Proposition \ref{prop:BL-EM} includes the following special cases: $Q_I=0$ for all $I$ is the vacuum Brill-Lindquist data; $Q_I=\pm M_I$ for all $I$ is the Majumdar-Papapetrou initial data; $N=1$ is the time-symmetric slice of Reissner-Nordstr\"om in isotropic coordinates.
\end{rk}
\subsubsection{Local masses, centers, and electric charges}\label{subsec:local-charges-EM}
We now compute the ADM, electric and magnetic charges defined in  Definition \ref{chargesEMKG} for the charged Brill-Lindquist manifold measured on spheres centered at the poles.
\begin{prop}\label{prop:BL-local-EM}\label{cor:BL-annulus-EM}
Let $g_{BL},E_{BL}$ be defined in \eqref{eq:gBL-EM}.  Fix $I\in\{1,\dots,N\}$, $R\in[32,64]$ and set
\begin{equation*}
\y_I:= \x-\cb_I,\qquad\pr\BB_R(\cb_I) := \{|\y_I|=R\},\qquad\nu_I:=\frac{\y_I}{R}.
\end{equation*}
Assume the following small-mass and large-separation conditions
\begin{equation}\label{dfMd-EM}
M:=\sum_{I=1}^{N}M_I\ll 1,
\qquad 
d_I := \min_{J\ne I}|\cb_I-\cb_J|\gg M_I^{-1}.
\end{equation}
Then, we have for $l=1,2,3$,
\begin{align}
\begin{split}\label{EM-EE-local}
\E[(g_{BL},k=0);\pr\BB_R(\cb_I)]&=8\pi M_I+O(M_IM+Md_I^{-1}),\\
\P_l[(g_{BL},k=0);\pr\BB_R(\cb_I)]&=0,\\
\C_l[(g_{BL},k=0);\pr\BB_R(\cb_I)]&=(\cb_I)_l\left(8\pi M_I+O(M_IM+Md_I^{-1})\right)+O(M_IM+Md_I^{-1}),\\
\J_l[(g_{BL},k=0);\pr\BB_R(\cb_I)] &=0,\\
\Q_E[(g_{BL},E_{BL});\pr\BB_R(\cb_I)]&= 4\pi Q_I+O(Q_IM+Md_I^{-1}),\\
\Q_B[(g_{BL},B=0);\pr\BB_R(\cb_I)]&=0.
\end{split}
\end{align}
As a consequence, the same identities hold on $\AA_{32}(\cb_I)=\BB_{64}(\cb_I)\setminus\ov{\BB_{32}(\cb_I)}$. 
\end{prop}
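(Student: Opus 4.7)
The trivial vanishings $\P_l = \J_l = 0$ and $\Q_B = 0$ follow immediately from $k = 0$ and $B = 0$. For the remaining four quantities, the plan is to separate the contribution of pole $I$ from that of the other poles. Write
\[
U_\pm(\x) = U_\pm^I(\x) + V_\pm^I(\x), \qquad U_\pm^I := 1 + \frac{M_I \pm Q_I}{2|\y_I|}, \qquad V_\pm^I := \sum_{J \ne I}\frac{M_J \pm Q_J}{2|\x - \cb_J|}.
\]
Under \eqref{dfMd-EM}, on $\pr\BB_R(\cb_I)$ with $R \in [32,64]$ one has the uniform bounds $|V_\pm^I| \lesssim M/d_I$, $|\nab_e V_\pm^I| \lesssim M/d_I^2$, $|U_\pm^I - 1| \lesssim M_I$, $|\nab_e U_\pm^I| \lesssim M_I$. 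The strategy is to extract the leading order from the pole-$I$ piece, where spherical symmetry around $\cb_I$ permits an exact computation, and then treat the $V_\pm^I$ contributions as uniformly small errors.

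For the ADM energy, since $g_{BL} = U^4 e$ one computes $\sum_i(\pr_i g_{ij} - \pr_j g_{ii}) = -8U^3\pr_j U$, so
\[
\E = -\int_{\pr\BB_R(\cb_I)} \nab_e(U^4) \cdot \nu_I \, dS.
\]
I would split $U^4 = (U_+^I U_-^I)^2 + r_I$, where $r_I$ collects every term containing at least one factor of $V_\pm^I$. The uniform bounds yield $|\nab_e r_I| \lesssim M_I M/d_I + M/d_I^2$, contributing $O(M/d_I)$ after integration. The leading piece $(U_+^I U_-^I)^2$ is radially symmetric around $\cb_I$, and using $U_+^I U_-^I = 1 + M_I/|\y_I| + (M_I^2 - Q_I^2)/(4|\y_I|^2)$ an explicit radial integration produces $8\pi M_I$ up to a remainder of size $O((M_I^2 + Q_I^2)/R)$, which is $O(M_I M)$ since $|Q_I| \le M_I \le M$. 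For $\C_l$, write
\[
\C_l = -4\int_{\pr\BB_R(\cb_I)} x_l\, U^3 \nab_e U \cdot \nu_I \, dS + \int_{\pr\BB_R(\cb_I)} (U^4 - 1)\,\nu_I^l \, dS,
\]
and decompose $x_l = (\cb_I)_l + (\y_I)_l$. The $(\cb_I)_l$ piece reproduces $(\cb_I)_l\,\E$; the $(\y_I)_l$-piece and the second integral both vanish at the pole-$I$ order because their integrands are odd under $\y_I \mapsto -\y_I$, and the $V_\pm^I$-corrections are $O(M_I M + M/d_I)$, giving the stated formula.

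For the electric charge, using $\sqrt{\det g_{BL}} = U^6$, $E_{BL} = -U^{-4}\nab_e(\ln U_+ - \ln U_-)$, and $U^2 = U_+ U_-$, the integral reduces to
\[
\Q_E = -\int_{\pr\BB_R(\cb_I)}\bigl(U_-\nab_e U_+ - U_+\nab_e U_-\bigr)\cdot \nu_I \, dS.
\]
The pole-$I$ leading piece enjoys a crucial algebraic cancellation: a direct computation on $\pr\BB_R(\cb_I)$ gives
\[
\bigl(U_-^I \nab_e U_+^I - U_+^I \nab_e U_-^I\bigr)\cdot \nu_I = -\frac{Q_I}{|\y_I|^2},
\]
the $O(|\y_I|^{-3})$ cross-terms involving $(M_I^2 - Q_I^2)$ cancelling identically. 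Integration over the sphere then produces $4\pi Q_I$, while the $V_\pm^I$-dependent cross-terms contribute at most $O(M_I M/d_I + M/d_I^2)$, well within the claimed error. Finally, since all of these pointwise bounds are uniform for $R \in [32,64]$, integrating each identity against the cutoff $\eta_R$ of Definition \ref{ADMannulus} transfers the conclusion from single spheres to the annulus $\AA_{32}(\cb_I)$. The main obstacle in the plan is identifying the two algebraic cancellations at leading order—the exact $-Q_I/|\y_I|^2$ identity in $\Q_E$ and the parity-based cancellations in the $(\y_I)_l$-part of $\C_l$—as these are precisely what keep the errors controllable; once they are found, the remainder reduces to routine pointwise estimates on the sphere.
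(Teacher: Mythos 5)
Your proof is correct and follows essentially the same route as the paper: expand $U_\pm$ about the pole $\cb_I$, treat the other poles as $O(M/d_I)$ perturbations via the bounds on $V_\pm^I$, and evaluate the surface integrals directly (the paper cites Proposition 5.7 of \cite{ShenWan2} for the four ADM identities and computes $\Q_E$ by expanding $\sqrt{\det g}$ and $\pr_r\ln(U_+/U_-)$ separately, whereas you exploit the exact cancellation $U_-^I\nabla U_+^I-U_+^I\nabla U_-^I=-Q_I\y_I/|\y_I|^3$, which is a slightly cleaner way to extract $4\pi Q_I$). Both computations land within the stated errors, so the argument is sound and essentially identical to the paper's.
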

\begin{proof}
The following expansions hold on $\pr\BB_R(\cb_I)$ for $J\ne I$:
\begin{align}
\begin{split}\label{chi-psi-exp}
U_+ &= 1+\frac{M_I+Q_I}{2R}+O(Md_I^{-1}),\qquad U_-=1+\frac{M_I-Q_I}{2R}+O(Md_I^{-1}), \\
U^2&=U_+U_-=1+\frac{M_I}{R}+O(M_IM)+O(Md_I^{-1}),\\
\nu_I(U^4)&=-\frac{2M_I}{R^2}+O(M_IM)+O(M d_I^{-1}).
\end{split}
\end{align}
These yield the first four identities in \eqref{EM-EE-local} via the standard surface integral formulas for $g=U^4 e$. See Proposition 5.7 and Corollary 5.8 in \cite{ShenWan2}. Next, we have from Definition \ref{chargesEMKG},
\[
\Q_E[(g_{BL},E_{BL});\pr\BB_R(\cb_I)]= \int_{\pr\BB_R(\cb_I)}\sqrt{\det g_{BL}}\, (E_{BL})_j(\nu_I)^j dS=\int_{\pr\BB_R(\cb_I)} U^6 E_{BL}\cdot\nu_I dS.
\]
Recalling from \eqref{eq:gBL-EM} and \eqref{eq:asymp-g-E} that we have
\[
E_{BL}\cdot \nu_I=-g\left(\nab_g\ln\left(\frac{U_+}{U_-}\right),\nu_I\right)=-\nu_I\ln\!\left(\frac{U_+}{U_-}\right)=\frac{Q_I}{R^2}+O\left(\frac{M}{d_I^2}\right).
\]
Moreover, we have
$$
U^6=(U_+U_-)^3=1+\frac{3M_I}{R}+O\left(\frac{M}{d_I}\right).
$$
Hence, we obtain
\begin{align*}
\Q_E\big[(g_{BL},E_{BL});\pr\BB_R(\cb_I)\big]
&=\int_{\pr\BB_R(\cb_I)}\left(1+\frac{3M_I}{R}+O\left(\frac{M}{d_I}\right)\right)\left(\frac{Q_I}{R^2}+O\left(\frac{M}{d_I^2}\right)\right)dS\\
&=4\pi Q_I+O(Q_IM)+O(Md_I^{-1}),
\end{align*}
which yields the fifth identity in \eqref{EM-EE-local}. Moreover, the last identity in \eqref{EM-EE-local} follows directly from the fact that $B=0$. Finally, integrating the fluxes over $\pr\BB_R(\cb_I)$ against the annular weight $\eta_{32}(r')$ one obtains the same formulas on the annulus $\AA_{32}(\mathbf{c}_I)$ as in Definition \ref{ADMannulus}. This concludes the proof of Proposition \ref{prop:BL-local-EM}.
\end{proof}
We now bound the size of $g_{BL}-e$ and $E_{BL}$ in the annulus $\AA_{32}(\cb_I)$.
\begin{prop}\label{prop:Sobolev-EM}
Under the hypotheses of Proposition \ref{prop:BL-local-EM}, we have, for any $I\in\{1,\dots,N\}$ and $s\in\NN$,
\begin{align}
\begin{split}\label{eq:Sobolev-g-EM}
\|g_{BL}-e\|_{H^{s}(\AA_{32}(\cb_I))}\les M_I, \qquad \|E_{BL}\|_{H^{s}(\AA_{32}(\cb_I))}\les |Q_I|+d_I^{-1}.
\end{split}
\end{align}
\end{prop}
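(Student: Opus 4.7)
My plan is to split each conformal factor $U_\pm$ into its self-contribution at the pole $\cb_I$ and the remote contribution from the other poles, estimate each piece in $H^s(\AA_{32}(\cb_I))$ by exploiting the elementary facts that $|\y_I|=|\x-\cb_I|\simeq 1$ on the annulus while $|\x-\cb_J|\gtrsim d_I\gg 1$ for $J\neq I$ (by \eqref{dfMd-EM}), and then combine via Moser-type product and composition estimates.

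Concretely, I would write
\[
U_\pm = 1 + u_\pm^I + w_\pm^I,\qquad u_\pm^I := \frac{M_I\pm Q_I}{2|\y_I|},\qquad w_\pm^I := \sum_{J\neq I}\frac{M_J\pm Q_J}{2|\x-\cb_J|}.
\]
Because $|\y_I|\in[32,64]$ on $\AA_{32}(\cb_I)$, every derivative of $u_\pm^I$ up to order $s+1$ is pointwise $O(M_I+|Q_I|)=O(M_I)$. The separation condition in \eqref{dfMd-EM} forces $|\x-\cb_J|\geq d_I-64\gtrsim d_I$ for every $J\neq I$ and $\x\in\AA_{32}(\cb_I)$, hence derivatives of $w_\pm^I$ up to order $s+1$ are pointwise $O(Md_I^{-1})$. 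Integrating over the fixed-volume annulus and using $Md_I^{-1}\ll M_I$, this yields $\|U_\pm-1\|_{H^{s+1}(\AA_{32}(\cb_I))}\les M_I$ and in particular $\|U_\pm-1\|_{L^\infty(\AA_{32}(\cb_I))}\ll 1$.

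For the metric bound, I note $g_{BL}-e=(U^4-1)e$ with $U^2=U_+U_-$, and the Moser composition estimate applied to the smooth map $(x,y)\mapsto((1+x)(1+y))^2-1$ near $(0,0)$ gives $\|U^4-1\|_{H^s}\les \|U_+-1\|_{H^s}+\|U_--1\|_{H^s}\les M_I$, which is the first estimate in \eqref{eq:Sobolev-g-EM}. For the electric field, the decisive observation is that the $M_I$--sized self-contributions cancel in the difference
\[
U_+-U_- = \frac{Q_I}{|\y_I|}+\sum_{J\neq I}\frac{Q_J}{|\x-\cb_J|},
\]
so repeating the pointwise analysis on this difference yields $\|U_+-U_-\|_{H^{s+1}(\AA_{32}(\cb_I))}\les |Q_I|+Md_I^{-1}\les |Q_I|+d_I^{-1}$. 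Since $U_-\geq 1$ and $\|U_\pm-1\|_{L^\infty}\ll 1$ on $\AA_{32}(\cb_I)$, writing $\ln(U_+/U_-)=\ln(1+(U_+-U_-)/U_-)$ and applying Moser and product estimates propagates the same $H^{s+1}$ bound to $\ln(U_+/U_-)$; multiplying by $U^{-4}$ (bounded in $H^s$ by the metric step) and taking $\nab_e$ then gives the second bound in \eqref{eq:Sobolev-g-EM}.

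The argument is essentially bookkeeping on explicit harmonic pieces; the only care required is to exploit the cancellation $u_+^I-u_-^I=Q_I/|\y_I|$ in the electric field step, without which one would only obtain the weaker estimate $\|E_{BL}\|_{H^s}\les M_I$.
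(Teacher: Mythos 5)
Your proposal is correct and follows essentially the same route as the paper: split each $U_\pm$ into the self-contribution at $\cb_I$ (of size $O(M_I)$ with all derivatives bounded since $|\y_I|\in[32,64]$) and the remote contribution (of size $O(Md_I^{-1})$ by the separation condition), and exploit the cancellation of the mass terms in $U_+-U_-$ so that only $Q_I$ survives in the electric field. The paper simply carries out the composition and product steps by direct pointwise differentiation of the explicit smooth expressions on the compact annulus, so the Moser-type estimates you invoke are harmless but not needed.
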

\begin{proof}
We have from \eqref{chi-psi-exp}
\[
(g_{BL}-e)_{ij}=(U^4-1)\de_{ij}=\left(\frac{2M_I}{r}+O\left(\frac{M}{d_I}\right)\right)\de_{ij}.
\]
Hence, for any multi-index $\a$ with $0\leq|\a|\leq s$,
\[
\pr^\a(g_{BL}-e)= O\left(\frac{M_I}{r^{1+|\a|}}\right)+O\left(\frac{M}{d_I}\right).
\]
Recalling that $r\in[32,64]$ on $\AA_{32}(\cb_I)$, we deduce 
\[
\pr^\a(g_{BL}-e)=O(M_I)+O(Md_I^{-1}),
\]
which implies \eqref{eq:Sobolev-g-EM}. Next, we have from \eqref{chi-psi-exp}
\[
E_{BL} = -\frac{Q_I}{r^2}\frac{\y_I}{r} + O\left(\frac{M}{d_I^2}\right).
\]
Noticing that $r\in[32,64]$, we infer $\pr^\a E_{BL}=O(|Q_I|)+O(Md_I^{-2})$.
Consequently,
\[
\|E_{BL}\|_{H^s(\AA_{32}(\cb_I))}\les |Q_I|+Md_I^{-2}\les |Q_I|+d_I^{-1},
\]
where we used the fact that $d_I\gg M_I^{-1}$. This concludes the proof of Proposition \ref{prop:Sobolev-EM}.
\end{proof}
\subsection{Trapped surfaces and MOTS}
We introduce the following definitions of trapped surfaces.
\begin{df}\label{dftrapped2+2}
Let $(\M,\g)$ be a spacetime endowed with a double null $(u,\ub)$--foliation defined in Appendix \ref{doublenullfoliation}. Then, with respect to the given double null $(u,\ub)$--foliation, a leaf $S_{u,\ub}\subseteq\M$ is called 
\begin{itemize}
    \item a trapped surface if the following hold on $S_{u,\ub}$:
    \begin{equation*}
        \trch<0,\qquad\quad \trchb<0;
    \end{equation*}
    \item a marginally outer trapped surface (MOTS) if the following hold on $S_{u,\ub}$:
    \begin{equation*}
        \trch=0,\qquad\quad \trchb<0.
    \end{equation*}
\end{itemize}
\end{df}
\begin{df}\label{dftrappedsurface}
Let $g$ be a Riemannian metric on $\Si$ and $k$ be a symmetric covariant $2$--tensor on $\Si$. Let $S\subset\Si$ be a compact, embedded smooth $2$--surface. Let $\th$ be the second fundamental form of $S$. We will say that $S$ is
\begin{itemize}
\item a trapped surface if the following hold on $S$:
\begin{equation}\label{dftrapped}
    \tr_\slg(-\th-k)<0,\qquad\quad\tr_\slg(\th-k)<0;
\end{equation}
\item a marginally outer trapped surface (MOTS) if the following hold on $S$:
\begin{equation*}
    \tr_\slg(-\th-k)<0,\qquad\quad\tr_\slg(\th-k)=0.
\end{equation*}
\end{itemize}
\end{df}
\begin{rk}
Definition \ref{dftrappedsurface} is consistent with Definition \ref{dftrapped2+2}. More precisely, we have the following identities:
\begin{align*}
    \chi=\th-k,\qquad\quad \chib=-\th-k,
\end{align*}
see (7.5.2b) in \cite{Ch-Kl}. 
\end{rk}
\section{Short-pulse cone and formation of trapped surfaces}\label{sectrapped}
We recall that the Einstein-Maxwell-Klein-Gordon system takes the following form:
\begin{align}
\begin{split}\label{EMKG-SP}
\Ric_{\mu\nu}-\frac{1}{2}\R\g_{\mu\nu}&={\T}^{EM}_{\mu\nu}+{\T}^{KG}_{\mu\nu},\\
{\bnab}^{\mu}\F_{\mu\nu}&=-2\ef\Im\left(\psi(\D_{\nu}\psi)^\dag\right),\\
{\g}^{\mu\nu}\D_{\mu}\D_{\nu}\psi&=V'(|\psi|^2)\psi,
\end{split}
\end{align}
where $V(x)$ is a real potential function satisfying
\begin{equation}\label{assumptionV}
    V(0)=0,\qquad\quad |V^{(k)}(0)|\leq m^2\ll 1, \qquad \forall\; k\geq 1.
\end{equation}
In Appendix \ref{doublenullfoliation}, we introduce the standard notations used in this section, based on the double null foliation of spacetime. We now state the following theorem.
\begin{thm}\label{thmtrapped}
Consider the Einstein-Maxwell-Klein-Gordon system \eqref{EMKG-SP} under the gauge choice $U=0$ defined in \eqref{dfUUbAsl}. Let $0<\de\leq a^{-2}$, $a\gg 1$, and $0<\ef\ll 1$ be fixed constants, and let $u_0\in [-1,-2]$. Consider the initial data that satisfy the following assumptions:
\begin{itemize}
\item along $H_{u_0}^{(0,\de)}$, we have
    $$
    \sum_{i+j\leq s+15}\afd\left\|(\de\nabs_4)^j\nabs^i\left(\hch,\bF,\psi\right)\right\|_{L^\infty(S_{u_0,\ub})}\leq 1;
    $$
\item along $\ub=0$, the data is Minkowskian;
\item along $u=u_0$, the following lower bound conditions hold
\begin{align*}
    \int_0^\de\left(|\hch|^2+|\bF|^2+|\Psi_4|^2\right)(u_0,\ub,x^{1},x^{2})d\ub &\geq\de a,\qquad\forall\; (x^{1},x^{2}), \\
    \left|\int_{H_{u_0}^{(0,\de)}}\Om\Im(\psi\Psi_4^\dag)\right|&\geq\de a.
\end{align*}
\end{itemize}
Then, the respective EMKG system \eqref{EMKG-SP} admits a unique smooth solution in the colored region of Figure \ref{scalingfigure}. Moreover, we have:
\begin{itemize}
        \item The $2$--sphere $S_{-\frac{\de a}{4},\de}$ is a trapped surface.
        \item The Hawking mass of the sphere $S_{u_0,\de}$ satisfies
        \begin{align*}
            m_H(S_{u_0,\de})\simeq\de a.
        \end{align*}
        \item The electric charge of the sphere $S_{u_0,\de}$ satisfies
        \begin{align*}
            |Q(S_{u_0,\de})|\simeq\de^2\ef a.
        \end{align*}
        \item The following estimates hold in the colored region of Figure \ref{scalingfigure}:
    \begin{align}
    \begin{split}\label{deaest}
    |\a|&\les\de^{-1}\af,\qquad\qquad\qquad\qquad |\b,\hch,\bF,\Psi_4,\Asl|\les\af,\\
    |\rho,\si,\rhoF,\siF,\Psisl,\Ub|&\les\de a,\qquad\qquad\quad\;\;|\bb,\bbF|\les\de^2 a^\frac{3}{2},\qquad\quad|\aa|\les\de^3a^2,\\
    |\om,\trch,\log\Om|&\les 1,\quad\;\;\;|\eta,\etab,\ze,\hchb,\bbb,\Psi_3,\psi|\les\de\af,\quad\;|\trchbt,\omb|\les\de^2a.
    \end{split}
\end{align}
Moreover, analog estimates also hold for their $(\de\nabs_4,\nabs_3,\nabs)^{s+4}$--derivatives.
\end{itemize}
\begin{figure}[H]
\centering
\begin{tikzpicture}[scale=0.75]
\fill[red!20,opacity=0.35] (1,1)--(2,2)--(4,0)--(3,-1)--cycle;
\draw[very thick] (1,1)--(2,2)--(4,0)--(3,-1)--cycle;
\node[scale=1, black] at (4,-0.8) {$H_{u_0}$};
\node[scale=1, black] at (1.1,1.9) {$H_{-\frac{\de a}{4}}$};
\node[scale=1, black] at (3.2,1.4) {$\Hb_\de$};
\node[scale=1, black] at (1.7,-0.3) {$\Hb_0$};
\draw[dashed] (0,-4)--(0,4);
\draw[dashed] (0,-4)--(4,0)--(0,4);
\draw[dashed] (0,0)--(2,2);
\draw[dashed] (0,2)--(3,-1);=
\draw[->] (3.3,-0.6)--(3.6,-0.3) node[midway, sloped, above, scale=1] {$e_4$};
\draw[->] (2.4,-0.3)--(1.7,0.4) node[midway, sloped, above, scale=1] {$e_3$};
\filldraw[red] (2,2) circle (2pt);
\filldraw[red] (4,0) circle (2pt);
\draw[thin] (2.1,2.5) node[above, scale=1] {$S_{-\frac{\de a}{4},\de}$} -- (2,2);
\draw[thin] (4.1,0.5) node[above, scale=1] {$S_{u_0,\de}$} -- (4,0);
\end{tikzpicture}
\caption{{\footnotesize The initial conditions in Theorem \ref{thmtrapped} lead to trapped surface ($S_{-\frac{\de a}{4},\de}$) formation in the future of the $H_{u_0}$ and $\Hb_0$.}}\label{scalingfigure}
\end{figure}
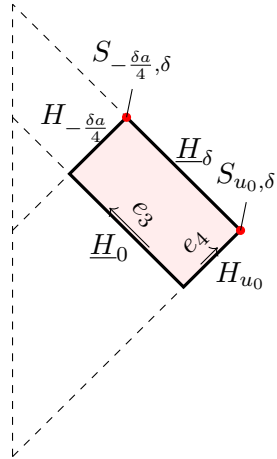
\end{thm}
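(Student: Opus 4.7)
The plan is to follow the short-pulse framework already developed by the second and third named authors for the EMCSF system in \cite{ShenWan}, and to track carefully where the potential $V(|\psi|^2)$ enters. The scaling is fixed by the hierarchy in \eqref{deaest}: each quantity is assigned a ``signature'' in powers of $\de$ and $a$ (essentially the number of $e_4$--differentiations it took to be generated from $\hch,\bF,\Psi_4,\psi$), and one sets up bootstrap assumptions matching these bounds, together with their $(\de\nabs_4,\nabs_3,\nabs)^{s+4}$--derivatives. The first step is to propagate the data bounds on $H_{u_0}^{(0,\de)}$ and on $\Hb_0$ to initial bounds for all Ricci coefficients, Maxwell components, Weyl curvature components, and the scalar $\psi$ with its derivatives $\Psi_3,\Psi_4,\Psisl$, reproducing the initial part of the hierarchy.

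Next, I would close the bootstrap by running the standard sequence of transport equations along the $e_4$--direction (for $\trch,\hch$ and the ``good'' charged Maxwell/Weyl components) and the $e_3$--direction (for $\trchb,\hchb$ and the small conjugate components), complemented by the elliptic Hodge systems on $S_{u,\ub}$. The key EMKG wave equation $\g^{\mu\nu}\D_\mu\D_\nu\psi=V'(|\psi|^2)\psi$ is decomposed as a $\nabs_3\Psi_4+\cdots$ transport equation for $\Psi_4$; the right-hand side receives the new contribution $V'(|\psi|^2)\psi$. By the potential assumption \eqref{assumptionV}, $|V'(|\psi|^2)\psi|\les m^2|\psi|\les m^2\de\af$, which is strictly below the short-pulse size of every forcing term already present (all of which are $\geq\de\af$), and similarly at the level of $\T^{KG}$ the new piece $V(|\psi|^2)\g_{\mu\nu}$ is bounded by $m^2\de^2 a$, fitting inside the existing signatures. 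Thus, the bootstrap inequalities of \cite{ShenWan} absorb the $V$--corrections as genuine lower-order terms, provided $m^2\ll 1$ as in the parameter hierarchy. The analogous analysis for derivatives uses $|V^{(k)}(0)|\le m^2$ and Leibniz/Faà di Bruno in the nonlinearity $V'(|\psi|^2)\psi$.

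With the full set of estimates \eqref{deaest} established in the colored region, the formation of the trapped surface at $S_{-\frac{\de a}{4},\de}$ is obtained by integrating the Raychaudhuri equation along $e_4$ for $\trch$ and using the lower bound
\[
\int_0^\de\bigl(|\hch|^2+|\bF|^2+|\Psi_4|^2\bigr)(u_0,\ub,x^1,x^2)\,d\ub\geq\de a,
\]
which forces $\trch(-\tfrac{\de a}{4},\de)<0$, while the $\trchb$--equation together with Minkowskian data on $\Hb_0$ gives $\trchb<0$ at $S_{-\frac{\de a}{4},\de}$, yielding Definition \ref{dftrapped2+2}. The Hawking mass bound $m_H(S_{u_0,\de})\simeq\de a$ is obtained from the definition $m_H=\frac{r}{2}(1+\frac{1}{16\pi}\int_{S}\trch\,\trchb)$, the lower bound above, and the already controlled size of $r,\trch,\trchb$. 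The electric charge size $|Q(S_{u_0,\de})|\simeq\de^2\ef a$ comes from integrating $\sdiv$ of the electric part of $F$ on $H_{u_0}^{(0,\de)}$, where the Maxwell equation gives $\div F\sim\ef\,\Om\,\Im(\psi\Psi_4^\dag)$, and using the second lower bound hypothesis.

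The main obstacle is not any single step but the verification that the $V$--dependent nonlinearities preserve the signature hierarchy at all levels of regularity and in every energy/transport estimate of the EMCSF scheme; this is a careful but systematic bookkeeping, crucially relying on \eqref{assumptionV} and the choice $m^2\ll 1$ in the parameter hierarchy. Once this is done, the entire short-pulse machinery of \cite{ShenWan} transfers verbatim, and the three conclusions (trapped surface, Hawking mass size, charge size) follow from the same computations as in the EMCSF case.
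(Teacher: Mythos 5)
Your proposal is correct and follows essentially the same route as the paper: reduce to the EMCSF short-pulse result of \cite{ShenWan}, observe via the null decomposition of $\Ric$ that the $V(|\psi|^2)$ contributions are $O(m^2\de^2 a)$ and hence preserve the signature hierarchy, and check that the new term $V'(|\psi|^2)\psi=O(m^2\de a^{1/2})$ in the Klein--Gordon transport/Bianchi-pair equations is strictly lower order than the nonlinear forcing terms already estimated there, so the entire bootstrap and the trapped-surface, Hawking-mass, and charge computations transfer unchanged. The paper's proof is exactly this absorption argument, stated more briefly.
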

\begin{rk}
According to the lack of $|u|$--decay for $V(|\psi|^2)$, it will be difficult to prove the formation of trapped surfaces from the  past null infinity $u=u_\infty$. More precisely, in Proposition \ref{Ricciexpression}, the terms involving $V(|\psi|^2)$ have worse $|u|$--decay than the other terms. This blocks us from integrating the $\nabs_3$--transport equations from $u_\infty$ to $u$. However, the formation of trapped surfaces in a finite region can be deduced similarly to \cite{ShenWan}.
\end{rk}
\begin{proof}[Proof of Theorem \ref{thmtrapped}]
The proof in the particular case $V=0$ follows from Theorem 10.5 in \cite{ShenWan}. Notice from Proposition \ref{Ricciexpression}, \eqref{assumptionV} and \eqref{deaest} that the Ricci tensor $\Ric_{\mu\nu}$ has the same regularity and estimate as the case $V=0$. It remains to treat the Klein-Gordon equation for $\psi$. To this end, we have from Proposition \ref{waveequation}
\begin{align*}
\nabs_3(\Psi_4)+\frac{1}{2}\trchb\Psi_4&=\sdivs\Psisl-V'(|\psi|^2)\psi+N_3[\Psi_4],\\
\nabs_4\Psisl+\frac{1}{2}\trch\Psisl&=\nabs(\Psi_4)+N_4[\Psisl],\\
\nabs_3\Psisl+\trchb \Psisl&=\nabs(\Psi_3)+N_3[\Psisl],\\
\nabs_4(\Psit_3)+\frac{1}{2}\trch\Psit_3&=\sdivs\Psisl-V'(|\psi|^2)\psi+N_4[\Psit_3],
\end{align*}
with $\Psit_3$ defined by \eqref{dfPsit3} and $N_3[\Psi_4]$, $N_4[\Psisl]$, $N_3[\Psisl]$ and $N_4[\Psit_3]$ denote the nonlinear terms in the schematic form $\A\c\Psi$ and $\Ga\c\Psi$, which are already estimated in \cite{ShenWan}. Moreover, we have from Proposition \ref{waveequation}
\begin{align*}
N_3[\Psi_4]&=2\omb\Psi_4-i\ef \Ub\Psi_4+2\eta\c\Psisl+i\ef \Asl\c\Psisl-\frac{1}{2}\trch\Psi_3+i\ef\rhoF\psi,\\
N_4[\Psit_3]&=2\om\Psit_3-i\ef U\Psit_3+2\etab\c\Psisl+i\ef\Asl\c\Psisl-\frac{1}{2}\trchbt\Psi_4-\frac{\trch}{2\Om|u|}\psi-i\ef\rhoF\psi.
\end{align*}
Hence, we have from \eqref{assumptionV} and \eqref{deaest} that
\begin{align*}
N_3[\Psi_4]=O(\de\af+\ef\de a^\frac{3}{2}),\qquad N_4[\Psit_3]=O(\de\af+\ef\de a^\frac{3}{2}),\qquad V'(|\psi|^2)\psi=O(m^2\de\af).
\end{align*}
Thus, we have for $m^2\ll 1$ that $V'(|\psi|^2)\psi$ has better decay and regularity than the nonlinear terms $N_3[\Psi_4]$ and $N_4[\Psit_3]$ . Hence, the $L^2$--flux estimate for the Bianchi pairs $((\Psi_4,0),\Psisl)$ and $(\Psisl,(\Psit_3,0))$ can be deduced as in Section 6.1 of \cite{ShenWan}. This concludes the proof of Theorem \ref{thmtrapped}.
\end{proof}
\section{Transition region and barrier annulus}\label{secstability}
In this section, we first construct in Theorem \ref{mainstability} a transition region based on the short-pulse region obtained in Section \ref{sectrapped}. Then, in Theorem \ref{interiorsolution}, we construct a spacelike initial data set by taking a constant-time slice in the region obtained in Theorem \ref{mainstability}.
\subsection{Transition region and fundamental norms}
Throughout this section, we always assume\footnote{Recall that $u_0\in[-2,-1]$ is a constant introduced in Theorem \ref{thmtrapped}.} that $u\in[u_0,-\frac{1}{2}]$ and $\ub\in[\de,1]$. We define
\begin{align*}
\V:=\V(u,\ub):=\big\{(u',\ub')\in\left[u_0,u\right]\times[\de,\ub]\big\},\qquad \V_*:=\V\left(-\frac{1}{2},1\right).
\end{align*}
The main goal of this section is to prove the following theorem.
\begin{thm}\label{mainstability}
Consider the Einstein-Maxwell-Klein-Gordon system \eqref{EMKG-SP} under the gauge choice $U=0$ and let $a_0>0$ be a sufficiently large constant. For $a>a_0$, $0<\de\leq a^{-2}$ and $u_0\in[-2,-1]$, we denote
\begin{equation}\label{dfep0}
    \ep_0:=a^{-1}\geq\de a.
\end{equation}
Assume that initial data are prescribed on $H_{u_0}\cup\Hb_{0}$ satisfying the following conditions:
\begin{itemize}
\item $H_{u_0}^{(0,1)}$ is endowed with an outgoing geodesic foliation.
\item The following assumption holds along $H_{u_0}^{(0,\de)}$:
\begin{equation}\label{hchassumption}
\sum_{i+j\leq s+15}\afd\left\|(\de\nabs_4)^j(|u_0|\nabs)^i\left(\hch,\bF,\Psi_4\right)\right\|_{L^\infty(S_{u_0,\ub})}\leq 1.
\end{equation}
\item The following conditions hold along $H_{u_0}^{(\de,1)}$:
\begin{align}\label{vanishingconditions}
    \hch=0,\qquad \a=0,\qquad \om=0,\qquad \bF=0,\qquad \Psi_4=0,\qquad \psi=0.
\end{align}
\item Minkowskian initial data along $\Hb_0$.
\end{itemize}
Then \eqref{EMKG-SP} admits a unique solution in $\left\{(u,\ub)\in\left[u_0,-\frac{1}{2}\right]\times[0,1]\right\}$ and the following estimates hold in $\V_*$:
\begin{align*}
\left\|\nabs^{\leq s+2}(\a,\b,\rho,\si,\bb,\aa)\right\|_{L^2(S_{u,\ub})}&\les\ep_0,\\
\left\|\nabs^{\leq s+3}(\trcht,\trchbt,\hch,\hchb,\om,\omb,\eta,\etab,\ze,\log\Om,\bbb,\slgc)\right\|_{L^2(S_{u,\ub})}&\les\ep_0,\\
\left\|\nabs^{\leq s+3}\left(\bF,\rhoF,\siF,\bbF,\Psi_4,\Psisl,\Psi_3,\psi,U,\Ub,\Asl\right)\right\|_{L^2(S_{u,\ub})}&\les\ep_0,
\end{align*}
where $\slg$ denotes the induced metric of $\g$ on $S_{u,\ub}$ and
\begin{align*}
\trcht:=\trch-\frac{2}{r},\qquad\quad\trchbt:=\trchb+\frac{2}{r},\qquad\quad r:=\ub-u.
\end{align*}
\end{thm}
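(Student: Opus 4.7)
The plan is to set up a bootstrap argument on $\V_*$ at the scale $\ep:=m^{-1}\ep_0$, which is comfortably larger than the target bound $\ep_0$ but still small since the standing assumptions from Section~\ref{sec:notation} impose $a^{-1}=\ep_0\ll m^2\ll 1$. Concretely, I would assume that every curvature component, Ricci coefficient, matter field, and their $(\nabs_3,\nabs_4,\nabs)$--derivatives up to order $s+3$ are bounded in $L^\infty(S_{u,\ub})$ by $\ep$, and then recover the $\ep_0$--bound by integrating the null transport equations in both directions, in the spirit of \cite{ShenWan}.

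The first step is to produce sharp initial data bounds. On the sphere $S_{u_0,\de}$ separating the short-pulse zone from the transition zone, Theorem~\ref{thmtrapped} supplies explicit estimates of sizes $\af,\,\de\af,\,\de^2 a^{3/2},\,\de a,\dots$, all of which are $\les\ep_0$ once $\de\leq a^{-2}$. Along $H_{u_0}^{(\de,1)}$, the vanishing conditions~\eqref{vanishingconditions} reduce the $\nabs_4$--null-structure and Bianchi transport equations to nearly-linear ODEs driven only by quadratic error terms, so all Ricci and curvature quantities remain $O(\ep_0)$ up to $\ub=1$. On $\Hb_0$ the Minkowskian condition makes every curvature and connection coefficient vanish identically.

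For the bulk estimates on $\V_*$, I would integrate the $\nabs_3$--null-structure equations inward from $\Hb_0$ and combine them with the $\nabs_4$--equations propagating from $H_{u_0}$. Under the bootstrap, nonlinear products of the schematic form $\Ga\c\Ga$ or $\A\c\Psi$ contribute sources of size $\ep^2=m^{-2}\ep_0^2\ll\ep_0$, and elliptic Hodge systems on the spheres $S_{u,\ub}$ convert $\nabs$--divergences into full angular derivatives at the required order. The curvature components are estimated via $L^2$ flux identities on Bianchi pairs, commuted with $(\nabs_3,\nabs_4,\nabs)^{\leq s+2}$, following Section~6.1 of \cite{ShenWan}. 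The Klein-Gordon field is handled with the wave pairs from Proposition~\ref{waveequation}, exactly as in the proof of Theorem~\ref{thmtrapped} above, with the key observation that the potential contribution $V'(|\psi|^2)\psi$ is an $O(m^2\ep)=O(m\ep_0)$ source, strictly better than the target $\ep_0$.

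The main obstacle is the absence of $|u|$--decay in the Klein-Gordon potential terms $V(|\psi|^2)$ that enter $\T^{KG}_{\mu\nu}$ and hence the Ricci tensor through Proposition~\ref{Ricciexpression}: these terms cannot be absorbed via redshift or scaling alone, unlike the quadratic derivative nonlinearities. The remedy is the smallness assumption~\eqref{assumptionV}, which makes every $V^{(k)}$--term carry a factor of $m^2\ll\ep_0$; combined with the bounded size of $\V_*$ (since $u\in[u_0,-\tfrac12]$ and $\ub\in[\de,1]$), the total contribution to each transport or energy estimate is $O(m^2\ep)=O(m\ep_0)$ and closes the bootstrap. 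A secondary technicality is the regularity loss in commutator terms, and the buffer between the data norm at order $s+15$ in~\eqref{hchassumption} and the output at order $s+2$ or $s+3$ is precisely chosen to absorb these losses along the standard short-pulse hierarchy.
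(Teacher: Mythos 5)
Your proposal follows essentially the same architecture as the paper: a bootstrap at an intermediate scale $\ep$ between $\ep_0$ and $1$, initial-data bounds obtained from Theorem~\ref{thmtrapped} on the incoming boundary and from the vanishing conditions~\eqref{vanishingconditions} along $H_{u_0}^{(\de,1)}$, $L^2$ flux (energy) estimates for the Bianchi, Maxwell and wave pairs at top order, transport estimates for the Ricci coefficients, and the observation that the only structurally new term, $V'(|\psi|^2)\psi$, carries a factor $m^2$ by~\eqref{assumptionV} and is therefore subdominant on the bounded region $\V_*$. Two details deserve correction. First, your bootstrap constant $\ep=m^{-1}\ep_0$ is too generous: the cubic error $\int_{\V}|\Ga^{(s+3)}||\Ga||\Ga^{(s+3)}|\les\ep^3=m^{-3}\ep_0^3$ is $\ll\ep_0^2$ only if $\ep_0\ll m^3$, which does not follow from the standing hierarchy $\ep_0=a^{-1}\ll m^2$; the paper takes $\ep=m^{-1/2}\ep_0$, for which $\ep^3\ll\ep_0^2$ reduces to $\ep_0\ll m^{3/2}$ and does hold. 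Second, the incoming data hypersurface for the transition region is $\Hb_\de$, not $\Hb_0$: the bounds there are inherited from the short-pulse estimates of Theorem~\ref{thmtrapped} (together with $\nabs_3$--integration from $S_{u_0,\de}$ for $\hch,\om,\b,\bF,\Psi_4,\a$), rather than from the Minkowskian data on $\Hb_0$, since the short-pulse region $0\le\ub\le\de$ separates the two and is not small. With these adjustments your argument matches the proof given in Sections~4.3--4.7 of the paper.
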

The following schematic notations $\Ga$, $\Rl$, $\Rr$, $\Fl$ and $\Fr$ will be used frequently throughout Section \ref{secstability}.
\begin{df}\label{gamma}
We introduce the following schematic notations:
\begin{align*}
\Rl&:=\left\{\a,\,\b,\,\rho,\,\si,\,\bb\right\},\qquad\qquad\qquad\Rr:=\left\{\b,\,\rho,\,\si,\,\bb,\,\aa\right\},\\
\Fl&:=\left\{\bF,\,\rhoF,\,\siF,\,\Psi_4,\,\Psisl\right\},\qquad\Fr:=\left\{\rhoF,\,\siF,\,\bbF,\,\Psisl,\,\Psi_3,\,\Asl,\,\Ub\right\}.
\end{align*}
We also denote:
\begin{align*}
\Ga&:=\left\{\trcht,\,\trchbt,\,\hch,\,\hchb,\,\eta,\,\etab,\,\ze,\,\om,\,\omb,\,\log\Om\right\}\cup\Fl\cup\Fr,\\
\Ga^{(1)}&:=\nabs^{\leq 1}\Ga\cup\{\a,\,\b,\,\rho,\,\si,\,\bb,\,\aa\}.
\end{align*}
Finally, we define for $i\geq 1$
\begin{align*}
\Ga^{(i+1)}&:=\nabs^{\leq 1}\Ga^{(i)},\qquad\quad\,\Rl^{(i+1)}:=\nabs^{\leq 1}\Rl^{(i)},\qquad\quad\Rr^{(i+1)}:=\nabs^{\leq 1}\Rr^{(i)},\\
\Fl^{(i+1)}&:=\nabs^{\leq 1}\Fl^{(i)},\qquad\quad\Fr^{(i+1)}:=\nabs^{\leq 1}\Fr^{(i)}.
\end{align*}
\end{df}
We now introduce the following fundamental norms:
\begin{enumerate}
\item We define the following $L^2$--flux norms of the matter fields:
\begin{align*}
\Tk:=\sup_{\V_*}\sum_{i=0}^{s+4}\left(\FF_i(u,\ub)+\FFb_i(u,\ub)\right),
\end{align*}
where, for all $i\in\NN$, we denote
\begin{align*}
\FF_i(u,\ub):=\big\|\Fl^{(i)}\big\|_{L^2(\cuvs)},\qquad\quad \FFb_i(u,\ub):=\big\|\Fr^{(i)}\big\|_{L^2(\ucuvs)}.
\end{align*}
\item We define the following $L^2$--flux norms of the curvature components:
\begin{align*}
\Rk:=\sup_{\V_*}\sum_{i=0}^{s+3}\left(\Rk_i(u,\ub)+\Rkb_i(u,\ub)\right).
\end{align*}
where, for all $i\in\NN$, we denote
\begin{align*}
\Rk_i(u,\ub):=\big\|\Rl^{(i)}\big\|_{L^2(\cuvs)},\qquad\quad \Rkb_i(u,\ub):=\big\|\Rr^{(i)}\big\|_{L^2(\ucuvs)}.
\end{align*}
\item We define the following $L^2(S_{u,\ub})$ norms of the geometric quantities:
\begin{align*}
\Ok:=\sup_{\V_*}\sum_{i=0}^{s+3}\Ok_i(u,\ub),\quad \mbox{ where }\; \Ok_i(u,\ub):=\big\|\Ga^{(i)}\big\|_{L^2(S_{u,\ub})}\quad \forall\; i\in \NN.
\end{align*}
\item We define the following norms on $H_{u_0}^{(\de,1)}\cup\Hb_\de^{(u_0,-\frac{1}{2})}$:
\begin{align*}
\Tk_{(0)}&:=\sup_{\ub\in[\de,1]}\sum_{i=0}^{s+4}\FF_i(u_0,\ub)+\sup_{u\in[u_0,-\frac{1}{2}]}\sum_{i=0}^{s+4}\FFb_i(u,\de),\\
\Ok_{(0)}&:=\sup_{\ub\in[\de,1]}\sum_{i=0}^{s+3}\Ok_i(u_0,\ub)+\sup_{u\in[u_0,-\frac{1}{2}]}\sum_{i=0}^{s+3}\Ok_i(u,\de),\\
\Rk_{(0)}&:=\sup_{\ub\in[\de,1]}\sum_{i=0}^{s+3}\Rk_i(u_0,\ub)+\sup_{u\in[u_0,-\frac{1}{2}]}\sum_{i=0}^{s+3}\Rkb_i(u,\de).
\end{align*}
\end{enumerate}
\subsection{Main intermediate results}
Throughout Section \ref{secstability}, we define
\begin{align*}
    \ep:=\ep_0 m^{-\frac{1}{2}}.
\end{align*}
which measures the size of bootstrap bounds. Note that we have $\ep^3\ll m^2\ep^2\ll\ep_0^2\ll\ep^2$.

\begin{thm}\label{MM0}
Under the assumptions of Theorem \ref{mainstability}. Assume, in addition, that
\begin{equation*}
\Ok\leq\ep,\qquad\quad\Rk\leq\ep,\qquad\quad\Tk\leq\ep.
\end{equation*}
Then, we have
\begin{equation*}
\Ok_{(0)}\les\ep_0,\qquad\quad\Rk_{(0)}\les\ep_0,\qquad\quad\Tk_{(0)}\les\ep_0.
\end{equation*}
\end{thm}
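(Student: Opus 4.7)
The plan is to split the initial corner into its two constituent null hypersurfaces $H_{u_0}^{(\de,1)}$ and $\Hb_\de^{(u_0,-\frac{1}{2})}$ and treat each by a different mechanism. On $\Hb_\de^{(u_0,-\frac{1}{2})}$, which lies outside the bootstrap region $\V_*$ but inside the short-pulse domain of Theorem \ref{thmtrapped}, the required bounds follow directly from \eqref{deaest}. On $H_{u_0}^{(\de,1)}\subseteq \V_*$, the data are reconstructed by integrating the $\nabs_4$ null structure, Bianchi, Maxwell and Klein-Gordon transport equations outward from the corner sphere $S_{u_0,\de}$, exploiting the vanishing conditions \eqref{vanishingconditions}. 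The bootstrap hypotheses $\Ok,\Rk,\Tk\leq\ep$ enter only through the nonlinear right-hand sides, and the smallness hierarchy $\ep_0\ll m^2\ll 1$ together with $\ep=\ep_0 m^{-\frac{1}{2}}$ gives $\ep^2=\ep_0^2/m\ll \ep_0$, so such contributions remain subdominant.

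For $\Hb_\de^{(u_0,-\frac{1}{2})}$ I would first verify that $[u_0,-\frac{1}{2}]\subseteq[u_0,-\frac{\de a}{4}]$ since $\de a\leq a^{-1}\ll 1$; hence this cone lies entirely within the region covered by Theorem \ref{thmtrapped}. Reading off \eqref{deaest} together with the $(\de\nabs_4,\nabs_3,\nabs)^{s+4}$-derivative analogs asserted at the end of that theorem, one finds $\rho,\si,\rhoF,\siF,\Psisl,\Ub,\trcht=O(\de a)=O(\ep_0)$, while $\bb,\bbF,\trchbt,\omb,\aa$ are strictly smaller. Since each $S_{u,\de}$ has area $\les 1$ and the $u$-interval has length $\les 1$, these pointwise bounds convert into the $L^2(S_{u,\de})$ norms underlying $\Ok_{(0)}$ and into the $L^2(\Hb_\de^{(u_0,-\frac{1}{2})})$ fluxes underlying $\Rk_{(0)}$ and $\Tk_{(0)}$, each of size $\les\ep_0$. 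The $s+15$ regularity in \eqref{hchassumption} comfortably supplies the derivative budget required for the $s+3$ (resp.\ $s+4$) Sobolev level.

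On $H_{u_0}^{(\de,1)}$ the argument is a standard transport in $\ub\in[\de,1]$ along the null generators, starting from values at $S_{u_0,\de}$ already controlled in the previous step. The vanishing conditions $\hch=\a=\om=\bF=\Psi_4=\psi=0$ dramatically reduce the sources: for instance, $\nabs_4\trcht+\frac{2}{r}\trcht=-|\hch|^2-(\mathrm{matter})$ reduces to a matter-only source, and the analogous $\nabs_4$-equations for $\hchb,\omb,\etab,\ze,\bb,\aa,\Psi_3,\Psisl,\Ub,\Asl$ become ODEs in $\ub$ whose right-hand sides are either zero or products of two bootstrap-controlled quantities of pointwise size $\les\ep^2\les\ep_0$. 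Gr\"onwall on the bounded interval $[\de,1]$ then delivers the $\les\ep_0$ bound in $L^2(S_{u_0,\ub})$, and commutation with $\nabs^k$ via the identities from Appendix \ref{doublenullfoliation} extends the estimate inductively to the required regularity.

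The main subtlety is the Klein-Gordon potential contribution $V'(|\psi|^2)\psi$, which sources the $\nabs_4\Psit_3$ transport on $H_{u_0}^{(\de,1)}$ even though $\psi\equiv 0$ there: the bulk field $\psi$ is only bootstrap-controlled, giving this source pointwise size $\les m^2\ep$. Using $\ep_0\ll m^2$ and $\ep=\ep_0 m^{-\frac{1}{2}}$ one finds $m^2\ep=m^{\frac{3}{2}}\ep_0\ll\ep_0$, so the term is absorbed. An analogous smallness chase handles Maxwell-Klein-Gordon cross terms such as $\ik\ef\rhoF\psi$ via $\ef\ll 1$. With these two sources controlled, the transport scheme closes and produces the stated bounds $\Ok_{(0)},\Rk_{(0)},\Tk_{(0)}\les\ep_0$.
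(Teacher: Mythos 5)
Your overall decomposition — treat $\Hb_\de^{(u_0,-\frac12)}$ via the short-pulse theorem and $H_{u_0}^{(\de,1)}$ via $\nabs_4$--transport from the corner sphere using \eqref{vanishingconditions} — is exactly the paper's (Propositions \ref{Hbde} and \ref{Hu0}). However, there is a genuine gap on the incoming cone. You claim the bounds on $\Hb_\de^{(u_0,-\frac12)}$ ``follow directly from \eqref{deaest}'', and you read off only the quantities $\rho,\si,\rhoF,\siF,\Psisl,\Ub$ (plus the even smaller ones). But the norms $\Ok_{(0)}$, $\Rk_{(0)}$, $\Tk_{(0)}$ restricted to $\Hb_\de$ also require $\hch,\om,\b,\bF,\Psi_4,\Asl$ (all in $\Ga$, $\Rr$ or $\Fr$) and $\a$ (in $\Ga^{(1)}$), and for precisely these quantities \eqref{deaest} gives only the \emph{large} short-pulse bounds $|\b,\hch,\bF,\Psi_4,\Asl|\les a^{1/2}$, $|\a|\les\de^{-1}a^{1/2}$ and $|\om|\les 1$ — nowhere near $\ep_0=a^{-1}$. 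These are the carriers of the short-pulse largeness, so no direct reading of \eqref{deaest} can close this part of the estimate. The missing ingredient is the paper's additional transport argument: by \eqref{vanishingconditions} one has $\hch=\om=\bF=\Psi_4=\psi=0$ (and $\b$ small via the Codazzi equation \eqref{codazzi}) at the corner sphere $S_{u_0,\de}$, and the $\nabs_3$--equations for these quantities have right-hand sides built only from the quantities that \emph{are} $O(\ep_0)$ on $\Hb_\de$ by \eqref{deaest}; integrating in $u$ over the bounded interval $[u_0,-\frac12]$ then propagates the $O(\ep_0)$ smallness along $\Hb_\de$, and $\a$ is recovered last from its $\nabs_3$ Bianchi equation. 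Without this step the theorem is not proved.

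Two secondary inaccuracies: (i) on $H_{u_0}^{(\de,1)}$ the $\nabs_4$--sources are not ``either zero or products of two bootstrap-controlled quantities'' — the null structure and Bianchi equations contain linear terms such as $2\rho$, $2\sdivs\etab$, $\sdivs\b$, so one must order the transport hierarchy as in Chapter 2 of \cite{Chr} rather than rely on quadratic smallness; moreover these are characteristic data computed from the free data on the cone, not bootstrap quantities. (ii) $\trcht$ cannot be read off from the entry $|\trch|\les 1$ in \eqref{deaest}; its smallness requires the refined short-pulse estimate (as invoked in \eqref{consequenceoftrapped}). Both are repairable, but the $\Hb_\de$ gap above is essential.
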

Theorem \ref{MM0} is proved in Section \ref{secII}. The estimates for $\a$ and $\hch$ on $\Hb_\de$ follow from their transport equations in the direction $e_3$. The smallness of all other quantities on $\Hb_\de$ follow directly from Theorem \ref{thmtrapped}. On the other hand, all quantities on $H_{u_0}$ can be estimated by integrating their equations along $H_{u_0}$ and combining them with \eqref{vanishingconditions}.
\begin{thm}\label{M1}
Under the assumptions of Theorem \ref{mainstability} and assuming, in addition, that
\begin{align*}
\Ok_{(0)}\les\ep_0,\qquad\Rk_{(0)}\les\ep_0,\qquad\Tk_{(0)}\les\ep_0,\qquad \Ok\leq\ep,\qquad\Rk\leq\ep,\qquad \Tk\leq\ep.
\end{align*}
Then, we have
\begin{equation*}
\Tk\les\ep_0.
\end{equation*}
\end{thm}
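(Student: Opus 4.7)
The plan is to close the improved flux bound $\Tk\les \ep_0$ by exploiting the fact that, at fixed background geometry and gauge, the Maxwell and Klein-Gordon equations are \emph{linear} in the matter unknowns $(\bF,\rhoF,\siF,\bbF,\psi,\Psi_4,\Psisl,\Psi_3,\Asl,\Ub)$, with coefficients controlled by $\Ok\leq \ep$ and $\Rk\leq\ep$ and with a Klein-Gordon source $V'(|\psi|^2)\psi$ of size $m^2$. Since $\ep_0\ll m\ll 1$ and $\ep=m^{-1/2}\ep_0$, all the ``error" couplings we encounter are either small in $\ep$ (hence absorbable by Gronwall over the bounded region $\V_*$) or gain a factor of $m^2$ beyond the $\ep$-bootstrap (hence strictly smaller than $\ep_0$ after squaring).

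First I would carry out the zeroth-order flux estimate. The Bianchi pairs for the Maxwell sector $(\bF,(\rhoF,\siF))$ and $((\rhoF,\siF),\bbF)$, together with the two Bianchi pairs for the Klein-Gordon wave system $((\Psi_4,0),\Psisl)$ and $(\Psisl,(\Psit_3,0))$ used in the proof of Theorem \ref{thmtrapped}, yield an energy identity schematically of the form
\begin{equation*}
\FF_0^2(u,\ub)+\FFb_0^2(u,\ub)\;\les\;\Tk_{(0)}^2+\int_{\V}\bigl(\,\Ga\cdot\Fl\cdot\Fr+\ef\,\psi\cdot\Fl\cdot\Fr+V'(|\psi|^2)\psi\cdot\Fr\,\bigr).
\end{equation*}
The initial contribution is $\les\ep_0^2$ by the hypothesis $\Tk_{(0)}\les\ep_0$. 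For the error integrals, the first type is absorbed by Cauchy-Schwarz into $\|\Ga\|_{L^\infty}(\FF_0^2+\FFb_0^2)$ and closed by Gronwall in $u$ on the bounded interval $[u_0,-\tfrac12]$; the second type gains $\ef\ll 1$ and an extra factor of $\|\psi\|_{L^\infty}\les\ep$; and the $V'$ term contributes at most $m^4\int|\psi|^2\les m^4\ep^2=m^3\ep_0^2\ll\ep_0^2$. This gives $\FF_0+\FFb_0\les\ep_0$.

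Next I would propagate the estimate to all derivative levels $i\leq s+4$. Commuting the Bianchi pair equations with $\nabs$, $\nabs_3$, $\nabs_4$ produces, at order $i$, a transport/wave system of the same schematic form for $\nabs^{(i)}\Fl$, $\nabs^{(i)}\Fr$, whose right-hand side contains:
\begin{equation*}
\sum_{j\leq i}\nabs^{(i-j)}\Ga\cdot\nabs^{(j)}\Fl,\qquad \nabs^{(i-1)}\!\bigl(R\cdot \F\bigr),\qquad \nabs^{(i)}\!\bigl(V'(|\psi|^2)\psi\bigr),
\end{equation*}
plus the quadratic $\ef\,\psi\,\D\psi$ source. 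The top-order geometric factors are controlled by $\Ok\leq \ep$ and $\Rk\leq \ep$; lower-order Sobolev factors are bounded via product/Moser estimates, with the $V'$-derivatives estimated using $|V^{(k)}(0)|\leq m^2$ and the bootstrap $\|\psi\|_{H^{s+4}}\les\ep$ to yield terms of size $m^2\ep$. Summing the Bianchi pair identities for $i=0,\dots,s+4$ and closing with Gronwall produces $\Tk\les\ep_0+m^{3/2}\ep_0+\ef\ep\les\ep_0$, using $\ep_0\leq m^2$ and $\ef\ll 1$.

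The main obstacle I anticipate is the simultaneous top-order treatment of the Maxwell-Klein-Gordon coupling: the Maxwell source $-2\ef\,\Im(\psi\,(\D\psi)^\dag)$ mixes $\Fl,\Fr$ with derivatives of $\psi$, so the Bianchi estimates for the two sectors cannot be decoupled and must be summed before applying Gronwall. Also, the commutator $[\nabs_3,\nabs^i]\Fr$ at order $i=s+4$ generates terms involving $\nabs^{s+4}(\om,\etab,\hchb,\ldots)$, which sit at the boundary of what $\Ok$ controls; these are handled by trading one tangential derivative for a null derivative using the transport structure of $\Ga$, exactly as in Section 6 of \cite{ShenWan}. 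Once these top-order commutators are organized, the rest is a routine bookkeeping with the favorable smallness hierarchy $\ep_0\ll m\ll 1$.
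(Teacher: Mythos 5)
Your proposal is correct and follows essentially the same route as the paper: energy (flux) estimates for the Bianchi pairs $(\bF,(\rhoF,\siF))$, $((\rhoF,-\siF),\bbF)$, $((\Psi_4,0),\Psisl)$, $(\Psisl,(\Psit_3,0))$ commuted to top order, with the initial contribution controlled by $\Tk_{(0)}\les\ep_0$ and the quadratic/cubic error integrals absorbed via the hierarchy $m^2\ep^2\ll\ep_0^2$ and $\ep^3\ll\ep_0^2$ (the paper bounds the cubic terms directly through Theorem \ref{Junkman} rather than by Gronwall, and renormalizes $\Psi_4$ to $\Psit_4=r^{-1}e_4(r\psi)$, but these are cosmetic differences). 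The only bookkeeping slip is your final line $\ef\ep\les\ep_0$, which as written needs $\ef\les m^{1/2}$; in fact the $\ef\,\psi\,\D\psi$ sources enter the energy identity cubically and contribute $O(\ef\ep^3)\ll\ep_0^2$, so no such condition is needed.
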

Theorem \ref{M1} is proved in Section \ref{secF}. The proof is based on the energy estimates for the Maxwell equations and wave equations. The method used here can also be considered as the $r^p$--weighted estimates used in \cite{holzegel,Shen22,Shen24} in the particular case $p=0$.
\begin{thm}\label{MM1}
Under the assumptions of Theorem \ref{mainstability}. Assume, in addition, that
\begin{align*}
\Ok_{(0)}\les\ep_0,\qquad\Rk_{(0)}\les\ep_0,\qquad\Tk_{(0)}\les\ep_0,\qquad \Ok\leq\ep,\qquad\Rk\leq\ep,\qquad \Tk\les\ep_0.
\end{align*}
Then, we have
\begin{equation*}
\Rk\les\ep_0.
\end{equation*}
\end{thm}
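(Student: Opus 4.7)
My plan is to close $\Rk\les\ep_0$ by running $L^2$ energy identities for the four Bianchi pairs
\[
(\a,\b),\quad (\b,(\rho,\si)),\quad ((\rho,\si),\bb),\quad (\bb,\aa),
\]
of the EMKG system, commuted with angular derivatives up to order $s+3$. In the EMKG setting the null Bianchi equations have the schematic form
\begin{align*}
\nabs_3\Rl &= \nabs\Rl + \Ga\c\R + \Fl\c\nabs^{\le 1}\Fl + V'(|\psi|^2)\psi\c\nabs^{\le 1}\psi,\\
\nabs_4\Rr &= \nabs\Rr + \Ga\c\R + \Fr\c\nabs^{\le 1}\Fr + V'(|\psi|^2)\psi\c\nabs^{\le 1}\psi,
\end{align*}
where the matter sources are generated by the EMKG stress tensor \eqref{Tdfeqintro}. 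The initial flux bound $\Rk_{(0)}\les\ep_0$ on $H_{u_0}\cup\Hb_\de$ is provided by Theorem \ref{MM0}, the matter flux control $\Tk\les\ep_0$ by Theorem \ref{M1}, and the connection bootstrap $\Ok\le\ep$ is among the hypotheses.

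\textbf{Pairwise energy integration.} I would argue inductively on the derivative order $i\in\{0,1,\dots,s+3\}$. At each level, for each Bianchi pair $(\phi_1,\phi_2)$ I multiply the $\nabs_3$-equation of $\nabs^i\phi_1$ by $\nabs^i\phi_1$ and the $\nabs_4$-equation of $\nabs^i\phi_2$ by $\nabs^i\phi_2$, sum, and integrate over the region $\V(u,\ub)$. Integration by parts on the spheres $S_{u',\ub'}$ rearranges the angular divergence terms as a total divergence on the leaves, while the transport parts produce the cone fluxes $\|\nabs^i\phi_1\|_{L^2(\cuvs)}^2$ and $\|\nabs^i\phi_2\|_{L^2(\ucuvs)}^2$ up to controlled $\trch,\trchb$ contributions. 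Chaining the four pairs in the order above simultaneously bounds $\Rk_i(u,\ub)+\Rkb_i(u,\ub)$ by $\Rk_{(0)}$ plus error terms coming from matter sources and connection quadratics.

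\textbf{Error estimation.} The connection-curvature products $\Ga\c\R$ are controlled by Cauchy-Schwarz in $\V$ and the bootstrap $\Ok\le\ep$, yielding contributions of size $\les\ep\cdot\Rk$ that are absorbed by Grönwall in $u$ and $\ub$ (the extent of $\V_*$ being universally bounded). The matter-quadratic terms $\Fl\c\nabs^{\le 1}\Fl$ and $\Fr\c\nabs^{\le 1}\Fr$ are estimated using $\Tk\les\ep_0$ and produce a contribution of size $\les\ep\cdot\ep_0\les\ep_0^2$ after Cauchy-Schwarz. The potential contributions $V'(|\psi|^2)\psi\c\nabs^{\le 1}\psi$ carry an extra $m^2$ factor by \eqref{assumptionV} and are therefore even smaller, $\les m^2\ep\ep_0\ll\ep_0^2$. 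Since $\ep\ll 1$ by the scheduling $\ep_0\ll m^2$, and $\ep^2\les\ep_0$ for the same reason, all such errors are absorbable into the desired bound $\Rk\les\ep_0$.

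\textbf{Main obstacle.} The most delicate step is the treatment of the borderline products $\hch\c\a$ and $\hchb\c\aa$ that appear in the transport equations for $\a$ and $\aa$, since these curvature components are the most singular in the short-pulse hierarchy \eqref{deaest} and naive application of Cauchy-Schwarz would produce top-order losses. I would handle them by first closing the two interior pairs $(\b,(\rho,\si))$ and $((\rho,\si),\bb)$, whose outputs do not involve $\a$ or $\aa$, and then using the resulting intermediate bounds inside a Grönwall argument in $u$ (for the $\a$-equation) and in $\ub$ (for the $\aa$-equation). A second subtlety is that commuting $\nabs^{s+3}$ with $\nabs_3,\nabs_4$ generates top-order terms $\nabs^{s+3}\Ga\c\R$ which by Definition \ref{gamma} are again controlled by $\Ok\le\ep$, ensuring closure. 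Summing over $i=0,\dots,s+3$ then yields $\Rk\les\ep_0$, completing the proof.
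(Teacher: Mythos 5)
Your proposal follows essentially the same route as the paper: the paper's proof of Theorem \ref{MM1} simply observes that $\Tk\les\ep_0$ controls all components of the energy--momentum tensor and then invokes the standard $L^2$ energy estimates for the Bianchi pairs (citing Section 4.6 of \cite{ShenWan2}), which is exactly the pairwise flux integration you describe, with the nonlinear errors absorbed via the analogue of Theorem \ref{Junkman}. Two small caveats. First, your error bookkeeping for the matter sources is off: the inequality $\ep\cdot\ep_0\les\ep_0^2$ is false (since $\ep=\ep_0 m^{-1/2}>\ep_0$); the correct count is that the Cotton-tensor sources are quadratic in matter, so the error integral carries three small factors, e.g. $\ep\cdot\ep\cdot\ep_0=\ep^2\ep_0=\ep_0^3m^{-1}\ll\ep_0^2$ because $\ep_0\ll m$, and likewise the connection--curvature errors give $\int|\Ga||\R^{(s+3)}|^2\les\ep^3\ll\ep_0^2$ as in Theorem \ref{Junkman}; the conclusion is unaffected but the justification as written does not close. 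Second, your ``main obstacle'' concerning $\hch\c\a$ and $\hchb\c\aa$ imports the short-pulse hierarchy \eqref{deaest} into the wrong regime: in the transition region $\V_*$ all Ricci coefficients and curvature components are uniformly $O(\ep)$ under the bootstrap, so these products are ordinary $\Ga\c\R$ terms with no anomalous largeness, and no special ordering of the pairs is needed beyond the standard one dictated by the flux norms $\Rk$, $\Rkb$.
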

Note that $\Tk\les\ep_0$ implies the control of all components of the energy-momentum tensor. Then, the proof of Theorem \ref{MM1} is largely analogous to Section 4.6 of \cite{ShenWan2}, which applies the energy estimates for the Bianchi equations.
\begin{thm}\label{MM2}
Under the assumptions of Theorem \ref{mainstability}. Assume, in addition, that
\begin{align*}
\Ok_{(0)}\les\ep_0,\qquad\Rk_{(0)}\les\ep_0,\qquad\Tk_{(0)}\les\ep_0,\qquad \Ok\leq\ep,\qquad\Rk\les\ep_0,\qquad \Tk\les\ep_0.
\end{align*}
Then, we have
\begin{equation*}
\Ok\les\ep_0.
\end{equation*}
\end{thm}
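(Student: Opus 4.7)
The proof of Theorem \ref{MM2} upgrades the bootstrap bound $\Ok\leq\ep$ to the sharper estimate $\Ok\les\ep_0$ by integrating the null structure equations collected in Appendix \ref{doublenullfoliation} along the appropriate null direction. Since the flux norms $\Rk$ and $\Tk$ are already controlled by $\ep_0$ thanks to Theorems \ref{M1} and \ref{MM1}, the right-hand sides of the transport equations for the Ricci coefficients and the matter fields split into a linear curvature/matter contribution, controlled by $\Rk+\Tk\les\ep_0$ via trace inequalities from $\cuvs,\ucuvs$ to $S_{u,\ub}$, and a nonlinear contribution of schematic form $\Ga\cdot\Ga$ (plus the Klein-Gordon potential term $V'(|\psi|^2)\psi$), which is handled by combining the bootstrap with a genuine smallness factor.

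The plan is to treat the components of $\Ga$ separately, according to the best available transport direction. For $\hch,\trcht,\om$ and the outgoing-signed matter quantities $\bF,\Psi_4$, we use the $\nabs_3$ equations and integrate from $\Hb_\de$, where Theorem \ref{MM0} provides the required $\ep_0$-smallness of the initial trace. For $\hchb,\trchbt,\omb$ and the ingoing-signed matter quantities, we instead use the $\nabs_4$ equations integrated from $H_{u_0}$, where the vanishing conditions \eqref{vanishingconditions} eliminate the leading data on $\ub\in[\de,1]$ while \eqref{hchassumption} takes care of the short-pulse part $\ub\in[0,\de]$. The torsions $\eta,\etab,\ze$, the gauge components $\Asl,U,\Ub$, and the metric-like quantities $\log\Om,\slgc$ are estimated by combining a transport equation in the admissible direction with Hodge systems on the $2$-spheres $S_{u,\ub}$ that convert curvature and matter fluxes into $L^2(S_{u,\ub})$ bounds. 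In each case one arrives schematically at
\begin{equation*}
\|\Ga\|_{L^2(S_{u,\ub})}\les\ep_0+\int\big(\|\Rl\|+\|\Rr\|+\|\Fl\|+\|\Fr\|\big)+\int\Ok\cdot\Ok+\int m^2\,\Ok,
\end{equation*}
where the first term comes from the initial data bound $\Ok_{(0)}\les\ep_0$, the second is absorbed by $\Rk+\Tk\les\ep_0$ after trace, the third is closed by Gr\"onwall under $\Ok\leq\ep\ll 1$, and the last term, arising from $V'(|\psi|^2)\psi$, is bounded by $m^2\ep=m^{3/2}\ep_0\ll\ep_0$ thanks to \eqref{assumptionV} and the choice $\ep=m^{-1/2}\ep_0$.

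The higher-order estimates up to order $s+3$ are obtained by commuting the transport equations with $\nabs^{\leq s+3}$ together with, where required, $|u|\nabs_3$ or $\ub\nabs_4$, and exploiting the commutator identities from the appendix. The main obstacle to closing at top order is the familiar loss of derivative when trace-estimating $L^2$--flux norms of curvature and matter onto $L^2(S_{u,\ub})$: we resolve this by arranging the commutations so that the top-order curvature and matter always enter through Bianchi pairs of regularity $s+3$ (bounded by $\Rk$) and $s+4$ (bounded by $\Tk$), matching exactly the regularity hierarchy built into the definitions of $\Rk$, $\Tk$ and $\Ok$. This closure scheme is modeled on the one developed in Section 4 of \cite{ShenWan2}, with the EMKG matter terms absorbed as above.
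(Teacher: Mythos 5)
Your proposal follows essentially the same route as the paper, which itself only sketches this step by reference to Section 4.7 of \cite{ShenWan2}: integrate the null structure, Maxwell and wave equations along the null cones, feed in the improved flux bounds $\Rk,\Tk\les\ep_0$ together with the data bound $\Ok_{(0)}\les\ep_0$, close the quadratic terms by Gr\"onwall under the bootstrap $\Ok\leq\ep$, and absorb the potential term via $m^2\ep\ll\ep_0$. One correction to your write-up: the transport directions are swapped --- a $\nabs_3$--equation is integrated in $u$ starting from $H_{u_0}$ (where \eqref{vanishingconditions} applies), while a $\nabs_4$--equation is integrated in $\ub$ starting from $\Hb_\de$ (where Theorem \ref{MM0} applies); accordingly $\hch$ should be propagated by its $\nabs_4$--equation (whose curvature source $\a$ is flux-controlled on $H_u$) and $\hchb$ by its $\nabs_3$--equation (source $\aa$, flux-controlled on $\Hb_{\ub}$), since the alternative equations carry the derivative-losing terms $\nabs\hot\eta$ and $\nabs\hot\etab$ at top order.
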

The proof of Theorem \ref{MM2} is largely analogous to Section 4.7 of \cite{ShenWan2}, which is conducted by integrating the null structure equations, the Bianchi equations, the Maxwell equations, and the wave equations along the outgoing and incoming null cones.
\subsection{Bootstrap assumptions and first consequences}\label{secbootboot}
In the rest of Section \ref{secstability}, we always make the following bootstrap assumptions:
\begin{align}\label{BB}
\Ok\leq\ep,\qquad\quad\Rk\leq\ep,\qquad\quad\Tk\leq\ep.
\end{align}
The following consequences of \eqref{BB} will be used frequently throughout this paper.
\begin{lem}\label{decayGa}
We have the following estimates:
\begin{align*}
\|\Ga^{(s+3)}\|_{L^2(S_{u,\ub})}\les\ep.
\end{align*}
Moreover, $\Ga$ can be written as $\Ga=\Ga_l+\Ga_r$, where $\Ga_l$ and $\Ga_r$ satisfy
\begin{equation*}
    \|\Ga_l^{(s+4)}\|_{\cuvss}\les\ep,\qquad\quad\|\Ga_r^{(s+4)}\|_{\ucuvss}\les\ep.
\end{equation*}
\end{lem}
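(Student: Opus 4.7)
The plan is to extract both estimates directly from the bootstrap assumptions~\eqref{BB} together with the fact that $\V_*$ has bounded extent in both $u$ and $\ub$, so that integrating a pointwise $L^2(S_{u,\ub})$--bound along a null cone only costs a multiplicative $O(1)$.

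\textbf{Pointwise $L^2(S)$--bound.} By the definition of $\Ok$, we have $\Ok = \sup_{\V_*}\sum_{i=0}^{s+3}\|\Ga^{(i)}\|_{L^2(S_{u,\ub})}$, so the bootstrap $\Ok\leq\ep$ gives $\|\Ga^{(s+3)}\|_{L^2(S_{u,\ub})}\les\ep$ for every $(u,\ub)\in\V_*$ with no further work.

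\textbf{Decomposition.} Inspecting Definition \ref{gamma}, the top derivative content of $\Ga^{(s+4)}$ splits naturally into three types: the null connection coefficients $\{\trcht,\trchbt,\hch,\hchb,\eta,\etab,\ze,\om,\omb,\log\Om\}$ (plus their $\nabs^{\leq s+4}$--derivatives), the matter components $\Fl\cup\Fr$, and the curvature contribution $\nabs^{\leq s+3}\{\a,\b,\rho,\si,\bb,\aa\}=\nabs^{\leq s+3}(\Rl\cup\Rr)$. I would define $\Ga_l$ to contain every summand whose natural flux estimate lives on an outgoing cone and $\Ga_r$ to contain those controlled on an incoming cone: concretely, the $\Rl$--part and $\Fl$--part enter $\Ga_l$, the $\Rr$--part and $\Fr$--part enter $\Ga_r$, and every connection coefficient is assigned arbitrarily (say, to $\Ga_l$). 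Overlaps such as $\rhoF,\siF,\Psisl$ or $\b,\rho,\si,\bb$ may be placed on either side without affecting the argument.

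\textbf{Cone estimates.} For the curvature and matter pieces of $\Ga_l^{(s+4)}$ on $\cuvss$, the definitions of $\Rk$ and $\Tk$ and the bootstrap \eqref{BB} give directly
\begin{align*}
\left\|\nabs^{\leq s+3}\Rl\right\|_{L^2(\cuvss)}\leq\Rk_{s+3}(u_0,\ub)\leq\Rk\leq\ep,\qquad \left\|\Fl^{(s+4)}\right\|_{L^2(\cuvss)}\leq\FF_{s+4}(u_0,\ub)\leq\Tk\leq\ep.
\end{align*}
For the connection coefficients in $\Ga_l^{(s+4)}$, which are not part of any flux norm, I would instead integrate the pointwise bound established above along the cone:
\begin{align*}
\|f\|_{L^2(\cuvss)}^2\;\simeq\;\int_{\de}^{\ub}\|f\|_{L^2(S_{u_0,\ub'})}^2 d\ub'\;\les\;(\ub-\de)\,\ep^2\;\les\;\ep^2,
\end{align*}
using that $\ub\leq 1$ and that on the bounded box $\V_*$ the Jacobian relating the cone measure to $d\ub'\,d\sigma_{S}$ is uniformly $O(1)$. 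The argument for $\Ga_r^{(s+4)}$ on $\ucuvss$ is entirely symmetric, with the integration performed over $u'\in[u_0,u]$ and using $|u-u_0|\les 1$.

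\textbf{Expected difficulty.} There is no genuine analytic obstacle here: the lemma is a bookkeeping consequence of the bootstrap assumptions plus the compactness of $\V_*$. The only point to verify carefully is that every summand of $\Ga^{(s+4)}$ is indeed covered by the split, i.e.\ that $\Rl\cup\Rr$ exhausts the curvature content that appears at top order and $\Fl\cup\Fr$ exhausts the matter content of $\Ga$; both are immediate from Definition \ref{gamma}.
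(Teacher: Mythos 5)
Your argument is essentially the paper's proof spelled out: the paper disposes of this lemma in one line (``It follows directly from \eqref{BB} and Definition \ref{gamma}''), and your reading --- the sphere bound from $\Ok$, the cone bounds for the curvature and matter parts from $\Rk$ and $\Tk$, and integration of the sphere bound over the bounded cone for the connection coefficients --- is exactly what that line means, so the approach matches. One caveat on the top order: by Definition \ref{gamma}, $\Ga^{(s+4)}$ contains $\nabs^{\leq s+4}$ of the pure connection coefficients $\trcht,\hch,\eta,\dots$, whereas $\Ok$ only controls $\Ga^{(s+3)}$, i.e.\ $\nabs^{\leq s+3}$ of these quantities on spheres, so the step where you ``integrate the pointwise bound established above'' is one derivative short for precisely those terms (the curvature part of $\Ga^{(s+4)}$ is $\nabs^{\leq s+3}(\Rl\cup\Rr)$ and the matter part is $\nabs^{\leq s+4}(\Fl\cup\Fr)$, both of which your flux argument does cover). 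This discrepancy is already latent in the paper's statement --- its one-line proof does not address it either --- and it is harmless in practice, since the lemma is only ever invoked at the level of $\Ga^{(s+3)}$ (Theorem \ref{Junkman}), where your bookkeeping closes completely.
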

\begin{proof}
It follows directly from \eqref{BB} and Definition \ref{gamma}.
\end{proof}
\begin{lem}\label{evolutionlemma}
Let $X$ and $F$ satisfy the outgoing transport equation
\begin{equation*}
\nabs_4X+\la_0\trch\,X=F,
\end{equation*}
where $\la_0\geq 0$. We have the following transport estimate:
\begin{equation*}
\|X\|_{L^2(S_{u,\ub})}\les\|X\|_{L^2(S_{u,\de})}+\int_\de^\ub\|F\|_{L^2(S_{u,\ub'})}d\ub'.
\end{equation*}
\end{lem}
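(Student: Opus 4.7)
The plan is to obtain the stated estimate through a standard transport argument in the double null foliation: differentiate the squared $L^2(S_{u,\ub})$ norm in the $\ub$ direction, substitute the transport equation, and close via a Gronwall argument. The key observation is that in the bounded region $\V_*$ the relevant geometric coefficients ($\trch$, $\Om$, and the inverse radius $r^{-1}$) are uniformly bounded, so no smallness of $\ep$ is required beyond what the bootstrap \eqref{BB} already provides.

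Concretely, I would first invoke the standard evolution formula for a sphere integral along $e_4$ (see Appendix \ref{doublenullfoliation}):
\begin{align*}
\frac{d}{d\ub}\int_{S_{u,\ub}}|X|^2\, d\mu_\slg = \int_{S_{u,\ub}}\Om\Bigl(2\langle X,\nabs_4 X\rangle + \trch\,|X|^2\Bigr)d\mu_\slg,
\end{align*}
and then substitute $\nabs_4 X = -\la_0\trch\,X + F$ to obtain
\begin{align*}
\frac{d}{d\ub}\|X\|_{L^2(S_{u,\ub})}^2 = \int_{S_{u,\ub}}\Om\Bigl((1-2\la_0)\trch\,|X|^2 + 2\langle X,F\rangle\Bigr)d\mu_\slg.
\end{align*}
Since $r = \ub - u \in [\tfrac{1}{2},3]$ on $\V_*$, and since Sobolev embedding on $S_{u,\ub}$ combined with Lemma \ref{decayGa} yields $\|\trcht\|_{L^\infty}+|\log\Om|\lesssim \ep$, the coefficients $\Om$ and $\trch = 2/r + \trcht$ are bounded by an absolute constant. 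A Cauchy-Schwarz step on the cross term then gives
\begin{align*}
\frac{d}{d\ub}\|X\|_{L^2(S_{u,\ub})}^2 \les \|X\|_{L^2(S_{u,\ub})}^2 + \|X\|_{L^2(S_{u,\ub})}\|F\|_{L^2(S_{u,\ub})}.
\end{align*}

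Dividing by $2\|X\|_{L^2(S_{u,\ub})}$ (with the usual regularization $\sqrt{\|X\|^2+\eta^2}$ and $\eta\to 0$ to justify the division at possible zeros) reduces the inequality to the linear form $\frac{d}{d\ub}\|X\|_{L^2(S_{u,\ub})} \les \|X\|_{L^2(S_{u,\ub})} + \|F\|_{L^2(S_{u,\ub})}$, to which Gronwall's lemma applies and produces
\begin{align*}
\|X\|_{L^2(S_{u,\ub})}\les e^{C(\ub-\de)}\Bigl(\|X\|_{L^2(S_{u,\de})} + \int_\de^\ub \|F\|_{L^2(S_{u,\ub'})}d\ub'\Bigr).
\end{align*}
Because $\ub-\de \leq 1$ throughout $\V_*$, the exponential prefactor is an absolute constant that is absorbed into $\les$. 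There is no substantial obstacle: the only care needed is in tracking that $\V_*$ carries uniformly bounded geometric factors (bounded $r$, $\Om$, $\trch$), which is immediate from the setup; a genuine obstruction would only arise in an unbounded region where one would need weighted $r^p$ methods, but this is not the situation here.
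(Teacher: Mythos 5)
Your proof is correct; the paper does not write out an argument for this lemma but simply cites Lemma 4.1.5 of \cite{kn} and Lemma 6.1 of \cite{Shen24}, and the standard transport-plus-Gronwall argument you give (evolution of the sphere integral along $e_4$, substitution of the equation, boundedness of $\Om$, $\trch$, $r^{-1}$ on the finite region $\V_*$, regularized division and Gronwall) is exactly the content of those cited results. No gaps.
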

\begin{proof}
See Lemma 4.1.5 in \cite{kn} and Lemma 6.1 in \cite{Shen24}.
\end{proof}
\begin{prop}\label{commu}
We have the following schematic commutator formulas:
\begin{align*}
[\Om\nabs_4,r\nabs]&=\Ga\c r\nabs+\Ga^{(1)},\\
[\Om\nabs_3,r\nabs]&=\Ga\c r\nabs+\Ga^{(1)}.
\end{align*}
\end{prop}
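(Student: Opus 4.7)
The proof is a direct computation that combines the Leibniz rule with the standard null-frame commutator identities from Appendix \ref{doublenullfoliation}. I would begin by expanding
\begin{equation*}
[\Om\nabs_4, r\nabs] = (\Om\nabs_4 r)\,\nabs - r(\nabs\log\Om)\,(\Om\nabs_4) + \Om\, r\,[\nabs_4,\nabs],
\end{equation*}
with the analogous expansion for $[\Om\nabs_3, r\nabs]$. The key inputs are then: (i) the standard null-frame commutator $[\nabs_4,\nabs_A]X = -\chi_{AB}\nabs_B X + \eta_A\nabs_4 X + (\text{curvature} + \Ga\c\Ga)\c X$; (ii) the decomposition $\chi = \tfrac{1}{r}\slg + \tfrac{1}{2}\trcht\,\slg + \hch$ with $\trcht,\hch\in\Ga$; and (iii) the gauge relation $\Om\nabs_4 r = \Om + O(\Ga)$ coming from the outgoing geodesic foliation.

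Substituting the decomposition into $\Om r[\nabs_4,\nabs]$, the $\tfrac{1}{r}\slg$--piece produces a universal $-\Om\,\nabs X$ that exactly cancels with the leading part of $(\Om\nabs_4 r)\,\nabs X$; the $\Ga$--residue from the gauge relation yields a $\Ga\c\nabs X = \Ga\c r^{-1}(r\nabs X)$ contribution, absorbed schematically into $\Ga\c r\nabs$. The $\tfrac{1}{2}\trcht\,\slg + \hch$--piece gives $\Ga\c r\nabs X$ directly. The two first-order $\nabs_4$ pieces combine, via the gauge identity $\eta_A - \nabs_A\log\Om = \zeta_A$, to produce $r\Om\,\zeta\,\nabs_4 X$; since $\zeta\in\Ga$, this is a $\Ga$--coefficient times a first-order operator and is absorbed into the schematic class $\Ga\c r\nabs$ under the paper's conventions. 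The residual zeroth-order pieces consist of the curvature components $\a,\b,\rho,\si$ together with quadratic Ricci-coefficient contractions, all of which belong to $\Ga^{(1)}$ by Definition \ref{gamma}, since $\{\a,\b,\rho,\si,\bb,\aa\}\cup\nabs^{\leq 1}\Ga\subset\Ga^{(1)}$.

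The identity for $[\Om\nabs_3, r\nabs]$ is proved by an entirely parallel argument with $\chi\to\chib$, $\eta\to\etab$, $\Om\nabs_4 r\to\Om\nabs_3 r$, and the decomposition $\chib = -\tfrac{1}{r}\slg + \tfrac{1}{2}\trchbt\slg + \hchb$; the gauge relation then yields $\Om\nabs_3 r = -\Om + O(\Ga)$, which cancels the $-\tfrac{1}{r}\slg$ contribution coming from $\chib$. The main technical obstacle is the careful bookkeeping of signs and $r$--weights so that the top-order $\tfrac{1}{r}$ pieces cancel exactly, together with the verification that every residual zeroth-order coefficient indeed lies in $\Ga^{(1)}$ as organized in Definition \ref{gamma}; once the commutator formulas and gauge identities of Appendix \ref{doublenullfoliation} are invoked, this is essentially routine.
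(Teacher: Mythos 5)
Your overall strategy (Leibniz rule, frame commutator, and the decomposition $\trch=\frac{2}{r}+\trcht$) matches the paper's, which simply invokes Lemma \ref{comm} and Definition \ref{gamma}. However, your treatment of the first-order null-derivative terms contains a genuine error. The coefficient of $\nabs_4 X$ in the unweighted commutator $[\nabs_4,\nabs_A]X$ is $\nabs_A\log\Om=\tfrac12(\eta_A+\etab_A)$, not $\eta_A$ (this follows from $\bnab_4e_B=\nabs_4e_B+\etab_Be_4$, $\bnab_Be_4=\chi_{BC}e_C-\ze_Be_4$ and the identity $\etab+\ze=\nabs\log\Om$ from \eqref{nullidentities}). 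Consequently, after combining $\Om r[\nabs_4,\nabs]$ with the $-r(\nabs\log\Om)\Om\nabs_4$ term of your Leibniz expansion, the $\nabs_4$--terms cancel \emph{identically}: there is no residual $r\Om\ze\nabs_4X$. This exact cancellation is the reason Lemma \ref{comm} is stated for the weighted operator $[\Om\nabs_4,\nabs_B]$, whose right-hand side contains no $\nabs_4U$ term at all; the proposition then follows by multiplying by $r$, adding $(\Om\nabs_4 r)\nabs=\nabs$ (note $\Om e_4=\pr_{\ub}$, so $\Om\nabs_4r=1$), and splitting $\Om\chi$ into its $\frac{1}{r}\slg$ part, which cancels, and its $\Ga$--part.

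The reason this matters is that your fallback---``absorbing'' $r\Om\ze\nabs_4X$ into $\Ga\c r\nabs$---is not legitimate. The class $\Ga\c r\nabs+\Ga^{(1)}$ consists of an \emph{angular} first-order operator with $\Ga$--coefficient plus a zeroth-order operator with $\Ga^{(1)}$--coefficient; $\nabs_4$ is a transversal null derivative belonging to neither, and for a general $S$--tangent tensor $X$ the quantity $\nabs_4X$ is not controlled by $r\nabs X$ and $\Ga^{(1)}\c X$. Were the term genuinely present, the proposition as stated would be false, and its applications (Propositions \ref{estPsi4Psisl}--\ref{estA}, where all commutator errors must land in $\Ga\c\Ga^{(s+3)}$) would break. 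The $\nabs_3$ case has the same issue and the same resolution via $\eta+\ze=2\eta-\nabs\log\Om$, i.e., the coefficient there is again $\nabs_A\log\Om$. The rest of your bookkeeping (the cancellation of the $\frac{1}{r}$ pieces against $\Om\nabs_4r=1$ and $\Om\nabs_3r=-1$, and the placement of the curvature and quadratic terms in $\Ga^{(1)}$) is correct.
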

\begin{proof}
It follows directly from Lemma \ref{comm} and Definition \ref{gamma}.
\end{proof}
\subsection{Estimates for characteristic initial data}\label{secII}
In this section, we prove Theorem \ref{MM0}.
\begin{prop}\label{Hbde}
We have the following estimates on $\Hb_\de^{(u_0,-\frac{1}{2})}$:
    \begin{align*}
    \left\|\nabs^{\leq s+3}\left(\b,\rho,\si,\bb,\aa,\trcht,\trchbt,\hch,\hchb,\om,\omb,\eta,\etab,\ze,\log\Om,\bbb,\slgc\right)\right\|_{L^2(S_{u,\ub})}&\les\ep_0,\\
    \left\|\nabs^{\leq s+3}\left(\bF,\rhoF,\siF,\bbF,\Psisl,\Psi_3,\Psi_4,\psi,\Ub,\Asl\right)\right\|_{L^2(S_{u,\ub})}&\les\ep_0,\\
    \left\|\nabs^{\leq s+2}\a\right\|_{L^2(S_{u,\ub})}&\les\ep_0.
    \end{align*}
\end{prop}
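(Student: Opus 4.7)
The proof uses Theorem \ref{thmtrapped} at the corner $S_{u_0,\de}$ as initialization, combined with $\nabs_3$-transport along $\Hb_\de$. The key structural input is the smooth matching between the short-pulse data on $H_{u_0}^{(0,\de)}$ and the vanishing data \eqref{vanishingconditions} on $H_{u_0}^{(\de,1)}$, which forces $\hch=\a=\om=\bF=\Psi_4=\psi=0$ at $S_{u_0,\de}$, together with the vanishing of all their $\nabs_4$-derivatives. For the remaining quantities at $S_{u_0,\de}$, the pointwise estimates of Theorem \ref{thmtrapped}, combined with $\de\leq a^{-2}$ and $\ep_0=a^{-1}$, give bounds of size $\de a\leq\ep_0$ or smaller, hence the corresponding $L^2(S_{u_0,\de})$ bounds.

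For every quantity in the list except $\a$ and $\hch$, the pointwise bounds in Theorem \ref{thmtrapped} already imply $\|\cdot\|_{L^2(S_{u,\de})}\les\ep_0$ on the short-pulse portion of $\Hb_\de$, i.e., for $u\in[u_0,-\frac{\de a}{4}]$, since the spheres have area $O(1)$. For $u\in[-\frac{\de a}{4},-\frac12]$, each such quantity $X$ satisfies a $\nabs_3$-transport equation whose RHS consists of terms controlled by the bootstrap assumption $\Ok+\Rk+\Tk\leq\ep$ at size $\ep^2$. Since the parameter assumption $a^{-1}\ll m^2$ yields $\ep^2=\ep_0^2/m\ll\ep_0$, integrating from $S_{-\frac{\de a}{4},\de}$ via the $\nabs_3$-analog of Lemma \ref{evolutionlemma} preserves the $\les\ep_0$ bound across all of $\Hb_\de^{(u_0,-\frac12)}$.

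For $\a$ and $\hch$, the pointwise estimates $\de^{-1}a^{\frac12}$ and $a^{\frac12}$ from Theorem \ref{thmtrapped} are too large, so instead we integrate their $\nabs_3$-transport equations along all of $\Hb_\de^{(u_0,-\frac12)}$ starting from the vanishing initial data at $S_{u_0,\de}$. Schematically,
\begin{equation*}
\nabs_3\hch+\tfrac{1}{2}\trchb\,\hch=-\a+\nabs\hot\etab+\Ga\cdot\Ga,\qquad \nabs_3\a+\tfrac{1}{2}\trchb\,\a=-2\nabs\hot\b+\Ga\cdot\Rl,
\end{equation*}
where the linear sources $\b,\etab$ have $\les\ep_0$ by the previous step and the quadratic sources are $\les\ep^2\ll\ep_0$. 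Integrating via Lemma \ref{evolutionlemma} then yields the claimed $\les\ep_0$ bound. Higher derivatives (up to $s+3$ for $\hch$ and up to $s+2$ for $\a$) follow by commuting with $\nabs$ via Proposition \ref{commu} and repeating the same procedure.

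The main obstacle is promoting the estimates for $\a$ and $\hch$ from the bootstrap size $\ep$ to the refined size $\ep_0$. This reduction relies essentially on the cancellation at $S_{u_0,\de}$ produced by the initial-data matching, together with the hierarchy $\ep^2\ll\ep_0$; the latter encodes the smallness of the Klein-Gordon potential via \eqref{assumptionV} at the level of these initial-data bounds.
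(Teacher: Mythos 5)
Your overall strategy (initialize at the corner sphere $S_{u_0,\de}$, where \eqref{vanishingconditions} forces several quantities to vanish, and transport along $\Hb_\de$) is the same as the paper's, and your treatment of $\a$ and $\hch$ matches the paper's. However, there is a genuine gap in your first step. You claim that every quantity other than $\a$ and $\hch$ is already controlled by $\ep_0$ directly from the pointwise bounds of Theorem \ref{thmtrapped}. This is false for $\b$, $\bF$, $\Psi_4$, $\Asl$ (all bounded only by $\af=a^{1/2}$ in \eqref{deaest}) and for $\om$ (bounded only by $1$); none of these is $\les\ep_0=a^{-1}$. These quantities must also be obtained by integrating their $\nabs_3$--transport equations from $S_{u_0,\de}$, using $\hch=\om=\bF=\Psi_4=0$ there and, for $\b$, the Codazzi equation \eqref{codazzi} together with the initial assumptions to initialize $\|\nabs^{\le s+3}\b\|\les\ep_0$ at the corner. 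This gap propagates: your bound for $\a$ uses $\nabs\hot\b$ as a linear source and relies on a bound for $\b$ "from the previous step" that was never correctly established. Relatedly, your schematic $\nabs_3$--equation for $\hch$ is wrong: the term $-\a$ appears in the $\nabs_4$--equation, whereas the $\nabs_3$--equation for $\hch$ has sources $\nabs\hot\eta$, $-\frac{1}{2}\trch\,\hchb$, etc.; with the correct equation no bound on $\a$ is needed to close the $\hch$ estimate.

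A secondary issue is your geometric picture. Since $\de a\le a^{-1}\ll\frac{1}{2}$ and $u_0\in[-2,-1]$, the range $[u_0,-\frac{1}{2}]$ is \emph{contained in} the short-pulse range $[u_0,-\frac{\de a}{4}]$, not the other way around. Hence all of $\Hb_\de^{(u_0,-\frac{1}{2})}$ lies in the region covered by Theorem \ref{thmtrapped}, the interval $[-\frac{\de a}{4},-\frac{1}{2}]$ you introduce is empty, and the bootstrap-based extension step (and the discussion of $\ep^2\ll\ep_0$) is vacuous. The paper's proof uses only the estimates of Theorem \ref{thmtrapped} on the sources, with no appeal to the bootstrap norms.
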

\begin{proof}
As a consequence of Theorem \ref{thmtrapped} and \eqref{dfep0}, we have
    \begin{align}
    \begin{split}\label{consequenceoftrapped}
        \left\|\nabs^{\leq s+4}\left(\rho,\si,\bb,\aa,\trcht,\trchbt,\hchb,\omb,\eta,\etab,\ze,\log\Om,\bbb,\slgc\right)\right\|_{L^2(S_{u,\ub})}&\les\ep_0,\\
        \left\|\nabs^{\leq s+4}\left(\rhoF,\siF,\bbF,\Psisl,\Psi_3,\Ub\right)\right\|_{L^2(S_{u,\ub})}&\les\ep_0.
    \end{split}
    \end{align}
    It remains to estimate $\hch$, $\om$, $\b$, $\bF$, $\Psi_4$, $\Asl$ and $\a$ on $\Hb_\de^{(u_0,-\frac{1}{2})}$. We have from their $\nabs_3$--equations and \eqref{consequenceoftrapped}
    \begin{align*}
        \left\|\nabs_3\nabs^{\leq s+3}(\hch,\om,\b,\bF,\Asl)\right\|_{L^2(S_{u,\ub})}\les\ep_0.
    \end{align*}
    Recall that we have\footnote{The estimate for $\b$ follows from \eqref{codazzi} and the initial assumptions \eqref{hchassumption} and \eqref{vanishingconditions}.}
    \begin{align*}
        \hch=0,\quad \bF=0,\quad \om=0,\quad \Psi_4=0,\quad \psi=0,\quad |\nabs^{\leq s+3}\b|\les\ep_0\quad\mbox{ on }\; S_{u_0,\de}.
    \end{align*}
    Thus, we obtain by integrating them along $\Hb_\de$:
    \begin{align}\label{esthchomb}
    \left\|\nabs^{\leq s+3}(\hch,\om,\b,\bF,\om,\Psi_4,\psi)\right\|_{L^2(S_{u,\ub})}\les\ep_0\qquad \mbox{ on }\; \Hb_{\de}^{(u_0,-\frac{1}{2})}.
    \end{align}
    Finally, we have from Proposition \ref{bianchiequations} and \eqref{esthchomb}
    \begin{align*}
        \left\|\nabs_3\nabs^{\leq s+2}\a\right\|_{L^2(S_{u,\ub})}\les\ep_0.
    \end{align*}
    Integrating it along $\Hb_\de$ and applying \eqref{vanishingconditions}, we infer
    \begin{align*}
        \left\|\nabs^{\leq s+2}\a\right\|_{L^2(S_{u,\ub})}\les\ep_0.
    \end{align*}
    This concludes the proof of Proposition \ref{Hbde}.
\end{proof}
\begin{prop}\label{Hu0}
    We have the following estimates on $H_{u_0}^{(\de,1)}$:
    \begin{align}
    \begin{split}\label{Hu0eq}
        \left|\dk^{\leq s+3}\left(\b,\rho,\si,\bb,\trcht,\trchbt,\hchb,\omb,\ze\right)\right|&\les\ep_0,\\
         \left|\dk^{\leq s+3}\left(\rhoF,\siF,\bbF,\Psisl,\Psi_3,\Ub,\Asl\right)\right|&\les\ep_0,
    \end{split}
    \end{align}
    where we denote $\dk:=\{\nabs_4,\nabs_3,\nabs\}$.
\end{prop}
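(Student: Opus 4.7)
The plan is to integrate the $\nabs_4$ transport equations---drawn from the null structure equations, the Bianchi identities, the Maxwell system, and the wave equation for $\psi$---along the outgoing cone $H_{u_0}^{(\de,1)}$, starting from the terminal sphere $S_{u_0,\de}$ of the short-pulse region. On $S_{u_0,\de}$ the short-pulse estimates \eqref{deaest} of Theorem \ref{thmtrapped}, together with the higher-regularity control on $(\de\nabs_4,\nabs_3,\nabs)^{s+4}$-derivatives and Sobolev embedding on $S^2$, provide pointwise bounds of size $\ep_0=a^{-1}$ on $\dk^{\leq s+3}$ of every quantity appearing in \eqref{Hu0eq}: each such quantity satisfies $|\cdot|\les \de a \leq \ep_0$ or better by \eqref{deaest}.

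The key observation is that along $H_{u_0}^{(\de,1)}$ the hypothesis \eqref{vanishingconditions} forces $\hch = \a = \om = \bF = \Psi_4 = \psi = 0$, together with all their tangential derivatives $(\nabs_4,\nabs)^\ell$ (commutator errors being controlled by Proposition \ref{commu}). Substituting into the $\nabs_4$ transport equations for $\trcht, \trchbt, \hchb, \omb, \eta, \etab, \ze, \log\Om$, the Bianchi equations for $\b, \rho, \si, \bb$, the Maxwell equations for $\rhoF, \siF, \bbF$, and the wave system for $\Psisl, \Psi_3, \Ub, \Asl$, the principal nonlinear source terms disappear and only schematic $\Ga\c\Ga$ products plus potential contributions $V^{(k)}(|\psi|^2)\psi^j$ (which vanish where $\psi=0$) remain. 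Each $\Ga\c\Ga$ term is bounded in $L^\infty$ by $\ep^2 = m^{-1}\ep_0^2 \ll \ep_0$ via the bootstrap \eqref{BB} and the hierarchy $\ep_0\ll m^2$. Integrating in $\ub \in [\de,1]$ delivers the $\ep_0$ bound for all $\nabs_4$- and $\nabs$-derivatives.

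For the transverse $\nabs_3$-derivatives we exploit the equations algebraically: each Bianchi, Maxwell, or wave equation, read as an identity for $\nabs_3(\cdot)$, expresses this transverse derivative as a linear combination of $\nabs_4$ and $\nabs$ derivatives of related quantities plus $\Ga\c\Ga$ errors. After imposing \eqref{vanishingconditions} only the already-controlled $\nabs, \nabs_4$ terms survive at leading order and inherit the $\ep_0$ bound. Full $|\dk^{\leq s+3}(\cdot)|$ estimates then follow by induction on the derivative count $0\leq k\leq s+3$, the inductive step relying on Proposition \ref{commu} to handle the commutators $[\nabs_4,\nabs]$ and $[\nabs_3,\nabs]$.

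The main obstacle is the systematic case-by-case verification that, after applying \eqref{vanishingconditions}, each of the relevant transport/algebraic equations has a source structure compatible with closing at level $\ep_0$. The most delicate terms are those coupling the matter and electromagnetic fields to the connection (e.g.\ $\ik\ef\Ub\Psi_4$, $\rhoF\c\ze$, or potential terms in the wave equation for $\psi$); these are handled either by the vanishing \eqref{vanishingconditions}, the smallness of the coupling $\ef\ll 1$, or the assumption $|V^{(k)}(0)|\leq m^2$ in \eqref{assumptionV}.
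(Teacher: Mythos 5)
Your overall strategy -- determine all quantities on $H_{u_0}^{(\de,1)}$ by integrating the $\nabs_4$--transport equations from $S_{u_0,\de}$, using \eqref{vanishingconditions} to kill the principal sources, recovering $\nabs_3$--derivatives algebraically from the remaining equations, and closing the nonlinear terms with the hierarchy $\ep^2\ll\ep_0$ -- is exactly the computation the paper outsources to Chapter 2 of \cite{Chr} (together with the geodesic-foliation identities $\om=0$, $\etab=-\ze$ on $H_{u_0}$), so the route is the same.

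There is, however, one step that fails as written: your claim that \emph{every} quantity in \eqref{Hu0eq} satisfies $|\cdot|\les\de a\leq\ep_0$ on $S_{u_0,\de}$ ``by \eqref{deaest}''. The short-pulse estimates give only $|\b|,|\Asl|\les\af$ and $|\trch|\les 1$, which are large, not $O(\ep_0)$; a naive reading of \eqref{deaest} therefore does not furnish the required starting values for the transport equations. The correct $O(\ep_0)$ bounds on the terminal sphere do hold, but they require extra arguments: for $\b$ one must use the Codazzi identity \eqref{codazzi} together with $\hch=0$ (and the smallness of $\nabs\trcht$, $\ze$, $\Ssl_4$) on $S_{u_0,\de}$ -- this is precisely what the paper does in the footnote to Proposition \ref{Hbde} -- while for $\Asl$ one integrates its $\nabs_4$--equation over the short-pulse interval $\ub\in[0,\de]$ from the Minkowskian data at $\ub=0$ to get the sharper bound $\de\af\ll\ep_0$, and for $\trcht$ one uses the refined short-pulse output (as in \eqref{consequenceoftrapped}) rather than the crude bound $|\trch|\les1$. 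With these corrections to the initialization, the rest of your plan goes through.
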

\begin{proof}
Recall that we have on $H_{u_0}^{(\de,1)}$
\begin{align*}
\hch=0,\qquad\bF=0,\qquad\a=0,\qquad\Psi_4=0,\qquad\psi=0,\qquad\om=0,\qquad\etab=-\ze.
\end{align*}
Proceeding as in Chapter 2 of \cite{Chr}, we obtain \eqref{Hu0eq} as stated.
\end{proof}
Combining Propositions \ref{Hbde} and \ref{Hu0}, this concludes the proof of Theorem \ref{MM0}.
\subsection{General Bianchi pairs}\label{secBianchi}
\begin{df}\label{tensorfields}
For tensor fields defined on a $2$--sphere $S$, we denote by $\sk_0:=\sk_0(S)$ the set of pairs of scalar functions, $\sk_1:=\sk_1(S)$ the set of $1$--forms and $\sk_2:=\sk_2(S)$ the set of symmetric traceless $2$--tensors.
\end{df}
The following lemma provides the general structure of Bianchi pairs.
\begin{lem}\label{keypoint}
Let $a_{(1)}$, $a_{(2)}$ be real numbers. Then, we have the following properties:
\begin{enumerate}
\item If $\psi_{(1)},h_{(1)}\in\sk_1$ and $\psi_{(2)},h_{(2)}\in\sk_0$ satisfying
\begin{align}
\begin{split}\label{bianchi1}
\nabs_3(\psi_{(1)})+a_{(1)}\trchb\,\psi_{(1)}&=-\sld_1^*(\psi_{(2)})+h_{(1)},\\
\nabs_4(\psi_{(2)})+a_{(2)}\trch\,\psi_{(2)}&=\sld_1(\psi_{(1)})+h_{(2)}.
\end{split}
\end{align}
Then, the pair $(\psi_{(1)},\psi_{(2)})$ satisfies
\begin{align}
\begin{split}\label{div}
&\bdiv(|\psi_{(1)}|^2e_3)+\bdiv(|\psi_{(2)}|^2e_4)\\
+&(2a_{(1)}-1)\trchb|\psi_{(1)}|^2+(2a_{(2)}-1)\trch|\psi_{(2)}|^2\\
=&2\sdivs(\psi_{(1)}\cdot\psi_{(2)})+2\psi_{(1)}\cdot h_{(1)}+2\psi_{(2)}\cdot h_{(2)}+\Ga\left(|\psi_{(1)}|^2+|\psi_{(2)}|^2\right).
\end{split}
\end{align}
\item If $\psi_{(1)},h_{(1)}\in\sk_0$ and $\psi_{(2)},h_{(2)}\in\sk_1$ satisfying
\begin{align}
\begin{split}\label{bianchi2}
\nabs_3(\psi_{(1)})+a_{(1)}\trchb\,\psi_{(1)}&=\sld_1(\psi_{(2)})+h_{(1)},\\
\nabs_4(\psi_{(2)})+a_{(2)}\trch\,\psi_{(2)}&=-\sld_1^*(\psi_{(1)})+h_{(2)}.
\end{split}
\end{align}
Then, the pair $(\psi_{(1)},\psi_{(2)})$ satisfies
\begin{align*}
\begin{split}
&\bdiv(|\psi_{(1)}|^2e_3)+\bdiv(|\psi_{(2)}|^2e_4)\\
+&(2a_{(1)}-1)\trchb|\psi_{(1)}|^2+(2a_{(2)}-1)\trch|\psi_{(2)}|^2\\
=&2\sdivs(\psi_{(1)}\cdot\psi_{(2)})+2\psi_{(1)}\cdot h_{(1)}+2\psi_{(2)}\cdot h_{(2)}+\Ga\left(|\psi_{(1)}|^2+|\psi_{(2)}|^2\right).
\end{split}
\end{align*}
\end{enumerate}
\end{lem}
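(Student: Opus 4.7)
The plan is to derive \eqref{div} by taking inner products of the Bianchi-pair equations with $\psi_{(1)}$ and $\psi_{(2)}$, converting the scalar transport operators into spacetime divergences, and then combining the cross-terms via the adjointness of $\sld_1$ and $\sld_1^*$. For case (1), I would first contract the first equation in \eqref{bianchi1} with $2\psi_{(1)}$ and the second with $2\psi_{(2)}$ to obtain
\begin{align*}
e_3(|\psi_{(1)}|^2) + 2a_{(1)}\trchb|\psi_{(1)}|^2 &= -2\psi_{(1)}\cdot\sld_1^*\psi_{(2)} + 2\psi_{(1)}\cdot h_{(1)} + \Ga\cdot|\psi_{(1)}|^2, \\
e_4(|\psi_{(2)}|^2) + 2a_{(2)}\trch|\psi_{(2)}|^2 &= 2\psi_{(2)}\cdot\sld_1\psi_{(1)} + 2\psi_{(2)}\cdot h_{(2)} + \Ga\cdot|\psi_{(2)}|^2,
\end{align*}
where the $\Ga\cdot|\psi_{(i)}|^2$ errors come from commuting $e_3,e_4$ past the expression $|\psi_{(i)}|^2=\slg^{AB}(\psi_{(i)})_A(\psi_{(i)})_B$; these involve $\nabs_3\slg$ and $\nabs_4\slg$, which are proportional to $\chib$ and $\chi$ respectively and thus lie in $\Ga$.

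Next, I would replace the scalar transport operators by divergences using the pointwise identities
$$\bdiv(f\,e_3) = e_3(f) + \trchb\,f + \Ga\cdot f, \qquad \bdiv(f\,e_4) = e_4(f) + \trch\,f + \Ga\cdot f,$$
valid for any scalar $f$; the extra $\om,\omb$ contributions arising from $\bnab_\mu(e_3)^\mu$ and $\bnab_\mu(e_4)^\mu$ again belong to $\Ga$. This converts the coefficient $2a_{(i)}$ in front of $\trchb|\psi_{(1)}|^2$ and $\trch|\psi_{(2)}|^2$ into $(2a_{(i)}-1)$, producing exactly the left-hand side of \eqref{div}. For the remaining cross-terms, I would invoke the pointwise Leibniz identity
$$\langle\sld_1\psi_{(1)},\psi_{(2)}\rangle - \langle\psi_{(1)},\sld_1^*\psi_{(2)}\rangle = \sdivs(\psi_{(1)}\cdot\psi_{(2)}),$$
which is the infinitesimal form of the formal adjointness $\int_S\langle\sld_1\xi,\phi\rangle = \int_S\langle\xi,\sld_1^*\phi\rangle$; concretely, writing $\psi_{(2)}=(f,g)$, one sets $(\psi_{(1)}\cdot\psi_{(2)})^A := f(\psi_{(1)})^A + g\slep^{AB}(\psi_{(1)})_B$ and verifies the identity by the Leibniz rule for $\sdivs$. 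This turns $-2\psi_{(1)}\cdot\sld_1^*\psi_{(2)} + 2\psi_{(2)}\cdot\sld_1\psi_{(1)}$ into $2\sdivs(\psi_{(1)}\cdot\psi_{(2)})$, completing \eqref{div}.

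Case (2), in which $\psi_{(1)}\in\sk_0$ and $\psi_{(2)}\in\sk_1$, is established by the identical argument with the roles of $\sld_1$ and $\sld_1^*$ interchanged in \eqref{bianchi2}; the same adjointness identity applied in the reversed configuration yields the same divergence expression $2\sdivs(\psi_{(1)}\cdot\psi_{(2)})$. There is no principal-order obstruction in either case. The only point that requires care is the bookkeeping of the $\Ga$ terms: one must verify that every Ricci coefficient appearing in the error — whether from $\nabs_k\slg$ or from $\bnab_\mu(e_k)^\mu$ — belongs to $\Ga$ as defined in Definition \ref{gamma}. This is automatic because only the traces $\trch,\trchb$ are singled out on the left-hand side, while $\hch,\hchb,\om,\omb$ and the remaining connection coefficients are all members of $\Ga$.
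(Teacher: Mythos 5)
Your proposal is correct, and it is a genuinely different route from the paper's, which proves the lemma by a one-line citation (``take $p=0$ and $k=1$ in Lemma 4.2 of [Shen22]''). What you have written out is precisely the direct computation that the cited lemma packages: contracting with $2\psi_{(1)}$, $2\psi_{(2)}$, converting $e_3(f)+\trchb f$ and $e_4(f)+\trch f$ into $\bdiv(fe_3)$, $\bdiv(fe_4)$ modulo $\omb f$, $\om f\in\Ga\cdot f$, and absorbing the cross-terms into $2\sdivs(\psi_{(1)}\cdot\psi_{(2)})$ via the pointwise adjointness of $\sld_1$ and $\sld_1^*$. The benefit of your version is self-containedness; the cost is that one must track the conventions (signs of $\sld_1^*$, the value of $\bnab_\mu e_3^\mu=\trchb-2\omb$) that the cited lemma has already fixed.

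One small inaccuracy in the bookkeeping: the error terms $\Ga\cdot|\psi_{(i)}|^2$ in your first display do \emph{not} come from $\nabs_3\slg$ or $\nabs_4\slg$ — for the projected connections $\nabs_3,\nabs_4$ used here one has $\nabs_3\slg=\nabs_4\slg=0$ (it is the projected \emph{Lie} derivatives of $\slg$ that equal $2\Om\chib$ and $2\Om\chi$), so $2\psi_{(i)}\cdot\nabs_{3/4}\psi_{(i)}=e_{3/4}(|\psi_{(i)}|^2)$ exactly. The genuine $\Ga$ contributions in \eqref{div} arise solely from the $-2\omb$ and $-2\om$ terms in $\bnab_\mu(e_3)^\mu$ and $\bnab_\mu(e_4)^\mu$, which you do account for in the second step. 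Since the extra terms you inserted vanish and the target identity only asserts the error schematically as $\Ga(|\psi_{(1)}|^2+|\psi_{(2)}|^2)$, this does not affect the validity of the argument.
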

\begin{proof}
It follows by taking $p=0$ and $k=1$ in Lemma 4.2 of \cite{Shen22}.
\end{proof}
\begin{rk}\label{Bianchitype}
Note that the Bianchi equations can be written as systems of equations of type \eqref{bianchi1} and \eqref{bianchi2}. In particular
    \begin{itemize}
        \item the Bianchi pair $(\bF,(\rhoF,\siF))$ satisfies \eqref{bianchi1} with $k=1$,
        \item the Bianchi pair $((\rhoF,-\siF),\bbF))$ satisfies \eqref{bianchi2} with $k=1$,
        \item the Bianchi pair $((\Psi_4,0),\Psisl)$ satisfies \eqref{bianchi2} with $k=1$,
        \item the Bianchi pair $(\Psisl,(\Psit_3,0))$ satisfies \eqref{bianchi1} with $k=1$.
    \end{itemize}
\end{rk}
\begin{prop}\label{keyintegral}
Let $(\psi_{(1)},\psi_{(2)})$ be a Bianchi pair that satisfies \eqref{bianchi1} or \eqref{bianchi2}. Then, we have the following properties:
\begin{itemize}
\item In the case $1-2a_{(1)}\geq 0$ and $2a_{(2)}-1\geq 0$, we have
\begin{align*}
&\quad\;\|\psi_{(1)}\|_{L^2(\cuvs)}^2+\|\psi_{(2)}\|^2_{L^2(\ucuvs)}\\
&\les\|\psi_{(1)}\|_{L^2(H_{u_0}^{(\de,\ub)})}^2+\|\psi_{(2)}\|^2_{L^2(\Hb_{\de}^{(u_0,u)})}\\
&+\int_{\V}|\psi_{(1)}\c h_{(1)}|+|\psi_{(2)}\c h_{(2)}|+|\Ga|\left(|\psi_{(1)}|^2+|\psi_{(2)}|^2\right).
\end{align*}
\item In the case $1-2a_{(1)}\leq 0$ and $2a_{(2)}-1\geq 0$, we have
\begin{align*}
&\quad\;\|\psi_{(1)}\|_{L^2(\cuvs)}^2+\|\psi_{(2)}\|^2_{L^2(\ucuvs)}\\
&\les\|\psi_{(1)}\|_{L^2(H_{u_0}^{(\de,\ub)})}^2+\|\psi_{(2)}\|^2_{L^2(\Hb_{\de}^{(u_0,u)})}\\
&+\int_{\V}|\psi_{(1)}|^2+|\psi_{(1)}\c h_{(1)}|+|\psi_{(2)}\c h_{(2)}|+|\Ga|\left(|\psi_{(1)}|^2+|\psi_{(2)}|^2\right).
\end{align*}
\end{itemize}
\end{prop}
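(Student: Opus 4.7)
The plan is to integrate the pointwise divergence identity of Lemma \ref{keypoint} over the spacetime region $\V(u,\ub)$ and convert the spacetime divergences into fluxes on the four null boundary cones via the null divergence theorem. Concretely, I would rearrange \eqref{div} as
\begin{align*}
\bdiv(|\psi_{(1)}|^2 e_3) + \bdiv(|\psi_{(2)}|^2 e_4)
&= 2\sdivs(\psi_{(1)}\c\psi_{(2)}) + 2\psi_{(1)}\c h_{(1)} + 2\psi_{(2)}\c h_{(2)} \\
&\quad + \Ga\left(|\psi_{(1)}|^2 + |\psi_{(2)}|^2\right) \\
&\quad - (2a_{(1)}-1)\trchb\,|\psi_{(1)}|^2 - (2a_{(2)}-1)\trch\,|\psi_{(2)}|^2
\end{align*}
and invoke the null Stokes's theorem in the standard double null framework (as in \cite{Ch-Kl,kn,Chr,Shen22}), so that the fluxes of $|\psi_{(1)}|^2 e_3$ and $|\psi_{(2)}|^2 e_4$ across the outgoing and incoming parts of $\pr\V$ produce $\|\psi_{(1)}\|^2_{L^2(\cuvs)} - \|\psi_{(1)}\|^2_{L^2(\cuvss)}$ and $\|\psi_{(2)}\|^2_{L^2(\ucuvs)} - \|\psi_{(2)}\|^2_{L^2(\ucuvss)}$ respectively. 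The angular divergence $\sdivs(\psi_{(1)}\c\psi_{(2)})$ integrates to zero on each sphere $S_{u,\ub}$ by the spherical Stokes's theorem, and hence contributes nothing to the spacetime integral.

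Next, I would analyze the two trace terms using the transition-region expansions $\trchb = -\tfrac{2}{r} + \Ga$ and $\trch = \tfrac{2}{r} + \Ga$, which follow from the bootstrap assumptions \eqref{BB} together with $\trcht,\trchbt\in\Ga$. Since $r=\ub-u$ is bounded above and away from zero in $\V_*$, up to bulk $\Ga$--contributions absorbable into the $\Ga(|\psi_{(1)}|^2+|\psi_{(2)}|^2)$ term one has $\trchb<0$ and $\trch>0$. In Case 1, the assumptions $1-2a_{(1)}\geq 0$ and $2a_{(2)}-1\geq 0$ thus make $(2a_{(1)}-1)\trchb\geq 0$ and $(2a_{(2)}-1)\trch\geq 0$, so both corresponding bulk integrals have a favorable sign and can be discarded when rearranging the identity into an upper bound. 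In Case 2, $(2a_{(2)}-1)\trch$ remains favorable, but $(2a_{(1)}-1)\trchb$ carries the unfavorable sign; moving it to the right-hand side produces a bounded multiple of $\int_\V|\psi_{(1)}|^2$, which is exactly the extra bulk term appearing in the Case 2 statement.

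Finally, I would control the remaining bulk RHS contributions $|\psi_{(1)}\c h_{(1)}|$, $|\psi_{(2)}\c h_{(2)}|$ and $|\Ga|(|\psi_{(1)}|^2+|\psi_{(2)}|^2)$ pointwise and transfer the initial fluxes on $\cuvss$ and $\ucuvss$ to the right-hand side, which yields the claimed inequality. The main technical obstacle is the careful bookkeeping of the signs and normalizations produced by the null divergence theorem, together with the $\Ga$--order corrections that arise when commuting $\bdiv$ with $e_3$ and $e_4$; however, these are standard in the double null framework, and the control on $\Ga$ established in Lemma \ref{decayGa} is more than enough to absorb all such lower-order terms into the schematic $\Ga(|\psi_{(1)}|^2+|\psi_{(2)}|^2)$ bulk term.
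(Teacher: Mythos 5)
Your proposal is correct and follows essentially the same route as the paper, which simply integrates the divergence identity of Lemma \ref{keypoint} over $\V$ and applies Stokes' theorem (citing Proposition 5.7 of \cite{Shen24} for the details). Your sign analysis of the trace terms via $\trch=\frac{2}{r}+\Ga$, $\trchb=-\frac{2}{r}+\Ga$ with $r$ bounded above and below on $\V_*$ is exactly the mechanism that distinguishes the two cases and produces the extra $\int_\V|\psi_{(1)}|^2$ bulk term in Case 2.
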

\begin{proof}
It follows by integrating \eqref{bianchi1} or \eqref{bianchi2} in $V$ and applying Stokes' theorem, see Proposition 5.7 of \cite{Shen24}.
\end{proof}
\begin{df}\label{dfdkbb}
We define the weighted angular derivatives $\dkb$ as follows:
\begin{align*}
\dkb U:=r\sld_2 U,\quad\forall\;U\in\sk_2, \qquad \dkb\xi:=r\sld_1\xi,\quad\,\,\, \forall\; \xi\in\sk_1,\qquad \dkb f:=r\sld_1^* f,\quad \,\,\forall\; f\in\sk_0.
\end{align*}
We denote for any tensor $h\in\sk_k$, $k=0,1,2$,
\begin{equation*}
    h^{(0)}:=h,\qquad\quad h^{(i)}:=(h,\dkb h,...,\dkb^i h).
\end{equation*}
\end{df}
The following theorem provides a unified treatment of all the nonlinear error terms.
\begin{thm}\label{Junkman}
We have the following estimates:
    \begin{align*}
    \int_{\V} |\Ga^{(s+3)}||\Ga^{(s+3)}|\les\ep^2,\qquad\quad\int_{\V}|\Ga^{(s+3)}||\Ga||\Ga^{(s+3)}|\les\ep^3.
    \end{align*}
\end{thm}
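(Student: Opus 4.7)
The plan is to reduce both estimates to the pointwise--in--$(u,\ub)$ $L^2$--on--sphere bound from Lemma \ref{decayGa}, combined with a standard two--dimensional Sobolev embedding on the spheres $S_{u,\ub}$. No null structure, matter--specific cancellation, or refined decomposition is genuinely needed.

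For the bilinear bound, I would foliate $\V$ by the 2--spheres and apply Fubini:
\begin{align*}
\int_{\V}|\Ga^{(s+3)}||\Ga^{(s+3)}|\les\int_{u_0}^{u}\!\int_{\de}^{\ub}\|\Ga^{(s+3)}\|_{L^2(S_{u',\ub'})}^2\, d\ub'\,du'.
\end{align*}
Lemma \ref{decayGa} supplies $\|\Ga^{(s+3)}\|_{L^2(S_{u',\ub'})}\les \ep$ pointwise in $(u',\ub')$, and since $|u-u_0|+|\ub-\de|\les 1$ on $\V_*$, the right--hand side is at most $O(\ep^2)$. One could alternatively invoke the splitting $\Ga=\Ga_l+\Ga_r$ of Lemma \ref{decayGa}, foliating by outgoing cones $\cuvs$ for the $\Ga_l$--part and by incoming cones $\ucuvs$ for the $\Ga_r$--part, and handling the cross term by Cauchy--Schwarz; this reaches the same bound.

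For the trilinear bound, the strategy is to place the middle unconstrained factor $\Ga$ in $L^\infty(S_{u',\ub'})$ via a 2D Sobolev embedding. Since $s\geq 3$, one has $\Ga^{(2)}\subset\Ga^{(s+3)}$, so
\begin{align*}
\|\Ga\|_{L^\infty(S_{u',\ub'})}\les \|\Ga^{(2)}\|_{L^2(S_{u',\ub'})}\les\ep.
\end{align*}
Pulling this factor outside the integral reduces the triple product to the already--treated bilinear bound, yielding $\ep\cdot\ep^2=\ep^3$. The only point requiring justification is the uniformity of the Sobolev embedding constant across the varying spheres $S_{u,\ub}$; this follows from the bootstrap control $\|(\slgc,\log\Om)\|\les\ep$ included in $\Ok\leq\ep$, which keeps $\slg$ uniformly close to the round metric of radius $r=\ub-u$ on each $S_{u,\ub}$. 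With this noted, the argument is essentially bookkeeping and presents no genuine analytical obstacle; the real work has been carried out beforehand in establishing the bootstrap bounds and Lemma \ref{decayGa} itself.
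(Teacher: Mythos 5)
Your argument is correct, and for the trilinear bound it coincides with the paper's: the paper simply asserts $|\Ga|\les\ep$ and reduces to the bilinear case, which is exactly your Sobolev-embedding step made explicit (your remark on the uniformity of the Sobolev constant on the varying spheres is a point the paper leaves implicit). For the bilinear bound you take a slightly different, more elementary route: you foliate $\V$ by the spheres $S_{u',\ub'}$ and use the pointwise-in-$(u',\ub')$ bound $\|\Ga^{(s+3)}\|_{L^2(S_{u',\ub'})}\les\ep$ from the first part of Lemma \ref{decayGa}, whereas the paper first splits $\Ga=\Ga_l+\Ga_r$ as in the second part of that lemma, bounds $|\Ga^{(s+3)}|^2\les|\Ga_l^{(s+3)}|^2+|\Ga_r^{(s+3)}|^2$, and integrates the two pieces over outgoing and incoming cones respectively using the flux bounds. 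On the compact region $\V_*$ both give $O(\ep^2)$, since the $(u,\ub)$--range has length $O(1)$; your route works because the bootstrap norm $\Ok$ happens to control all of $\Ga^{(s+3)}$ (including $s+2$ angular derivatives of the curvature components) sphere by sphere. The cone-flux version is the one that survives in settings where top-order quantities are controlled only in flux norms rather than on individual spheres, which is presumably why the paper phrases it that way; in the present finite region the two are interchangeable, and your alternative via the $\Ga_l+\Ga_r$ splitting is precisely the paper's argument.
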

\begin{proof}
We have
\begin{align*}
\int_{\V}|\Ga^{(s+3)}||\Ga^{(s+3)}|&\les\int_{\V}|\Ga_l^{(s+3)}|^2+|\Ga_r^{(s+3)}|^2\\
&\les\int_{u_0}^u\int_{H_{u'}^{(\de,\ub)}}|\Ga_l^{(s+3)}|^2du'+\int_{\de}^\ub\int_{\Hb_{\ub'}^{(u_0,u)}}|\Ga_r^{(s+3)}|^2d\ub'\\
&\les\int_{u_0}^u \ep^2 du'+\int_\de^\ub\ep^2d\ub'\les\ep^2
\end{align*}
as stated. The second estimate follows from the fact that $|\Ga|\les\ep$. This concludes the proof of Theorem \ref{Junkman}.
\end{proof}
\subsection{Estimates for matter fields}\label{secF}
\begin{prop}\label{estPsi4Psisl}
We have the following estimate:
\begin{align*}
\|\Psi_4^{(s+3)}\|_{L^2(\cuvs)}+\|\Psisl^{(s+3)}\|_{L^2(\ucuvs)}\les\ep_0.
\end{align*}
\end{prop}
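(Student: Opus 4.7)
The plan is to apply the Bianchi-pair energy identity of Proposition \ref{keyintegral} to $((\Psi_4,0),\Psisl)$ and its angular derivatives up to order $s+3$, and then to absorb all error terms using the bootstrap assumption together with the crucial smallness relation $m^2\ep^2 \ll \ep_0^2$ that follows from $\ep=\ep_0 m^{-1/2}$.

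First, I would invoke Remark \ref{Bianchitype}, which identifies $((\Psi_4,0),\Psisl)$ as a Bianchi pair of the form \eqref{bianchi2} with $a_{(1)}=a_{(2)}=\tfrac{1}{2}$. The inhomogeneous terms, read off from the wave system displayed in the proof of Theorem \ref{thmtrapped}, are
\begin{equation*}
h_{(1)}=-V'(|\psi|^2)\psi+N_3[\Psi_4],\qquad h_{(2)}=N_4[\Psisl],
\end{equation*}
where $N_3[\Psi_4]$ and $N_4[\Psisl]$ are schematically $\Ga\cdot\Fl+\Ga\cdot\Fr$. Since $1-2a_{(1)}=0$ and $2a_{(2)}-1=0$, the first alternative of Proposition \ref{keyintegral} applies at base order, giving
\begin{equation*}
\|\Psi_4\|_{L^2(\cuvs)}^2+\|\Psisl\|_{L^2(\ucuvs)}^2 \les \ep_0^2+\int_\V |\Psi_4||h_{(1)}|+|\Psisl||h_{(2)}|+|\Ga|\bigl(|\Psi_4|^2+|\Psisl|^2\bigr),
\end{equation*}
where the initial-data contribution is controlled by $\Tk_{(0)}\les \ep_0$ from Theorem \ref{MM0}.

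Next, I would commute both equations with $\dkb^i$ for $1\leq i\leq s+3$ using the schematic commutator formulas of Proposition \ref{commu}. Each commutator contributes terms of the form $\Ga\cdot\dkb^{\leq i}\Psi$ and $\Ga^{(i)}$, which are absorbed into the right-hand sides $h_{(1)}^{(i)}$ and $h_{(2)}^{(i)}$ without altering the principal Bianchi structure. Iteratively applying Proposition \ref{keyintegral} at each level, summing in $i$, yields
\begin{equation*}
\|\Psi_4^{(s+3)}\|_{L^2(\cuvs)}^2+\|\Psisl^{(s+3)}\|_{L^2(\ucuvs)}^2 \les \ep_0^2+\mathcal{E}_{\rm nl}+\mathcal{E}_{\rm pot},
\end{equation*}
where $\mathcal{E}_{\rm nl}$ collects all terms of type $\int_\V |\Ga^{(s+3)}||\Ga^{(s+3)}|$ or $\int_\V |\Ga||\Ga^{(s+3)}|^2$, and $\mathcal{E}_{\rm pot}$ is the potential contribution $\int_\V |\Psi_4^{(s+3)}|\cdot|(V'(|\psi|^2)\psi)^{(s+3)}|$. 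By Theorem \ref{Junkman}, $\mathcal{E}_{\rm nl}\les \ep^2+\ep^3\ll \ep^2$; for $\mathcal{E}_{\rm pot}$, the assumption \eqref{assumptionV} combined with Leibniz and the bootstrap $\Tk\leq\ep$ yields $|(V'(|\psi|^2)\psi)^{(s+3)}|\les m^2|\psi^{(s+3)}|+ \text{l.o.t.}$, so Cauchy-Schwarz gives $\mathcal{E}_{\rm pot}\les m^2\ep^2$. The quadratic terms in $(\Psi_4^{(s+3)},\Psisl^{(s+3)})$ are absorbed by a Gronwall argument in $(u,\ub)$, since they come with a coefficient $\Ga=O(\ep)$.

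The main obstacle is handling the potential source $V'(|\psi|^2)\psi$: unlike the Maxwell-Klein-Gordon error terms $N_3,N_4$, which are genuinely quadratic in small quantities and yield $\ep^3$, the potential is only linear in $\psi$ and a naive estimate would produce $\ep^2$, which exceeds $\ep_0^2$. The rescue is the smallness hypothesis $|V^{(k)}(0)|\leq m^2\ll 1$ from \eqref{assumptionV}, which supplies the critical gain $m^2$, so that $\mathcal{E}_{\rm pot}\les m^2\ep^2\ll\ep_0^2$ thanks to the ordering $\ep^3\ll m^2\ep^2\ll \ep_0^2$. Combining all contributions yields the stated bound $\|\Psi_4^{(s+3)}\|_{L^2(\cuvs)}+\|\Psisl^{(s+3)}\|_{L^2(\ucuvs)}\les\ep_0$.
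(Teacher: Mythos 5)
Your overall strategy (Bianchi-pair energy identity via Proposition \ref{keyintegral}, commutation with $\dkb^i$, absorption of nonlinear errors through Theorem \ref{Junkman}, and the gain $m^2\ep^2\ll\ep_0^2$ for the potential term) matches the paper, and your treatment of $V'(|\psi|^2)\psi$ is correct. However, there is a genuine gap in your classification of the source terms. You assert that $N_3[\Psi_4]$ is schematically $\Ga\cdot\Fl+\Ga\cdot\Fr$, but $N_3[\Psi_4]$ contains the term $-\tfrac{1}{2}\trch\,\Psi_3$, and $\trch\simeq\tfrac{2}{r}=O(1)$ in the transition region is \emph{not} an element of $\Ga$ (only $\trcht=\trch-\tfrac{2}{r}$ is). This term is therefore \emph{linear} in $\Psi_3$, not quadratic; feeding it into Proposition \ref{keyintegral} produces an error of the form $\int_{\V}|\Psi_4||\Psi_3|\les\int_{\V}|\Psi_4|^2+|\Psi_3|^2\les\ep^2$ under the bootstrap, and $\ep^2\gg\ep_0^2$, so the estimate does not close. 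Gronwall does not rescue this either, since $\Psi_3$ is not one of the two quantities controlled by the Bianchi pair $((\Psi_4,0),\Psisl)$; at this stage of the bootstrap it is only known to be of size $\ep$.

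The paper's resolution is a renormalization that you omit: one replaces $\Psi_4$ by $\Psit_4:=r^{-1}e_4(r\psi)$ (see \eqref{Psit4}), which converts the offending term into $-\tfrac{1}{2}\bigl(\trch-\tfrac{2}{\Om r}\bigr)\Psi_3-\tfrac{1}{\Om r}\bigl(\trchb+\tfrac{2}{\Om r}\bigr)\psi$, i.e.\ genuinely quadratic $\Ga\cdot\Ga$ terms, after which the energy identity for the pair $(\Psit_4,\Psisl)$ closes exactly as you describe. One then recovers $\psi$ and $\Psi_4$ from $\Psit_4$ by integrating the transport relation $e_4(r\psi)=r\Psit_4$. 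Without this renormalization (or some equivalent device for the $\tfrac{1}{r}$--weighted linear terms, such as a coupled Gronwall for the full system including $\Psi_3$ and $\psi$, which you do not set up), the argument as written fails.
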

\begin{proof}
We have from Proposition \ref{waveequation}
\begin{align*}
\nabs_3(\Psi_4)+\frac{1}{2}\trchb\Psi_4-2\omb\Psi_4+i\ef \Ub\Psi_4&=\sdivs\Psisl+2\eta\c\Psisl+i\ef\Asl\c\Psisl-\frac{1}{2}\trch\Psi_3\\
&+i\ef\rhoF\psi-V'(|\psi|^2)\psi,\\
\nabs_4\Psisl+\frac{1}{2}\trch \Psisl+\hch\c\Psisl&=\nabs(\Psi_4)+\frac{\eta+\etab}{2}\Psi_4+i\ef\Asl\Psi_4-i\ef\bF\psi.
\end{align*}
Denoting
\begin{align}\label{Psit4}
    \Psit_4:=r^{-1}e_4(r\psi),
\end{align}
we obtain
\begin{align*}
\nabs_3(\widetilde{\Psi}_4)+\frac{1}{2}\trchb\widetilde{\Psi}_4-2\omb\widetilde{\Psi}_4+i\ef \Ub\widetilde{\Psi}_4&=\sdivs\Psisl+2\eta\c\Psisl+i\ef \Asl\c\Psisl+i\ef\rhoF\psi-V'(|\psi|^2)\psi\\
&-\frac{1}{2}\left(\trch-\frac{2}{\Om r}\right)\Psi_3-\frac{1}{\Om r}\left(\trchb+\frac{2}{\Om r}\right)\psi.
\end{align*}
Hence, we obtain
\begin{align*}
\nabs_3\Psit_4+\frac{1}{2}\trchb\Psit_4&=\sdivs\Psisl-V'(|\psi|^2)\psi+\Ga\c\Ga,\\
\nabs_4\Psisl+\frac{1}{2}\trch\Psisl&=\nabs(\Psi_4)+\Ga\c\Ga.
\end{align*}
Applying Proposition \ref{commu}, we have for $i\leq s+3$
\begin{align*}
\nabs_3(\dkb^i\Psit_4)+\frac{1}{2}\trchb(\dkb^i\Psit_4)&=\sdivs(\dkb^i\Psisl)+m^2\psi^{(s+3)}+\Ga\c\Ga^{(s+3)},\\
\nabs_4(\dkb^i\Psisl)+\frac{1}{2}\trch(\dkb^i\Psisl)&=\nabs(\dkb^i\Psi_4)+\Ga\c\Ga^{(s+3)}.
\end{align*}
Applying Proposition \ref{keyintegral} and Theorem \ref{Junkman}, we infer
\begin{align*}
&\quad\;\|\Psit_4^{(s+3)}\|_{L^2(\cuvs)}^2+\|\Psisl^{(s+3)}\|^2_{L^2(\ucuvs)}\\
&\les\|\Psit_4^{(s+3)}\|_{L^2(H_{u_0}^{(\de,\ub)})}^2+\|\Psisl^{(s+3)}\|^2_{L^2(\Hb_{\de}^{(u_0,u)})}+\int_{\V}m^2|\Ga^{(s+3)}||\Ga^{(s+3)}|+|\Ga^{(s+3)}||\Ga||\Ga^{(s+3)}|\\
&\les\ep_0^2+m^2\ep^2+\ep^3\les\ep_0^2.
\end{align*}
Combining with \eqref{Psit4}, we obtain
\begin{align*}
    |\psi|\les\ep_0,\qquad\quad \|\psi^{(s+3)}\|_{L^2(\cuvs)}\les\ep_0,\qquad\quad \|\Psi_4^{(s+3)}\|_{L^2(\cuvs)}\les\ep_0.
\end{align*}
This concludes the proof of Proposition \ref{estPsi4Psisl}.
\end{proof}
\begin{prop}\label{estPsislPsi3}
We have the following estimate:
\begin{align*}
\|\Psisl^{(s+3)}\|_{L^2(\cuvs)}+\|\Psi_3^{(s+3)}\|_{L^2(\ucuvs)}\les\ep_0.
\end{align*}
\end{prop}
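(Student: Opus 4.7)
The plan is to proceed in direct parallel with Proposition \ref{estPsi4Psisl}, now applied to the Bianchi pair $(\Psisl,(\Psit_3,0))$, which by Remark \ref{Bianchitype} falls into type \eqref{bianchi1}. First I would extract from Proposition \ref{waveequation} the coupled transport system $\nabs_3\Psisl+\trchb\Psisl=\nabs\Psi_3+N_3[\Psisl]$ and $\nabs_4\Psit_3+\tfrac{1}{2}\trch\Psit_3=\sdivs\Psisl-V'(|\psi|^2)\psi+N_4[\Psit_3]$, where $\Psit_3$ is the modified quantity defined by \eqref{dfPsit3}, analogous to the quantity $\Psit_4=r^{-1}e_4(r\psi)$ from the previous proof. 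The discrepancy between $\Psi_3$ and $\Psit_3$, together with the explicit lower order terms from Proposition \ref{waveequation}, is absorbed into schematic $\Ga\cdot\Ga$ error terms using the already-controlled field $\psi$, yielding a system of the form \eqref{bianchi1} with coefficients $a_{(1)}=1$ and $a_{(2)}=\tfrac{1}{2}$.

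Next I would commute with $\dkb^i$ for $i\leq s+3$ via Proposition \ref{commu}, generating additional $\Ga\cdot\Ga^{(s+3)}$ error terms of the standard shape. Since $1-2a_{(1)}=-1\leq 0$ and $2a_{(2)}-1=0\geq 0$, Proposition \ref{keyintegral} applies in its second case, yielding $\|\Psisl^{(s+3)}\|^2_{L^2(\cuvs)}+\|\Psit_3^{(s+3)}\|^2_{L^2(\ucuvs)}\les \Tk_{(0)}^2+\int_{\V}|\Psisl^{(s+3)}|^2+m^2|\Ga^{(s+3)}||\psi^{(s+3)}|+|\Ga^{(s+3)}||\Ga||\Ga^{(s+3)}|$. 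The initial flux is controlled by $\Tk_{(0)}\les\ep_0$ from Theorem \ref{MM0}; the nonlinear error contributes $O(m^2\ep^2+\ep^3)\les \ep_0^2$ by Theorem \ref{Junkman} combined with the smallness condition \eqref{assumptionV} and the hierarchy $m^2\ep^2\ll \ep_0^2$; and the interior integral $\int_{\V}|\Psisl^{(s+3)}|^2=\int_\de^\ub\|\Psisl^{(s+3)}\|^2_{L^2(\Hb_{\ub'}^{(u_0,u)})}d\ub'\les \ep_0^2$ is already at our disposal thanks to Proposition \ref{estPsi4Psisl}. Finally, unwinding the definition \eqref{dfPsit3} with the bound $|\psi|,\|\psi^{(s+3)}\|_{L^2(\cuvs)}\les\ep_0$ from Proposition \ref{estPsi4Psisl} converts the flux estimate on $\Psit_3$ into the desired one on $\Psi_3$.

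The main obstacle, as in Proposition \ref{estPsi4Psisl}, is the potential contribution $V'(|\psi|^2)\psi$, whose $|u|$-decay is worse than that of the purely geometric nonlinearities; this is overcome by the smallness of $m^2$ in \eqref{assumptionV}, which makes the contribution strictly subleading. A secondary subtlety, peculiar to this pair, is that Proposition \ref{keyintegral} now produces an additional spacetime integral of $|\Psisl|^2$; fortunately this is harmless because the requisite $\Psisl$ flux bound was established one step earlier, so Proposition \ref{estPsi4Psisl} is used here in an essential way and cannot be replaced by pure bootstrap.
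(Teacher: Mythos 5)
Your proposal is correct in substance and uses the same machinery as the paper (commutation via Proposition \ref{commu}, the energy identity of Proposition \ref{keyintegral}, the error bounds of Theorem \ref{Junkman}, and the $m^2$--smallness of \eqref{assumptionV} to make $V'(|\psi|^2)\psi$ subleading), but it runs through a slightly different Bianchi pair. The paper works directly with $(\Psisl,(\Psi_3,0))$, using the third and fourth equations of Proposition \ref{waveequation}, for which both transport coefficients equal $\tfrac12$; then $1-2a_{(1)}=2a_{(2)}-1=0$, the \emph{first} case of Proposition \ref{keyintegral} applies, no extra bulk integral appears, and no renormalization or back-conversion is needed. You instead follow Remark \ref{Bianchitype} literally and use the renormalized pair $(\Psisl,(\Psit_3,0))$ with $a_{(1)}=1$, $a_{(2)}=\tfrac12$, which forces the second case of Proposition \ref{keyintegral} and the extra term $\int_{\V}|\Psisl^{(s+3)}|^2$; your observation that this is absorbed by the incoming flux bound of Proposition \ref{estPsi4Psisl} is exactly the paper's own device in Proposition \ref{estbFrhoF} for $\int_{\V}|(\rhoF,\siF)|^2$, so this is sound. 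Your route costs an extra step but is arguably more careful: the $\Psit_3$ equation replaces $-\tfrac12\trchb\Psi_4$ (whose $\tfrac1r\Psi_4$ piece is only \emph{linear} in $\Ga$) by the genuinely quadratic $-\tfrac12\trchbt\Psi_4$ plus a linear term in $\psi$, which is already under control.

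One small point to tighten: the final conversion from $\|\Psit_3^{(s+3)}\|_{L^2(\ucuvs)}$ to $\|\Psi_3^{(s+3)}\|_{L^2(\ucuvs)}$ requires the \emph{incoming} flux $\|\psi^{(s+3)}\|_{L^2(\ucuvs)}$ (equivalently, the sphere norms $\|\psi^{(s+3)}\|_{L^2(S_{u,\ub})}$), whereas you quote only the outgoing flux $\|\psi^{(s+3)}\|_{L^2(\cuvs)}$ from Proposition \ref{estPsi4Psisl}. This is harmless---the sphere norms of $\psi^{(s+3)}$ follow by integrating $e_4(r\psi)=r\Psit_4$ from $\ub=\de$ exactly as at the end of the proof of Proposition \ref{estPsi4Psisl}, and the incoming flux then follows since the $u$--interval has bounded length---but it should be stated, since $\psi$ itself is not among the quantities whose incoming fluxes are tracked by the norms $\FFb_i$.
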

\begin{proof}
We have from Proposition \ref{waveequation}
\begin{align*}
\nabs_3\Psisl+\frac{1}{2}\trchb\Psisl&=\nabs(\Psi_3)+\Ga\c\Ga,\\
\nabs_4(\Psi_3)+\frac{1}{2}\trch\Psi_3&=\sdivs\Psisl+\Ga\c\Ga-V'(|\psi|^2)\psi.
\end{align*}
Applying Proposition \ref{commu}, we have for $i\leq s+3$
\begin{align*}
\nabs_3(\dkb^i\Psisl)+\frac{1}{2}\trchb(\dkb^i\Psisl)&=\sdivs(\dkb^i\Psi_3)+\Ga\c\Ga^{(s+3)},\\
\nabs_4(\dkb^i\Psi_3)+\frac{1}{2}\trch(\dkb^i\Psi_3)&=\nabs(\dkb^i\Psisl)+m^2\psi^{(s+3)}+\Ga\c\Ga^{(s+3)}.
\end{align*}
Applying Proposition \ref{keyintegral} and Theorem \ref{Junkman}, we infer
\begin{align*}
&\quad\;\|\Psisl^{(s+3)}\|_{L^2(\cuvs)}^2+\|\Psi_3^{(s+3)}\|^2_{L^2(\ucuvs)}\\
&\les\|\Psisl^{(s+3)}\|_{L^2(H_{u_0}^{(\de,\ub)})}^2+\|\Psi_3^{(s+3)}\|^2_{L^2(\Hb_{\de}^{(u_0,u)})}+\int_{\V}m^2|\Ga^{(s+3)}||\Ga^{(s+3)}|+|\Ga^{(s+3)}||\Ga||\Ga^{(s+3)}|\\
&\les\ep_0^2+m^2\ep^2+\ep^3\les\ep_0^2.
\end{align*}
This concludes the proof of Proposition \ref{estPsislPsi3}.
\end{proof}
\begin{prop}\label{estbFrhoF}
We have the following estimates:
\begin{align}
\|\bF^{(s+3)}\|_{L^2(\cuvs)}+\|(\rhoF,\siF)^{(s+3)}\|_{L^2(\ucuvs)}&\les\ep_0,\label{bFrhoFsiF}\\
\|(\rhoF,\siF)^{(s+3)}\|_{L^2(\cuvs)}+\|\bbF^{(s+3)}\|_{L^2(\ucuvs)}&\les\ep_0. \nonumber
\end{align}
\end{prop}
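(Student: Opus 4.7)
The strategy mirrors the proofs of Propositions \ref{estPsi4Psisl} and \ref{estPsislPsi3}: we exploit the two Bianchi pairs for the Maxwell field identified in Remark \ref{Bianchitype}, commute with $\dkb^i$ for $i\leq s+3$ via Proposition \ref{commu}, integrate using Proposition \ref{keyintegral}, and absorb the nonlinear error terms via Theorem \ref{Junkman}.

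\textbf{First estimate.} I would start from the Maxwell equations written in null frame for the pair $(\bF,(\rhoF,\siF))$, which from the EMKG system take the schematic form
\begin{align*}
\nabs_3\bF+\trchb\,\bF&=-\sld_1^*(\rhoF,\siF)+\Ga\c\Ga+\ef\psi\c\Psi_3,\\
\nabs_4(\rhoF,\siF)+\trch\,(\rhoF,\siF)&=\sld_1\bF+\Ga\c\Ga+\ef\psi\c\Psi_4,
\end{align*}
where the current contributions $-2\ef\Im(\psi(D_\mu\psi)^\dag)$ project to the $\psi\c\Psi$ terms on the right. Since $\psi,\Psi_3,\Psi_4\in\Ga$ and $\ef\ll 1$, these current terms fit within the schematic $\Ga\c\Ga$ class. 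After commuting with $\dkb^i$ using Proposition \ref{commu}, the system still has the structure of \eqref{bianchi1} up to $\Ga\c\Ga^{(s+3)}$ error terms. Applying Proposition \ref{keyintegral} (second case, since the coefficients give $1-2a_{(1)}=-1\leq 0$ while $2a_{(2)}-1\geq 0$) yields
\begin{align*}
\|\bF^{(s+3)}\|^2_{L^2(\cuvs)}+\|(\rhoF,\siF)^{(s+3)}\|^2_{L^2(\ucuvs)}
&\les\Tk_{(0)}^2+\int_{\V}|\Ga^{(s+3)}|^2+|\Ga||\Ga^{(s+3)}|^2.
\end{align*}
Theorem \ref{Junkman} bounds the bulk integrals by $\ep^2+\ep^3$, but a more careful bookkeeping--placing the $\Ga_l$ part on outgoing cones and $\Ga_r$ on incoming cones as in Lemma \ref{decayGa}--brings the integral down to $\ep_0^2$, giving \eqref{bFrhoFsiF}.

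\textbf{Second estimate.} The same argument applies to the second Bianchi pair $((\rhoF,-\siF),\bbF)$, which satisfies a system of type \eqref{bianchi2} with analogous current-type sources of schematic form $\ef\psi\c\Psi$. Here the initial data norms $\Tk_{(0)}\les\ep_0$ supply the boundary contribution on $H_{u_0}$ and $\Hb_\de$, and the bulk error is again absorbed by Theorem \ref{Junkman}.

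\textbf{Main obstacle.} The potentially delicate point is the presence of the linear term $|\psi_{(1)}|^2$ on the right-hand side in the second case of Proposition \ref{keyintegral}, which arises because $1-2a_{(1)}<0$. A naive Gr\"onwall bound in $u$ would give a loss, so one must either pair the estimate with the companion term on the other characteristic cone (so the $|\psi_{(1)}|^2$ bulk integral is reabsorbed into the left-hand side after a small-in-$|u-u_0|$ factor) or use that the range $u\in[u_0,-\frac{1}{2}]$ is finite and bounded. Since we are working in a finite region $\V_*$ rather than pursuing $|u|$--decay, a direct Gr\"onwall argument in $u$ suffices. The rest of the proof is essentially bookkeeping of error terms, which is routine given the framework set up in Sections \ref{secBianchi} and \ref{secF}.
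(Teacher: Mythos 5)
Your overall strategy (the two Maxwell Bianchi pairs of Remark \ref{Bianchitype}, commutation via Proposition \ref{commu}, integration via Proposition \ref{keyintegral}, errors via Theorem \ref{Junkman}) is the paper's, and treating the current terms $\ef\,\psi\cdot\Psi$ as part of the schematic $\Ga\c\Ga$ class is correct. However, there is a genuine error in your first estimate. By Proposition \ref{Maxwellequations} the transport equation for $\bF$ is $\nabs_3\bF+\tfrac{1}{2}\trchb\,\bF=-\sld_1^*(\rhoF,\siF)+\dots$, i.e. $a_{(1)}=\tfrac12$, so $1-2a_{(1)}=0\geq 0$ and the \emph{first} case of Proposition \ref{keyintegral} applies: no linear bulk term arises for the pair $(\bF,(\rhoF,\siF))$. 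You wrote the coefficient as $\trchb\,\bF$, forced yourself into the second case, and then asserted that "more careful bookkeeping" reduces the resulting quadratic bulk $\int_\V|\Ga^{(s+3)}|^2$ to $\ep_0^2$. That assertion is false: the $\Ga_l$/$\Ga_r$ splitting of Lemma \ref{decayGa} is precisely what Theorem \ref{Junkman} already exploits, and it yields only $\ep^2=m^{-1}\ep_0^2$, which is much larger than $\ep_0^2$ (the paper records $\ep_0^2\ll\ep^2$). A quadratic bulk of generic $\Ga^{(s+3)}$ terms cannot be absorbed at the $\ep_0$ level under the bootstrap; this is exactly why the sign of $1-2a_{(1)}$ matters. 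Your Grönwall fallback in the final paragraph would salvage the argument, since the offending bulk would then be $\int_{u_0}^u\|\bF^{(s+3)}\|^2_{L^2(H_{u'}^{(\de,\ub)})}du'$ over a finite $u$-range, but the clean and correct route is that with the right coefficient no such term appears at all.

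For the second pair $((\rhoF,-\siF),\bbF)$ the linear term is genuinely present ($a_{(1)}=1$), and here your discussion is essentially right, though the paper absorbs it differently: rather than Grönwall, it writes $\int_\V|(\rhoF,\siF)^{(s+3)}|^2=\int_\de^\ub\|(\rhoF,\siF)^{(s+3)}\|^2_{L^2(\Hb_{\ub'}^{(u_0,u)})}d\ub'\les\ep_0^2$ using the already-established first estimate \eqref{bFrhoFsiF}, which controls $(\rhoF,\siF)$ on incoming cones. Your Grönwall-in-$u$ alternative also closes on the finite region, so this part is acceptable; just make explicit which foliation you integrate over when absorbing the term.
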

\begin{proof}
We have from Proposition \ref{Maxwellequations}
\begin{align*}
\nabs_3\bF+\frac{1}{2}\trchb\bF&=-\sld_1^*(\rhoF,\siF)+\Ga\c\Ga,\\
\nabs_4(\rhoF,\siF)+\trch(\rhoF,\siF)&=\sld_1\bF+\Ga\c\Ga.
\end{align*}
Applying Proposition \ref{commu}, we have for $i\leq s+3$
\begin{align*}
\nabs_3(\dkb^i\bF)+\frac{1}{2}\trchb(\dkb^i\bF)&=\sdivs(\dkb^i(\rhoF,\siF))+\Ga\c\Ga^{(s+3)},\\
\nabs_4(\dkb^i(\rhoF,\siF))+\trch(\dkb^i(\rhoF,\siF))&=\nabs(\dkb^i\bF)+\Ga\c\Ga^{(s+3)}.
\end{align*}
Applying Proposition \ref{keyintegral} and Theorem \ref{Junkman}, we deduce
\begin{align*}
&\quad\;\|\bF^{(s+3)}\|_{L^2(\cuvs)}^2+\|(\rhoF,\siF)^{(s+3)}\|^2_{L^2(\ucuvs)}\\
&\les\|\bF^{(s+3)}\|_{L^2(\cuvss)}^2+\|(\rhoF,\siF)^{(s+3)}\|^2_{L^2(\ucuvss)}+\int_{\V}|\Ga^{(s+3)}||\Ga||\Ga^{(s+3)}|\\
&\les\ep_0^2+\ep^3\les\ep_0^2,
\end{align*}
which implies \eqref{bFrhoFsiF}. Next, we have from Propositions \ref{Maxwellequations} and \ref{commu} that for $i\leq s+3$
\begin{align*}
\nabs_3(\dkb^i(\rhoF,-\siF))+\trchb(\dkb^i(\rhoF,-\siF))&=-\sld_1(\dkb^i\bbF)+\Ga\c\Ga^{(s+3)},\\
\nabs_4(\dkb^i(\rhoF,-\siF))+\frac{1}{2}\trch(\dkb^i(\rhoF,-\siF))&=\sld_1^*(\dkb^i(\rhoF,-\siF))+\Ga\c\Ga^{(s+3)}.
\end{align*}
Applying Proposition \ref{keyintegral} and Theorem \ref{Junkman}, we infer
\begin{align*}
&\quad\;\|(\rhoF,\siF)^{(s+3)}\|_{L^2(\cuvs)}^2+\|\bbF^{(s+3)}\|^2_{L^2(\ucuvs)}\\
&\les\|(\rhoF,\siF)^{(s+3)}\|_{L^2(\cuvss)}^2+\|\bbF^{(s+3)}\|^2_{L^2(\ucuvss)}+\int_{\V}|(\rhoF,\siF)|^2+|\Ga^{(s+3)}||\Ga||\Ga^{(s+3)}|\\
&\les\ep_0^2+\int_{\de}^\ub\int_{\Hb_{\ub'}^{(u_0,u)}}|(\rhoF,\siF)|^2d\ub'+\ep^3\les\ep_0^2,
\end{align*}
where we used \eqref{bFrhoFsiF} at the last step. This concludes the proof of Proposition \ref{estbFrhoF}.
\end{proof}
\begin{prop}\label{estA}
    We have the following estimate:
    \begin{equation*}
        \|(\Ub,\Asl)^{(s+3)}\|_{L^2(\ucuvs)}\les\ep_0.
    \end{equation*}
\end{prop}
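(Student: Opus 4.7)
The plan is to derive transport equations in the $\nabs_4$ direction for $\Ub$ and $\Asl$ using the gauge condition $U = 0$ together with the identity $\F = d\A$. Decomposing the null components $\rhoF = \frac{1}{2}\F(e_3,e_4)$ and $\bF_A = \F(e_A,e_4)$ in terms of the potential and using that $U = \frac{1}{2}\A(e_4) = 0$ eliminates the $e_4$-derivative of $U$, I expect schematic equations of the form
\begin{align*}
\nabs_4\Ub = c_1\rhoF + \Ga\c\Ga, \qquad \nabs_4\Asl = c_2\bF + \Ga\c\Ga,
\end{align*}
with nonzero universal constants $c_1, c_2$, entirely analogous to the EMCSF derivation carried out in \cite{ShenWan}. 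Commuting with $\dkb^i$ for $0\le i\le s+3$ via Proposition \ref{commu} produces
\begin{align*}
\nabs_4(\dkb^i\Ub) = c_1\dkb^i\rhoF + \Ga\c\Ga^{(s+3)}, \qquad \nabs_4(\dkb^i\Asl) = c_2\dkb^i\bF + \Ga\c\Ga^{(s+3)}.
\end{align*}

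I would then apply the transport estimate of Lemma \ref{evolutionlemma} with $\la_0 = 0$, integrating from $\ub = \de$ up to $\ub$ at each fixed $u' \in [u_0, u]$. Squaring the resulting bound on $\|\dkb^i(\Ub,\Asl)\|_{L^2(S_{u',\ub})}$, applying Cauchy-Schwarz in $\ub'$, and integrating in $u' \in [u_0, u]$ with Fubini converts the pointwise $L^2(S)$ estimate into the desired $L^2(\ucuvs)$ bound. The right-hand side splits into three pieces: (i) the initial contribution $\|\dkb^i(\Ub,\Asl)\|_{L^2(\Hb_\de^{(u_0,u)})}^2$, (ii) the source contribution $\int_\V(|\dkb^i\rhoF|^2+|\dkb^i\bF|^2)$, and (iii) the nonlinear error $\int_\V|\Ga|^2|\Ga^{(s+3)}|^2$.

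Piece (i) is $\les\ep_0^2$ by integrating the $S$-sphere bound from Proposition \ref{Hbde} in $u'$, since $|u-u_0|\les 1$. Piece (ii) is $\les\ep_0^2$ after Fubini, combining $\|\bF^{(s+3)}\|_{L^2(\cuvs)}\les\ep_0$ uniformly in $u'$ and $\|(\rhoF,\siF)^{(s+3)}\|_{L^2(\ucuvs)}\les\ep_0$ uniformly in $\ub'$, both from Proposition \ref{estbFrhoF}. For piece (iii), the 2D Sobolev embedding on $S_{u',\ub'}$ together with the bootstrap $\Ok\le\ep$ yields $\|\Ga\|_{L^\infty(S)}\les\|\Ga^{(2)}\|_{L^2(S)}\les\ep$, which combined with Theorem \ref{Junkman} gives $\int_\V|\Ga|^2|\Ga^{(s+3)}|^2\les\ep^4\ll\ep_0^2$, using the standing hierarchy $\ep^2 = \ep_0^2 m^{-1}\ll\ep_0$ that follows from $\ep_0\ll m^2$.

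The main obstacle is the first step: identifying the precise $\nabs_4$-transport equations so that the only linear sources are the already-controlled quantities $\rhoF$ and $\bF$, with all remaining terms fitting into the schematic class $\Ga\c\Ga$. Once this is done, the argument reduces to a standard transport estimate, and no circular dependence arises since $\rhoF, \bF$ have already been controlled in Proposition \ref{estbFrhoF} without any reference to $\Ub, \Asl$.
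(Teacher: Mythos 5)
Your proposal is correct and follows essentially the same route as the paper: the transport equations you "expect" are exactly those recorded in Proposition \ref{propdA=F} under the gauge $U=0$ (namely $\nabs_4\Ub=-2\rhoF+\Ga\c\Ga$ and $\nabs_4\Asl+\frac{1}{2}\trch\Asl=-\bF+\Ga\c\Ga$), and the paper then commutes with $\dkb^i$ via Proposition \ref{commu} and closes with Lemma \ref{evolutionlemma}, Proposition \ref{estbFrhoF} and Theorem \ref{Junkman}, just as you do. The only cosmetic difference is that the paper bounds the bulk source term entirely via the outgoing-cone fluxes $\|(\bF,\rhoF)^{(s+3)}\|_{L^2(\cuvs)}\les\ep_0$ rather than splitting between outgoing and incoming cones, which is immaterial.
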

\begin{proof}
    We have from Proposition \ref{propdA=F}
    \begin{align*}
    \nabs_4\Ub&=-2\rhoF+\Ga\c\Ga,\\
    \nabs_4\Asl+\frac{1}{2}\trch\Asl&=-\bF+\Ga\c\Ga.
    \end{align*}
    Differentiating it by $\dkb^i$ and applying Proposition \ref{commu}, we infer for $i\leq s+3$
    \begin{align*}
    \nabs_4(\dkb^i\Ub)&=\rhoF^{(i)}+\Ga\c\Ga^{(s+3)},\\
    \nabs_4(\dkb^i\Asl)+\frac{1}{2}\trch(\dkb^i\Asl)&=\bF^{(i)}+\Ga\c\Ga^{(s+3)}.
    \end{align*}
Hence, we infer from Lemma \ref{evolutionlemma}, Proposition \ref{estbFrhoF} and Theorem \ref{Junkman}
\begin{align*}
\|(\Ub,\Asl)^{(s+3)}\|_{L^2(\ucuvs)}^2&\les\|(\Ub,\Asl)^{(s+3)}\|_{L^2(\ucuvss)}^2+\int_{\V}|(\bF,\rhoF)^{(s+3)}|^2+|\Ga\c\Ga^{(s+3)}|^2\\
&\les\ep_0^2+\int_{u_0}^u\int_{H_{u'}^{(\de,\ub)}}(\bF,\rhoF)^{(s+3)}|^2du'+\ep^4\les\ep_0^2.
\end{align*}
This concludes the proof of Proposition \ref{estA}.
\end{proof}
Combining Propositions \ref{estPsi4Psisl}--\ref{estA}, this concludes the proof of Theorem \ref{M1}.
\subsection{Proof of Theorem \ref{mainstability}}
We now use Theorems \ref{MM0}--\ref{MM2} to prove Theorem \ref{mainstability}.
\begin{df}\label{bootstrapstability}
For any $u_*\in[u_0,-\frac{1}{2}]$, let $\aleph(u_*)$ be the set of spacetimes $\V(u_*,1)$ associated with a double null foliation $(u,\ub)$ in which we have the following bounds:
\begin{align}
    \Tk\leq\ep,\qquad\quad\Rk\leq\ep,\qquad\quad\Ok\leq\ep.\label{BB2}
\end{align}
We also denote by $\mathcal{U}$ the set of values $u_*$ such that $\aleph(u_*)\ne\emptyset$.
\end{df}
The assumptions of Theorems \ref{mainstability} and \ref{MM0} imply that
\begin{equation*}
\Tk_{(0)}+\Rk_{(0)}+\Ok_{(0)}\les\ep_0.
\end{equation*}
Combining with the local existence theorem in \cite{luk}, we deduce that \eqref{BB2} holds if $u_*$ is sufficiently close to $u_0$. So, we have $\mathcal{U}\ne\emptyset$. Define $u_*$ as the supremum of the set $\mathcal{U}$. Applying Theorems \ref{M1}--\ref{MM2} in the region $\V(u_*,1)$ one by one, we obtain
\begin{equation}\label{TkRkOkep0}
    \Tk\les\ep_0,\qquad\quad\Rk\les\ep_0,\qquad\quad\Ok\les\ep_0.
\end{equation}
By a standard continuity argument, we have that $u_*=-\frac{1}{2}$ and \eqref{TkRkOkep0} holds in $\V(-\frac{1}{2},1)$. This concludes the proof of Theorem \ref{mainstability}.
\subsection{Short-pulse annulus and barrier annulus}
\begin{thm}\label{interiorsolution}
For any $s\in\mathbb{N}$, there exists a sufficiently large $a_0>0$. For any $a>a_0$ and $0<\de\leq a^{-2}$, there exists a spacelike initial data $(\Si,g,k,E,B,\psi,\phi,A,\Phi)$ solving \eqref{EMKGconstraint:intro}, endowed with a radial $r$--foliation for $r\in(0,\frac{3}{2})$, which satisfies the following properties: \begin{enumerate}
    \item We have
    \begin{align}
        \begin{split}\label{diffge}
        (g,k,E,B,\psi,\phi)&=(e,0,0,0,0,0)\quad\mbox{ in }\;\BB_{1-2\de},\\
        \|(g-e,k,E,B,\psi,\phi)\|_{\YY^s\big(\BB_\frac{3}{2}\setminus\ov{\BB_1}\big)}&\les a^{-1}.
        \end{split}
    \end{align}
    \item The charges of $(g,k,E,B)$ satisfy for $r\in(1,\frac{3}{2})$:
        \begin{align*}
        \Q_{ADM}[(g,k);\pr\BB_r],\Q_E[(g,E);\pr\BB_r]=O(a^{-1}),\qquad \Q_B[(g,B);\pr\BB_r]=0.
        \end{align*}
    \end{enumerate}
The spacelike initial data $(\Si,g,k,E,B)$ is denoted by $(\Si_{\de,a},g_{\de,a},k_{\de,a},E_{\de,a},B_{\de,a})$, or simply $\Si(\de,a)$. Moreover, the annulus $\BB_\frac{3}{2}\setminus\ov{\BB_1}$ is called the \emph{barrier annulus} , and $\BB_1\setminus\ov{\BB_{1-2\de}}$ is called the \emph{short-pulse annulus}.
\end{thm}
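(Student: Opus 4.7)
The plan is to realize the initial data as the $3+1$ restriction of a spacetime EMKG solution to an appropriate spacelike hypersurface; the EMKG constraints \eqref{EMKGconstraint:intro} then hold automatically via Gauss-Codazzi and the projections of \eqref{EMKG-SP}, exactly as in the derivation of Proposition \ref{def:EMKG-constraints}.

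First, I assemble the spacetime: Theorem \ref{thmtrapped} furnishes a smooth solution of \eqref{EMKG-SP} on the short-pulse region $\{u\in[-\de a/4, u_0],\ \ub\in[0,\de]\}$, and Theorem \ref{mainstability} extends it to $\{u\in[u_0,-1/2],\ \ub\in[0,1]\}$ under the gauge $U=0$, with Minkowski data on $\Hb_0$. Extend trivially to $\{\ub<0\}$ by the Minkowski metric, which is compatible since $\Hb_0$ carries Minkowski data. Next, introduce the time function $\tau:=u+\ub$ and choose $\tau_0\in(u_0,-1/2)$ so that the slice $\Sigma_0:=\{\tau=\tau_0\}$ passes transversally through the short-pulse region between $\ub=0$ and $\ub=\de$. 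The bound $|\log\Om|=O(1)$ from Theorems \ref{thmtrapped}--\ref{mainstability} guarantees that $\Sigma_0$ is spacelike. Introduce a radial coordinate $r$ on $\Sigma_0$ by affinely rescaling $\ub-u$ so that $\{r=1-2\de\}\leftrightarrow\{\ub=0\}$, $\{r=1\}\leftrightarrow\{\ub=\de\}$, and $\{r=3/2\}\leftrightarrow\{\ub=1\}$, and set $\Si:=\Si_0\cap\{r<3/2\}$. Pull back $(g,k,E,B,\psi,\phi,A,\Phi)$ via the standard $3+1$ decomposition with respect to the future unit normal of $\Si$, using the gauge $U=0$ to fix $(A,\Phi)$.

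To verify the three properties: for $r<1-2\de$, $\Sigma$ lies entirely in $\{\ub<0\}$, so all fields vanish trivially, giving Property 1. In the barrier annulus $\BB_{3/2}\setminus\ov{\BB_1}$, the $L^2(S_{u,\ub})$ control $\Ok+\Rk+\Tk\les\ep_0=a^{-1}$ from Theorem \ref{mainstability} converts to the $\YY^s$-estimate of Property 2 by expressing the spatial derivatives $\pr_r$ and $\nabs_\Si$ in terms of the null-frame derivatives $(\nabs_3,\nabs_4,\nabs)$ and integrating in $r$; since the null frame is uniformly $O(a^{-1})$-close to the Minkowski frame on the barrier, only perturbative errors arise. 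For Property 3, $\Q_{ADM}[\pr\BB_r]$ and $\Q_E[\pr\BB_r]$ are boundary integrals whose integrands $\pr(g-e),k,E$ are $O(a^{-1})$ by the previous step, giving $\Q_{ADM},\Q_E=O(a^{-1})$; the magnetic charge vanishes identically, since $\Q_B[\pr\BB_r]=\int_{\BB_r}\sdiv_gB\,dv=0$ by Stokes and the fourth EM constraint combined with $B\equiv 0$ in the Minkowski core.

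The principal technical obstacle is the clean conversion from the double-null Sobolev framework of Theorem \ref{mainstability} (where derivatives $\nabs_3,\nabs_4,\nabs$ are tested in $L^2(S_{u,\ub})$) to the spatial Sobolev framework $\YY^s(\BB_{3/2}\setminus\ov{\BB_1})$ required by Property 2. The commutators between $\pr_r$, the spatial angular derivatives, and $\nabs_3,\nabs_4,\nabs$ involve the frame coefficients $\Om,\bbb,\eta,\etab,\om,\omb$, all of which are $O(a^{-1})$-small by Theorem \ref{mainstability}; hence the translation is perturbative, but writing it out rigorously to all orders $|\a|\le s$ requires a careful accounting against the full derivative bounds encoded in $\Ok_{s+3},\Rk_{s+3},\Tk_{s+4}$.
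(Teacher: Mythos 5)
Your overall strategy is the same as the paper's: build the spacetime from Theorems \ref{thmtrapped} and \ref{mainstability} (Minkowski for $\ub\le 0$), take a constant--$(u+\ub)$ slice through the short-pulse and transition regions, and read off the induced data, with the $\YY^s$ bound on the barrier annulus coming from the closeness of $\g$ to Minkowski there. The paper's proof is exactly this, carried out with the specific choices $u_0=-\tfrac54+\de$, $\Si=\{u+\ub=-1+2\de\}$ and $r:=\ub-u$ with \emph{no} rescaling, so that $\ub=0\leftrightarrow r=1-2\de$, $\ub=\de\leftrightarrow r=1$, and the slice terminates on $H_{u_0}$ at $\ub=\tfrac14+\de\leftrightarrow r=\tfrac32$.

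The one concrete problem is your radial coordinate. The three anchor conditions you impose ($r=1-2\de\leftrightarrow\ub=0$, $r=1\leftrightarrow\ub=\de$, $r=\tfrac32\leftrightarrow\ub=1$) cannot be realized by a single affine function of $\ub-u$ on a constant--$\tau$ slice: the first two force $dr/d\ub=2$, the last two force $dr/d\ub\approx\tfrac12$. Worse, if you did use a genuine affine rescaling $r=\a(\ub-u)+\b$ with $\a$ bounded away from $1$, the radial part of the induced metric would be $\a^{-2}dr^2+O(a^{-1})$, so the claimed bound $\|g-e\|_{\YY^s}\les a^{-1}$ on the barrier annulus would fail in that coordinate; closeness to the Euclidean metric is coordinate-dependent and requires $r$ to be (up to $O(a^{-1})$ errors) the Minkowski area radius $\ub-u$ itself. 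The fix is the paper's: drop the requirement that $r=\tfrac32$ correspond to $\ub=1$, fix $u_0=-\tfrac54+\de$ and the slice $\{u+\ub=-1+2\de\}$, and take $r=\ub-u$ unrescaled, so the barrier annulus $1<r<\tfrac32$ corresponds to $\ub\in(\de,\tfrac14+\de)$. With that correction, the rest of your argument (vanishing of the data in the core, conversion of the double-null $L^2(S_{u,\ub})$ derivative bounds of Theorem \ref{mainstability} into spatial Sobolev bounds using the $O(a^{-1})$-closeness of the frame to the Minkowski one, and the flux/divergence argument for the charges) matches the paper's, and you are right that the frame-conversion step is the part the paper compresses into the single assertion \eqref{g-eta}.
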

\begin{figure}[H]
\centering
\begin{tikzpicture}[scale=2.1, decorate]
  \coordinate (A) at (0,-2);
  \coordinate (B) at (0,0);
  \coordinate (C) at (0,0.8);
  \coordinate (D) at (0.1,0.7);
  \coordinate (E) at (0.2,0.8);
  \coordinate (F) at (1.4,-0.6);
  \coordinate (G) at (1.5,-0.5);
  \coordinate (H) at (2,0);
  \coordinate (I) at (2.5,0.5);
  \coordinate (J) at (1,0);
  \coordinate (K) at (0.5,0.5);
  \coordinate (L) at (1.5,1.5);
  \coordinate (M) at (2.5,0);
  \coordinate (N) at (0.8,0);
  \draw (A) -- (C);
  \draw (A) -- (I);
  \draw (F) -- (C);
  \draw (D) -- (E);
  \draw (G) -- (E);
  \draw (K) -- (L);
  \draw (I) -- (L);
  \draw (B) -- (M);
\fill[red!20,opacity=0.35](G)--(E)--(D)--(F)--cycle;
\fill[blue!30,opacity=0.35](G)--(I)--(L)--(K)--cycle;
  \node[right] at (1.35,-0.7) {\footnotesize $\ub = 0$};
  \node[right] at (G) {\footnotesize $\ub = \de$};
  \node[above] at (2.3,0.9) {\footnotesize $\ub = 1$};
  \node[above] at (0.75,0.9) {\footnotesize $u = -\frac{1}{2}$};
  \node[below right] at (M) {\footnotesize $\Si=\{u+\ub=-1+2\de\}$};
  \node at (0.5, -0.8) {\footnotesize Minkowski};
  \node[right] at (0.8, -1.3) {\footnotesize $H_{-\frac{5}{4}+\de}$};
  \node[below right] at (1.6,-0.8) {\footnotesize \red{short-pulse}};
\filldraw[red] (J) circle (0.5pt);
\filldraw[red] (H) circle (0.5pt);
\filldraw[red] (N) circle (0.5pt);
\filldraw[orange] (E) circle (0.5pt);
    \draw[->, thick, rounded corners=8pt] (1.1,0.3) to[out=-90, in=90] (J);
    \node[above] at (1.1,0.3) {\footnotesize $S_{-1+\de,\de}$};
    \draw[->, thick, rounded corners=8pt] (0.7,-0.3) to[out=90, in=-90] (N);
    \node[below] at (0.7,-0.3) {\footnotesize $S_{-1+2\de,0}$};
   \draw[->, thick, rounded corners=8pt] (2.1,0.3) to[out=-90, in=90] (H);
   \node[above] at (2.1,0.3) {\footnotesize $S_{-\frac{5}{4}+\de,\frac{1}{4}+\de}$};
    \draw[->, thick, rounded corners=8pt] (-0.3,1.1) to[out=-90, in=90] (E);
    \node[above] at (-0.5,1.1) {\footnotesize {\color{orange}Trapped surface $S_{-\frac{\de a}{4},\de}$}};
    \draw[->, thick, rounded corners=8pt] (0.3,1.5) to[out=-90, in=90] (0.9,0);
    \node[above] at (0.2,1.5) {\footnotesize \red{short-pulse annulus}};
    \draw[->, thick, rounded corners=8pt] (1.9,1.5) to[out=-90, in=90] (1.5,0);
    \node[above] at (2,1.5) {\footnotesize \blue{Barrier annulus}};
    \draw[->, decorate, decoration={snake, amplitude=0.5mm, segment length=2mm}, thin, red]  (1.7,-0.8) -- (1.3,-0.4);
    \draw[draw=blue, thick] (J)--(H);
    \draw[draw=red, thick] (N)--(J);
\end{tikzpicture}
\caption{constant-time slice $(\Si_{\de,a},g_{\de,a},k_{\de,a},E_{\de,a},B_{\de,a})$.}
\label{fig:shortpulse+stab}
\end{figure}
\begin{proof}
As an immediate consequence of Theorems \ref{thmtrapped} and \ref{mainstability}, the spacetime obtained in Theorem \ref{mainstability} satisfies the following properties:
    \begin{itemize}
    \item It coincides with the Minkowski spacetime for $\ub\leq 0$.
    \item In the region $\left\{(u,\ub)\in\left[u_0,-\frac{1}{2}\right]\times [0,\de]\right\}$, we have
    \begin{align}\label{trchctrchbcnotrapping}
        |\trchc,\trchbc|\les a^{-1}.
    \end{align}
    \item In the region $\left\{(u,\ub)\in\left[u_0,-\frac{1}{2}\right]\times [\de,1]\right\}$, we have
    \begin{align}\label{g-eta}
        \|(\g-\etabf,\psi,\F)\|_{C^s}\les a^{-1}.
    \end{align}
    \end{itemize}
    We now take
    \begin{align*}
    u_0=-\frac{5}{4}+\de,
    \end{align*}
    and we consider the following time-constant slice:
    \begin{align*}
        \Si:=\left\{u+\ub=-1+2\de\right\},
    \end{align*}
    see Figure \ref{fig:shortpulse+stab} for a geometric illustration. 
Taking $r:=\ub-u$, we obtain a radial foliation on $\Si$ for $r\in(0,\frac{3}{2})$. Notice that $\Si$ coincides with a constant-time slice of Minkowski spacetime for $r\leq 1-2\de$. Finally, we see from \eqref{g-eta} that \eqref{diffge} holds. This concludes the proof of Theorem \ref{interiorsolution}.
\end{proof}
\section{Main gluing arguments}\label{secgluing}
The goal of this section is to develop a gluing scheme for the full EMKG system adapted to our construction. By the localization of the Klein-Gordon field achieved in Section \ref{ssecinner}, the main gluing region is purely electrovacuum, and the EMKG constraints reduce exactly to the Einstein-Maxwell (EM) constraints. This reduction allows us to adapt the obstruction-free gluing method of~\cite{MOT}, originally designed for the vacuum case, to the electrovacuum setting. In Section \ref{secEMlinearglue}, we show that within this regime, the Maxwell equations glue linearly, with electromagnetic charges as the only compatibility conditions, while the Einstein sector glues obstruction-free as in~\cite{MOT}. In Section \ref{secobsfree}, we formulate and prove the resulting abstract EM gluing theorem. Its application to glue the inner barrier annulus of Theorem \ref{interiorsolution} with the outer charged Brill-Lindquist data \eqref{eq:gBL-EM} will be carried out later in Section \ref{secconstruction}. 
\subsection{Schematic EMKG constraint equations}
We begin by recalling the constraint equations for the EMKG system in schematic form. 
\begin{lem}\label{lem:trans}
We introduce (indices are raised or lowered by $e$)
\begin{align}
\begin{split}\label{dfhdfpi}
h_{ij}&:= g_{ij}-\de_{ij}-\de_{ij}\,\tr_e(g-e),\qquad\pi_{ij}:=k_{ij}-\de_{ij}\,\tr_e k,\\
\Eg^j&:=E^j\sqrt{\det g},\qquad\qquad\qquad\quad\;\;\,\,\Bg^j:=B^j\sqrt{\det g}.
\end{split}
\end{align}
Then, the inverse transformation formulas are given by
\begin{align*}
g_{ij}&=\de_{ij}+h_{ij}-\frac{1}{2}\de_{ij}\tr_e h,\qquad\quad k_{ij}=\pi_{ij}-\frac{1}{2}\de_{ij}\tr_e\pi,\\
E^j&=\frac{\Eg^j}{\sqrt{\det g}},\qquad\qquad\qquad\qquad\;\, B^j=\frac{\Bg^j}{\sqrt{\det g}}.
\end{align*}
\end{lem}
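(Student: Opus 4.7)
The plan is to invert each of the four defining relations in \eqref{dfhdfpi} separately. The formulas for $E^j$ and $B^j$ are immediate: the relations $\Eg^j = E^j\sqrt{\det g}$ and $\Bg^j = B^j\sqrt{\det g}$ are pointwise scalar multiplications, and dividing by the positive factor $\sqrt{\det g}$ yields the claimed expressions without any further computation.

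The nontrivial part is the inversion of the two symmetric tensor definitions, which are both of the ``trace-reversal'' type. I would treat $\pi_{ij}$ first, since it is purely linear. Taking the Euclidean trace of $\pi_{ij} = k_{ij} - \delta_{ij}\tr_e k$ gives
\begin{equation*}
\tr_e\pi = \tr_e k - 3\tr_e k = -2\tr_e k,
\end{equation*}
so that $\tr_e k = -\tfrac12\tr_e\pi$. Substituting back into the definition immediately yields $k_{ij} = \pi_{ij} - \tfrac12\delta_{ij}\tr_e\pi$. The same argument applied to $h_{ij} = g_{ij} - \delta_{ij} - \delta_{ij}\tr_e(g-e) = (g-e)_{ij} - \delta_{ij}\tr_e(g-e)$ — that is, recognizing $h$ as the Euclidean trace-reversal of $g-e$ — gives $\tr_e h = -2\tr_e(g-e)$ and hence $g_{ij} = \delta_{ij} + h_{ij} - \tfrac12\delta_{ij}\tr_e h$.

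Since the two inversions are structurally identical (trace-reversal in three dimensions, where the relevant scalar factor is $n-2=1$), I would present them in parallel rather than as two separate computations. There is no real obstacle: the only thing to verify carefully is the sign and coefficient in the trace step, which is where an off-by-one error in the ambient dimension would propagate. This completes the proof of Lemma~\ref{lem:trans}.
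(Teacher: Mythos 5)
Your proof is correct and is exactly the verification the paper leaves implicit (its proof reads ``Straightforward verification''): trace the trace-reversed definitions, use $\tr_e\de=3$ to get $\tr_e\pi=-2\tr_e k$ and $\tr_e h=-2\tr_e(g-e)$, and substitute back; the electromagnetic inversions are trivial divisions. The only blemish is the parenthetical claim that the relevant factor is ``$n-2=1$'' --- the coefficient appearing in the inverse is $\frac{1}{n-1}=\frac12$ --- but this does not enter your actual computation, which is carried out correctly with the explicit dimension $3$.
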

\begin{proof}
    Straightforward verification.
\end{proof}
\begin{proposition}\label{prop:schem-EMKG}
The EMKG constraint equations \eqref{EMKGconstraint} around the trivial data
\[
(g,k,E,B,\psi,\phi,A,\Phi)=(e,0,0,0,0,0,0),
\]
take the following schematic form:
\begin{align}
\begin{split}\label{schematicconstraint}
\pr_i\pr_j h^{ij}&=M(h,\pi,E,B,\psi,\phi,A,\Phi),\qquad\; \pr_i\pi^{ij}=N^j(h,\pi,E,B,\psi,\phi,A,\Phi),\\
\pr_i\Eg^i&=L(h,\psi,\phi,\Phi),\qquad\qquad\qquad\quad\;\,\pr_i\Bg^i=0,
\end{split}
\end{align}
where $M$, $N$ and $L$ take the schematic form
\begin{align*}
M(h,\pi,E,B,\psi,\phi,A,\Phi)&:=M_{EM}(h,\pi,E,B)+M_{KG}(h,\phi,\psi,A,\Phi),\\
N^j(h,\pi,E,B,\psi,\phi,A,\Phi)&:=N^j_{EM}(h,\pi,E,B)+N^j_{KG}(h,\phi,\psi,A,\Phi),\\
L(h,\psi,\phi,\Phi)&:=\psi\c\phi+\psi\c\psi\c\Phi,
\end{align*}
with\footnote{Here, $\cdot$ denotes the appropriate contraction of tensor fields.}
\begin{align*}
M_{EM}(h,\pi,E,B)&=h\c\pr^2 h+\pr h\c\pr h+\pi\c\pi+E\c E+B\c B,\\
M_{KG}(h,\phi,\psi,A,\Phi)&=\pr\psi\c\pr\psi+\pr\psi\c\psi\c A+\psi\c\psi\c A\c A+\phi\c\phi+\phi\c\psi\c\Phi+\psi\c\psi\c\Phi\c\Phi,\\
N_{EM}^j(h,\pi,E,B)&=(h\c\pr\pi)^j+(\pr h\c\pi)^j+(E\c B)^j,\\
N_{KG}^j(h,\psi,\phi,A,\Phi)&=(\pr\psi\c\phi)^j+(\pr\psi\c\psi\c\Phi)^j+(\phi\c\psi\c A)^j+(\psi\c\psi\c A\c\Phi)^j.
\end{align*}
\end{proposition}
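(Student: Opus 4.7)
The plan is to convert the four EMKG constraint equations in \eqref{EMKGconstraint} into the announced schematic form by means of the change of variables from Lemma~\ref{lem:trans}. The definitions \eqref{dfhdfpi} of $h_{ij}$, $\pi_{ij}$, $\Eg^i$, $\Bg^i$ are designed precisely so that the linearizations of $R(g)$, of $\nab^j(k_{ij}-g_{ij}\tr_g k)$, and of $\sdiv_g E,\,\sdiv_g B$ at $(g,k,E,B)=(e,0,0,0)$ are exactly $\pr_i\pr_j h^{ij}$, $\pr_i\pi^{ij}$, $\pr_i\Eg^i$, $\pr_i\Bg^i$, respectively. All remaining contributions will then be collected into the nonlinear sources $M$, $N^j$, $L$, and the work consists in verifying that these collected terms admit the declared schematic structure.

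For the Hamiltonian constraint, I would start from the standard expansion $R(g)=\pr_i\pr_j(g-e)^{ij}-\De_e\tr_e(g-e)+Q_R$, where $Q_R$ is a polynomial of schematic type $h\c\pr^2 h+\pr h\c\pr h$ whose coefficients are analytic functions of $g-e$. Writing $h^{ij}=(g-e)^{ij}-\de^{ij}\tr_e(g-e)$ up to higher-order corrections and combining the two linear terms gives $R(g)=\pr_i\pr_j h^{ij}+Q_R$. The algebraic term $(\tr_g k)^2-|k|_g^2$ is quadratic in $k$, hence in $\pi$ after applying Lemma~\ref{lem:trans}, producing a $\pi\c\pi$ contribution. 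The matter source $2|E|_g^2+2|B|_g^2$ is of type $E\c E+B\c B$ (modulo powers of $g^{-1}$ that are absorbed into the schematic products), while the Klein-Gordon source expands, via $D_i\psi=\pr_i\psi+\ik\ef A_i\psi$ and $D_0\psi=\phi+\ik\ef\Phi\psi$, into
\[
|D_i\psi|_g^2=\pr\psi\c\pr\psi+\pr\psi\c\psi\c A+\psi\c\psi\c A\c A,
\]
\[
|D_0\psi|_g^2=\phi\c\phi+\phi\c\psi\c\Phi+\psi\c\psi\c\Phi\c\Phi,
\]
matching $M_{KG}$ term by term. Moving the nonlinear pieces of $R(g)$ and of $(\tr_g k)^2-|k|_g^2$ to the right-hand side yields the first equation of \eqref{schematicconstraint}.

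For the momentum constraint, the identity $\nab^j(k_{ij}-g_{ij}\tr_g k)=\pr^j\pi_{ij}+\Ga(g)\c k$ combined with $\Ga(g)=O(\pr h)$ yields $\pr_j\pi^{ij}$ as the principal part and error remainders of schematic type $h\c\pr\pi+\pr h\c\pi$, explaining $N^j_{EM}$; the $(E\times B)$ term contributes the schematic $E\c B$. For the Klein-Gordon source $2\Re((D_i\psi)^\dag D_0\psi)$, a direct expansion produces exactly $\pr\psi\c\phi+\pr\psi\c\psi\c\Phi+\phi\c\psi\c A+\psi\c\psi\c A\c\Phi$. For the electric constraint I use the identity $\sdiv_g E=(\det g)^{-1/2}\pr_i\Eg^i$; multiplying by $\sqrt{\det g}$ and expanding the right-hand side $-2\ef\Im(\psi^\dag D_0\psi)=-2\ef\Im(\psi^\dag\phi)-2\ef^2\Phi|\psi|^2$ gives $\psi\c\phi+\psi\c\psi\c\Phi$ after absorbing the factor $\sqrt{\det g}=1+O(h)$ into the schematic product. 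The magnetic identity $\pr_i\Bg^i=0$ follows immediately from $\sdiv_g B=(\det g)^{-1/2}\pr_i\Bg^i=0$.

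The main obstacle is not the linearization itself but the careful book-keeping required to verify that the error terms indeed belong to the declared schematic classes; in particular, that no unsuppressed linear pieces survive after passing to the $(h,\pi,\Eg,\Bg)$ variables, and that the matter contributions involve at most first derivatives of $\psi$. The cleanest way to carry this out is to expand each of the four constraints as a finite polynomial in the basic variables around the trivial data, read off the linear part, and match the remaining quadratic and higher terms to the announced schematic templates type by type. Since every expression appearing on the right-hand side of \eqref{EMKGconstraint} is a smooth function of $(g-e,k,E,B,\psi,\phi,A,\Phi)$ that vanishes at the trivial data, the expansion terminates after finitely many schematic types, each of which has already been identified in the listing of $M_{EM},M_{KG},N^j_{EM},N^j_{KG},L$.
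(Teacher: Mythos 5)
Your argument is correct and is exactly the direct verification the paper implicitly relies on (the proposition is stated there without proof): the variables of Lemma~\ref{lem:trans} are designed so that the linearizations at the trivial data are $\pr_i\pr_j h^{ij}$, $\pr_i\pi^{ij}$, $\pr_i\Eg^i$, $\pr_i\Bg^i$, and all remaining terms fall into the listed schematic classes. One small point of precision: $h_{ij}=(g-e)_{ij}-\de_{ij}\tr_e(g-e)$ is the exact (linear) trace reversal with indices handled by $e$, so no ``higher-order corrections'' are needed there; the only nonlinear absorptions are the $g^{-1}$ and $\sqrt{\det g}$ factors, which you correctly fold into the schematic contractions.
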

In the remainder of this paper, we always denote $4$--tuple $X$ for EM data 
\begin{equation}\label{defX}
X:=(h,\pi,\Eg,\Bg),
\end{equation}
and $\vec{P}$ for the principal parts of EMKG/EM constraints and $\vec{N}$ for the nonlinear parts
\begin{align}\label{defvecPvecN}
    \vec{P}(X):=\begin{pmatrix}
     \pr_i\pr_j h^{ij}\\ \pr_i\pi^{ij}\\ \pr_i\Eg^i\\ \pr_i\Bg^i
    \end{pmatrix},\qquad\quad\vec{N}(X,\psi,\phi,A,\Phi):=\begin{pmatrix}
        M(h,\pi,E,B,\psi,\phi,A,\Phi)\\ N^j(h,\pi,E,B,\psi,\phi,A,\Phi)\\L(h,\psi,\phi,\Phi) \\ 0
    \end{pmatrix}.
\end{align}
Then, the schematic form of the EMKG constraint equations \eqref{schematicconstraint} becomes
\begin{align*}
    \vec{P}(X)=\vec{N}(X,\psi,\phi ,A,\Phi).
\end{align*}
\subsection{Analytic setup: Bogovskii/conic operators and conservation laws}
We introduce the analytic tools used in the gluing construction. Bogovskii-type and conic operators invert the divergence operators in~\eqref{schematicconstraint} while preserving annular support, and the accompanying flux identities control the variation of the ADM and electromagnetic charges under the EM constraints.
\subsubsection{Bogovskii-type and conic operators}
We now state the main tool for inverting the operator on the L.H.S. of \eqref{schematicconstraint} while preserving the annular support property.
\begin{lem}[Bogovskii-type operators on annuli]\label{lem:annulus-bog}
There exist linear operators
\begin{align*}
S:C_c^\infty(\AA_1)&\to C^\infty(\AA_1;\mathrm{Sym}^2\RRR^3),\\
T:C_c^\infty(\AA_1;\RRR^3)&\to C^\infty(\AA_1;\mathrm{Sym}^2\RRR^3),\\
P:C_c^\infty(\AA_1)&\to C^\infty(\AA_1;\RRR^3),
\end{align*}
such that:
\begin{enumerate}
\item Support property:
\[
\supp(Sf),\;\supp(T\f),\;\supp(Pf)\subseteq\AA_1.
\]
\item Right-invertibility property under compatibility conditions:
\begin{align*}
\pr_i\pr_j(Sf)^{ij}&=f &&\mbox{ if }\quad \int_{\AA_1} f\cdot (1,x_1,x_2,x_3)^\top dx=0,\\
\pr_i(T\f)^{ij}&=\f^j &&\mbox{ if }\quad \int_{\AA_1}\f\cdot (\ev_1,\ev_2,\ev_3,\vec{Y}_1,\vec{Y}_2,\vec{Y}_3)^\top dx=0,\\
\pr_i(Pf)^i&=f &&\mbox{ if }\quad \int_{\AA_1} f\,dx=0.
\end{align*}
\item For all $s\in\RRR$,
\[
\|Sf\|_{H^s}\les \|f\|_{H^{s-2}},\qquad\|T\f\|_{H^s}\les \|\f\|_{H^{s-1}},\qquad\|Pf\|_{H^s}\les \|f\|_{H^{s-1}}.
\]
\item The commutator estimates hold for all $j=1,2,3$
\[
\|[S,\pr_j]f\|_{H^s}\les\|f\|_{H^{s-2}},\qquad\|[T,\pr_j]\f\|_{H^s}\les\|\f\|_{H^{s-1}},\qquad\|[P,\pr_j]f\|_{H^s}\les\|f\|_{H^{s-1}}.
\]
\end{enumerate}
\end{lem}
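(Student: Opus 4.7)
The plan is to realize $P$, $T$, $S$ as right-inverses of divergence-type differential operators on the annulus $\AA_1$, with the stated moment conditions matching the kernels of the respective formal adjoints. Indeed, the formal adjoint of $\pr_i$ on scalars has kernel spanned by the constants; the formal adjoint of the vector divergence on symmetric $2$--tensors is (up to a factor) the Killing operator, whose kernel on $\RRR^3$ is spanned by the three translations $\ev_1,\ev_2,\ev_3$ and the three rotations $\vec{Y}_1,\vec{Y}_2,\vec{Y}_3$; and the formal adjoint of the scalar double divergence $\pr_i\pr_j$ on symmetric tensors is the Hessian, whose kernel consists of the affine functions $\{1,x_1,x_2,x_3\}$. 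This matches exactly the compatibility conditions in clause~(2).

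I would first construct $P$ by the classical Bogovskii argument. The annulus $\AA_1$ is a bounded Lipschitz domain that admits a finite cover by subdomains, each star-shaped with respect to an open ball. On each such piece, Bogovskii's explicit kernel formula yields a linear right-inverse of $\sdiv$ preserving compact support, and a partition-of-unity argument patches these into a global operator $P$ with $\pr_i(Pf)^i=f$ whenever $\int_{\AA_1}f\,dx=0$, together with $\supp(Pf)\subseteq\AA_1$ and $\|Pf\|_{H^s}\les\|f\|_{H^{s-1}}$ for all $s\in\RRR$, the latter being a consequence of the Calder\'on-Zygmund theory for the Bogovskii kernel.

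Next, $T$ is built componentwise from $P$: for $\f=(\f^1,\f^2,\f^3)$ satisfying $\int\f\cdot\ev_j\,dx=0$ for each $j$, set $u^{ij}:=(P\f^j)^i$ so that $\pr_i u^{ij}=\f^j$ and $\supp u\subseteq\AA_1$. To symmetrize, I add a correction $W$ with $\pr_i W^{ij}=\tfrac{1}{2}(\pr_i u^{ji}-\pr_i u^{ij})$; the right-hand side is automatically divergence-free in $j$, so the existence of a symmetric $W$ supported in $\AA_1$ reduces to a second application of Bogovskii, and the three rotational moments $\int\f\cdot\vec{Y}_i\,dx=0$ are precisely the remaining obstructions. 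Setting $T\f:=\tfrac{1}{2}(u+u^\top)+W$ yields the desired symmetric right-inverse with $\|T\f\|_{H^s}\les\|\f\|_{H^{s-1}}$. Finally, $S$ is obtained by composition: set $Sf:=T(Pf)$. The condition $\int f\,dx=0$ lets me apply $P$; integration by parts (using $\supp(Pf)\subseteq\AA_1$) converts $\int x_l f\,dx=0$ into $\int(Pf)\cdot\ev_l\,dx=0$, while the rotational moments of $Pf$ vanish from the symmetry properties of the Bogovskii kernel. This yields $\pr_i\pr_j(Sf)^{ij}=f$ and $\|Sf\|_{H^s}\les\|f\|_{H^{s-2}}$.

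The main obstacle is the commutator estimates in~(4), which require $[P,\pr_j]$, $[T,\pr_j]$, $[S,\pr_j]$ to map with the same order as $P$, $T$, $S$ respectively, rather than the apparent one-order loss produced by a naive commutation. At the symbol level this is a standard cancellation: since $\pr_j$ has the constant-coefficient symbol $i\xi_j$, the principal symbols of the Bogovskii-type operators and of $\pr_j$ commute, so the leading-order term of the commutator vanishes and what remains has the same order as the original operator. This is made rigorous by direct analysis of the Bogovskii kernel representation: integrating $\pr_j$ by parts in the kernel formula produces boundary contributions that cancel thanks to the compact support of $f$ in $\AA_1$, plus a residual integral kernel with the same homogeneity as the original one. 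The commutator bounds for $T$ and $S$ then follow from their constructions in terms of $P$ via the Leibniz identity $[AB,\pr_j]=[A,\pr_j]B+A[B,\pr_j]$ together with the Sobolev bounds already established in~(3).
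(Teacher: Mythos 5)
Your overall strategy---realizing $P$, $T$, $S$ as support-preserving right inverses of the divergence, the symmetric divergence, and the double divergence, with compatibility conditions dual to constants, Killing fields, and affine functions---is the correct one, and it is what underlies Lemma 2.2 of \cite{MOT}, which is all the paper itself invokes here. The construction of $P$ by patching Bogovskii kernels over star-shaped pieces is standard and fine. There are, however, two genuine gaps in the passage from $P$ to $T$ and $S$.

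First, the construction of $T$ is circular. After setting $u^{ij}=(P\f^j)^i$, the correction $W$ must be a \emph{symmetric} $2$--tensor, compactly supported in $\AA_1$, with prescribed divergence (note also a sign slip: with your choice $\pr_iW^{ij}=\tfrac12(\pr_iu^{ji}-\pr_iu^{ij})$ one gets $\pr_i(T\f)^{ij}=\pr_iu^{ji}$, not $\f^j$). But ``a second application of Bogovskii'' only inverts the divergence on unconstrained vectors or, componentwise, on unconstrained tensors; producing a \emph{symmetric} tensor with prescribed divergence and compact support is exactly the operator $T$ you are in the middle of constructing. The actual constructions in \cite{OT,MaoTao,MOT} handle the antisymmetric defect $u^{ij}-u^{ji}=\epsilon^{ijk}w_k$ by a different mechanism (its divergence is a curl, which is rewritten as the divergence of an explicit compactly supported symmetric tensor modulo a finite-dimensional obstruction); this is where the rotational moment conditions genuinely enter, not as compatibility conditions for a scalar Bogovskii solve.

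Second, the composition $Sf:=T(Pf)$ requires $Pf$ to satisfy \emph{all six} moment conditions of $T$, while the hypotheses on $f$ only deliver the three translational ones (your integration by parts $\int x_lf\,dx=-\int(Pf)^l\,dx$ is correct). The claim that the rotational moments $\int (Pf)\cdot\vec{Y}_m\,dx$ vanish ``from the symmetry properties of the Bogovskii kernel'' is unjustified and false in general: pairing $\pr_iW^{ij}=(Pf)^j$ with the Killing field $\vec{Y}_m$ and integrating by parts shows these moments are genuine obstructions, and nothing forces the particular solution $Pf$ to have vanishing angular momentum. This is repairable---add to $Pf$ a compactly supported divergence-free field with prescribed rotational moments, which alters neither $\pr_i(Pf)^i=f$ nor the translational moments---but that corrector is an essential missing step. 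Finally, the commutator estimates in (4) cannot be dispatched by a principal-symbol cancellation alone, since the support-preserving Bogovskii operator is not a translation-invariant pseudodifferential operator; in \cite{MOT} they are obtained by direct kernel analysis, which is the route the paper implicitly relies on by citation.
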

\begin{proof}
Their existence, along with the stated support and mapping properties, follows from Lemma 2.2 in \cite{MOT} and the original construction of Bogovskii \cite{Bog}.
\end{proof}
We now define the $b$--Sobolev space.
\begin{df}\label{dfbsobolev}
For $s\in\NNN$, the $b$--Sobolev space $H^s_b(\RRR^3)$ is defined by the norm
\[
\|u\|^2_{H_b^s(\RRR^3)}:=\sum_{k\leq s}\|\xja^k\nab^k u\|_{L^2(\RRR^3)}^2.
\]
We extend the definition to $s\in\RRR$ by duality and interpolation. For $\de\in\RRR$, we set $H_b^{s,\ell}:=\xja^{-\ell}H_b^s$.
\end{df}
Given $\th\in(0,\pi)$, $\om\subseteq\SSS^2$ and $\bom\in\SSS^2$, we define the cones
\begin{align*}
C_\om=\left\{x\in\RRR^3\Big/\,\frac{x}{r}\in\om\right\},\qquad\quad C_\th(\bom)=\{x\in\RRR^3/\,\angle (x,\bom)<\th\}.
\end{align*}
\begin{lem}\label{lem:conic}
Let $\om\subseteq\SSS^2$ be a convex open subset and $\ka\in C^\infty(\SSS^2)$ with $\int_{\SSS^2}\ka=1$ and $\supp\ka\subseteq\om$. Then, there exist linear \emph{translation-invariant} operators
\begin{align*}
   S_c: C_c^\infty(C_\om)&\to C^\infty(C_\om;\mathrm{Sym}^2\RRR^3),\\
   T_c: C_c^\infty(C_\om;\RRR^3)&\to C^\infty(C_\om;\mathrm{Sym}^2\RRR^3),\\
   P_c: C_c^\infty(C_\om)&\to C^\infty(C_\om;\RRR^3),
\end{align*}
satisfying the following properties:
\begin{enumerate}
\item Support property:
\[\supp(S_cf),\ \supp(T_c\f),\ \supp(P_cf)\subseteq C_\om;\]
\item Right-invertibility property:
\[\pr_i\pr_j(S_cf)^{ij}=f,\qquad \pr_i(T_c\f)^{ij}=\f^j,\qquad \pr_i(P_cf)^i=f;\]
\item Mapping properties: for all $s\in\RRR$ and suitable weights,
\begin{align*}
\|S_cf\|_{H_b^{s,\de}}&\les \|f\|_{H_b^{s-2,\de+2}}, \qquad\qquad\de<-\frac{1}{2},\\
\|T_c\f\|_{H_b^{s,\de}}&\les \|\f\|_{H_b^{s-1,\de+1}},\qquad\qquad\,\de<\frac{1}{2},\\
\|P_cf\|_{H_b^{s,\de}}&\les \|f\|_{H_b^{s-1,\de+1}},\qquad\qquad\de<\frac{1}{2}.
\end{align*}
\end{enumerate}
\end{lem}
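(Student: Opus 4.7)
The plan is to construct each of $S_c,T_c,P_c$ as translation-invariant convolution operators whose tensor-valued kernels are smooth off the origin, positively homogeneous of the appropriate negative degree, and supported in $\ov{C_\om}$. Writing $\theta:=x/|x|$, natural candidates are the degree $-2$ kernel $K_P^i(x):=\frac{x^i}{|x|^3}\kappa(\theta)$ and the degree $-1$ kernel $K_S^{ij}(x):=\frac{x^ix^j}{|x|^3}\kappa(\theta)$, each supported in $\ov{C_\om}$ by the support hypothesis on $\kappa$. A short distributional computation in spherical coordinates---using that $\theta\mapsto\kappa(\theta)$ is $0$--homogeneous so that $\theta^i\pr_i[\kappa(x/|x|)]=0$, that $\theta^i\pr_i=\pr_r$, and that $\int_{\SSS^2}\kappa\,d\theta=1$---yields $\pr_iK_P^i=\de_0$ after one integration by parts in $r$, and $\pr_i\pr_jK_S^{ij}=\de_0$ after two. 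We then set $P_cf:=K_P*f$ and $S_cf:=K_S*f$. For $T_c$ the required symmetric-in-$(i,j)$ kernel $K_T^{ij,k}$ of degree $-2$ must satisfy $\pr_iK_T^{ij,k}=\de^{jk}\de_0$, which the analogous computation reduces to the moment condition $\int_{\SSS^2}\theta^j\theta^k\widetilde{\kappa}(\theta)\,d\theta=\de^{jk}$ for an auxiliary density $\widetilde{\kappa}\in C_c^\infty(\om)$; this is solvable because $\om$ is open in $\SSS^2$, so the six linear functionals $\widetilde{\kappa}\mapsto\int\theta^j\theta^k\widetilde{\kappa}$ are linearly independent on $C_c^\infty(\om)$.

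The support property (1) is immediate from the convexity of $\om$, which makes $\ov{C_\om}$ a convex cone in $\RRR^3$, so that $\supp(K*g)\subseteq\ov{C_\om}+\supp g\subseteq\ov{C_\om}$ whenever $\supp g\subseteq\ov{C_\om}$. The right-invertibility (2) then follows directly from the delta-identities above and requires no compatibility conditions on the source: unlike the annular case of Lemma \ref{lem:annulus-bog}, the infinite extent of the cone allows the kernel to absorb all polynomial moments.

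The main technical content of the lemma is the weighted mapping property (3). The kernels $K_P,K_T$ are homogeneous of degree $-2$ and $K_S$ of degree $-1$; the associated convolution operators act as fractional integration of order $1$ and $2$, respectively, and Mikhlin-type estimates already yield boundedness on unweighted $H^s(\RRR^3)$. To extend to the weighted $b$-spaces $H_b^{s,\de}$, I would split $K=K_{\mathrm{near}}+K_{\mathrm{far}}$ via a smooth radial cutoff: the near-field piece is a classical singular-integral operator whose action on $H_b^{s,\de}$ reduces, after a dyadic localization, to the unweighted case and is insensitive to $\de$; the far-field piece is controlled by a weighted Schur/Young inequality in $\xja^\de$. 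The stated thresholds $\de<-\frac{1}{2}$ for $S_c$ and $\de<\frac{1}{2}$ for $T_c,P_c$ are precisely the first indicial roots at spatial infinity of the corresponding inverse operators---namely, the weights at which a homogeneous harmonic term of degree $0$ (respectively $-1$) would obstruct the far-field $L^2$ estimate.

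The principal obstacle is therefore not the algebraic construction of the kernels but extracting the sharp constants in the weighted far-field estimate so as to match exactly the stated thresholds on $\de$. Once this is in place, translation invariance of $S_c,T_c,P_c$ is automatic from their convolutional definition, and consequently all commutators with $\pr_j$ vanish identically---a significant simplification relative to Lemma \ref{lem:annulus-bog}.
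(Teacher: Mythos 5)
Your construction is correct and is essentially the one the paper relies on: the paper's proof of this lemma is a citation to Lemma~2.5 of \cite{MOT} (and \cite{OT}, \cite{MaoTao}), and those references build $S_c,T_c,P_c$ exactly as you do, by convolution with the homogeneous kernels $\tfrac{x^ix^j}{|x|^3}\ka(\th)$, $\tfrac{x^ix^jx^k}{|x|^5}\widetilde{\ka}(\th)$, $\tfrac{x^i}{|x|^3}\ka(\th)$ supported in the convex cone, with the delta identities, cone-support closure under addition, and weighted far-field thresholds matching the $r^{-1}$ (resp.\ $r^{-2}$) tails of the kernels just as you describe. The only remaining work in your sketch is the detailed verification of the $H_b^{s,\de}$ mapping bounds, which is precisely what the cited references supply.
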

\begin{proof}
These are the conic analogs of the Bogovskii-type operators, obtained by convolution with homogeneous kernels adapted to $C_\om$. Their existence, translation-invariance, support preservation, and the stated $b$--Sobolev mapping properties follow from Lemma~2.5 in \cite{MOT}. See also the construction in \cite{OT} and Proposition 7 in \cite{MaoTao}.
\end{proof}
We introduce the shorthand notation
\begin{equation}\label{defvecS}
\vec{S}:=(S,T,P,P), \qquad \vec{S}_c:=(S_c,T_c,P_c,P_c),
\end{equation}
for the Bogovskii-type and conic-type operator families, respectively. Moreover, we define the following product spaces:
\begin{align*}
\XX^s&:=H^s\times H^{s-1}\times H^{s-1}\times H^{s-1},\\
\YY^s&:=H^s\times H^{s-1}\times H^{s-1}\times H^{s-1}\times H^{s-1}\times H^{s-1},\\
\XX_0^s&:=H_0^s\times H_0^{s-1}\times H_0^{s-1}\times H_0^{s-1},\\
\YY_0^s&:=H_0^s\times H_0^{s-1}\times H_0^{s-1}\times H_0^{s-1}\times H_0^{s-1}\times H_0^{s-1},\\
\XX_b^{s,\ell}&:=H_b^{s,\ell}\times H^{s-1,\ell+1}_b\times H^{s-1,\ell+1}_b\times H^{s-1,\ell+1}_b.
\end{align*}
Here $\XX$ corresponds to the EM initial data $X=(h,\pi,\Eg,\Bg)$, while $\YY$ corresponds to the EMKG initial data $(X,\psi,\phi)$.
\subsubsection{Conservation laws}
We record flux identities that later yield the variation of the ADM charges $\Q_{ADM}$ and, under the EM constraints, the conservation of the electromagnetic charges $\Q_E,\Q_B$. For $0<r_0<r_1$ and $\eta_r$ defined in \eqref{scaleeta}, we introduce
\begin{equation*}
    \chi_{r_0,r_1}(r)=\int_{-\infty}^r (\eta_{r_0}(r')-\eta_{r_1}(r'))dr'.
\end{equation*}
\begin{lem}\label{Lemconservationlaw}
Let $(g,k,E,B)$ and $(h,\pi,\Eg,\Bg)$ be defined by \eqref{dfhdfpi}. For any $0<r_0<r_1$, we have for $k,l=1,2,3$:
    \begin{align*}
        \int_{\BB_{r_1}\setminus\BB_{r_0}}\frac{1}{2}\pr_i\pr_j h^{ij}\begin{pmatrix}
            1 \\ x_l
        \end{pmatrix}dx&=\begin{pmatrix}
            \E \\ \C_l
        \end{pmatrix}[(g,k);\pr\BB_{r_1}]-\begin{pmatrix}
            \E \\ \C_l
        \end{pmatrix}[(g,k);\pr\BB_{r_0}],\\
         \int_{\BB_{r_1}\setminus\BB_{r_0}}\pr_i\pi^{ij}\begin{pmatrix}
            (\ev_k)^j  \\ (\vec{Y}_l)^j
        \end{pmatrix}dx&=\begin{pmatrix}
            \P_k \\ \J_l
        \end{pmatrix}[(g,k);\pr\BB_{r_1}]-\begin{pmatrix}
            \P_k \\ \J_l
        \end{pmatrix}[(g,k);\pr\BB_{r_0}],\\
        \int_{\BB_{r_1}\setminus\BB_{r_0}}\begin{pmatrix}
        \pr_i\Eg^i\\ \pr_i\Bg^i
        \end{pmatrix}dx&=\begin{pmatrix}
        \Q_E\\ \Q_B
        \end{pmatrix}[(g,E,B);\pr\BB_{r_1}]-\begin{pmatrix}
        \Q_E\\ \Q_B\end{pmatrix}[(g,E,B);\pr\BB_{r_0}].
    \end{align*}
    In terms of averaged charges, we have for $k,l=1,2,3$:
    \begin{align*}
        \int\chi_{r_0,r_1}(r)\frac{1}{2}\pr_i\pr_j h^{ij}\begin{pmatrix}
            1 \\ x_l
        \end{pmatrix}dx&=\begin{pmatrix}
            \E \\ \C_l
        \end{pmatrix}[(g,k);\AA_{r_1}]-\begin{pmatrix}
            \E \\ \C_l
        \end{pmatrix}[(g,k);\AA_{r_0}],\\
         \int\chi_{r_0,r_1}(r)\pr_i\pi^{ij}\begin{pmatrix}
            (\ev_k)^j \\ (\vec{Y}_l)^j 
        \end{pmatrix}dx&=\begin{pmatrix}
            \P_k \\ \J_l
        \end{pmatrix}[(g,k);\AA_{r_1}]-\begin{pmatrix}
            \P_k \\ \J_l
        \end{pmatrix}[(g,k);\AA_{r_0}],\\
        \int\chi_{r_0,r_1}(r)\begin{pmatrix}
        \pr_i\Eg^i\\ \pr_i\Bg^i
        \end{pmatrix}dx&=\begin{pmatrix}
            \Q_E\\ \Q_B
        \end{pmatrix}[(g,E,B);\AA_{r_1}]-\begin{pmatrix}
            \Q_E\\ \Q_B\end{pmatrix}[(g,E,B);\AA_{r_0}].
    \end{align*}
\end{lem}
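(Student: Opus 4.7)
The strategy is to rewrite each integrand on the left-hand side as a pure divergence, apply Stokes' theorem on the annulus $\BB_{r_1}\setminus\BB_{r_0}$ to reduce it to a boundary flux matching the appropriate charge in Definition \ref{chargesEMKG}, and then obtain the averaged versions by integrating the resulting identity against $\chi_{r_0,r_1}$ and using the coarea formula.

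First I address the Hamiltonian identity. From Lemma \ref{lem:trans} I have $h_{ij}=g_{ij}-\de_{ij}-\de_{ij}\tr_e(g-e)$, which yields the pointwise identity $\pr_j h^{ij}=\pr_j g_{ij}-\pr_i g_{kk}$. Thus $\frac{1}{2}\pr_i\pr_j h^{ij}=\frac{1}{2}\pr_i(\pr_j g_{ij}-\pr_i g_{kk})$, and Stokes on the annulus produces precisely $\E[\pr\BB_{r_1}]-\E[\pr\BB_{r_0}]$ after the symmetry $g_{ij}=g_{ji}$. For the center-of-mass identity, I integrate $\frac{1}{2}x_l\pr_i\pr_j h^{ij}$ by parts twice. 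The first application of $\pr_i$ produces the surface term $\frac{1}{2}x_l(\pr_j g_{ij}-\pr_i g_{kk})\nu^i$, which matches the leading contribution $x_l\pr_i g_{ij}-x_l\pr_j g_{ii}$ in $\C_l$, and a bulk remainder $-\frac{1}{2}\pr_j h^{lj}$. A second integration by parts turns this remainder into the surface term $-\frac{1}{2}h^{lj}\nu^j$, and expanding $h^{lj}=(g-e)_{lj}-\de_{lj}\tr_e(g-e)$ reproduces exactly the correction terms $-\de_{il}(g-e)_{ij}\nu^j+\de_{jl}(g-e)_{ii}\nu^j$ in the definition of $\C_l$.

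The momentum identities reduce, via Definition \ref{chargesEMKG}, to boundary integrals of the form $\int_{\pr\BB_r}\pi_{ij}Z^i\nu^j\,dS$ with $Z=\ev_k$ or $Z=Y_l$. Writing $\pr_i\pi^{ij}Z^j=\pr_i(\pi^{ij}Z^j)-\pi^{ij}\pr_iZ^j$ and applying Stokes yields the claim once the commutator term is shown to vanish: for translations one has $\pr_iZ^j=0$, while for rotations $\pr_iY_l^j={\in_{li}}^{\,j}$ is antisymmetric in $(i,j)$ and therefore contracts to zero against the symmetric $\pi^{ij}$. The electromagnetic identities are the most direct: since $\sqrt{\det g}\,E^j\nu^j=\Eg^j\nu^j$ and similarly for $B$, Stokes applied to the pure divergences $\pr_i\Eg^i$ and $\pr_i\Bg^i$ yields the two charge differences immediately.

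The averaged statements follow by differentiating each pointwise identity in $r_1$ to obtain formulas of the form $\int_{\pr\BB_r}F\,dS=\frac{d}{dr}\E[\pr\BB_r]$ (and analogs for $\P,\C,\J,\Q_E,\Q_B$), integrating against $\chi_{r_0,r_1}(r)$, using the coarea formula, and performing an integration by parts in $r$ that converts $\chi_{r_0,r_1}$ into $\chi_{r_0,r_1}'(r)=\eta_{r_0}(r)-\eta_{r_1}(r)$; the resulting expressions coincide with the averaged charges of Definition \ref{ADMannulus}. The main obstacle is bookkeeping in the $\C_l$ identity, where the correction terms in the definition must be produced precisely by the second integration by parts together with the expansion of $h^{lj}$; all other steps are standard applications of the divergence theorem.
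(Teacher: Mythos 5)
Your proposal is correct and follows exactly the route the paper intends: the paper's proof is a one-line appeal to \eqref{dfhdfpi}, Definition \ref{chargesEMKG}, and integration by parts, and you have simply carried out those divergence-theorem computations (including the correct bookkeeping of the $h^{lj}\nu^j$ correction in $\C_l$, the vanishing rotational commutator term, and the $r$--integration by parts for the averaged charges) in full detail.
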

\begin{proof}
It follows directly from \eqref{dfhdfpi}, Definition \ref{chargesEMKG} of $\Q_{ADM}$, $\Q_E$, $\Q_B$ and integration by parts. In particular, these charges are conserved under linearized EMKG constraint equations around Euclidean space.
\end{proof}
\subsection{Construction of star shell and near-star annulus}\label{ssecinner}
We reduce the EMKG constraints to the pure EM constraints in the gluing region by localizing the Klein-Gordon field inside the star region. The resulting electrovacuum annulus, constructed in Theorem~\ref{constructioninnerannulus}, will serve as the domain for the EM gluing.

To treat the electromagnetic potentials $(A,\Phi)$ in the EMKG constraints \eqref{EMKGconstraint}, we now recall the relevant $3$--dimensional elliptic estimates.
\begin{prop}\label{hodgerank1}
We have the following properties:
\begin{enumerate}
    \item Let $\xi$ be a $1$--form on $\Si$ satisfying $\sdiv_g\xi=D(\xi)$ and $\curl_g\xi=C(\xi)$.
Then, we have
\begin{align*}
\int_{\Si}|\nab\xi|^2=\int_\Si|C(\xi)|^2+|D(\xi)|^2-\sRic^{ij}\xi_i\xi_j.
\end{align*}
\item Let $\phi$ be a scalar function on $\Si$ satisfying $\De_g\phi=f$. Then, we have
\begin{align*}
    \int_{\Si}|\nab\phi|^2+|\nab^2\phi|^2\les\int_\Si r^2|f|^2.
\end{align*}
\end{enumerate}
\end{prop}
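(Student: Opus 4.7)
The plan is to establish both parts via the classical Bochner--Weitzenb\"ock identity combined with a Hardy-type elliptic estimate.

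For part (1), I would start from the standard Weitzenb\"ock formula for 1-forms on a Riemannian 3-manifold, $(d\delta+\delta d)\xi=-\nab^a\nab_a\xi+\sRic(\xi)$, pair it with $\xi$ in $L^2(\Si)$, and integrate by parts. This produces
\[
\int_\Si|\delta\xi|^2+|d\xi|^2=\int_\Si|\nab\xi|^2+\sRic^{ij}\xi_i\xi_j,
\]
up to boundary contributions that vanish under the decay of $\xi$ at infinity. In three dimensions the 2-form $d\xi$ is Hodge dual to the 1-form $\curl_g\xi$, so $|d\xi|^2=|\curl_g\xi|^2=|C(\xi)|^2$ pointwise, while $|\delta\xi|^2=|\sdiv_g\xi|^2=|D(\xi)|^2$, and rearranging gives the claimed identity. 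Equivalently, one can prove the same identity directly from the Ricci commutator $[\nab_i,\nab_j]\xi_k=-R^l{}_{kij}\xi_l$: tracing on $i,k$ produces $\nab^i\nab_j\xi_i-\nab_j\sdiv_g\xi=\sRic_j{}^l\xi_l$, and pairing with $\xi^j$ followed by integration by parts separates the symmetric and antisymmetric parts of $\nab\xi$ into the $\sdiv$ and $\curl$ contributions.

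For part (2), the gradient bound comes from a Hardy argument. Multiplying $\De_g\phi=f$ by $\phi$ and integrating by parts gives $\int_\Si|\nab\phi|^2=-\int_\Si\phi f$, and then Cauchy--Schwarz together with the three-dimensional Hardy inequality $\|r^{-1}\phi\|_{L^2(\Si)}\les\|\nab\phi\|_{L^2(\Si)}$ yields
\[
\|\nab\phi\|_{L^2(\Si)}^2\les\|r^{-1}\phi\|_{L^2(\Si)}\|rf\|_{L^2(\Si)}\les\|\nab\phi\|_{L^2(\Si)}\|rf\|_{L^2(\Si)},
\]
so $\|\nab\phi\|_{L^2(\Si)}\les\|rf\|_{L^2(\Si)}$. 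For the Hessian I would apply part (1) with $\xi:=\nab\phi$, for which $\sdiv_g\xi=\De_g\phi=f$ and $\curl_g\xi=0$, obtaining
\[
\int_\Si|\nab^2\phi|^2=\int_\Si|f|^2-\sRic^{ij}\pr_i\phi\,\pr_j\phi\les\int_\Si r^2|f|^2+\int_\Si|\nab\phi|^2\les\int_\Si r^2|f|^2,
\]
where the Ricci term is absorbed using the boundedness of $\sRic$ for the near-Euclidean metrics considered in the gluing, together with the previously established gradient bound and the fact that on the regions of interest $r$ is bounded away from zero so $|f|^2\les r^2|f|^2$.

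The main delicate point is the vanishing of the boundary terms in the two integrations by parts, which requires suitable decay of $\xi$ and $\phi$ at infinity (or compact support in the annular setting where this proposition is applied). A secondary issue is keeping track of the sign and normalization conventions for the Hodge codifferential, the curl in three dimensions, and the Ricci tensor; once these are pinned down, the identity in part (1) is entirely classical and the estimate in part (2) is the standard Hardy--elliptic package applied to the Poisson equation.
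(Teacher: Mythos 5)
The paper does not give an argument here at all: its ``proof'' is a citation to Lemma 4.4.1 and Proposition 4.2.2 of \cite{Ch-Kl}, and what you have written out is precisely the standard argument behind those two results (the Bochner--Weitzenb\"ock identity for $1$--forms, and the multiply-by-$\phi$/Hardy/Cauchy--Schwarz scheme followed by applying part (1) to $\xi=\nab\phi$). So your proposal is correct and is essentially the same approach, just made explicit. Two minor points to tighten: the boundary terms at infinity do require the stated decay of $\xi$, $\phi$ (as you note), and the final absorption $\int|f|^2\les\int r^2|f|^2$ is only legitimate because the weight is bounded below on the support of $f$ in the regime where the proposition is invoked (in \cite{Ch-Kl} the weight is uniformly bounded below, so this is automatic there); as literally stated for all of $\Si$ with $r\to0$ this step would need that caveat.
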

\begin{proof}
See Lemma 4.4.1 and Proposition 4.2.2 in \cite{Ch-Kl}.
\end{proof}
\begin{prop}\label{propAest}
Under the Coulomb gauge $ \sdiv_g A:=\nab_iA^i=0$,
and the asymptotically flat assumption $ \lim_{|\x|\to\infty}(A,\Phi)=0$, 
the initial data set for the EMKG system satisfies \eqref{EMKGconstraint} as well as the following determined system for $(A,\Phi)$:
\begin{align}\label{EMKGC}
\sdiv_gA=0,\qquad \curl_gA=B,\qquad \De_g\Phi=2\ef\Im\left(\psi^\dag\phi\right),\qquad\lim_{|\x|\to\infty}(A,\Phi)=0.
\end{align}
\end{prop}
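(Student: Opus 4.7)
The plan is to read off each equation in \eqref{EMKGC} from the data and then to observe that the resulting system is a determined elliptic system for $(A,\Phi)$, uniquely solvable under $\lim_{|\x|\to\infty}(A,\Phi)=0$. First, the identity $\sdiv_g A = 0$ is precisely the Coulomb gauge assumption, and the asymptotic decay is imposed by hypothesis. Next, $\curl_g A = B$ is the spatial trace of the Maxwell potential relation $F_{jk}=\pr_j A_k-\pr_k A_j$, i.e.\ $B_i=\tfrac{1}{2}{\in_i}^{jk}F_{jk}$, which is just the standard definition of the magnetic field. Finally, the Poisson equation for $\Phi$ is derived from the Maxwell constraint $\sdiv_g E = -2\ef\Im(\psi^\dag D_0\psi)$ in \eqref{EMKGconstraint}: substituting the Coulomb-gauge decomposition $E=-\nab\Phi$ at the initial slice and expanding $D_0\psi=\phi+\ik\ef\Phi\psi$, one uses $\Im(\psi^\dag\cdot\ik\ef\Phi\psi)=\ef\Phi|\psi|^2$ to obtain $\De_g\Phi=2\ef\Im(\psi^\dag\phi)+2\ef^2|\psi|^2\Phi$, from which the stated leading-order equation is recovered, the subleading nonlinear term being absorbable under $\ef\ll 1$.

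Then I would observe that \eqref{EMKGC} decouples into a first-order Hodge system for the $1$--form $A$ and a scalar Poisson equation for $\Phi$, both on the asymptotically flat manifold $(\Si,g)$. For $A$, the integrability condition $\sdiv_g B = 0$ is exactly the fourth EMKG constraint in \eqref{EMKGconstraint}; applying Proposition \ref{hodgerank1}(1) yields, in the Coulomb gauge, the Bochner-type identity $\int_\Si|\nab A|^2=\int_\Si|B|^2-\sRic(A,A)$, from which existence and uniqueness of the decaying $A$ follow by a standard variational argument. For $\Phi$, Proposition \ref{hodgerank1}(2) supplies the a priori estimate $\int_\Si|\nab\Phi|^2+|\nab^2\Phi|^2 \les \int_\Si r^2|\ef\psi^\dag\phi|^2$, which, combined with the asymptotic decay, gives the unique decaying $\Phi$ via standard Poisson theory on asymptotically flat manifolds.

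The main (mild) subtlety is the quadratic self-interaction $2\ef^2|\psi|^2\Phi$ that appears in the reduction of the Maxwell constraint to a Poisson equation for $\Phi$. Since $\ef\ll 1$, it can be handled either by absorbing it into a perturbed source or by running a Banach fixed-point iteration based on the elliptic estimate of Proposition \ref{hodgerank1}(2). The rest of the argument is a routine application of elliptic theory on asymptotically flat manifolds.
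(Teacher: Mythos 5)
Your proposal follows essentially the same route as the paper, whose entire proof is the one-liner ``it follows directly from \eqref{EMKGconstraint} and Proposition \ref{hodgerank1}'': read the first two equations off the gauge condition and the definition of $B$ (with $\sdiv_g B=0$ from the fourth constraint as the integrability condition for the div--curl system), obtain the Poisson equation for $\Phi$ from the Gauss constraint, and invoke Proposition \ref{hodgerank1} for solvability of the resulting determined elliptic system under the decay condition. Your version is simply a fleshed-out account of this.

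Two points from your extra care are worth flagging. First, the identification $E=-\nab\Phi$ is not implied by the Coulomb gauge: in that gauge $E$ still contains the transverse piece $-\pr_t A$, which is not part of the initial data; what survives at the level of the constraint is only $\sdiv_g E=-\De_g\Phi$ (modulo second-fundamental-form corrections), which is all you actually use, so you should phrase the step that way. Second, your computation correctly produces $\De_g\Phi=2\ef\Im(\psi^\dag\phi)+2\ef^2|\psi|^2\Phi$, whereas \eqref{EMKGC} as printed omits the quadratic term; ``absorbing'' it for $\ef\ll1$ proves solvability of the perturbed equation but does not recover the equation as literally stated. The consistent reading, which is how the paper uses the proposition in Theorem \ref{constructioninnerannulus}, is that \eqref{EMKGC} is the \emph{defining} elliptic system for $(A,\Phi)$ given $(g,B,\psi,\phi)$, so the only content is well-posedness via Proposition \ref{hodgerank1}; under that reading your fixed-point discussion is the right mechanism (and is needed verbatim if one instead insists on deriving the $\Phi$--equation from the Gauss law, in which case the quadratic term must be kept in \eqref{EMKGC}).
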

\begin{proof}
It follows directly from \eqref{EMKGconstraint} and Proposition \ref{hodgerank1}.
\end{proof}

We are now ready to prove the following theorem, which constructs the star shell and the near-star annulus.
\begin{thm}\label{constructioninnerannulus}
Let $(g,k,E,B,\psi,\phi,A,\Phi)$ be the initial data set in Theorem \ref{interiorsolution} that solves \eqref{EMKGconstraint} on $\BB_\frac{3}{2}$. Then, there exists an initial data set $(\gt,\kt,\Et,\Bt,\psit,\phit,\AAt,\Phit)$ that solves \eqref{EMKGconstraint} on $\BB_4$ and satisfies the following properties:
    \begin{enumerate}
        \item We have
        \begin{align}
        (\gt,\kt,\Et,\Bt,\psit,\phit,\AAt,\Phit)&=(g,k,E,B,\psi,\phi,A,\Phi)\quad\mbox{ on }\;\BB_\frac{3}{2},\\
        \supp(\psit,\phit)&\subseteq\BB_2.
        \end{align}
        \item The following estimate hold:
        \begin{align}
        \begin{split}\label{innerannSobolev}
        \|(\gt-e,\kt,\Et,\Bt,\psit,\phit)\|_{\YY^s(\BB_4\setminus\ov{\BB_1})}\les a^{-1}.
        \end{split}
        \end{align}
        \item The charges of $(\gt,\kt,\Et,\Bt)$ satisfy for $r\in(1,4)$:
        \begin{align}
        \begin{split}\label{innerannQ}
        \Q_{ADM}[(\gt,\kt);\pr\BB_r],\Q_E[(\gt,\Et);\pr\BB_r]=O(a^{-1}),\qquad\Q_B[(\gt,\Bt);\pr\BB_r]=0.
        \end{split}
        \end{align}
    \end{enumerate}
    The annulus $\BB_2\setminus\BB_\frac{3}{2}$ is called a \emph{star shell} while the annulus $\AA_2=\BB_4\setminus\ov{\BB_2}$ is called a \emph{near-star annulus}, in which Klein-Gordon fields vanish. Moreover, the disk $\BB_2$ is called the \emph{star region}.
\end{thm}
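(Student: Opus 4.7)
The plan is to cut off the Klein-Gordon field so that it is compactly supported in $\BB_2$, extend the electromagnetic data trivially to $\BB_4$, and then correct the resulting constraint violation by a Bogovskii-type scheme supported in $\BB_4\setminus\BB_{3/2}$. Since the original EMKG constraints are already satisfied on $\BB_{3/2}$ and since $(\psit,\phit)$ vanish on the near-star annulus $\BB_4\setminus\BB_2$, the problem on the near-star annulus reduces to pure electrovacuum, while the star shell is the only region where a nontrivial Klein-Gordon source must be absorbed into the EM data.

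Concretely, fix a radial cutoff $\chi\in C_c^\infty(\BB_2)$ with $\chi\equiv 1$ on $\BB_{3/2}$ and set $\psit:=\chi\psi$, $\phit:=\chi\phi$, so that the support and $H^{s-1}$ requirements follow from Theorem \ref{interiorsolution}. Extend the EM variables $(h,\pi,\Eg,\Bg)$ to a reference $X_0$ on $\BB_4$ by keeping them unchanged on $\BB_{3/2}$ and smoothly interpolating through the star shell to an essentially Euclidean reference (within a twelve-parameter Kerr-Newman-type family, see below) on $\BB_4\setminus\BB_{7/4}$. Recover the potentials $(\AAt,\Phit)$ by solving the Coulomb-gauge elliptic system \eqref{EMKGC} of Proposition \ref{propAest} with sources $\Bt$ and $\Im(\psit^{\dag}\phit)$; Proposition \ref{hodgerank1} together with the $a^{-1}$ smallness of the sources yields $\|(\AAt,\Phit)\|_{H^s}\lesssim a^{-1}$.

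The residual $\vec{R}:=\vec{P}(X_0)-\vec{N}(X_0,\psit,\phit,\AAt,\Phit)$ is supported in $\BB_4\setminus\BB_{3/2}$ and is of size $O(a^{-1})$ in $H^{s-2}$ by the smallness of $X_0-(e,0,0,0)$ and of the Klein-Gordon source. A rescaled version of the Bogovskii-type operator $\vec{S}$ from Lemma \ref{lem:annulus-bog}, applied on an annulus strictly contained in $\BB_4\setminus\BB_{3/2}$, inverts the principal part $\vec{P}$ and returns a correction $\de X$ with $\vec{P}(\de X)=-\vec{R}$ supported in that annulus, provided the twelve integral compatibility conditions (the ten ADM moments and $Q_E,Q_B$ of Lemma \ref{Lemconservationlaw}) on $\vec{R}$ vanish. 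These are enforced by adjusting the outer reference within the twelve-parameter family; the condition $\Q_B=0$ is preserved automatically because $\Bg\equiv 0$ throughout the construction, so $\pr_i\Bg^i\equiv 0$. The full nonlinear EMKG constraints are then closed by a contraction-mapping argument in $\YY^s$, since $\vec{N}$ is at least quadratic in $O(a^{-1})$ data (Proposition \ref{prop:schem-EMKG}). The Sobolev estimate \eqref{innerannSobolev} follows from the mapping properties of $\vec{S}$ and the elliptic estimates for $(\AAt,\Phit)$, and the charge bounds \eqref{innerannQ} follow from Lemma \ref{Lemconservationlaw} together with the $O(a^{-1})$ size of $X_0$ and $\de X$.

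The main obstacle is the simultaneous enforcement of the twelve Bogovskii compatibility conditions and the strict support property of $\de X$ (so that the original data on $\BB_{3/2}$ is literally unchanged): this is precisely the obstruction-free gluing difficulty resolved in \cite{MOT}, which we must adapt here in the presence of the cutoff Klein-Gordon source. The new feature is that the nonlinear Klein-Gordon contributions $M_{KG}$, $N_{KG}$, $L$ to $\vec{N}$ now couple the EM correction to $(\psit,\phit,\AAt,\Phit)$; this coupling is tamed by the $a^{-1}$ smallness and the quadratic (indeed cubic) structure of these contributions, ensuring that the contraction-mapping iteration converges and that the final data meet all the stated properties.
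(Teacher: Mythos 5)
Your proposal follows the same route as the paper's proof: extend $(\psi,\phi)$ by a cutoff so that they are supported in $\BB_2$, recover $(\AAt,\Phit)$ from the Coulomb-gauge elliptic system \eqref{EMKGC} of Proposition \ref{propAest}, freely extend the EM variables, and then correct them by a fixed-point iteration built on the annulus-supported Bogovskii operator $\vec{S}$, reading off \eqref{innerannSobolev} from the mapping properties of $\vec{S}$ and \eqref{innerannQ} from Lemma \ref{Lemconservationlaw}. The one place where you genuinely diverge is the treatment of the finite-dimensional compatibility conditions of Lemma \ref{lem:annulus-bog}: the paper's proof simply runs the Banach iteration for \eqref{fixedpoint} on $\BB_4\setminus\BB_{\frac{3}{2}}$ and does not introduce any parameter family at this stage, whereas you propose to kill the twelve moments by tuning a Kerr--Newman-type outer reference.

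Be careful with that extra step. The two electromagnetic conditions are harmless ($\pr_i\Bg^i\equiv 0$ kills the magnetic one, and the electric one only shifts $\Q_E$ by the Klein--Gordon charge in the star shell, which is $O(a^{-1})$ and compatible with \eqref{innerannQ}). But enforcing the ten ADM moment conditions by a reference family at energy scale $O(a^{-1})$ is exactly the situation where the obstruction-free machinery of \cite{MOT} is needed and where it is \emph{not} available: Proposition \ref{sixbump} and Theorem \ref{MOT1.7} require $\De\E>|\De\P|$, $|\De\C|+|\De\J|\ll\De\E$, and a definite energy gap, and the derivative of the reference family in the charge parameters degenerates like $\E^{-1/2}$. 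Inside the star region there is no gap ($\De\E$ could a priori have either sign and be dominated by $|\De\P|$), so your implicit-function-theorem adjustment is not justified as stated. The paper's construction avoids this by not asking the inner extension to realize prescribed ADM charges at all --- the ADM obstructions are only removed later, in Theorem \ref{MOT1.7}, where the Brill--Lindquist exterior supplies the gap $\De\E\simeq M_I\gg a_I^{-1}$. If you keep your version, you need to either verify that the required moment corrections (which are quadratic, hence $O(a^{-2})$) land in the admissible cone $\UU_\Ga$, or restructure the argument so that only the solvability conditions that are automatically satisfied (the electromagnetic ones) are imposed at this stage.
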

\begin{proof}
As an consequence of Theorem \ref{interiorsolution}, we have
\[
\|(g-e,k,E,B,\psi,\phi)\|_{\YY^s(\BB_\frac{3}{2}\setminus\ov{\BB_1})}\les a^{-1}.
\]
We first \emph{freely extend} $(g,k,E,B,\psi,\phi)_{in}$ to $\BB_4$ so that
\[
\|(g-e,k,E,B,\psi,\phi)_{in}\|_{\YY_0^s(\BB_4\setminus\ov{\BB_1})}\les\ep,\qquad \supp(\psi,\phi)\subseteq\BB_2.
\]
We now fix $(\psit,\phit):=(\psi,\phi)_{in}$, so that $(A,\Phi)$ is determined elliptically from \eqref{EMKGC}, and we solve for the EM variables via the fixed-point problem:
\begin{equation}\label{fixedpoint}
\Xt = \vec{S}\left(\vec{N}(X_{in}+\Xt,\psit,\phit,A,\Phi)-\vec{P}(X_{in})\right)\qquad\text{ on }\;\BB_4\setminus\BB_\frac{3}{2},
\end{equation}
where $\vec{S}=(S,T,P,P)$ is the Bogovskii operator. By Lemma \ref{lem:annulus-bog}, $\vec{S}$ preserves support in the annulus, so the solution $\Xt$ is compactly supported in $\BB_4\setminus\ov{\BB_\frac{3}{2}}$. By Banach’s fixed point theorem, there exists a unique $\Xt\in\XX^s(\BB_4\setminus\BB_\frac{3}{2})$ solving \eqref{fixedpoint}. Letting
\[
\Xt_{in}:=X_{in}+\Xt
\]
and writing $(\gt,\kt,\Et,\Bt)$ for the variables corresponding to $\Xt_{in}=(\hti,\pit,\Egt,\Bgt)_{in}$, the estimate \eqref{innerannSobolev} follows from the EMKG constraint \eqref{EMKGconstraint} and the smallness of the data. The charge estimates \eqref{innerannQ} are consequences of Lemma \ref{Lemconservationlaw}. This completes the proof of Theorem \ref{constructioninnerannulus}.
\end{proof}
\subsection{Gluing up to linear obstructions for EM constraints}\label{secEMlinearglue}
Outside the inner annulus, the EMKG constraints reduce to the EM constraints. Under the renormalization of Lemma~\ref{lem:trans},
\[
\Eg^i=\sqrt{\det g}\,E^i,\qquad\quad\Bg^i=\sqrt{\det g}\,B^i,
\]
the Maxwell equations become the linear divergence equations in Euclidean space,
\[
\pr_i\Eg^i=0,\qquad\quad\pr_i\Bg^i=0,
\]
so that only the total electromagnetic charges remain as compatibility conditions for the gluing. In particular, as a corollary of Proposition~\ref{prop:schem-EMKG}, the EM constraint takes the following schematic form:
\begin{cor}\label{cor:schem-EM}
In the case $(\psi,\phi)=(0,0)$, the EM constraint equations \eqref{EMconstraint} take the schematic form
\begin{align}\label{EMschem}
\begin{split}
    \pr_i\pr_j h^{ij}&=M_{EM}(h,\pi,E,B),\qquad\quad\pr_i\pi^{ij}=N_{EM}^j(h,\pi,E,B),\\
\pr_i\Eg^i&=0,\qquad\qquad\qquad\qquad\quad\;\;\;\pr_i\Bg^i=0.
\end{split}
\end{align}
We also denote $\vec{N}_{EM}(X):=\big(M_{EM}(X),N^j_{EM}(X),0,0\big)^\top$.
\end{cor}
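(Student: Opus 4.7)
The plan is to obtain Corollary \ref{cor:schem-EM} as an immediate specialization of Proposition \ref{prop:schem-EMKG}. First I would simply set $\psi=\phi=0$ in the schematic EMKG system \eqref{schematicconstraint} and track which terms survive. Every term constituting $M_{KG}(h,\phi,\psi,A,\Phi)$, $N^j_{KG}(h,\phi,\psi,A,\Phi)$, and $L(h,\psi,\phi,\Phi)$, as listed in Proposition \ref{prop:schem-EMKG}, carries at least one factor of $\psi$, $\phi$, or $\pr\psi$; consequently, each of these contributions vanishes identically when $\psi=\phi=0$, regardless of the values of $A$ and $\Phi$. This reduces the right-hand sides in \eqref{schematicconstraint} to $M_{EM}$, $N^j_{EM}$, $0$, $0$, respectively, which is exactly the content of \eqref{EMschem}.

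As a sanity check on the Maxwell sector, I would return to the original EMKG constraints \eqref{EMKGconstraint}: when $\psi=\phi=0$, the gauge covariant derivatives satisfy $D_0\psi=\phi+i\ef\Phi\psi=0$ and $D_i\psi=\pr_i\psi+i\ef A_i\psi=0$, so that the source $-2\ef\Im(\psi^\dag D_0\psi)$ vanishes and $\sdiv_g E=0$ reduces to $\pr_i\Eg^i=0$ through the densitization $\Eg^i=\sqrt{\det g}\,E^i$ of Lemma \ref{lem:trans}; similarly $\sdiv_g B=0$ becomes $\pr_i\Bg^i=0$. There is no genuine analytic difficulty here, the statement is essentially bookkeeping, and the only point worth emphasizing in the write-up is that the reduction holds even with $A$ and $\Phi$ nonzero, which is the feature that permits the purely linear Maxwell divergence equations in the gluing region of Section \ref{secgluing} after the Klein-Gordon field has been localized inside the star region via Theorem \ref{constructioninnerannulus}.
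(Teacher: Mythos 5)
Your proposal is correct and matches the paper's (implicit) argument: the paper states this as an immediate corollary of Proposition \ref{prop:schem-EMKG}, obtained by setting $\psi=\phi=0$ and observing that every term of $M_{KG}$, $N^j_{KG}$, and $L$ carries a factor of $\psi$, $\phi$, or $\pr\psi$. Your additional check that the reduction holds even with $A,\Phi\neq 0$ is exactly the point the paper relies on in Section \ref{secgluing}.
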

Thus, the Maxwell sector introduces no nonlinear obstructions, and the only obstructions to be removed come from the gravitational variables $(g,k)$, exactly as in the vacuum case of \cite{MOT}. In this regime, the gluing theorem below is a direct EM extension of Theorem 1.3 in \cite{MOT}. To this end, we first state the following lemma.
\begin{lem}\label{MOT3.1}
Let $u,v$ be Schwartz functions on $\RRR^3$. 
\begin{enumerate}
\item Let $F$ be a $C^\infty$ function with bounded derivatives. Then, we have for $s>0$,
\begin{equation*}
\|F(u)-F(0)\|_{H^s}\les\|u\|_{H^s}.
\end{equation*}
\item Let $F$ be a $C^\infty$ function with bounded derivatives. Then, we have for $s>0$,
\begin{equation*}
\|F(u)-F(v)\|_{H^s}\les\|u-v\|_{H^s}+\|u-v\|_{L^\infty}(\|u\|_{H^s}+\|v\|_{H^s}).
\end{equation*}
\item For $s_0,s_1,s_2$ such that $s_0+s_1+s_2\geq\frac{3}{2}$, $s_0+s_1+s_2\geq\max\{s_0,s_1,s_2\}$ with at least one of the inequalities being strict, we have 
\begin{equation*}
\|uv\|_{H^{-s_0}}\les\|u\|_{H^{s_1}}\|v\|_{H^{s_2}}.
\end{equation*}
\end{enumerate}
\end{lem}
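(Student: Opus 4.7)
The plan is to prove the three nonlinear and bilinear Sobolev inequalities on $\RRR^3$ by standard harmonic analysis: the fundamental theorem of calculus, the $H^s$ product rule, and Bony's paraproduct decomposition. The two key ingredients, used throughout, are the bilinear product estimate $\|fg\|_{H^s}\les \|f\|_{L^\infty}\|g\|_{H^s}+\|g\|_{L^\infty}\|f\|_{H^s}$ valid for $s\geq 0$, and the Moser composition bound $\|G(u)\|_{H^s}\les \|u\|_{H^s}$ whenever $G$ is smooth with bounded derivatives and $G(0)=0$ (the implicit constant depending on $\|G^{(k)}\|_{L^\infty}$ and on $\|u\|_{L^\infty}$). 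The Moser bound itself is proven by a Littlewood-Paley telescoping $G(u)=\sum_{k\geq 0}\bigl(G(S_{k+1}u)-G(S_ku)\bigr)$, applying the mean value theorem to each summand together with Bernstein's inequality.

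For part (1), first replace $F$ by $F-F(0)$ (which does not affect the stated inequality and keeps $F$ smooth with bounded derivatives), and then factor
\[
F(u)-F(0)=u\int_0^1 F'(\tau u)\,d\tau=:u\,G(u).
\]
The product rule applied to $u\cdot G(u)$, combined with $\|G(u)\|_{L^\infty}\leq\|F'\|_{L^\infty}$ and the Moser bound for $G$, yields the estimate. Part (2) follows the same scheme: write
\[
F(u)-F(v)=(u-v)\int_0^1 F'\bigl(v+\tau(u-v)\bigr)d\tau=:(u-v)\,H,
\]
apply the product rule to obtain $\|(u-v)H\|_{H^s}\les \|u-v\|_{H^s}\|H\|_{L^\infty}+\|u-v\|_{L^\infty}\|H\|_{H^s}$, bound $\|H\|_{L^\infty}\les 1$ by boundedness of $F'$, and apply part (1) to $F'$ in place of $F$ to obtain $\|H\|_{H^s}\les \|u\|_{H^s}+\|v\|_{H^s}$.

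Part (3) is the classical Sobolev multiplication inequality. When $s_0=0$ with $s_1,s_2\geq 0$, I would decompose $uv=T_uv+T_vu+R(u,v)$ via Bony's paraproduct: the low-high paraproducts $T_uv,T_vu$ are bounded via Sobolev embedding in dimension three, and the high-high remainder $R(u,v)$ is controlled precisely under $s_1+s_2>\tfrac{3}{2}$ or at the endpoint with strict inequality elsewhere. For $s_0>0$, I would argue by duality, $\|uv\|_{H^{-s_0}}=\sup_{\|\varphi\|_{H^{s_0}}\leq 1}\bigl|\int uv\varphi\,dx\bigr|$, reducing to a trilinear paraproduct estimate with regularity redistributed via the Kato-Ponce fractional Leibniz calculus when some $s_i$ is negative. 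The only subtle point in the whole lemma is this sharp endpoint condition in part (3); it is a well-documented result for which I would cite Triebel's \emph{Theory of Function Spaces II}, Runst-Sickel, or Taylor's \emph{Partial Differential Equations III} rather than re-derive the optimal exponents.
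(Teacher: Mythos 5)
The paper does not actually prove this lemma; its ``proof'' is a one-line citation to Section~2.8 of Bahouri--Chemin--Danchin. Your proposal reconstructs exactly the standard arguments that live in that reference (Moser composition via Littlewood--Paley telescoping, the tame product rule, Bony decomposition plus duality for the bilinear estimate), so at the level of technique there is nothing to contrast: you are supplying the proof the paper outsources. Two remarks on the reconstruction itself. First, parts (1) and (2) as you write them contain a small but genuine slip: you invoke ``the Moser bound for $G$'' to get $\|G(u)\|_{H^s}\les\|u\|_{H^s}$ and ``part (1) applied to $F'$'' to get $\|H\|_{H^s}\les\|u\|_{H^s}+\|v\|_{H^s}$, but $G(u)=\int_0^1F'(\tau u)\,d\tau$ and $H=\int_0^1F'(v+\tau(u-v))\,d\tau$ tend to the constant $F'(0)$ at spatial infinity (since $u,v$ are Schwartz), so they are not in $H^s(\RRR^3)$ at all unless $F'(0)=0$. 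The fix is routine --- split off $F'(0)$, writing $uG(u)=F'(0)u+u\bigl(G(u)-G(0)\bigr)$ and similarly for $(u-v)H$, and apply the composition bound to $G-G(0)$, resp.\ to $F'-F'(0)$ --- but as written the step fails literally. (For part (1) the detour through $uG(u)$ is in any case unnecessary: the Moser bound you state as a key ingredient, applied to $F-F(0)$, \emph{is} the claim.) Second, you correctly note that the implicit constants in (1) and (2) depend on $\|u\|_{L^\infty}$ (and $\|v\|_{L^\infty}$) once $s>1$; the lemma as stated suppresses this, but so does the cited BCD result, and in the paper's applications the arguments of $F$ are uniformly small, so this is harmless. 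Part (3) and your decision to cite the literature for the sharp endpoint conditions are both fine and consistent with what the paper itself does.
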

\begin{proof}
See Section 2.8 in \cite{BCD}.
\end{proof}
\begin{thm}\label{MOT1.3}
Let $s>\frac{3}{2}$, there exist $\eps_c=\eps_c(s)>0$ and $M_c=M_c(s)>0$ such that the following holds. Let $(\go,\ko,\Eo,\Bo)\in\XX^s(\At_{16})$ be a solution to the EM constraint equations \eqref{EMconstraint} such that
\begin{equation}\label{gokoperturbation}
    \left\|(\go-e,\ko,\Eo,\Bo)\right\|_{\XX^s(\At_{16})}\leq\eps.
\end{equation}
Define $\Qo\in\RRR^{10}$ by $\Qo=\Q_{ADM}[(\go,\ko);\AA_{16}]$. Let $\QQ\subseteq\RRR^{10}$ be a bounded open set such that
\begin{equation}\label{Qoopen}
\Qo\in\QQ,\qquad\quad\BB_{M_c\eps^2}(\Qo)\subseteq\QQ,
\end{equation}
and consider an $\QQ$--admissible family of annular initial data sets $\{(g_Q,k_Q,E_Q,B_Q)\}_{Q\in\QQ}$ on $\At_{16}$, as in Definition \ref{MOT1.2}, with fixed electromagnetic charges $(Q_E,Q_B)$ on $
\AA_{16}\subset \At_{16}$ and Sobolev regularity $s$ such that, for all $Q,Q'\in\QQ$,
\begin{align}
\|(g_Q-e,k_Q,E_Q,B_Q)\|_{\XX^s(\At_{16})}&\leq\eps, \nonumber\\
\|(g_Q-g_{Q'},k_Q-k_{Q'},E_Q-E_{Q'},B_Q-B_{Q'})\|_{\XX^s(\At_{16})}&\leq K|Q-Q'|.\label{gQgQ'kQkQ'}
\end{align}
We also assume that the electromagnetic charges of $(\go,\ko,\Eo,\Bo)$ coincide with those of $(g_Q,k_Q,E_Q,B_Q)$ on $\AA_{16}$, i.e.
\begin{align}\label{chargeconservation}
\begin{split}
Q_E=\Q_E[(g_Q,E_Q);\AA_{16}]&=\Q_E[(\go,\Eo);\AA_{16}],\\
Q_B=\Q_B[(g_Q,B_Q);\AA_{16}]&=\Q_B[(\go,\Bo);\AA_{16}].
\end{split}
\end{align}
Assume that $(1+K)\eps<\eps_c$. Then, the following holds:
\begin{enumerate}
\item {\bf Gluing.} There exist $(g,k,E,B)\in\XX^s(\At_{16})$ solving EM constraint equations \eqref{EMconstraint} and $Q\in\QQ$ such that $\|(g-e,k,E,B)\|_{\XX^s(\At_{16})}\les\eps$, $|Q-\Qo|\leq M_c\eps^2$, and
    $$
    (g,k,E,B)=\left\{\begin{aligned}
    (g_Q,k_Q,E_Q,B_Q)\qquad &\mbox{ in }\;\AA_{8},\\
    (\go,\ko,\Eo,\Bo)\qquad &\mbox{ in }\;\AA_{32}.
    \end{aligned}\right.
    $$
    In particular, electromagnetic charges are conserved:
    \begin{align*}
        \Q_E[(g,E);\AA_{16}]=Q_E,\qquad \Q_B[(g,B);\AA_{16}]=Q_B.
    \end{align*}
\item {\bf Lipschitz continuity.} The map $(\go,\ko,\Eo,\Bo)\mapsto(g,k,E,B,Q)$ is Lipschitz as a map from the subset of $\XX^s$ restricted by \eqref{gokoperturbation} and \eqref{Qoopen} into $\XX^s\times\RRR^{10}$.
\item {\bf Persistence of regularity.} Assume that $(\go,\ko,\Eo,\Bo)\in\XX^{s+m}(\At_{16})$ and for each $Q\in\QQ$, $(g_Q,k_Q,E_Q,B_Q)$ is of Sobolev regularity $s+m$ for $m\in\mathbb{N}$, then $(g,k,E,B)\in\XX^{s+m}(\At_{16})$ and
    \begin{align*}
    \|(g-e,k,E,B)\|_{\XX^{s+m}(\At_{16})}&\les\|(\go-e,\ko,\Eo,\Bo)\|_{\XX^{s+m}(\At_{16})}\\
    &+\|(g_Q-e,k_Q,E_Q,B_Q)\|_{\XX^{s+m}(\At_{16})}.
    \end{align*}
\end{enumerate}
\end{thm}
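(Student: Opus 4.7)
I follow the scheme of Theorem 1.3 in \cite{MOT}, adapted so that the Maxwell sector contributes no additional obstruction thanks to the renormalization of Lemma~\ref{lem:trans}: after passing to $(\Eg,\Bg)$, the Maxwell constraints become the \emph{linear} Euclidean divergence equations $\pr_i\Eg^i=0=\pr_i\Bg^i$. Let $\chi$ be a smooth cutoff on $\At_{16}$, equal to $1$ on a neighborhood of $\AA_8$ and to $0$ on a neighborhood of $\AA_{32}$. For any $Q\in\QQ$, define the interpolated $4$--tuple $X_0(Q):=\chi X_Q+(1-\chi)X_{\mathrm{o}}$, where $X_Q,X_{\mathrm{o}}$ denote the renormalized variables of $(g_Q,k_Q,E_Q,B_Q)$ and of $(\go,\ko,\Eo,\Bo)$. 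It satisfies \eqref{EMschem} up to a defect $\vec{F}_0(Q):=\vec{P}(X_0(Q))-\vec{N}_{EM}(X_0(Q))$ supported in the transition region inside $\AA_{16}$. I then seek the glued solution in the form $X=X_0(Q)+\Xt$ with $\supp\Xt\subseteq\overline{\AA_{16}}$, so that matching on $\AA_8$ and $\AA_{32}$ is automatic and the EM constraints reduce to the nonlinear equation
\[
\vec{P}(\Xt)=-\vec{F}_0(Q)+\vec{N}_{EM}(X_0(Q)+\Xt)-\vec{N}_{EM}(X_0(Q))=:\vec{G}(Q,\Xt).
\]

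\textbf{Step 2 — Obstruction analysis and simultaneous fixed point.}
To invert $\vec{P}$ using the Bogovskii-type operators $\vec{S}=(S,T,P,P)$ of Lemma~\ref{lem:annulus-bog} while preserving support in the annulus, the right-hand side $\vec{G}(Q,\Xt)$ must satisfy the $4+6+1+1=12$ moment conditions of that lemma. By Lemma~\ref{Lemconservationlaw} together with the identity $\Q_{ADM}[(g_Q,k_Q);\AA_{16}]=Q$ of Definition~\ref{MOT1.2} and the definition $\Qo=\Q_{ADM}[(\go,\ko);\AA_{16}]$, the ten gravitational moments evaluate to $Q-\Qo+O(\eps^2+\|\Xt\|_{\XX^s}^2)$; the two electromagnetic moments evaluate to $\Q_E[(g_Q,E_Q);\AA_{16}]-\Q_E[(\go,\Eo);\AA_{16}]$ and the analogous magnetic difference, both of which vanish identically by hypothesis \eqref{chargeconservation} (and $\vec{N}_{EM}$ contributes nothing to the Maxwell rows by Corollary~\ref{cor:schem-EM}). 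I now run a coupled Banach iteration on the ball of radius $M_c\eps^2$ in $\XX_0^s(\At_{16})\cap\{\supp\subseteq\overline{\AA_{16}}\}$: given a candidate $\Xt$, select $Q=Q(\Xt)\in\QQ$ via the implicit function theorem so that the ten gravitational moments of $\vec{G}(Q(\Xt),\Xt)$ vanish — solvability with $|Q(\Xt)-\Qo|\les\eps^2$ is guaranteed because the parametrization $Q\mapsto\Q_{ADM}[(g_Q,k_Q);\AA_{16}]=Q$ from Definition~\ref{MOT1.2} makes the linearization of the moment map in $Q$ the identity on $\RRR^{10}$ — and then update $\Xt\mapsto\vec{S}(\vec{G}(Q(\Xt),\Xt))$. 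The operator bounds of Lemma~\ref{lem:annulus-bog}, the algebra and composition estimates of Lemma~\ref{MOT3.1} for $s>\frac{3}{2}$, the Lipschitz hypothesis \eqref{gQgQ'kQkQ'}, and the smallness $(1+K)\eps<\eps_c$ together make this map a contraction, whose fixed point $(\Xt,Q)$ furnishes the glued solution with the claimed bounds.

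\textbf{Step 3 — Conservation, Lipschitz dependence, and regularity.}
Conservation of the electromagnetic charges in item~(1) follows at once: the constraints $\pr_i\Eg^i=0=\pr_i\Bg^i$ make the fluxes $r$--independent, and on $\AA_{32}$ the solution equals $(\go,\ko,\Eo,\Bo)$, so the charges on $\AA_{16}$ are precisely $(Q_E,Q_B)$ by \eqref{chargeconservation}. Item~(2) is obtained by perturbing the outer seed data $(\go,\ko,\Eo,\Bo)$, rerunning the same fixed-point argument, and subtracting iterates: each step — the interpolation, the moment selection $Q(\Xt)$, and the Bogovskii correction — is Lipschitz with controlled constants, so the composite map inherits Lipschitz dependence. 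Item~(3) follows because the operators $\vec{S}$ of Lemma~\ref{lem:annulus-bog} preserve Sobolev regularity at every order with the same mapping estimates; once existence in $\XX^s$ is known, the same contraction closes in $\XX^{s+m}$, yielding the higher-regularity bound in terms of the $\XX^{s+m}$ norms of the input data.

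\textbf{Main obstacle.}
The principal technical point is the \emph{simultaneous} handling of the ten gravitational moment obstructions together with the nonlinear fixed point in $\Xt$: the implicit selection $Q=Q(\Xt)$ must be Lipschitz with a quantitatively small operator norm, so that the $Q$--dependence does not destroy the contraction constant of $\Xt\mapsto\vec{S}(\vec{G}(Q(\Xt),\Xt))$. This is where the admissibility framework of Definition~\ref{MOT1.2} — particularly the Lipschitz estimate \eqref{gQgQ'kQkQ'} and the identity parametrization by $\Q_{ADM}$ — is decisive, exactly as in the vacuum case of \cite{MOT}. The electromagnetic sector, by contrast, is entirely tame after renormalization: its divergence constraints are linear, its only obstructions are the total charges, and these are precisely the quantities built into the hypothesis \eqref{chargeconservation}.
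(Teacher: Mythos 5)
Your proposal follows essentially the same route as the paper's proof: cutoff interpolation between the inner and outer data, an annulus-supported Bogovskii correction, and removal of the twelve moment obstructions via the conservation laws of Lemma \ref{Lemconservationlaw}, with the two electromagnetic moments vanishing identically by \eqref{chargeconservation} and the ten gravitational ones fixed by adjusting $Q$ through the identity parametrization of Definition \ref{MOT1.2}. The only organizational difference is that you couple the selection of $Q$ and the iteration in $\Xt$ into one loop, whereas the paper first solves the relaxed fixed point $\Xt_Q=\vec{S}\big(-\vec{P}\Xb_Q+\vec{N}_{EM}(\Xb_Q+\Xt_Q)\big)$ for every $Q$ (applying $\vec{S}$ formally, without enforcing the moment conditions) and only afterwards solves the finite-dimensional fixed point $Q=\Qo+n[Q]$; both organizations work. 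One quantitative slip: your iteration ball for $\Xt$ should have radius $O(\eps)$, not $M_c\eps^2$ --- the defect $\vec{F}_0(Q)$ contains commutator terms of the form $\pr^2\chi\cdot(X_Q-\Xo)$ of size $O(\eps)$ (the two seeds are each $\eps$-close to Euclidean but not $\eps^2$-close to each other), so $\vec{S}(-\vec{F}_0(Q))$ is already of size $O(\eps)$ and the map cannot preserve a ball of radius $O(\eps^2)$; the $\eps^2$ scale pertains only to $|Q-\Qo|$ and to the nonlinear moment corrections.
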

\begin{proof}
The proof of Theorem \ref{MOT1.3} is divided into 3 steps.

\paragraph{Step 1. Construction of the preliminary gluing profile $\Xb_Q$ for each $Q$.}
Let $\chi(x)$ be a smooth radial cutoff equal to $0$ for $|x|<16$ and $1$ for $|x|>32$. 
We form the following interpolating ansatz:
\begin{align}
\begin{split}\label{defhbpib}
\Xb_Q:=(1-\chi)X_Q+\chi \Xo,
\end{split}
\end{align}
where $X_{Q}=(g_Q,k_Q,E_Q,B_Q)$ and $\Xo=(\go,\ko,\Eo,\Bo)$ are defined in \eqref{defX}. 
Certainly, $\Xb_Q$ would not solve \eqref{EMschem}.
The goal is, therefore, to find a correction $\Xt_Q$, supported in $\AA_{16}$, and a suitable choice of $Q\in\QQ$ such that
\[
X=\Xb_Q+\Xt_Q
\]
becomes a genuine solution of \eqref{EMschem}. Plugging this into the constraint yields
\begin{align}
\begin{split}\label{htiQpitQerror}
\vec{P}\Xt_Q=-\vec{P}\,\Xb_Q+\vec{N}_{EM}(X),
\end{split}
\end{align}
where $\vec{P}$ is the principal (linear) part as in \eqref{defvecPvecN} and $\vec{N}_{EM}$ is the nonlinear part of the EM constraint, as in Corollary \ref{cor:schem-EM}. Applying the Bogovskii-type operator $\vec{S}$ from \eqref{defvecS} \emph{formally} leads to the fixed-point map
\begin{equation}\label{fixedpointhtiQpitQ}
\Xt_Q=\vec{S}\big(\vec{N}_Q(\Xt_Q)\big)=\vec{S}\big(-\vec{P}\Xb_Q+\vec{N}_{EM}(\Xb_Q+\Xt_Q)\big),
\end{equation}
where $\vec{N}_Q:=(\Mt_Q,\Nt_Q,\Lt_Q,\Kt_Q)^\top$ denotes the right-hand side of \eqref{htiQpitQerror}. 
By Lemma~\ref{lem:annulus-bog}, however, $\vec{S}$ is a right-inverse only on the subspace of data satisfying the finite-dimensional compatibility (orthogonality) conditions. 
Thus \eqref{fixedpointhtiQpitQ} should be viewed as a \emph{relaxed} fixed-point problem: for a given $Q$ we can solve it precisely when $\vec{N}_Q(\Xt_Q)$ meets those conditions. 
In Step 3 we will choose $Q$ so that the compatibility conditions hold, thereby making \eqref{fixedpointhtiQpitQ} solvable and turning $\Xt_Q$ into a genuine correction.

\paragraph{Step 2. Finding the candidate $\Xt_Q$ for each $Q$.}

For each $Q\in\QQ$, let $\Xt_Q\in\XX$ with
\begin{align*}
\XX:=\left\{\Xt_Q\Big/\,\|\Xt_Q\|_{\XX^s(\AA_{16})}\leq M_c\ep\right\}.
\end{align*}
We denote
\begin{align*}
\vec{F}_Q:=-\vec{P}\,\Xb_Q+\vec{N}_{EM}(\Xb_Q).
\end{align*}
As an immediate consequence of \eqref{defhbpib} and Lemma \ref{MOT3.1}, we have
\begin{align*}
\supp(\vec{F}_Q)\subseteq \AA_{16},\qquad\quad\|\vec{F}_Q\|_{H^{s-2}(\AA_{16})}\les\ep.
\end{align*}
Noticing that
\begin{align*}
    \vec{N}_Q-\vec{F}_Q=\vec{N}_{EM}(\Xb_Q+\Xt_Q)-\vec{N}_{EM}(\Xb_Q),
\end{align*}
we have from \eqref{defhbpib} and Lemma \ref{MOT3.1}
\begin{align*}
\supp(\vec{N}_Q-\vec{F}_Q)\subseteq\AA_{16},\qquad\|\vec{N}_Q-\vec{F}_Q\|_{H^{s-2}(\AA_{16})}\les M_c\ep^2.
\end{align*}
Moreover, we have for $Q,Q'\in\QQ$
\begin{align*}
\|\vec{N}_Q-\vec{N}_{Q'}\|_{H^{s-2}(\AA_{16})}\les\ep\|\Xt_Q-\Xt_{Q'}\|_{\XX^s(\AA_{16})}.
\end{align*}
Applying Lemma \ref{lem:annulus-bog}, we deduce that the operator $\vec{S}$ in \eqref{fixedpointhtiQpitQ} is a contraction on $\XX$. Hence, by Banach's fixed point theorem, there exists a unique $\Xt_Q\in\XX$ solving \eqref{fixedpointhtiQpitQ}. 
At this stage $\Xt_Q$ is only a \emph{candidate} correction, since the compatibility conditions are not yet enforced. These will be used later to determine the correct $Q$. 
Finally, by varying $Q$ and repeating the same argument, one verifies that the map $Q\mapsto\Xt_Q\in\XX^s(\AA_{16})$ is locally Lipschitz.
\paragraph{Step 3. Finding the correct $Q$.}
In order for $\Xt_Q$ to solve the EM constraint (and not merely the relaxed fixed point equation), the compatibility conditions from Lemma~\ref{lem:annulus-bog} must be satisfied. These conditions pick out the correct value of $Q$. Thus, it remains to find $Q\in\QQ$ such that
\begin{align}
\begin{split}\label{orthogonalityMQNQ}
\int\Mt_Q\c (1,x_1,x_2,x_3)^\top dx&=0,\\
\int\Nt_Q\c(\ev_1,\ev_2,\ev_3,\vec{Y}_1,\vec{Y}_2,\vec{Y}_3)^\top dx&=0,\\
\int(\Lt_Q,\Kt_Q)^\top dx&=0,
\end{split}
\end{align}
which ensures that $\Xt_Q$ satisfies Statement 2 of Lemma \ref{lem:annulus-bog}. Applying Lemma \ref{Lemconservationlaw}, we obtain for $l=1,2,3$:
\begin{align*}
    \int\frac{1}{2}\Mt_Q\begin{pmatrix}
        1\\ x_l
    \end{pmatrix} dx&=-\int_{\At_1}\chi_{8,32}(r)\frac{1}{2}\pr_i\pr_j\hb_Q^{ij}\begin{pmatrix}
        1\\ x_l
    \end{pmatrix}dx+\int_{\At_1}\chi_{8,32}(r)\frac{1}{2}M_{EM}(h,\pi,E,B)\begin{pmatrix}
        1\\ x_l
    \end{pmatrix}dx \\
    &=\begin{pmatrix}
    \E\\ \C_l
    \end{pmatrix}[(g_Q,k_Q);\AA_{8}]-\begin{pmatrix}
    \E\\ \C_l
    \end{pmatrix}[(\go,\ko);\AA_{32}]\\
    &+\int_{\At_1}\chi_{8,32}(r)\frac{1}{2}M_{EM}(h,\pi,E,B)\begin{pmatrix}
    1\\ x_l
    \end{pmatrix}dx.
\end{align*}
Next, we recall from \eqref{EMschem}
\begin{align*}
    \pr_i\pr_j h^{ij}_{Q}=M_{EM}(h_Q,\pi_Q,E_Q,B_Q).
\end{align*}
Multiplying it by $\chi_{8,16}(r)$ and integrating it by parts, we obtain from Lemma \ref{Lemconservationlaw}
\begin{align*}
    \begin{pmatrix}\E\\ \C_l\end{pmatrix}[(g_Q,k_Q);\AA_{8}]&=\begin{pmatrix}\E\\ \C_l\end{pmatrix}[(g_Q,k_Q);\AA_{16}]+\int_{\At_{16}}\chi_{8,16}(r)\frac{1}{2}M_{EM}(h_Q,\pi_Q,E_Q,B_Q)\begin{pmatrix}
    1\\ x_l\end{pmatrix}dx.
\end{align*}
Similarly, we have
\begin{align*}
    \begin{pmatrix}\E\\ \C_l\end{pmatrix}[(\go,\ko);\AA_{16}]&=\begin{pmatrix}\E\\ \C_l\end{pmatrix}[(\go,\ko);\AA_{32}]+\int_{\At_{16}}\chi_{16,32}(r)\frac{1}{2}M_{EM}(\ho,\pio,\Eo,\Bo)\begin{pmatrix}
    1\\ x_l
    \end{pmatrix}dx.
\end{align*}
Combining the above identities, we deduce
\begin{align*}
    \int \frac{1}{2}\Mt_Q\begin{pmatrix}
        1 \\ x_l
    \end{pmatrix}dx&=\begin{pmatrix}
        \E \\ \C_l
    \end{pmatrix}[(g_Q,k_Q);\AA_{16}]-\begin{pmatrix}
        \E \\ \C_l
    \end{pmatrix}[(\go,\ko);\AA_{16}]+\begin{pmatrix}
        n[Q]_\E \\ n[Q]_{\C_l}
    \end{pmatrix},
\end{align*}
where $n[Q]$ denotes the nonlinear terms. Similarly, we have for $k,l=1,2,3$
\begin{align*}
    \int\Nt_Q\begin{pmatrix}
        \ev_k\\ \vec{Y}_l
    \end{pmatrix}dx&=\begin{pmatrix}
        \P_k \\ \J_l
    \end{pmatrix}[(g_Q,k_Q);\AA_{16}]-\begin{pmatrix}
        \P_k \\ \J_l
    \end{pmatrix}[(\go,\ko);\AA_{16}]+\begin{pmatrix}
        n[Q]_{\P_k}\\ n[Q]_{\J_l}
    \end{pmatrix}.
\end{align*}
We also have
\begin{align*}
    \int\begin{pmatrix}
        \Lt_Q \\ \Kt_Q
    \end{pmatrix}dx&=\begin{pmatrix}
        \Q_E \\ \Q_B
    \end{pmatrix}[(g_Q,E_Q,B_Q);\AA_{16}]-\begin{pmatrix}
        \Q_E \\ \Q_B
    \end{pmatrix}[(\go,\Eo,\Bo);\AA_{16}].
\end{align*}
Thus, we have from \eqref{chargeconservation} that 
\begin{align*}
    \int (\Lt_Q,\Kt_Q)^\top dx=0.
\end{align*}
Moreover, the first two identities in \eqref{orthogonalityMQNQ} reduce to the following fixed-point problem:
\begin{equation}\label{eqfixed}
Q=\Qo+n[Q].
\end{equation}
It follows from Banach's fixed point theorem that there exists a unique $Q\in\QQ$ satisfying \eqref{eqfixed}. The Lipschitz dependence and persistence of regularity properties can be deduced similarly to Theorem 1.3 in \cite{MOT}. This concludes the proof of Theorem \ref{MOT1.3}.
\end{proof}
\subsection{Extension procedures}
We prepare the data for gluing through two extension steps. The \emph{outer extension} propagates the exterior EM data (the Brill-Lindquist family from Section~\ref{secBL-EM}) inward to the annulus $\At_{16}=\BB_{64}\setminus\ov{\BB_{8}}$. The \emph{inner extension} propagates the near-star annulus from Theorem~\ref{interiorsolution} outward up to $\BB_{8}$, with a conic correction $\BB_{8}\cup C_\theta(\bom_0)$, while keeping the Klein-Gordon field compactly supported in $\BB_2$. The actual gluing will then take place in the intermediate annulus $\AA_{16}=\BB_{32}\setminus\ov{\BB_{16}}$.
\subsubsection{Extension of outer solutions}
\begin{lem}\label{outexten}
Let $(g,k,E,B)_{out}\in\XX^s(\AA_{32})$ solve \eqref{EMconstraint}. If $\|(g-e,k,E,B)_{out}\|_{\XX^s(\AA_{32})}$ is sufficiently small, there exists $(\gt,\kt,\Et,\Bt)_{out}\in\XX^s(\At_{16})$ that solves \eqref{EMconstraint} and satisfies
\begin{align*}
(\gt,\kt,\Et,\Bt)_{out}&=(g,k,E,B)_{out}\quad\mbox{ in }\;\AA_{32},\\
\|(\gt-e,\kt,\Et,\Bt)_{out}\|_{\XX^s(\At_{16})}&\les\|(g-e,k,E,B)_{out}\|_{\XX^s(\AA_{32})},\\
\Q_E[(\gt,\Et)_{out};\AA_{16}]&=\Q_E[(g,E)_{out};\AA_{32}],\\
\Q_B[(\gt,\Bt)_{out};\AA_{16}]&=\Q_B[(g,B)_{out};\AA_{32}].
\end{align*}
\end{lem}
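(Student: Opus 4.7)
My plan is to construct the extension in two stages: a Sobolev extension followed by a nonlinear correction via a Bogovskii-type fixed point, mirroring the strategy used in the proof of Theorem~\ref{constructioninnerannulus}. Using Lemma~\ref{lem:trans}, I would first transcribe the outer data as $X_{out}=(h,\pi,\Eg,\Bg)_{out}\in\XX^s(\AA_{32})$. Applying a standard bounded Sobolev extension operator, I obtain $X^{ext}\in\XX^s(\At_{16})$ satisfying $X^{ext}=X_{out}$ on $\AA_{32}$ with $\|X^{ext}\|_{\XX^s(\At_{16})}\les\|X_{out}\|_{\XX^s(\AA_{32})}$. Since the EM constraints are already satisfied on $\AA_{32}$, the constraint defect $\vec{F}(X):=\vec{N}_{EM}(X)-\vec{P}X$ evaluated at $X^{ext}$ vanishes on $\AA_{32}$ and is therefore supported in the new annular shell $\BB_{32}\setminus\ov{\BB_{8}}$.

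In the second stage, I would look for a correction $\Xt$ supported in $\BB_{32}\setminus\ov{\BB_{8}}$ such that $X^{ext}+\Xt$ solves \eqref{EMschem} on all of $\At_{16}$. This reduces, after applying the annular Bogovskii-type operator $\vec{S}$ from Lemma~\ref{lem:annulus-bog}, to the fixed-point equation
\[
\Xt=\vec{S}\bigl(\vec{N}_{EM}(X^{ext}+\Xt)-\vec{P}X^{ext}\bigr).
\]
Because $\vec{S}$ preserves support in $\BB_{32}\setminus\ov{\BB_{8}}$ and $\|X_{out}\|_{\XX^s(\AA_{32})}$ is sufficiently small, the right-hand side is a contraction on a small ball of $\XX^s(\BB_{32}\setminus\ov{\BB_{8}})$, and Banach's fixed-point theorem produces a unique small $\Xt$. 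Converting back via Lemma~\ref{lem:trans} yields $(\gt,\kt,\Et,\Bt)_{out}$ that agrees with $(g,k,E,B)_{out}$ on $\AA_{32}$, solves the EM constraints throughout $\At_{16}$, and satisfies the required Sobolev bound. Charge conservation is then immediate: since $\pr_i\Egt^i=\pr_i\Bgt^i=0$ on $\At_{16}$, Lemma~\ref{Lemconservationlaw} implies that $\Q_E,\Q_B$ are constant on spheres in $\At_{16}$, and comparing with $\AA_{32}$, where $(\gt,\Et,\Bt)=(g,E,B)_{out}$, yields the stated identities.

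The principal difficulty lies in the fact that the annular Bogovskii operator $\vec{S}$ is a right inverse to $\vec{P}$ only on the codimension-$12$ subspace cut out by the compatibility conditions ($10$ corresponding to the ADM charges $(\E,\P_k,\C_l,\J_l)$ and $2$ to the electromagnetic charges $(\Q_E,\Q_B)$); a generic Sobolev extension will violate them, in which case the fixed point yields only a solution modulo this finite-dimensional obstruction. To resolve this I would augment the free extension with $12$ linearly independent charge-carrying correction modes (for instance Schwarzschild-type radial profiles for the gravitational charges and Coulomb-type radial fields for $\Q_E,\Q_B$), localized near $\pr\BB_{8}$ so as to be disjoint from $\AA_{32}$, and solve a finite-dimensional nonlinear system for the $12$ coefficients via the implicit function theorem. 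This is possible because the lemma prescribes no ADM charges at $\pr\BB_{8}$, leaving precisely the $12$ free parameters needed to match the $12$ compatibility conditions; the smallness of $X_{out}$ ensures that the required adjustments remain of the same order as $\|X_{out}\|_{\XX^s(\AA_{32})}$, so they do not affect the Sobolev estimate.
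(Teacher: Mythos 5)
Your proposal is correct in substance and follows the same skeleton as the paper's proof (free extension, Bogovskii-type fixed point, charge conservation from $\pr_i\Egt^i=\pr_i\Bgt^i=0$ on $\At_{16}$), but it diverges at the one genuinely delicate point: how the finite-dimensional obstructions are handled. The paper freely extends $(g,k,E,B)_{out}$ to all of $\BB_{64}$ and runs the fixed point with a Bogovskii-type operator on the \emph{ball} $\BB_{32}$: the correction is supported in $\ov{\BB_{32}}$ but is \emph{not} required to vanish near $\pr\BB_8$, and the constraints are only imposed on $\At_{16}$, so the twelve moment obstructions can be absorbed into $\ov{\BB_8}\subseteq\At_{16}^c$ and no compatibility conditions arise. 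By insisting that your correction be supported in $\BB_{32}\setminus\ov{\BB_8}$ --- a condition the lemma does not impose --- you manufacture the obstruction problem that your third paragraph then solves with twelve auxiliary charge-carrying modes and the implicit function theorem. That resolution does work (it is the same mechanism as Step 3 of the proof of Theorem \ref{MOT1.3}), and your count $12=4+6+1+1$ is right, but note one point of care: a divergence-free Coulomb-type mode supported in a shell strictly inside $\At_{16}$ carries zero flux through every sphere and cannot adjust $\Q_E$; your modes must place their ``sources'' (nonzero divergence, and the analogous gravitational obstruction densities) inside $\ov{\BB_8}$, outside the region where the constraints hold. Once that is arranged, your construction realizes by hand exactly the freedom at the inner boundary that the paper's ball operator exploits automatically, so your route is a more explicit but strictly longer version of the same argument.
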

\begin{proof}
We first \emph{freely extend} $(g,k,E,B)_{out}$ to $\XX^s(\BB_{64})$ with control over the norm. Let $\vec{S}_{in}:=(S,T,P,P)$ be the Bogovskii type operator on $\BB_{32}$, which satisfies the properties in Lemma \ref{lem:annulus-bog}. We then consider the following fixed point problem:
\begin{align*}
\Xt=\vec{S}_{in}\big(\vec{N}_{EM}(X_{out}+\Xt)-\vec{P}(X_{out})\big).
\end{align*}
By the Banach fixed point theorem, there exists a unique solution $\Xt\in\XX^s(\BB_{32})$. Defining
\begin{align*}
\Xt_{out}:=X_{out}+\Xt,
\end{align*}
and let $(\gt,\kt,\Et,\Bt)_{out}$ be the solution corresponding to $\Xt_{out}=(\hti,\pit,\Egt,\Bgt)_{out}$. Moreover, the conservation of electric and magnetic charges follows directly from the facts that $\sdiv_gE=0$ and $\sdiv_gB=0$. This concludes the proof of Lemma \ref{outexten}.
\end{proof}
\subsubsection{Extension of inner solutions}
\begin{lem}\label{conicSc}
Let $\ell\in[-1,-\frac{1}{2}]$, $\th>0$ and $\bom_0\in\SSS^2$, there exists a solution operator $\vec{S}_{out}$ for $\vec{P}$ defined in \eqref{defvecPvecN}, i.e. $\vec{P}\vec{S}_{out}=I$, with the following mapping property:
\begin{align*}
\vec{S}_{out}:\big(H_b^{s-2,\ell+2}\big)_0(\AA_4\cup C_\th(\bom_0))\rightarrow\big(\XX_b^{s,\ell}\big)_0(\AA_4\cup C_\th(\bom_0)).
\end{align*}
\end{lem}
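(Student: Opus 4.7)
The operator $\vec{S}_{out}$ is constructed by combining the annular Bogovskii-type operator $\vec{S}$ of Lemma~\ref{lem:annulus-bog} with the conic operator $\vec{S}_c$ of Lemma~\ref{lem:conic}, via a cutoff decomposition together with a finite-dimensional correction. The key mechanism is that the infinite cone $C_\th(\bom_0)$ adjoined to the annulus $\AA_4$ provides an unbounded region into which the finite-dimensional obstructions of the annular inversion can be absorbed, yielding an obstruction-free right inverse on the full trumpet-shaped domain $\AA_4\cup C_\th(\bom_0)$.

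Fix a convex cap $\om\subset\SSS^2$ centered at $\bom_0$ whose angular radius is strictly smaller than $\th$, so that $C_\om\subseteq C_\th(\bom_0)$. Choose a smooth radial cutoff $\chi$ with $\chi=1$ on $\BB_7$ and $\chi=0$ outside $\BB_8$, and decompose the input as $f=\chi f+(1-\chi)f=:f_a+f_c$, so that $\supp f_a\subseteq\AA_4$ and $\supp f_c\subseteq C_\th(\bom_0)\cap\{|x|\geq 7\}$. Apply $\vec{S}_c$ (with kernel supported in $C_\om$) to $f_c$ to obtain $u_c:=\vec{S}_c f_c\in\XX_b^{s,\ell}$ satisfying $\vec{P}u_c=f_c$; by translation invariance and the kernel support, $\supp u_c\subseteq\supp f_c+C_\om\subseteq C_\th(\bom_0)$.

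To handle $f_a$, observe that the annular operator $\vec{S}$ (rescaled to $\AA_4$) right-inverts $\vec{P}$ only modulo twelve finite-dimensional compatibility conditions (ten ADM-type conditions for the gravitational components, plus one each for the electric and magnetic divergences). Choose twelve auxiliary tensor/vector-field profiles $\{X_k\}_{k=1}^{12}$ compactly supported in $C_\th(\bom_0)\setminus\BB_8$ such that the $12\times 12$ pairing matrix between $\{\vec{P}X_k\}$ and the twelve obstruction functionals is invertible. Solve linearly for $c_k=c_k(f_a)$ so that $f_a-\sum_k c_k\vec{P}X_k$ satisfies all twelve compatibility conditions, and set $u_a:=\vec{S}\big(f_a-\sum_k c_k\vec{P}X_k\big)$, a compactly supported correction in $\AA_4$. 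Finally, define $\vec{S}_{out}f:=u_a+u_c+\sum_k c_k X_k$, which by construction satisfies $\vec{P}(\vec{S}_{out}f)=f$ and is supported in $\AA_4\cup C_\th(\bom_0)$. The weighted Sobolev bounds follow componentwise from Lemmas~\ref{lem:annulus-bog} and~\ref{lem:conic}, with the weight range $\ell\in[-1,-\frac{1}{2}]$ chosen so that $\vec{S}_c$ is bounded on all four components; the endpoint $\ell=-\frac{1}{2}$ is handled by a density/limiting argument from $\ell'\nearrow-\frac{1}{2}$.

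\textbf{Main obstacle.} The delicate point is the construction of the anchor profiles $X_k$: they must be supported strictly inside the conic tail (so as not to interfere with $u_a$ on the annulus), and the $12\times 12$ pairing matrix with the compatibility functionals must be non-degenerate. Explicit choices can be obtained from homogeneous tensor and vector fields on the cone, transforming equivariantly under the translation and rotation generators associated to the ADM mass, linear momentum, center of mass, angular momentum, and electromagnetic charges. This parallels the vacuum construction in \cite{MOT}, with the electromagnetic sector contributing only two additional one-dimensional obstructions that can be absorbed analogously via localized profiles in $C_\th(\bom_0)$.
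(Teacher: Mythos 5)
Your overall architecture --- cut $f$ into an annular piece and a conic piece, invert the conic piece with $\vec{S}_c$, invert the annular piece with the Bogovskii operator after removing the finite-dimensional obstructions, and push those obstructions into the cone --- is the right one, and it is essentially what the construction cited by the paper (Lemma 5.8 of \cite{MOT}) does. However, the specific mechanism you propose for removing the obstructions cannot work. You subtract $\sum_k c_k\vec{P}X_k$ with $X_k$ \emph{compactly supported} in $C_\th(\bom_0)\setminus\BB_8$ and require the pairing matrix between $\{\vec{P}X_k\}$ and the twelve obstruction functionals to be invertible. That matrix is identically zero: the obstruction functionals are integrations against $1,x_l$ (double divergence), $\ev_k,\vec{Y}_l$ (divergence of symmetric $2$--tensors) and $1$ (scalar divergences), all of which lie in the kernel of the formal adjoint of $\vec{P}$, so $\int\vec{P}X\cdot\varpi\,dx=0$ for \emph{every} compactly supported $X$ by integration by parts. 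This is exactly the content of Lemma \ref{Lemconservationlaw}: compactly supported corrections cannot change the charges, which is why the obstructions exist in the first place. Hence the linear system for the $c_k$ is unsolvable whenever $f_a$ violates the compatibility conditions, which is the generic case. The correct mechanism is to absorb the obstruction with correctors of \emph{unbounded} support along the cone: fix bump profiles $\rho_k$ supported in the overlap $\AA_4\cap C_{\th/2}(\bom_0)$ with $\int\rho_k\,\varpi_j\,dx=\de_{kj}$, apply the annular operator to the compatible part $f_a-\sum_k\la_k(f_a)\rho_k$, and apply $\vec{S}_c$ to $\sum_k\la_k(f_a)\rho_k$; by convexity of $C_\th(\bom_0)$ and the support property $\supp(\vec{S}_c g)\subseteq\supp g+C_\om$, the corrector stays inside the trumpet while carrying the charge off to infinity, where the weight $\ell$ renders the cokernel elements inadmissible.

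Two secondary points. First, $f_a=\chi f$ is supported in $(\AA_4\cup C_\th(\bom_0))\cap\BB_8$, which contains the portion of the cone inside $\ov{\BB_4}$; it is therefore not in the domain of the annular Bogovskii operator as you claim, and the decomposition must route the inner conic sector through $\vec{S}_c$ as well. Second, the endpoint $\ell=-\frac{1}{2}$ is excluded by the strict inequality $\de<-\frac{1}{2}$ required for $S_c$ in Lemma \ref{lem:conic}; this is harmless since only $\ell\in(-1,-\frac{1}{2})$ is used downstream, but a "density/limiting argument" is not a substitute for the bound at the endpoint.
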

\begin{proof}
The proof is conducted by applying the conic operators defined in \eqref{lem:conic}, which is largely analogous to Step 1 in the proof of Lemma 5.8 in \cite{MOT}.
\end{proof}
\begin{lem}\label{inexten}
Let $(g,k,E,B)_{in}$ be an initial data set on $\AA_2$ solving \eqref{EMconstraint} and satisfying
\begin{equation*}
    \|(g-e,k,E,B)_{in}\|_{\XX^s(\AA_2)}\leq\eps,
\end{equation*}
where $\eps$ is a constant small enough. Let $\de\in[-1,-\frac{1}{2}]$, $\th>0$ and let $\bom_0\in\SSS^2$. Then, there exists $(\gt,\kt,\Et,\Bt)_{in}$ that solves \eqref{EMconstraint} and satisfies the following properties:
\begin{enumerate}
\item We have
\begin{align*}
(\gt,\kt,\Et,\Bt)_{in}&=(g,k,E,B)_{in}\quad \mbox{ on }\;\AA_2,\\
\supp(\gt-e,\kt,\Et,\Bt)_{in}&\subseteq\BB_8\cup C_\th(\bom_0).
\end{align*}
\item The following estimate holds:
\begin{align}
\begin{split}\label{innerconclusion}
\|(\gt-e,\kt,\Et,\Bt)_{in}\|_{\XX_b^{s,\de}}\les\eps.
\end{split}
\end{align}
\item The charges of $(\gt,\kt,\Et,\Bt)_{in}$ satisfy for $r\in(4,8)$:
\begin{align}
\begin{split}\label{innerchargeconclusion}
\big|\Q_{ADM}[(\gt,\kt)_{in};\pr\BB_r]-\Q_{ADM}[(g,k)_{in};\pr\BB_4]\big|&\les\eps^2,\\
\Q_E[(\gt,\Et)_{in};\pr\BB_r]-\Q_E[(g,E)_{in};\pr\BB_4]&=0,\\
\Q_B[(\gt,\Bt)_{in};\pr\BB_r]-\Q_B[(g,B)_{in};\pr\BB_4]&=0. 
\end{split}
\end{align}
\end{enumerate}
\end{lem}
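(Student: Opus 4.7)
The plan is to mirror the structure of Lemma \ref{outexten}, replacing the compact Bogovskii-type operator on a ball by the conic solution operator $\vec{S}_{out}$ from Lemma \ref{conicSc}, whose support preservation in $\AA_4\cup C_\th(\bom_0)$ is precisely what is required to produce an extension whose non-Euclidean part is supported in $\BB_8\cup C_\th(\bom_0)$. First, I would freely extend the data from $\AA_2$ to a field $X_{ext}=(h,\pi,\Eg,\Bg)_{ext}$ on $\RRR^3$ using a Stein-type (or smooth cutoff) extension across $\pr\BB_4$, multiplied by a smooth cutoff adapted to $\BB_8$, so that $X_{ext}=X_{in}$ on $\AA_2$, $\supp X_{ext}\subseteq\BB_8$, and $\|X_{ext}\|_{\XX^s}\les\eps$. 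The constraint error $\vec{F}:=\vec{N}_{EM}(X_{ext})-\vec{P}(X_{ext})$ is then supported in $\AA_4$ and is of size $\eps$ in the relevant $H_b^{s-2}$-norm.

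Next, I would set up the relaxed fixed-point problem
\begin{equation*}
\Xt=\vec{S}_{out}\bigl(\vec{N}_{EM}(X_{ext}+\Xt)-\vec{P}(X_{ext})\bigr)
\end{equation*}
for corrections $\Xt$ in the ball of radius $C\eps$ in $(\XX^{s,\de}_b)_0(\AA_4\cup C_\th(\bom_0))$. By Lemma \ref{conicSc}, $\vec{S}_{out}$ is a bounded right inverse of $\vec{P}$ that preserves support in $\AA_4\cup C_\th(\bom_0)$ and gains two derivatives with the correct weight shift. Combined with Moser-type product estimates in weighted $b$--Sobolev spaces (in the spirit of Lemma \ref{MOT3.1}) applied to the quadratic nonlinearity $\vec{N}_{EM}$ of Corollary \ref{cor:schem-EM}, this makes the right-hand side a contraction once $\eps$ is small enough, and Banach's fixed-point theorem yields a unique $\Xt$. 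Setting $X_{final}:=X_{ext}+\Xt$, the corresponding $(\gt,\kt,\Et,\Bt)_{in}$ solves \eqref{EMconstraint}, coincides with $(g,k,E,B)_{in}$ on $\AA_2$, has support contained in $\BB_8\cup C_\th(\bom_0)$, and satisfies \eqref{innerconclusion}.

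For the charge identities in \eqref{innerchargeconclusion}, the electromagnetic ones are immediate: since $(\gt,\kt,\Et,\Bt)_{in}$ solves the Maxwell constraints $\sdiv_{\gt}\Et=\sdiv_{\gt}\Bt=0$ exactly on $\BB_r\setminus\ov{\BB_4}$ for every $r\in(4,8)$, Stokes' theorem (Lemma \ref{Lemconservationlaw}) gives exact conservation of $\Q_E$ and $\Q_B$. For the ADM components, Lemma \ref{Lemconservationlaw} expresses the discrepancy between $\pr\BB_r$ and $\pr\BB_4$ as integrals of $\pr_i\pr_j\hti^{ij}$ and $\pr_i\pit^{ij}$, which by \eqref{EMschem} equal $M_{EM}$ and $N_{EM}$; these are quadratic in the small fields $(\gt-e,\kt,\Et,\Bt)_{in}$, yielding the claimed $O(\eps^2)$ discrepancy.

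The main technical obstacle is verifying that the contraction closes in the conic weighted space $(\XX^{s,\de}_b)_0(\AA_4\cup C_\th(\bom_0))$. One must carefully track how each quadratic term of $\vec{N}_{EM}$ (such as $h\cdot\pr^2 h$, $\pr h\cdot\pr h$, $h\cdot\pr\pi$, $\pr h\cdot\pi$, $E\cdot E$, $B\cdot B$, and $E\cdot B$) interacts with the weight $\de\in[-1,-\tfrac12]$ along the unbounded cone, in order to check that $X\mapsto\vec{N}_{EM}(X_{ext}+X)$ maps $(\XX^{s,\de}_b)_0$ into $(H^{s-2,\de+2}_b)_0$ and is Lipschitz in the correct norm. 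This is the direct analog of the inner-extension step performed in \cite{MOT}, and the admissible weight range $[-1,-\tfrac12]$ is precisely the range in which $\vec{S}_{out}$ is bounded and the nonlinear products remain under control.
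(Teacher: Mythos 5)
Your proposal is correct and follows essentially the same route as the paper: a free extension of the data beyond $\AA_2$, a relaxed fixed-point problem solved with the conic operator $\vec{S}_{out}$ of Lemma \ref{conicSc} via Banach's fixed-point theorem, and the charge identities deduced from Lemma \ref{Lemconservationlaw} together with $\sdiv_g E=\sdiv_g B=0$ (exact conservation of $\Q_E,\Q_B$, quadratic $O(\eps^2)$ drift for $\Q_{ADM}$). The paper's own proof is terser but identical in substance, so no further comment is needed.
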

\begin{proof}
We \emph{freely extend} $(g,k,E,B)_{in}$ to $\AA_4$ so that
\begin{align*}
\|(g-e,k,E,B)_{in}\|_{\XX_0^s(\AA_4)}\les\eps.
\end{align*}
we consider the following fixed point problem:
\begin{align*}
\Xt=\vec{S}_{out}\big(\vec{N}_{EM}(X_{in}+\Xt)-\vec{P}(X_{in})\big)\qquad\mbox{ on }\;\AA_4\cup C_\th(\bom_0),
\end{align*}
where $\vec{S}_{out}$ denotes the solution operator in Lemma \ref{conicSc}. By Banach's fixed point theorem, there exists a unique solution $\Xt\in\XX_b^{s,\ell}(\AA_4\cup C_\th(\bom_0))$. Defining $\Xt_{in}:=X_{in}+\Xt$ and letting $(\gt,\kt,\Et,\Bt)_{in}$ be the solution corresponding to $\Xt_{in}=(\hti,\pit,\Egt,\Bgt)_{in}$, we deduce immediately \eqref{innerconclusion}. Finally, \eqref{innerchargeconclusion} follows directly from Lemma \ref{Lemconservationlaw} and the fact that $\sdiv_gE=\sdiv_gB=0$. This concludes the proof of Lemma \ref{inexten}.
\end{proof}
\subsection{Main gluing theorem}\label{secobsfree}
Combining the reductions of Sections \ref{ssecinner} and \ref{secEMlinearglue}, we obtain the abstract gluing result for the EM constraints. In the electrovacuum region produced by the localization of the Klein-Gordon field, the Maxwell sector glues linearly, with the electromagnetic charges as the only compatibility conditions, while the Einstein sector glues obstruction-free as in~\cite{MOT}. This theorem provides the EM component of the full EMKG gluing scheme developed in this section, tailored to the boson-star setting where the Klein-Gordon field is compactly supported and the exterior is electrovacuum. 

We first recall the multi-bump construction of \cite{MOT} with prescribed ADM charges.
\begin{prop}\label{sixbump}
Let $\de\in[-1,-\frac{1}{2}]$ and $\th\in(0,\frac{\pi}{8})$, we define
\begin{align*}
C_\th^{(6)}=C_\th(\ev_1)\cup C_\th(-\ev_1)\cup C_\th(\ev_2)\cup C_\th(-\ev_2)\cup C_\th(\ev_3)\cup C_\th(-\ev_3).
\end{align*}
Given $s>\frac{3}{2}$ and $\Ga\geq 1$, there exist $\eps_b:=\eps_b(s,\Ga)>0$ and $\mu_b=\mu_b(s,\Ga)$ such that the following holds. Consider
\begin{align*}
\UU_\Ga=\left\{\Q\in\RRR^{10}\Big/\;\E>|\P|,\quad \frac{\E}{\sqrt{\E^2-|\P|^2}}<2\Ga,\quad \E<\eps_b^2,\quad |\C|+|\J|<\mu_b\E\right\}.
\end{align*}
For each $Q\in\UU_\Ga$, there exists $(g_Q^{bump;\Ga},k_Q^{bump;\Ga})\in C^\infty(\RRR^3)$ solving vacuum constraint equations such that
\begin{align}
\begin{split}\label{gQbumpGa}
\supp(g_Q^{bump;\Ga}-e,k_Q^{bump;\Ga})&\subseteq C_\th^{(6)}\cap\BB_8^c,\\
\Q_{ADM}[(g_Q^{bump;\Ga},k_Q^{bump;\Ga});\AA_{16}]&=Q.
\end{split}
\end{align}
Moreover, we have
\begin{align*}
\|(g_Q^{bump;\Ga}-e,k_Q^{bump;\Ga})\|_{H_b^{s,\de}\times H_b^{s-1,\de+1}}&\les\sqrt{\E},\\
\|\pr_Q(g_Q^{bump;\Ga},k_Q^{bump;\Ga})\|_{H_b^{s,\de}\times H_b^{s-1,\de+1}}&\les\frac{1}{\sqrt{\E}},\\
\|(g_Q^{bump;\Ga}-e,k_Q^{bump;\Ga})\|_{H_b^{s,\de}\times H_b^{s-1,\de+1}(\BB_{16}^c)}&\les\E,\\
\|\pr_Q(g_Q^{bump;\Ga},k_Q^{bump;\Ga})\|_{H_b^{s,\de}\times H_b^{s-1,\de+1}(\BB_{16}^c)}&\les 1.
\end{align*}
\end{prop}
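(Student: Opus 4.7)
The plan is to realize $(g_Q^{bump;\Ga}, k_Q^{bump;\Ga})$ as a linear superposition of six vacuum ``bumps'', one supported in each cone $C_\th(\pm\ev_i)$, $i=1,2,3$. The hypothesis $\th<\pi/8$ makes the six cones pairwise disjoint, so the sum of vacuum initial data with disjoint supports remains a vacuum initial data set, and by Lemma \ref{Lemconservationlaw} the $\Q_{ADM}$ charges on $\AA_{16}$ decompose as the sum of the ten charges carried by each bump. The task then reduces to choosing the bump parameters so that this sum equals the prescribed $Q\in\UU_\Ga$.

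\paragraph{Construction and correction of each bump.} I would start from a $10$-parameter model family of asymptotically flat vacuum initial data $(g^{\mathrm{mod}}_{m,p,c,J}, k^{\mathrm{mod}}_{m,p,c,J})$ — for instance the initial data of a boosted, translated, rotating Kerr spacetime with mass $m$, momentum $p$, center $c$ and spin $J$, keeping $|p|\le 2\Ga m$ to stay uniformly sub-extremal on $\UU_\Ga$. For each cone I fix parameters $(m_{(i,\pm)}, p_{(i,\pm)}, c_{(i,\pm)}, J_{(i,\pm)})$ of sufficiently small mass, cut the model data off to $C_\th(\pm\ev_i)\cap\BB_8^c$, and restore the vacuum constraints by solving a fixed-point equation of the form
\begin{equation*}
\widetilde{X}^{(i,\pm)} = \vec{S}_{out}\bigl(\vec{N}_{EM}(X^{(i,\pm)}+\widetilde{X}^{(i,\pm)}) - \vec{P}(X^{(i,\pm)})\bigr),
\end{equation*}
with $E=B=0$ throughout, using the gravitational block of the conic solution operator $\vec{S}_{out}$ from Lemma \ref{conicSc}. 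The Banach fixed-point theorem in $(\XX^{s,\de}_b)_0(C_\th(\pm\ev_i))$ yields a vacuum correction supported in the same cone.

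\paragraph{Matching the ADM charges.} The ten ADM charges of the sum $X=\sum_{i,\pm}X^{(i,\pm)}$ depend smoothly on the (nominally $60$) bump parameters. At leading order one has $\E \approx \sum_{i,\pm} m_{(i,\pm)}$, the center $\C_j$ is controlled by translating the bumps along $\ev_j$ with weight $m_{(j,\pm)}$, the momentum $\P_j$ by the $\ev_j$-components of the boosts $p_{(j,\pm)}$, and the spin $\J_j$ by the $\ev_j$-components of $J_{(j,\pm)}$. Restricting to a natural $10$-dimensional slice of parameter space, the Jacobian of the parameter-to-charge map is, up to $O(\E)$ corrections, a diagonal matrix with entries bounded above and below uniformly in $s$ and $\Ga$; the implicit function theorem then produces a smooth map $Q\mapsto(m,p,c,J)(Q)$ on $\UU_\Ga$ (for $\eps_b,\mu_b$ small enough) delivering the claimed identity $\Q_{ADM}[\cdot;\AA_{16}]=Q$. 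The weighted Sobolev bounds in \eqref{gQbumpGa} then follow from the scaling of Kerr data — each bump contributes $O(m)=O(\sqrt{\E})$ to the global $H_b^{s,\de}$ norm — combined with the uniform mapping properties of $\vec{S}_{out}$; the improved $O(\E)$ bound on $\BB_{16}^c$ reflects the absence of cutoff corrections there and the quadratic leading behavior of the Kerr asymptotics.

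\paragraph{Main obstacle.} The principal difficulty is uniformity in $\Ga$ near the extremal locus $|\P|\to\E$, where the boost in a single bump would approach a null velocity, concentrating the Kerr profile and degrading the Sobolev constants. This is exactly why one needs the threshold $\E/\sqrt{\E^2-|\P|^2}<2\Ga$ in the definition of $\UU_\Ga$: one must distribute the required momentum across both bumps of each axis $C_\th(\pm\ev_j)$ so that every individual boost stays bounded by $2\Ga m_{(j,\pm)}$, and then verify that the reduced Jacobian of the parameter-to-charge map does not degenerate. Controlling this uniformity, rather than producing any one bump, is the technical heart of the argument.
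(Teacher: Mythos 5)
The paper does not prove this proposition: its ``proof'' is a citation to Proposition 5.7 of \cite{MOT}, so the relevant comparison is with that construction. Your proposal, however, contains a genuine structural gap that would prevent it from working. Since $\supp(g_Q^{bump;\Ga}-e,k_Q^{bump;\Ga})\subseteq\BB_8^c$, the charges on every sphere $\pr\BB_r$ with $r\leq 8$ vanish, and Lemma \ref{Lemconservationlaw} combined with the constraint $\pr_i\pr_jh^{ij}=M_{EM}(h,\pi)$ forces
\begin{equation*}
\E[\cdot;\AA_{16}]=\int\chi(r)\,\tfrac12 M_{EM}(h,\pi)\,dx,
\end{equation*}
i.e.\ the ADM charges measured on $\AA_{16}$ are \emph{quadratic} functionals of the bump data in $\BB_{32}\setminus\ov{\BB_8}$. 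Your scheme of superposing cutoff mass-$m$ Kerr bumps with ``$\E\approx\sum m_{(i,\pm)}$'' and a leading-order \emph{diagonal, uniformly invertible} Jacobian is therefore based on a linear charge--parameter relation that does not hold: a cutoff-and-corrected Kerr exterior of mass $m$ has amplitude $O(m)$ near $r\sim 10$ and hence contributes only $O(m^2)$ to $\E[\AA_{16}]$. To realize a prescribed energy $\E$ the bumps must have amplitude $\sim\sqrt{\E}$ in the transition annulus, which is exactly why the proposition's bounds read $\les\sqrt{\E}$ globally and $\les 1/\sqrt{\E}$ for the $Q$-derivative (differentiate an amplitude-$\sqrt{\E}$ ansatz in $\E$); neither bound is reproducible from your picture, and a Kerr bump with the mass parameter $\sim\sqrt{\E}$ needed to generate energy $\E$ quadratically would have a $\sqrt{\E}/r$ tail violating the required $O(\E)$ bound on $\BB_{16}^c$.

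This also changes the role of the six cones and of the hypotheses $\E>|\P|$, $\E/\sqrt{\E^2-|\P|^2}<2\Ga$: in the actual mechanism each cone $C_\th(\pm\ev_i)$ carries a wave-like bump whose quadratically generated energy--momentum is approximately proportional to the null vector $(1,\pm\ev_i)$, and the prescribed timelike $(\E,\P)$ must be written as a \emph{positive} combination of these six near-null contributions (with $\C,\J$ adjusted perturbatively, whence $|\C|+|\J|<\mu_b\E$). This is not merely the sub-extremality of individual boosts, which is the role you assign to the $\Ga$-condition. Your identification of the ``main obstacle'' is therefore also misplaced: the technical heart is the quadratic charge-generation mechanism and the positive spanning of the forward light cone by the six cone directions, not uniformity of boosted Kerr constants near extremality.
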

\begin{proof}
See Proposition 5.7 in \cite{MOT}.
\end{proof}
We are now ready to state and prove the following gluing Theorem, which extends Theorem 1.7 of \cite{MOT} to the EM constraints \eqref{EMconstraint}. 
\begin{thm}\label{MOT1.7}
Given $s>\frac{3}{2}$ and $\Ga>1$, there exist $\eps_o=\eps_o(s,\Ga)>0$, $\mu_o=\mu_o(s,\Ga)>0$, and $C_{o} = C_o(s,\Ga)>0$ such that the following holds. Let $(g,k,E,B)_{in}\in\XX^s(\AA_2)$ and $(g,k,E,B)_{out}\in\XX^s(\AA_{32})$ be solutions to the EM constraint equations \eqref{EMconstraint}. We define $\De Q = (\De\E,\De\P,\De\C,\De\J)\in\RRR^{10}$ by
\begin{equation}\label{eq:Delta-Q}
\De Q=\Q_{ADM}[(g,k)_{out};\AA_{32}]-\Q_{ADM}[(g,k)_{in};\AA_2],
\end{equation}
and assume that
\begin{align}
\De\E&>|\De\P|,\label{eq:obs-free-unit:EP}\\
\frac{\De\E}{\sqrt{(\De\E)^{2}-|\De\P|^{2}}}&<\Ga,\label{eq:obs-free-unit:Gamma} \\
\De\E &<\eps_{o}^{2},\label{eq:obs-free-unit:ep} \\
|\De\C|+|\De\J|&<\mu_o\De\E,\label{eq:obs-free-unit:CJ}\\
\Q_E[(g,E)_{out};\AA_{32}]&=\Q_E[(g,E)_{in};\AA_{2}],\label{eq:QEconservation}\\
\Q_B[(g,B)_{out};\AA_{32}]&=\Q_B[(g,B)_{in};\AA_{2}],\label{eq:QBconservation}
\end{align}
and
\begin{equation}
\|(g-e,k,E,B)_{in}\|_{\XX^s(\AA_2)}^{2}+\|(g-e,k,E,B)_{out}\|_{\XX^s(\AA_{32})}^{2}<\mu_o\De\E\label{eq:obs-free-unit:data}.
\end{equation}
Then, there exists $(g,k,E,B)\in\XX^s(\BB_{64}\setminus\ov{\BB_{2}})$ solving \eqref{EMconstraint} such that
\begin{align*}
(g,k,E,B)&= (g,k,E,B)_{in} \quad\,\,\mbox{ on }\;\AA_2,\\
(g,k,E,B)&=(g,k,E,B)_{out} \quad\mbox{ on }\;\AA_{32}, 
\end{align*}
and
\begin{equation}\label{eq:obs-free-conc}
\|(g-e,k,E,B)\|_{\XX^s(\BB_{64}\setminus\ov{\BB_{2}})}^{2}<C_o\De\E.
\end{equation}
\end{thm}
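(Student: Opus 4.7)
The plan is to reduce Theorem \ref{MOT1.7} to the abstract gluing Theorem \ref{MOT1.3} by building an admissible $10$--parameter family on the working annulus $\At_{16}$, combining the inner and outer EM extensions with a vacuum multi-bump family that carries the ADM discrepancy $\Delta Q$ while leaving the electromagnetic sector untouched. The main new feature beyond the vacuum case of \cite{MOT} is the Maxwell sector: since the bump family produced by Proposition \ref{sixbump} satisfies $E^{bump}=B^{bump}=0$, all electromagnetic charges are inherited from the inner extension, and the hypotheses \eqref{eq:QEconservation}--\eqref{eq:QBconservation} ensure that the resulting family is charge--compatible with the outer data.

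First, I would apply Lemma \ref{inexten} with weight $\delta\in[-1,-\tfrac{1}{2}]$ to extend $(g,k,E,B)_{in}$ from $\AA_2$ into a solution $(\gt,\kt,\Et,\Bt)_{in}$ supported in $\BB_8\cup C_\theta(\bom_0)$, with $\bom_0\in\SSS^2\setminus\{\pm\ev_1,\pm\ev_2,\pm\ev_3\}$ and $\theta\in(0,\tfrac{\pi}{8})$ chosen small enough that $C_\theta(\bom_0)\cap C_\theta^{(6)}=\emptyset$. Dually, I would apply Lemma \ref{outexten} to extend $(g,k,E,B)_{out}$ from $\AA_{32}$ inward to a solution $(\gt,\kt,\Et,\Bt)_{out}\in\XX^s(\At_{16})$. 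Both extensions preserve the electromagnetic charges exactly, and combined with $\sdiv_gE=\sdiv_gB=0$ and Lemma \ref{Lemconservationlaw} these fluxes are constant across the relevant spheres, so by \eqref{eq:QEconservation}--\eqref{eq:QBconservation} the inner and outer extensions carry the same $(Q_E,Q_B)$ on $\AA_{16}$.

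Next, I would invoke Proposition \ref{sixbump} with the tuning constant $\Gamma$ of the theorem to obtain the vacuum multi-bump family $(g_Q^{bump;\Gamma},k_Q^{bump;\Gamma})$ supported in $C_\theta^{(6)}\cap\BB_8^c$, parameterized over a bounded open set $\UU_\Gamma\ni \Delta Q$; the hypotheses \eqref{eq:obs-free-unit:EP}--\eqref{eq:obs-free-unit:CJ} are precisely the nondegeneracy/smallness conditions that place $\Delta Q$ into $\UU_\Gamma$ provided $\eps_o\leq\eps_b$ and $\mu_o\leq\mu_b$. Because $\mathrm{supp}(\gt-e,\kt)_{in}$ and $\mathrm{supp}(g_Q^{bump;\Gamma}-e,k_Q^{bump;\Gamma})$ are disjoint in $\BB_8^c$, the superposition
\begin{equation*}
(g_Q,k_Q,E_Q,B_Q):=\bigl(\gt_{in}+g_Q^{bump;\Gamma}-e,\ \kt_{in}+k_Q^{bump;\Gamma},\ \Et_{in},\ \Bt_{in}\bigr)
\end{equation*}
solves the EM constraints \eqref{EMconstraint} on $\At_{16}$. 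After the affine shift $Q\mapsto Q+\Q_{ADM}[(\gt,\kt)_{in};\AA_{16}]$, this yields an admissible family in the sense of Definition \ref{MOT1.2} with fixed electromagnetic charges $(Q_E,Q_B)$, and the target value $\Qo=\Q_{ADM}[(\gt,\kt)_{out};\AA_{16}]$ lies in the interior of the parameter set by construction of $\Delta Q$.

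Finally, I would apply Theorem \ref{MOT1.3} with $(\go,\ko,\Eo,\Bo)=(\gt,\kt,\Et,\Bt)_{out}$ and the family above to obtain a solution $(g,k,E,B)\in\XX^s(\At_{16})$ of \eqref{EMconstraint} agreeing with $(g_Q,k_Q,E_Q,B_Q)$ on $\AA_8$ and with $(\gt,\kt,\Et,\Bt)_{out}$ on $\AA_{32}$; extending by the inner data on $\AA_2$ and the outer data on $\AA_{32}$ then produces the glued solution on $\BB_{64}\setminus\ov{\BB_2}$. The Sobolev bound \eqref{eq:obs-free-conc} follows from $\|(g-e,k,E,B)\|_{\XX^s(\At_{16})}\lesssim\sqrt{\Delta\E}$, provided by the bump profile estimate in Proposition \ref{sixbump}, together with the quadratic correction $|Q-\Qo|\leq M_c\eps^2$ of Theorem \ref{MOT1.3} absorbed by $\eps^2\lesssim\mu_o\Delta\E$ from \eqref{eq:obs-free-unit:data}. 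The main obstacle is the quantitative calibration of $(\eps_o,\mu_o,C_o)$: one must simultaneously satisfy (i) the smallness threshold $(1+K)\eps<\eps_c$ for the fixed-point argument in Theorem \ref{MOT1.3}, where the Lipschitz constant $K$ of the admissible family is controlled by the derivative bound $\|\partial_Q(g_Q^{bump;\Gamma},k_Q^{bump;\Gamma})\|\lesssim 1/\sqrt{\Delta\E}$ from Proposition \ref{sixbump}, (ii) the neighborhood condition $\BB_{M_c\eps^2}(\Qo)\subseteq\QQ\subseteq\UU_\Gamma$, which constrains $\mu_o$ in terms of $\eps_o$ and $\Gamma$, and (iii) the charge-conservation identities \eqref{eq:QEconservation}--\eqref{eq:QBconservation}, which are absolutely essential since no electromagnetic counterpart of the multi-bump mechanism is available.
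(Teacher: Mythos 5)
Your proposal follows essentially the same route as the paper's proof: inner extension via Lemma \ref{inexten} into $\BB_8\cup C_\theta(\bom_0)$, outer extension via Lemma \ref{outexten}, superposition with the disjointly supported vacuum six-bump family of Proposition \ref{sixbump} to build the admissible family (with the Maxwell charges carried entirely by the inner extension and matched via \eqref{eq:QEconservation}--\eqref{eq:QBconservation}), and then an application of Theorem \ref{MOT1.3}. The only minor difference is that the paper treats the reparameterization $\De Q'\mapsto\Q_{ADM}[(g,k)_{\De Q'};\AA_{16}]$ as a bi-Lipschitz map inverted via the inverse function theorem (to absorb the $O_\Ga(\mu_o\De\E)$ nonlinear corrections), rather than as an exact affine shift, but this is a matter of precision, not of substance.
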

\begin{proof}
Fix $-1<\de<-\frac{1}{2}$ and $\th\in(0,\frac{\pi}{8})$. Let
\begin{align*}
\UU_\Ga(\De\E)=\left\{\De Q'\in\UU_\Ga\Big/\,\frac{1}{2}\De\E\leq\E(\De Q')\leq 2\De\E\right\}.
\end{align*}
Given $\De Q'\in\UU_\Ga(\De\E)$, by Proposition \ref{sixbump}, there exist six-bump initial data $(g_Q^{bump;\Ga}-e,k_Q^{bump;\Ga})$ to the vacuum constraint equations, supported in $C_\th^{(6)}\setminus\ov{\BB_8}$. By Lemma \ref{inexten}, given any $\bom_0\in\SSS^2$, there exists $(\gt,\kt,\Et,\Bt)_{in}\in\XX_b^{s,\de}$ extending $(g,k,E,B)_{in}$ and satisfying
\begin{equation*}
(\gt-e,\kt,\Et,\Bt)_{in}\subseteq\BB_8\cup C_\th(\bom_0).
\end{equation*}
Taking $\bom_0\in\SSS^2$ such that
$$
\big(C_\th^{(6)}\setminus\ov{\BB_8}\big)\bigcap\left(C_\th(\bom_0)\cup\BB_8\right)=\emptyset,
$$
we then define, for any $\De Q'\in\UU_\Ga(\De\E)$
\begin{align*}
(g-e,k,E,B)_{\De Q'}:=(\gt-e,\kt,\Et,\Bt)_{in}+(g-e,k,0,0)^{bump;\Ga}_{\De Q'},
\end{align*}
which extends $(g,k,E,B)_{in}$ and solves \eqref{EMconstraint}. We have from Lemma \ref{Lemconservationlaw}, \eqref{gQbumpGa} and \eqref{eq:obs-free-unit:data}
\begin{align}
\begin{split}\label{DeQ'A16}
\Q_{ADM}[(g,k)_{\De Q'};\AA_{16}]&=\Q_{ADM}[(g,k)_{\De Q'}^{bump;\Ga};\AA_{16}]+\Q_{ADM}[(\gt,\kt)_{in};\AA_{16}]\\
&=\De Q'+\Q_{ADM}[(g,k)_{in};\AA_2]+\|(g-e,k,E,B)_{in}\|^2_{\XX^s(\BB_{32})} \\
&=\De Q'+\Q_{ADM}[(g,k)_{in};\AA_2]+O_\Ga(\mu_o\De\E).
\end{split}
\end{align}
We also have from \eqref{gQbumpGa}
\begin{align*}
\Q_{ADM}[(g,k)_{\De Q'};\AA_{16}]-\Q_{ADM}[(g,k)_{\De Q''};\AA_{16}]=\De Q'-\De Q''.
\end{align*}
In particular, the following map is bi-Lipschitz:
\begin{align*}
T:\UU_\Ga(\De\E)&\to\RRR^{10},\\
\De Q'&\mapsto\Q_{ADM}[(g,k)_{\De Q'};\AA_{16}].
\end{align*}
Next, by Lemma \ref{outexten}, there exists $(\gt,\kt,\Et,\Bt)_{out}$ on $\At_{16}$ extending $(g,k,E,B)_{out}$ and solving \eqref{EMconstraint}. Again, we have from Lemma \ref{Lemconservationlaw}
\begin{align}
\begin{split}\label{outdiff}
\quad\;\Q_{ADM}[(\gt,\kt)_{out};\AA_{16}]&=\Q_{ADM}[(g,k)_{out};\AA_{32}]+\|(g-e,k,E,B)_{out}\|^2_{\XX^s(\At_{16})}\\
&=\Q_{ADM}[(g,k)_{out};\AA_{32}]+O_\Ga(\mu_o\De\E).
\end{split}
\end{align}
Thus, we infer from \eqref{DeQ'A16}, \eqref{eq:Delta-Q} and \eqref{outdiff}
\begin{align*}
T(\De Q')&=\Q_{ADM}[(g,k)_{\De Q'};\AA_{16}]\\
&=\De Q'+\Q_{ADM}[(g,k)_{in};\AA_2]+O_\Ga(\mu_o\De\E)\\
&=\De Q'-\De Q+\Q_{ADM}[(g,k)_{out};\AA_{32}]+O_\Ga(\mu_o\De\E)\\
&=\Q_{ADM}[(\gt,\kt)_{out};\AA_{16}]+\De Q'-\De Q+O_\Ga(\mu_o\De\E).
\end{align*}
Noticing that $T(\De Q)=\Q_{ADM}[(\gt,\kt)_{out};\AA_{16}]+O_\Ga(\mu_o\De\E)$, choosing $\mu_o$ even smaller depending on $\mu_b$ and $\Ga$, we have, from the inverse function theorem, that $T(\UU_\Ga)$ covers a ball of radius $O_\Ga(\mu_o\De\E)$ around $\Q_{ADM}[(\gt,\kt)_{out};\AA_{16}]$. By inverting $T$, we obtain a $\QQ$--admissible family of annular initial data sets on $\At_{16}$ satisfying \eqref{Qoopen}--\eqref{gQgQ'kQkQ'} with
\begin{align*}
\eps=O_\Ga\big(\mu_o^\frac{1}{2}(\De\E)^\frac{1}{2}\big),\qquad\quad K=O_\Ga(1).
\end{align*}
Recalling that $\sdiv_gE=\sdiv_gB=0$ holds on $\BB_2^c$, we have
\begin{align*}
\Q_{E,B}[(g,E,B)_{\De Q'};\AA_{16}]&=\Q_{E,B}[(g,E,B)_{in};\AA_2]\\
&=\Q_{E,B}[(g,E,B)_{out};\AA_{32}]\\
&=\Q_{E,B}[(\gt,\,\Et,\Bt)_{out};\AA_{16}].
\end{align*}
For $\eps_{o}$ small enough, applying Theorem \ref{MOT1.3} with $(\go,\ko,\Eo,\Bo)=(\gt,\kt,\Et,\Bt)_{out}$, we obtain that there exists a solution of \eqref{EMconstraint} such that
\begin{align*}
(g,k,E,B)=\left\{\begin{aligned}
(g,k,E,B)_{\De Q'}\quad &\mbox{ in }\;\AA_8,\\
(\gt,\kt,\Et,\Bt)_{out}\quad\, &\mbox{ in }\;\AA_{32}.
\end{aligned}\right.
\end{align*}
Recalling that $(g,k,E,B)_{\De Q'}$ is an extension of $(g,k,E,B)_{in}$, while $(\gt,\kt,\Et,\Bt)_{out}$ is an extension of $(g,k,E,B)_{out}$, this concludes the proof of Theorem \ref{MOT1.7}.
\end{proof}
\section{Construction of Cauchy initial data}\label{secconstruction}
The goal of this section is to prove the following theorem, which establishes the existence of the desired Cauchy initial data.
\begin{thm}\label{mainCauchy}
Let $N\in\NNN$ and $s\geq 3$. Let $(g_{BL},0,E_{BL},0)$ be a charged Brill-Lindquist metric defined by
\begin{align*}
g_{BL}&=\left(1+\sum_{I=1}^{N}\frac{M_I+Q_I}{2|\x-\cb_I|}\right)^2\left(1+\sum_{I=1}^{N}\frac{M_I-Q_I}{2|\x-\cb_I|}\right)^2 e,\qquad\, k_{BL}=0,\\
E_{BL}&=-\nab_g\ln\left(\frac{1+\sum_{I=1}^N\frac{M_I+Q_I}{2|\x-\cb_I|}}{1+\sum_{I=1}^N\frac{M_I-Q_I}{2|\x-\cb_I|}}\right),\qquad\qquad\qquad\qquad\quad B_{BL}=0,
\end{align*}
where $\{M_I\}_{I=1}^N$ and $\{\cb_I\}_{I=1}^N$ satisfy the small-mass and large-separation condition \eqref{dfMd-EM}. Then, there exists a suitable choice of $\{Q_I\}_{I=1}^N$ and a $2N$--parameter $\{(\de_I,a_I)\}_{I=1}^N$ family of initial data sets $(\Si,g,k,E,B,\psi,\phi,A,\Phi)$, which are solutions of \eqref{EMKGconstraint} and satisfy the following properties:
\begin{enumerate}
    \item For any $I\in\{1,2,\dots,N\}$, we have
    \begin{align*}
    (g,k,E,B)&=(g_{BL},0,E_{BL},0)\qquad\quad\qquad\quad\,\mbox{ on }\;\BB_{32}^c(\cb_I),\\
    (g,k,E,B)&=(g_{\de_I,a_I},k_{\de_I,a_I},E_{\de_I,a_I},B_{\de_I,a_I})\quad\,\mbox{ on }\;\BB_\frac{3}{2}(\cb_I)\setminus\ov{\BB_1(\cb_I)},\\
    (g,k,E,B)&=(e,0,0,0)\qquad\qquad\quad\qquad\quad\,\,\,\;\mbox{ on }\;\BB_{1-2\de_I}(\cb_I),
    \end{align*}
    where $(\Si,g,k,E,B)_{\de_I,a_I}$ are the spacelike initial data constructed in Theorem \ref{interiorsolution} for parameter $(\de_I,a_I)$.
    \item The Klein-Gordon fields are supported in the star region:
    \begin{align*}
        \supp(\psi,\phi)\subseteq\bigcup_{I=1}^N\BB_2(\cb_I).
    \end{align*}
    \item For any $I\in\{1,2,\dots,N\}$, the following estimate holds:
    \begin{equation}\label{64-1control}
    \|(g-e,k,E,B)\|^2_{\XX^s(\BB_{64}(\cb_I)\setminus\ov{\BB_{1}(\cb_I)})}\les M_I.
    \end{equation}
    \item For any $I\in\{1,2,\dots,N\}$, a trapped surface will form in $D^+(\BB_1(\cb_I))$, the future domain of dependence of $\BB_1(\cb_I)$.
    \item The Cauchy data $(\Si,g,k)$ are free of trapped surfaces.
    \item For any $I\in\{1,2,\dots,N\}$, the coordinate ball $\BB_{64}(\cb_I)$ is free of MOTS.
    \end{enumerate}
    See Figure \ref{IntroID} for a geometric illustration of the Cauchy data in the particular case $N=3$.
\end{thm}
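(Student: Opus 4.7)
The plan is to build the initial data independently near each pole $\cb_I$, and then to assemble the global data set using the mutual disjointness of the balls $\BB_{64}(\cb_I)$, guaranteed by the hypothesis $d_I \gg M_I^{-1}$. Near each $\cb_I$, I will produce an \emph{inner} data set on $\BB_4(\cb_I)$ carrying the short-pulse trapped-surface mechanism and a localized Klein-Gordon field, an \emph{outer} Brill-Lindquist data set on $\BB_{32}^c(\cb_I)$ with the prescribed ADM mass and electric charge, and then apply the electrovacuum gluing Theorem \ref{MOT1.7} in $\BB_{64}(\cb_I)\setminus\ov{\BB_2(\cb_I)}$. Outside $\bigcup_I\BB_{32}(\cb_I)$, the data will coincide with the Brill-Lindquist metric of Proposition \ref{prop:BL-EM}.

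\paragraph{Construction of inner and outer pieces.}
For each $I$, Theorem \ref{interiorsolution} produces $(\Si_{\de_I,a_I},g_{\de_I,a_I},k_{\de_I,a_I},E_{\de_I,a_I},B_{\de_I,a_I})$ on $\BB_{3/2}(\cb_I)$, Euclidean on $\BB_{1-2\de_I}(\cb_I)$, of size $O(a_I^{-1})$ on the barrier annulus, and with charges $O(a_I^{-1})$, $\Q_B=0$. Applying Theorem \ref{constructioninnerannulus} extends this to $\BB_4(\cb_I)$, producing a star shell on which the Klein-Gordon field is supported in $\BB_2(\cb_I)$, and a near-star annulus $\AA_2(\cb_I)=\BB_4(\cb_I)\setminus\ov{\BB_2(\cb_I)}$ on which $(\psi,\phi)\equiv 0$, so the system reduces to electrovacuum with data of size $O(a_I^{-1})$. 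Outside, Propositions \ref{prop:BL-local-EM} and \ref{prop:Sobolev-EM} guarantee that the Brill-Lindquist data are $O(M_I)$--close to Euclidean on $\AA_{32}(\cb_I)$, with $\E\simeq 8\pi M_I$, $\C\simeq 8\pi M_I\cb_I$, $\Q_E\simeq 4\pi Q_I$, $\P=\J=0$, and $\Q_B=0$.

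\paragraph{Matching electromagnetic charges and gluing.}
Before invoking Theorem \ref{MOT1.7}, the electromagnetic charges across $\AA_{32}(\cb_I)$ and $\AA_2(\cb_I)$ must coincide; the magnetic ones already vanish, so only the electric fluxes remain to be matched. The map
\begin{equation*}
(Q_1,\dots,Q_N)\longmapsto \bigl(\Q_E[(g_{BL},E_{BL});\AA_{32}(\cb_I)]\bigr)_{I=1}^N
=\bigl(4\pi Q_I+O(Q_IM+Md_I^{-1})\bigr)_{I=1}^N
\end{equation*}
is a Lipschitz perturbation of $4\pi\,\mathrm{Id}$ in the regime $M, d_I^{-1}\ll 1$, so by the implicit function theorem there is a unique choice of $\{Q_I\}$ with $Q_I=O(a_I^{-1})$ matching the inner fluxes $\Q_E[(\gt,\Et)_{in};\AA_2(\cb_I)]=O(a_I^{-1})$. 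Setting $\De Q:=\Q_{ADM}[(g_{BL},0);\AA_{32}(\cb_I)]-\Q_{ADM}[(\gt,\kt)_{in};\AA_2(\cb_I)]$, Propositions \ref{prop:BL-local-EM}, \ref{prop:Sobolev-EM} and Theorem \ref{constructioninnerannulus} yield $\De\E\simeq 8\pi M_I$ while $|\De\P|,|\De\C-8\pi M_I\cb_I|,|\De\J|\les a_I^{-1}+M_IM+Md_I^{-1}$; under the standing scaling $a_I^{-1},d_I^{-1}\ll M_I\ll 1$, the conditions \eqref{eq:obs-free-unit:EP}--\eqref{eq:obs-free-unit:data} (including the quadratic smallness $\|\cdot\|^2\les a_I^{-2}+M_I^2\ll M_I$) are verified. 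Theorem \ref{MOT1.7} then produces an electrovacuum solution on $\BB_{64}(\cb_I)\setminus\ov{\BB_2(\cb_I)}$ interpolating the two data sets and obeying \eqref{64-1control}. Disjointness of the $\BB_{64}(\cb_I)$ allows the local constructions to be combined into a single $9$--tuple $(\Si,g,k,E,B,\psi,\phi,A,\Phi)$ satisfying \eqref{EMKGconstraint} globally.

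\paragraph{Trapped surface formation and absence of trapped surfaces/MOTS.}
Property (4) follows from the fact that $\BB_1(\cb_I)$ is exactly the short-pulse region of Theorem \ref{interiorsolution}; by finite propagation speed, $D^+(\BB_1(\cb_I))$ is determined solely by the inner data, and Theorem \ref{thmtrapped} yields the trapped surface $S_{-\de_I a_I/4,\de_I}$. For properties (5)--(6), a mean-curvature comparison rules out trapped surfaces on $\Si$ and MOTS in $\BB_{64}(\cb_I)$: the Euclidean core is trivial; on the short-pulse and barrier annuli, the bounds \eqref{trchctrchbcnotrapping} together with Theorem \ref{mainstability} keep both null expansions positive; on the star shell, near-star annulus and gluing region the data remain $O(a_I^{-1})$--close to Euclidean, so expansions stay near $\pm 2/r$; and on the Brill-Lindquist exterior the coordinate spheres have positive null expansion in the small-mass regime. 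The main obstacle is the tight scaling required in the last paragraph: the choice of $\{Q_I\}$ from the implicit function theorem must remain Lipschitz-compatible with the bi-Lipschitz window $\UU_\Ga(\De\E)$ and the quadratic smallness \eqref{eq:obs-free-unit:data} of the gluing theorem, which forces the hierarchy $a_I^{-1}\ll M_I^{1/2}$ already built into our parameter choices.
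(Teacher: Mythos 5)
Your proposal follows the same route as the paper: inner data from Theorems \ref{interiorsolution} and \ref{constructioninnerannulus}, outer charged Brill-Lindquist data controlled via Propositions \ref{prop:BL-local-EM} and \ref{prop:Sobolev-EM}, electric-charge matching by the implicit function theorem, electrovacuum gluing by Theorem \ref{MOT1.7}, finite speed of propagation for trapped surface formation, and a mean-curvature comparison for the absence of trapped surfaces and MOTS. Two points need repair.

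First, your charge bookkeeping is internally inconsistent: you record $|\De\C-8\pi M_I\cb_I|\les a_I^{-1}+M_IM+Md_I^{-1}$ and then assert that \eqref{eq:obs-free-unit:CJ}, which demands $|\De\C|+|\De\J|<\mu_o\De\E\simeq\mu_o M_I$, is verified. Since $|\cb_I|$ may be of size $d_I\gg M_I^{-1}$, the quantity $8\pi M_I|\cb_I|$ is not $\ll M_I$, and the condition fails as you have written it. The resolution is to work in coordinates centered at $\cb_I$ (as the paper does when applying the gluing theorem), in which case Proposition \ref{prop:BL-local-EM} gives $\C_l[(g,k)_{out};\AA_{32}(\cb_I)]=O(M_IM+Md_I^{-1})$ and \eqref{eq:obs-free-unit:CJ} follows. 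Relatedly, the glued data on the gluing region are only $O(M_I^{1/2})$--close to Euclidean (the squared $\XX^s$--norm is $\les C_o\De\E\simeq M_I$), not $O(a_I^{-1})$--close as you state; this does not affect the comparison argument but should be corrected.

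Second, your argument for property 5 only rules out trapped surfaces contained in some $\BB_{64}(\cb_I)$: the comparison with the innermost concentric coordinate sphere requires the candidate surface to lie in a ball centered at a single pole. A trapped surface could a priori extend into $\bigcup_{I}\BB_{32}^c(\cb_I)$, where no single family of concentric spheres is available. The paper closes this case using time-symmetry: at any point $p$ of $S$ in that region one has $k=0$, hence $\tr_\slg(-\th-k)\,\tr_\slg(\th-k)(p)=-(\tr_\slg\th)^2(p)\le 0$, contradicting \eqref{dftrapped}. You need this (or an equivalent) argument to cover surfaces that are not localized near one center.
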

\begin{proof}
The idea of the proof is to perform surgery for the charged Brill-Lindquist metric $(g_{BL},0,E_{BL},0)$ near $\{\cb_I\}_{I=1}^N$. More precisely, we apply Theorem \ref{MOT1.7} to patch the constant-time slices $\Si(\de_I,a_I)$ constructed in Theorem \ref{interiorsolution} near the poles $\{\cb_I\}_{I=1}^N$. The proof is divided into 3 steps.
\paragraph{Step 1. Gluing construction of Cauchy data.}
Fix $I\in\{1,2,\cdots,N\}$, we denote
\begin{align*}
(g,k,E,B)_{out}:=(g_{BL},0,E_{BL},0).
\end{align*}
Let $(x^1,x^2,x^3)$ be a coordinate system centered at $\cb_I$. We have from Corollary \ref{cor:BL-annulus-EM}
\begin{align}
\begin{split}\label{QgkEBout}
\E[(g,k)_{out};\AA_{32}(\cb_I)]&=8\pi M_I+O(M_IM+Md_I^{-1}),\\
\P_l[(g,k)_{out};\AA_{32}(\cb_I)]&=0,\\
\C_l[(g,k)_{out};\AA_{32}(\cb_I)]&=O(M_IM+Md_I^{-1}),\\
\J_l[(g,k)_{out};\AA_{32}(\cb_I)]&=0,\\
\Q_E[(g,E)_{out};\AA_{32}(\cb_I)]&=4\pi Q_I+O(Q_I M + M d_I^{-1}),\\
\Q_B[(g,B)_{out};\AA_{32}(\cb_I)]&=0.
\end{split}
\end{align}
Next, let $\Si(\de_I,a_I)$ be the constant-time slice constructed in Theorem \ref{interiorsolution}. We denote 
$$
(g,k,E,B,\psi,\phi,A,\Phi)_{in}:=(\gt,\kt,\Et,\Bt,\psit,\phit,\AAt,\Phit)
$$
constructed from $\Si(\de_I,a_I)$ as in Theorem \ref{constructioninnerannulus}. As an immediate consequence of \eqref{innerannQ}, we have
\begin{align*}
\Q_{ADM}[(g,k)_{in};\AA_2(\cb_I)]&=O(a_I^{-1}),\\
\Q_E[(g,E)_{in};\AA_2(\cb_I)]&=O(a_I^{-1}),\\
\Q_B[(g,B)_{in};\AA_2(\cb_I)]&=0.
\end{align*}
Thus, we obtain
\begin{align}\label{DeEestimate}
\begin{split}
\De\E&=\E[(g,k)_{out};\AA_{32}(\cb_I)]-\E[(g,k)_{in};\AA_2(\cb_I)]\\
&=8\pi M_I+O(M_IM+Md_I^{-1}+a_I^{-1})\simeq M_I,
\end{split}
\end{align}
where we used the facts that $M\ll 1$, $d_I^{-1}\ll M_I$ and $a_I^{-1}\ll M_I$. We also have
\begin{align}\label{DePCJestimate}
(\De\P,\De\C,\De\J)=O(M_IM+Md_I^{-1}+a_I^{-1}).
\end{align}
As an immediate consequence of \eqref{DeEestimate} and \eqref{DePCJestimate}, we infer
\begin{align*}
    \De\E\gg |\De\P|,\qquad\quad\De\E\ll 1,\qquad\quad|\De\C|+|\De\J|\ll\De\E.
\end{align*}
These imply that the conditions \eqref{eq:obs-free-unit:EP}--\eqref{eq:obs-free-unit:CJ} in Theorem \ref{MOT1.7} hold. Next, we have from \eqref{QgkEBout}
\begin{align*}
    \Q_E\big[(g,E)_{out};\AA_{32}(\cb_I)\big]&=4\pi Q_I+F_{out}^{(I)}(Q_1,\cdots Q_N),\\
    \Q_E\big[(g,E)_{in};\pr\BB_4(\cb_I)\big]&=F_{in}^{(I)}(Q_1,\cdots Q_N),
\end{align*}
with $F_{out}$ and $F_{in}$ are functions of $\{Q_1\}_{I=1}^N$ that satisfy:
\begin{equation*}
    F_{out}^{(I)}(Q_1,\cdots Q_N)=O(Q_IM+Md_I^{-1}),\qquad F_{in}^{(I)}(Q_1,\cdots Q_N)=O(a_I^{-1}).
\end{equation*}
By the implicit function theorem, there exists a suitable choice of $\{Q_I\}_{I=1}^N$ such that
\begin{align*}
    \Q_E\big[(g,E)_{out};\AA_{32}(\cb_I)\big]=\Q_E\big[(g,E)_{in};\AA_2(\cb_I)\big].
\end{align*}
Moreover, we have trivially
\begin{align*}
\Q_B\big[(g,B)_{out};\AA_{32}(\cb_I)\big]=0=\Q_B\big[(g,B)_{in};\AA_2(\cb_I)\big].
\end{align*}
We also have from \eqref{innerannSobolev} and \eqref{eq:Sobolev-g-EM}
\begin{align*}
    \|(g-e,k,E,B)_{in}\|_{\XX^s(\AA_2(\cb_I))}\les a_I^{-1},\qquad \|(g-e,k,E,B)_{out}\|_{\XX^s(\AA_{32}(\cb_I))}\les M_I.
\end{align*}
Thus, we obtain for $a_I^{-1}\ll M_I\ll 1$
\begin{align*}
    \|(g-e,k,E,B)_{in}\|_{\XX^s(\AA_2(\cb_I))}^2+\|(g-e,k,E,B)_{out}\|_{\XX^s(\AA_{32}(\cb_I))}^2\les a_I^{-2}+M_I^2\ll M_I\simeq \De\E,
\end{align*}
which implies that \eqref{eq:obs-free-conc} in Theorem \ref{MOT1.7} holds. Thus, for any $I\in\{1,2,\cdots,N\}$, all the conditions in Theorem \ref{MOT1.7} hold near $\cb_I$, we have from Theorem \ref{MOT1.7} that there exists a solution of \eqref{EMKGconstraint} satisfies the properties 1--3 of Theorem \ref{mainCauchy}.
\paragraph{Step 2. Formation of multiple trapped surfaces.}
For any $I\in\{1,2,\dots,N\}$, the ball $\BB_1(\cb_I)$ is obtained from the evolution of the characteristic initial data of EMKG system in Theorem \ref{thmtrapped}. Noticing that the future domain of dependence $D^+(\BB_1(\cb_I))$ is causally independent of $\BB_1^c(\cb_I)$, we have from Theorem \ref{thmtrapped} that a trapped surface will form in $D^+(\BB_1(\cb_I))$. Hence, $N$ trapped surfaces will form, respectively, in $D^+(\BB_1(\cb_I))$ for $I\in\{1,2,\dots,N\}$, which yields property 4 in Theorem \ref{mainCauchy}.
\paragraph{Step 3. Free of trapped surfaces.}
Fix $I\in\{1,\dots,N\}$. By the Sobolev control \eqref{64-1control} in annulus $\BB_{64}(\cb_I)\setminus\ov{\BB_{1}(\cb_I)}$,
\[
\|(g-e,k)\|^2_{H^s\times H^{s-1}(\BB_{64}(\cb_I)\setminus\ov{\BB_{1}(\cb_I)})}\les M_I,
\]
hence for $s\geq 3$, along the radial foliation $r=|\x|$,
\[
|k|\les M^{1/2},\qquad \Big|\tr_\slg\th-\frac{2}{r}\Big|\les M^{1/2}
\quad\text{ on }\;\pr\BB_r(\cb_I),\quad \forall\; 1<r<64.
\]
For $M\ll 1$, this implies
\begin{align}\label{poscurvature-barrier}
   \tr_\slg\th>0.03>0\quad\text{ on }\;\pr\BB_r(\cb_I),\quad\forall\;1<r<64.
\end{align}
In the short-pulse region and Minkowski region, we have by \eqref{trchctrchbcnotrapping} and $a_I \gg 1$,
\begin{align}\label{poscurvature-inner}
    \tr_\slg\th = \trch - \trchb > 1 \quad\text{ on }\;\pr\BB_r(\cb_I),\quad \forall\; 0<r\le1.
\end{align}
Combining \eqref{poscurvature-barrier} and \eqref{poscurvature-inner},
\begin{align}\label{trgth003}
    \tr_\slg\th>0.03>0\quad\text{on }\pr\BB_r(\cb_I),\;0<r<64.
\end{align}
Let $S\subset\BB_{64}(\cb_I)$ be a compact, embedded smooth $2$--surface, and denote by $\BB_{r_S}(\cb_I)$ the innermost ball centered at $\cb_I$ containing $S$. Then $S$ and $S_{r_S}:=\pr\BB_{r_S}(\cb_I)$ are tangent at some $p$, and by a standard mean curvature comparison,
\begin{align*}
    \tr_\slg\th_{S}(p)\geq\tr_\slg\th_{S_{r_S}}(p).
\end{align*}
Using \eqref{trgth003} and $|k|\les M^{1/2}\ll1$,
\begin{align*}
    \tr_\slg(\th_S-k_S)(p)>0.02>0,
\end{align*}
so $S$ is neither trapped nor marginally trapped. It remains to rule out trapped surfaces in the exterior region $\Si_{ext}:=\bigcup_{I=1}^N\BB_{32}^c(\cb_I)$. If
$S\subset\Si$ satisfies $S\cap\Si_{ext}\neq\emptyset$, then for some $p\in S\cap\Si_{ext}$ we have $k=0$, and hence
\begin{align*}
    \tr_\slg(-\th-k)\tr_\slg(\th-k)(p)
    =-(\tr_\slg\th)^2(p)\le0,
\end{align*}
which contradicts \eqref{dftrapped}. Thus, no trapped surface intersects $\Si_{ext}$. This proves Properties 5 and 6 of Theorem \ref{mainCauchy}. Combined with Steps 1--3, this completes the proof of Theorem \ref{mainCauchy}.
\end{proof}
\appendix
\section{Double null foliation}\label{doublenullfoliation}
We consider a double null foliation of $\MM$ by two future-increasing optical functions $u,\ub$ satisfying
\[
    \g(\grad u,\grad u)=\g(\grad\ub,\grad\ub)=0.
\]
The level sets $H_u:=\{u=\mathrm{const}\}$ and $\Hb_{\ub}:=\{\ub=\mathrm{const}\}$ are outgoing and ingoing null hypersurfaces, and their intersections
\begin{equation}\label{eq:def-S-u-ub}
    S_{u,\ub}:=H_u\cap\Hb_{\ub}
\end{equation}
are spacelike $2$--spheres. Set $L:=-\grad u$, $\Lb:=-\grad\ub$,
and define the lapse $\Om>0$ by
\begin{equation*}
    \g(L,\Lb)=-2\Om^{-2}.
\end{equation*}
The normalized null pairs are
\[
e_4:=\Om L,\qquad e_3:=\Om\Lb,\qquad  N:=\Om e_4,\qquad \Nb:=\Om e_3.
\]
On each $S_{u,\ub}$, choose a local orthonormal frame $(e_1,e_2)$ so that $(e_1,e_2,e_3,e_4)$ is a null frame (with $e_A\in\{e_1,e_2\}$). The induced Riemannian metric on $S_{u,\ub}$ is $\slg$, with Levi--Civita connection $\nabs$. Using $(u,\ub)$ together with local coordinates $x^A$ on $S_{u,\ub}$ satisfying $e_4(x^A)=0$, the metric takes the double null form
\begin{equation*}
\g=-2\Om^2(d\ub\otimes du+du\otimes d\ub)+\slg_{AB}(dx^A-\bbb^A du)\otimes(dx^B-\bbb^B du),
\end{equation*}
with
\[
\Nb=\pr_u+\bbb,\qquad N=\pr_{\ub},\qquad \bbb:=\bbb^A\pr_{x^A}.
\]
\subsection{Ricci coefficients and Weyl components}
We recall the null decomposition of the Ricci coefficients and Weyl components of the null frame $(e_1,e_2,e_3,e_4)$ as follows:
\begin{align*}
\chi_{AB}&=\g(\bnab_A e_4, e_B),\qquad \chib_{AB}=\g(\bnab_A e_3, e_B),\qquad\quad \xi_A=\frac 1 2 \g(\bnab_4 e_4,e_A),\\ 
\xib_A&=\frac 1 2 \g(\bnab_3 e_3, e_A),\qquad\;\;\,\om=\frac 1 4 \g(\bnab_4 e_4, e_3),\qquad\quad\, \omb=\frac 1 4 \g(\bnab_3e_3 ,e_4),\\
\eta_A&=\frac 1 2 \g(\bnab_3 e_4, e_A),\qquad\; \etab_A=\frac 1 2 \g(\bnab_4 e_3, e_A),\qquad\;\; \ze_A=\frac 1 2 \g(\bnab_{A}e_4, e_3),
\end{align*}
and
\begin{align*}
\a_{AB} &= \W(e_A, e_4, e_B, e_4), 
\qquad 
\b_A = \frac12 \W(e_A, e_4, e_3, e_4), 
\qquad 
\rho = \frac14 \W(e_3, e_4, e_3, e_4), \\
\aa_{AB} &= \W(e_A, e_3, e_B, e_3), 
\qquad 
\bb_A = \frac12 \W(e_A, e_3, e_3, e_4), 
\qquad 
\si = \frac14 {^* \W}(e_3, e_4, e_3, e_4),
\end{align*}
where $^*\W$ denotes the Hodge dual of the Weyl tensor $\W$. The null second fundamental forms $\chi, \chib$ are further decomposed into their traces $\trch$ and $\trchb$, and their traceless parts $\hch$ and $\hchb$:
\begin{align*}
\trch:=\de^{AB}\chi_{AB},\quad\hch_{AB}:=\chi_{AB}-\frac{1}{2}\de_{AB}\trch,\qquad
\trchb:=\de^{AB}\chib_{AB},\quad \hchb_{AB}:=\chib_{AB}-\frac{1}{2}\de_{AB}\trchb.
\end{align*}
We define the horizontal covariant operator $\nabs$ as follows:
\begin{equation*}
\nabs_X Y:=\bnab_X Y-\frac{1}{2}\chib(X,Y)e_4-\frac{1}{2}\chi(X,Y)e_3.
\end{equation*}
We also define $\nabs_4 X$ and $\nabs_3 X$ to be the horizontal projections:
\begin{align*}
\nabs_4 X&:=\bnab_4 X-\frac{1}{2} \g(X,\bnab_4e_3)e_4-\frac{1}{2} \g(X,\bnab_4e_4)e_3,\\
\nabs_3 X&:=\bnab_3 X-\frac{1}{2} \g(X,\bnab_3e_3)e_3-\frac{1}{2} \g(X,\bnab_3e_4)e_4.
\end{align*}
A tensor field $\phi$ defined on $\MM$ is called tangent to $S$ if it is a priori defined on the spacetime $\M$ and all possible contractions of $\phi$ with either $e_3$ or $e_4$ are zero. We use $\nabs_3 \phi$ and $\nabs_4 \phi$ to denote the projection to $S_{u,\ub}$ of the usual derivatives $\bnab_3\phi$ and $\bnab_4\phi$. 

The following identities hold for a double null foliation:
\begin{align}
\begin{split}\label{nullidentities}
    \nabs\log\Om&=\frac{1}{2}(\eta+\etab),\qquad\;\;\;\,\om=-\frac{1}{2}\nabs_4(\log\Om), \qquad\;\;\;\omb=-\frac{1}{2}\nabs_3(\log\Om),\\ 
    \eta&=\ze+\nabs\log\Om,\qquad \etab=-\zeta+\nabs\log\Om,\qquad\quad \xi=\xib=0.
\end{split}
\end{align}
see for example (6) in \cite{kr}.
\subsection{Null components of matter fields}\label{secmatterfields}
Let $\F$ be the \emph{Faraday tensor}. We define the following null decomposition of $\F$:
\begin{equation}\label{dfF-components}
 \bF_{A}:=\F(e_A,e_4),\quad \bbF_{A}:=\F(e_A,e_3),\quad\rhoF:=\frac{1}{2}\F(e_3,e_4),\quad \siF:=\frac{1}{2}\dual\F(e_3,e_4).
\end{equation}
In particular, $\F(e_A,e_B)=-\slep_{AB}\siF$. Moreover, there exists an electromagnetic potential $\A$ that satisfies $\F=d\A$. In terms of null components, this reads
\begin{align*}
    \bnab_\mu\A_\nu-\bnab_\nu\A_\mu=\F_{\mu\nu}.
\end{align*}
We denote the components of the electromagnetic potential $\A$ as follows:
\begin{equation}\label{dfUUbAsl}
    U:=\A_4=\A(e_4),\qquad\Ub:=\A_3=\A(e_3),\qquad\Asl_B:=\A_B=\A(e_B).
\end{equation}
Moreover, we define $\Psi$ as the gauge covariant derivative $\D\psi$ of the complex scalar field $\psi$ and the following null components of $\Psi$:
\begin{equation*}
    \Psi_4=e_3\psi+i\ef\Ub\psi,\qquad \Psi_3=e_4\psi+i\ef U\psi,\qquad \Psisl_B=\nabs_B\psi+i\ef\Asl_B\psi.
\end{equation*}
\subsection{Ricci curvature and Schouten tensor}
We define the following \emph{Schouten tensor} $\S$ and \emph{Cotton tensor} $J$:
\begin{align}
\begin{split}\label{dfSchouten}
    \S_{\mu\nu}:=\Ric_{\mu\nu}-\frac{1}{6}\g_{\mu\nu}\R,\qquad\quad J_{\la\mu\nu}:=\frac{1}{2}(\bnab_\mu\S_{\la\nu}-\bnab_\nu\S_{\la\mu}).
\end{split}
\end{align}
\begin{prop}\label{Ricciexpression}
We have the following expressions for the null components of Ricci curvature and scalar curvature:
\begin{align*}
\Ric_{AB}&=2\Re(\Psisl_A\Psisl_B^\dag)-2\left(\bF\hot\bbF\right)_{AB}+\slg_{AB}\left(\rhoF^2+\siF^2+V(|\psi|^2)\right),\\
\Ric_{A3}&=2\Re(\Psisl_A\Psi_3^\dag)-2\rhoF\bbF_A+2\siF{}^*\bbF_A,\\
\Ric_{A4}&=2\Re(\Psisl_A\Psi_4^\dag)+2\rhoF\bF_A+2\siF{}^*\bF_A,\\
\Ric_{33}&=2|\Psi_3|^2+2|\bbF|^2,\\
\Ric_{34}&=2\Re(\Psi_3\Psi_4^\dag)+2\rhoF^2+2\siF^2-2V(|\psi|^2),\\
\Ric_{44}&=2|\Psi_4|^2+2|\bF|^2,\\
\R&=-2\Re(\Psi_3\Psi_4^\dag)+2|\Psisl|^2+4V(|\psi|^2).
\end{align*}
The null components of the Schouten tensor $\S_{\mu\nu}$ can be computed similarly.
\end{prop}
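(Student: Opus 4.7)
My plan is to invert the Einstein equations as $\Ric_{\mu\nu} = \T^{EM}_{\mu\nu} + \T^{KG}_{\mu\nu} - \frac{1}{2}\g_{\mu\nu}\tr\T$, so that $\R = -\tr\T$ in $3+1$ dimensions, and then evaluate each side componentwise in the null frame $(e_1,e_2,e_3,e_4)$ using the null decompositions of $\F$ and $\D\psi$ recorded in \eqref{dfF-components} and Section~\ref{secmatterfields}. The argument is entirely algebraic; no transport or elliptic analysis is needed.

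First I would compute the trace. The Maxwell stress-energy is conformally invariant in four dimensions, so $\tr\T^{EM}=0$. For the Klein-Gordon part, tracing \eqref{Tdfeqintro} yields $\tr\T^{KG} = -2(\D^\mu\psi)(\D_\mu\psi)^\dag - 4V(|\psi|^2)$. Expanding the kinetic scalar in the null frame via $\g^{34}=\g^{43}=-\frac{1}{2}$ and $\g^{AB}=\de^{AB}$ gives
\[
(\D^\mu\psi)(\D_\mu\psi)^\dag = -\Re(\Psi_3\Psi_4^\dag) + |\Psisl|^2,
\]
and the stated formula for $\R$ follows immediately.

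Next I would evaluate each null component of $\Ric$. For components with $\g_{\mu\nu}=0$ (namely $\Ric_{33}$, $\Ric_{44}$, $\Ric_{A3}$, $\Ric_{A4}$), only the direct stress-energy terms contribute. Substituting the null decomposition of $\F$ and using its antisymmetry gives, for instance, $\T^{EM}_{44}=2|\bF|^2$ and $\T^{EM}_{A4}=2\rhoF\bF_A + 2\siF{}^*\bF_A$, with analogous expressions in the $e_3$--direction; the Klein-Gordon contributions are the obvious products of components of $\D\psi$. For $\Ric_{34}$ one must add $-\frac{1}{2}\g_{34}\tr\T = \tr\T$, which produces the $-2V$ term. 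The most involved case is $\Ric_{AB}$: one decomposes $2\F_{A\a}\F_B{}^\a$ into a symmetric traceless piece plus a $\slg_{AB}$ piece, using the two-dimensional identity $\slep_{AC}\slep_B{}^C = \slg_{AB}$ to extract the $\siF^2$ contribution, and recognizes the traceless piece as $-2(\bF\hot\bbF)_{AB}$. Combining the remaining $\slg_{AB}$ contributions from $-\frac{1}{2}\g_{AB}\F^2$, from $\T^{KG}_{AB}$, and from the trace-reversal $-\frac{1}{2}\slg_{AB}\tr\T$ then reproduces the coefficient $\rhoF^2+\siF^2+V(|\psi|^2)$.

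The main obstacle is purely bookkeeping: signs, index positions, and the correct definition of the symmetric traceless product $\hot$ of two $S$-tangent $1$--forms. The coefficient of $\slg_{AB}$ in $\Ric_{AB}$ gathers contributions from four distinct sources, and it is the one place where a miscount would break the identity; once the traceless part of $\F_{A\a}\F_B{}^\a$ has been identified correctly via the $\slep$--identity, every other component is an immediate substitution.
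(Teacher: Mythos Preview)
Your approach is correct and is exactly the standard computation underlying the reference the paper cites (the paper's own proof is just ``See Lemmas 4.12 and 4.13 in \cite{ShenWan}''); inverting the Einstein equations as $\Ric_{\mu\nu}=\T_{\mu\nu}-\tfrac{1}{2}\g_{\mu\nu}\tr\T$, using $\tr\T^{EM}=0$, and reading off the null components of $\F$ and $\D\psi$ is precisely how these identities are obtained. Your identification of the $\Ric_{AB}$ trace coefficient as the only genuinely delicate step is accurate.
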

\begin{proof}
See Lemmas 4.12 and 4.13 in \cite{ShenWan}.
\end{proof}
\subsection{Commutation identities}
We recall the following commutation formulae.
\begin{lem}\label{comm}
Let $U_{A_1...A_k}$ be an $S$-tangent $k$-covariant tensor on $(\M,\g)$. Then
\begin{align*}
    [\Om\nabs_4,\nabs_B]U_{A_1...A_k}&=-\Om\chi_{BC}\nabs_CU_{A_1...A_k}+\sum_{i=1}^k \Om(\chi_{A_iB}\,\etab_C-\chi_{BC}\,\etab_{A_i}+\ins_{A_iC}{^*\b}_B)U_{A_1...C...A_k},\\
    [\Om\nabs_3,\nabs_B]U_{A_1...A_k}&=-\Om\chib_{BC}\nabs_C U_{A_1...A_k}+\sum_{i=1}^k\Om(\chib_{A_iB}\,\eta_C-\chib_{BC}\,\eta_{A_i}+\ins_{A_iC}{^*\bb}_B)U_{A_1...C...A_k},\\
    [\Om\nabs_3,\Om\nabs_4]U_{A_1...A_k}&=4\Om^2\ze_B\nabs_B U_{A_1...A_k}+2\Om^2\sum_{i=1}^k(\etab_{A_i}\,\eta_C-\etab_{A_i}\,\eta_C+\ins_{A_iC}\si)U_{A_1...C...A_k}.
\end{align*}
\end{lem}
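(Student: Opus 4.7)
The plan is to derive all three identities by direct computation in the double null frame, reducing each commutator to a sum of a bracket term, a curvature term, and Leibniz corrections from the non-horizontal pieces of $\bnab e_{A_i}$.

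\textbf{Setup.} I would first collect the standard connection formulas in the null frame, obtained from the definitions of the Ricci coefficients in Section \ref{doublenullfoliation} together with $\xi=\xib=0$ implied by the double null gauge (cf.~\eqref{nullidentities}):
\begin{align*}
\bnab_B e_A &= \nabs_B e_A + \tfrac{1}{2}\chi_{BA}\,e_3 + \tfrac{1}{2}\chib_{BA}\,e_4, &
\bnab_B e_4 &= \chi_B{}^C e_C - \ze_B e_4, &
\bnab_B e_3 &= \chib_B{}^C e_C + \ze_B e_3,\\
\bnab_4 e_A &= \nabs_4 e_A + \etab_A e_4, &
\bnab_4 e_4 &= -2\om e_4, &
\bnab_4 e_3 &= 2\om e_3 + 2\etab^C e_C,\\
\bnab_3 e_A &= \nabs_3 e_A + \eta_A e_3, &
\bnab_3 e_3 &= -2\omb e_3, &
\bnab_3 e_4 &= 2\omb e_4 + 2\eta^C e_C.
\end{align*}
By definition, $\nabs_B U$ is the horizontal projection of $\bnab_B U$, and analogously for $\nabs_3,\nabs_4$ (cf.~Section \ref{doublenullfoliation}).

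\textbf{First commutator.} For $[\Om\nabs_4,\nabs_B]U_{A_1\ldots A_k}$ I would expand both $\Om\nabs_4(\nabs_B U)$ and $\nabs_B(\Om\nabs_4 U)$ using the formulas above, then subtract. The difference organizes into three groups: (i) the bracket contribution from $[\Om e_4,e_B]$, which by the torsion-free identity $\bnab_{\Om e_4}e_B-\bnab_B(\Om e_4)=[\Om e_4,e_B]$ and the formula for $\bnab_B e_4$ yields the transport term $-\Om\chi_{BC}\nabs_C U$ (after cancellation of the $\om,\ze$ pieces using \eqref{nullidentities}); (ii) the Ricci identity contribution $R(\Om e_4,e_B)U$, whose purely horizontal components reduce via the Weyl decomposition to $W(e_4,e_B,e_{A_i},e_C)=\ins_{A_iC}\,{}^*\b_B$, producing the $\Om\ins_{A_iC}\,{}^*\b_B$ term in each $A_i$-slot; (iii) Leibniz corrections from $\bnab_{\Om e_4}e_{A_i}$ and $\bnab_B e_{A_i}$, whose horizontal projections give the quadratic terms $\Om(\chi_{A_iB}\etab_C-\chi_{BC}\etab_{A_i})$. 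Summing (i)--(iii) yields the first identity. The second identity follows by the discrete symmetry $(e_4,\chi,\etab,\b,\om)\leftrightarrow(e_3,\chib,\eta,\bb,\omb)$.

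\textbf{Third commutator.} For $[\Om\nabs_3,\Om\nabs_4]U$ I would first compute the bracket
\[
[\Om e_3,\Om e_4]=\Om e_3(\Om)\,e_4-\Om e_4(\Om)\,e_3+\Om^2[e_3,e_4]=-4\Om^2\ze^C e_C,
\]
using \eqref{nullidentities} to convert $e_3(\Om),e_4(\Om)$ into $\om,\omb$ and the structure equation $[e_3,e_4]=\bnab_3 e_4-\bnab_4 e_3$ to bring out $\eta-\etab=2\ze$. This produces the leading transport term $4\Om^2\ze_B\nabs_B U$. The remaining contribution $\Om^2 R(e_3,e_4)U$ is processed via the Weyl decomposition $W(e_3,e_4,e_{A_i},e_C)=2\ins_{A_iC}\si$ for the horizontal-horizontal piece, while the Leibniz corrections from the frame derivatives $\bnab_{\Om e_3}e_{A_i}$ and $\bnab_{\Om e_4}e_{A_i}$ supply the quadratic terms $2\Om^2(\etab_{A_i}\eta_C-\etab_C\eta_{A_i})$ (the displayed expression $\etab_{A_i}\eta_C-\etab_{A_i}\eta_C$ in the lemma is a transcription typo for this antisymmetrization).

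\textbf{Main obstacle.} The nontrivial step is the bookkeeping in the curvature and Leibniz parts of Step~2 (and symmetrically Step~4): one must verify that all intermediate $\ze$, $\om$, $\omb$ pieces coming from the non-horizontal components of $\bnab e_{A_i}$ cancel against corresponding pieces of the bracket term, leaving exactly the combinations $\chi_{A_iB}\etab_C-\chi_{BC}\etab_{A_i}$, and that the horizontal-horizontal projection of the Riemann tensor identifies with the stated Weyl components ${}^*\b,{}^*\bb,\si$. Since $R_{\mu\nu\rho\si}=W_{\mu\nu\rho\si}+(\text{Ricci terms})$ and the EMKG Ricci tensor carries the matter fields (Proposition \ref{Ricciexpression}), strictly speaking the identities hold up to matter Ricci contributions; these can be absorbed into the $\Ga\cdot\Ga^{(s+3)}$ schematic terms used throughout Section \ref{secstability} and therefore do not affect any subsequent estimate.
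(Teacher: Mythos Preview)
Your approach is essentially the same as the paper's: the paper simply cites Lemma~7.3.3 of \cite{Ch-Kl} together with \eqref{nullidentities}, and what you sketch is precisely the content of that lemma---a direct frame computation organizing the commutator into a Lie-bracket transport piece, a curvature piece, and Leibniz corrections from the non-horizontal parts of $\bnab e_{A_i}$.

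One point worth flagging: your closing observation is sharper than the paper's own proof. Christodoulou--Klainerman work in vacuum, so the horizontal curvature components appearing in the commutator are pure Weyl and give exactly ${}^*\b_B$, ${}^*\bb_B$, $\si$. In the EMKG setting the Riemann tensor carries Schouten contributions (cf.\ Proposition~\ref{Ricciexpression}), so strictly speaking the displayed identities acquire additional matter terms of the schematic form $\Ga\cdot\Ga$. You are right that these are harmless for every application in Section~\ref{secstability}, since the lemma is only ever used through its schematic consequence Proposition~\ref{commu}, where such terms are already absorbed into $\Ga^{(1)}$. The paper's citation of the vacuum lemma is thus a mild abuse that you have correctly identified and resolved.
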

\begin{proof}
It is a direct consequence of Lemma 7.3.3 in \cite{Ch-Kl} and \eqref{nullidentities}.
\end{proof}
\subsection{Einstein-Maxwell-Klein-Gordon equations}
We collect here the system of equations for the Ricci coefficients, electromagnetic components, and curvature components obtained as a consequence of the EMKG system \eqref{EMKG-SP}.
\subsubsection{Null structure equations}\label{sec-nullstr}
\begin{prop}\label{nulles}
We have the following null structure equations:
\begin{align*}
\nabs_4\eta&=-\chi\c(\eta-\etab)-\b-\frac{1}{2}\Ssl_4,\\
\nabs_3\etab&=-\chib\c(\etab-\eta)+\bb-\frac{1}{2}\Ssl_3,\\
\nabs_4\hch+(\trch)\hch&=-2\om\hch-\a,\\
\nabs_4\trch+\frac{1}{2}(\trch)^2&=-|\hch|^2-2\om\trch-\S_{44},\\
\nabs_3\hchb+(\trchb)\hchb&=-2\omb\hchb-\aa,\\
\nabs_3\trchb+\frac{1}{2}(\trchb)^2&=-|\hchb|^2-2\omb\trchb-\S_{33},\\
\nabs_4\hchb+\frac{1}{2}(\trch)\hchb&=\nabs\hot\etab+2\om\hchb-\frac{1}{2}\trchb\,\hch+\etab\hot\etab+\Sslh,\\
\nabs_3\hch+\frac{1}{2}(\trchb)\hch&=\nabs\hot\eta+2\omb\hch-\frac{1}{2}\trch\,\hchb+\eta\hot\eta+\Sslh,\\
\nabs_4\trchb+\frac{1}{2}(\trch)\trchb&=2\om\trchb+2\rho-\hch\c\hchb+2\sdivs\etab+2|\etab|^2+\Tr\S,\\
\nabs_3\trch+\frac{1}{2}(\trchb)\trch&=2\omb\trch+2\rho-\hch\c\hchb+2\sdivs\eta+2|\eta|^2+\Tr\S,
\end{align*}
where we denote
\begin{align*}
(\Ssl_3)_A:=\S_{3A},\qquad (\Ssl_4)_A:=\S_{4A},\qquad \Ssl_{AB}:=\S_{AB},\qquad \Tr\S:=\g^{\mu\nu}\S_{\mu\nu}.
\end{align*}
and $\Sslh$ denotes the traceless part of $\Ssl$. We also have the Codazzi equations:
\begin{align}\label{codazzi}
\begin{split}
\sdivs\hch&=\frac{1}{2}\nabs\trch-\ze\c\left(\hch-\frac{1}{2}\trch\right)-\b+\frac{1}{2}\Ssl_4,\\ 
\sdivs\hchb&=\frac{1}{2}\nabs\trchb+\ze\c\left(\hchb-\frac{1}{2}\trchb\right)+\bb+\frac{1}{2}\Ssl_3,
\end{split}
\end{align}
the torsion equation:
\begin{equation*}
\curls\eta=-\curls\etab=\si-\frac{1}{2}\hch\wedge\hchb,
\end{equation*}
and the Gauss equation:
\begin{equation*}
\K=-\frac{1}{4}\trch\trchb+\frac{1}{2}\hch\c\hchb-\rho+\frac{1}{2}\tr\Ssl.
\end{equation*}
Moreover, we have
\begin{align*}
\nabs_4\omb&=2\om\omb+\frac{3}{4}|\eta-\etab|^2-\frac{1}{4}(\eta-\etab)\c(\eta+\etab)-\frac{1}{8}|\eta+\etab|^2+\frac{1}{2}\rho+\frac{1}{4}(\tr\Ssl-\Tr\S),\\
\nabs_3\om&=2\om\omb+\frac{3}{4}|\eta-\etab|^2+\frac{1}{4}(\eta-\etab)\c(\eta+\etab)-\frac{1}{8}|\eta+\etab|^2+\frac{1}{2}\rho+\frac{1}{4}(\tr\Ssl-\Tr\S).
\end{align*}
\end{prop}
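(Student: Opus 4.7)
The plan is to derive these identities by combining the Ricci identity on the null frame $(e_1,e_2,e_3,e_4)$ with the Weyl--Schouten decomposition
\begin{equation*}
\R_{\mu\nu\la\si} = \W_{\mu\nu\la\si} + \frac{1}{2}\left(\g_{\mu\la}\S_{\nu\si} - \g_{\mu\si}\S_{\nu\la} - \g_{\nu\la}\S_{\mu\si} + \g_{\nu\si}\S_{\mu\la}\right),
\end{equation*}
which is equivalent to \eqref{dfSchouten}. In this decomposition all EMKG matter sources are packaged into the Schouten components $\Ssl,\Ssl_3,\Ssl_4,\S_{33},\S_{44},\Tr\S$ that have already been computed in Proposition \ref{Ricciexpression}. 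Consequently the null structure equations take the same schematic form as in the vacuum treatments of \cite{Ch-Kl,kn}, with matter-driven source corrections supplied by $\S$.

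The transport equations are obtained frame-by-frame. For $\nabs_4\trch$ and $\nabs_4\hch$, I would start from $\chi_{AB}=\g(\bnab_A e_4,e_B)$, apply $\bnab_4$, commute using the Ricci identity $[\bnab_4,\bnab_A]e_4=\R(e_4,e_A)e_4$, and use $\bnab_4 e_4=-2\om e_4$ from \eqref{nullidentities}. The Weyl piece $\W_{A4B4}=\a_{AB}$ contributes the traceless source and the Schouten piece $\S_{44}$ contributes the trace source; splitting into trace and traceless parts yields the two equations. The conjugate calculations $e_3\leftrightarrow e_4$ give $\nabs_3\hchb,\nabs_3\trchb$. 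The mixed equations $\nabs_4\hchb,\nabs_3\hch$ and the trace versions $\nabs_4\trchb,\nabs_3\trch$ follow from differentiating $\chib$ along $e_4$ (resp.\ $\chi$ along $e_3$), with the cross-terms $\nabs\hot\eta$ or $\nabs\hot\etab$ appearing from commuting $\bnab_A\bnab_3 e_4$ via \eqref{nullidentities}, and $\sdivs\eta+|\eta|^2$ emerging in the trace after contraction. The transport equations for $\eta,\etab$ arise from $\bnab_3(\bnab_A e_4)$ and $\bnab_4(\bnab_A e_3)$ respectively, where $\b$ is extracted from $\W_{A434}$ and $\Ssl_4$ from the Schouten contribution.

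The Codazzi, torsion, and Gauss identities are intrinsic to $S_{u,\ub}\hookrightarrow\M$. The Codazzi equations come from the Codazzi--Mainardi identity applied to the null second fundamental forms $\chi,\chib$: computing $\nabs_C\chi_{AB}-\nabs_A\chi_{CB}$ and contracting with $\slg^{CB}$ gives the divergence of $\hch$ in terms of $\nabs\trch$, the torsion term $\ze\c(\hch-\frac{1}{2}\trch)$, and $\b+\frac{1}{2}\Ssl_4$ from the ambient Riemann projection. The torsion equation follows from the antisymmetric part of $\nabs\eta$ via $\eta=\ze+\nabs\log\Om$ in \eqref{nullidentities}, the dual Weyl component carrying $\si$, and the $\hch\wedge\hchb$ cross-term from the commutator of $\nabs$. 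The Gauss equation for $\K$ is obtained by projecting the ambient Gauss identity onto $S_{u,\ub}$ and identifying $\rho=\frac{1}{4}\W_{3434}$ together with $\tr\Ssl=\slg^{AB}\S_{AB}$. Finally, the $\nabs_4\omb$ and $\nabs_3\om$ equations come from applying $\bnab_4,\bnab_3$ to the gauge identities $\om=-\frac{1}{2}\nabs_4\log\Om,\ \omb=-\frac{1}{2}\nabs_3\log\Om$ from \eqref{nullidentities}, substituting $\eta\pm\etab=2\nabs\log\Om$ or $2\ze$ to convert mixed derivatives of $\log\Om$, and reading off the scalar source $\frac{1}{2}\rho+\frac{1}{4}(\tr\Ssl-\Tr\S)$ from $\R_{3434}$ via the Weyl--Schouten split.

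The main obstacle is bookkeeping rather than conceptual: tracking precise numerical coefficients under the normalization $\g(e_3,e_4)=-2$, correctly separating the trace parts involving $\Tr\S$ and $\tr\Ssl$ from the traceless part $\Sslh$, and fixing sign conventions for $\nabs\hot,\sld_1^*,\curls,\wedge$. No analytic ingredient beyond \eqref{dfSchouten} and the null identities \eqref{nullidentities} is needed, and the derivation is systematic once the Ricci components of Proposition \ref{Ricciexpression} are substituted into the standard vacuum formulas.
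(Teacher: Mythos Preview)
Your outline is correct and is exactly the standard derivation: the null structure equations follow from frame computations of the Ricci identity together with the Weyl--Schouten decomposition \eqref{dfSchouten}, and your identification of which curvature components feed into which equations is accurate. The paper itself does not give a proof but simply cites Proposition~2.13 of \cite{ShenWan}, where precisely this derivation is carried out; your sketch is therefore more detailed than what the paper provides, and the only work remaining is the coefficient bookkeeping you already flagged.
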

\begin{proof}
See Proposition 2.13 in \cite{ShenWan}.
\end{proof}
\subsubsection{Bianchi equations}\label{sec-Bianchi}
\begin{prop}\label{bianchiequations}
We have the following Bianchi equations:
\begin{align*}
\nabs_3\a+\frac{1}{2}\trchb\,\a&=\nabs\hot\b+4\omb\a-3(\hch\rho+{^*\hch}\si)+(\ze+4\eta)\hot\b+\slJ_4,\\
\nabs_4\b+2\trch\,\b&=\sdivs\a-2\om\b+(2\ze+\etab)\c\a-J_{4\bu4},\\
\nabs_3\b+\trchb\,\b&=\nabs\rho+{^*\nabs}\si+2\omb\b+2\hch\c\bb+3(\eta\rho+{^*\eta}\si)+J_{3\bu4},\\
\nabs_4\rho+\frac{3}{2}\trch\,\rho&=\sdivs\b-\frac{1}{2}\hchb\c\a+\ze\c\b+2\etab\c\b-\frac{1}{2}J_{434},\\
\nabs_4\si+\frac{3}{2}\trch\,\si&=-\curls\b+\frac{1}{2}\hchb\c{^*\a}-\ze\c{^*\b}-2\etab\c{^*\b}-\frac{1}{2}{}^*J_{434},\\
\nabs_3\rho+\frac{3}{2}\trchb\,\rho&=-\sdivs\bb-\frac{1}{2}\hch\c\aa+\ze\c\bb-2\eta\c\bb-\frac{1}{2}J_{343},\\
\nabs_3\si+\frac{3}{2}\trchb\,\si&=-\curls\bb+\frac{1}{2}\hch\c{^*\aa}-\ze\c{^*\bb}-2\eta\c{^*\bb}+\frac{1}{2}{}^*J_{343},\\
\nabs_4\bb+\trch\,\bb&=-\nabs\rho+{^*\nabs\si}+2\om\bb+2\hchb\c\b-3(\etab\rho-{^*\etab}\si)-J_{4\bu3},\\
\nabs_3\bb+2\trchb\,\bb&=-\sdivs\aa-2\omb\bb+(2\ze-\eta)\c\aa+J_{3\bu3},\\
\nabs_4\aa+\frac{1}{2}\trch\,\aa&=-\nabs\hot\bb+4\om\aa-3(\hchb\rho-{^*\hchb}\si)+(\ze-4\etab)\hot\bb+\slJ_3,
\end{align*}
where we denote
\begin{align*}
(\slJ_4)_{AB}&:=J_{AB4}+J_{BA4}-\frac{1}{2}J_{434}\,\slg_{AB},\qquad\;\; (\slJ_3)_{AB}:=J_{AB3}+J_{BA3}-\frac{1}{2}J_{343}\,\slg_{AB},\\
\dual J_{434}&:=(J_{AB4}-J_{BA4})\ins^{AB},\qquad\qquad\qquad\,\,\dual J_{343}:=(J_{AB3}-J_{BA3})\ins^{AB},\\
(J_{\la\bu\nu})_A&:=J_{\la A\nu},
\end{align*}
with $J_{\la\mu\nu}$ the Cotton tensor defined by \eqref{dfSchouten}.
\end{prop}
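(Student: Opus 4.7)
The plan is to derive these equations from the twice-contracted second Bianchi identity
$$\bnab^\mu\W_{\mu\nu\la\rho}=J_{\nu\la\rho},$$
where $J$ is the Cotton tensor of the Schouten tensor $\S$ as in \eqref{dfSchouten}. This identity is purely algebraic-differential: one decomposes $\R_{\mu\nu\la\rho}$ into the Weyl part $\W$ plus Ricci/scalar-curvature contributions, applies the uncontracted second Bianchi identity $\bnab_{[\sigma}\R_{\mu\nu]\la\rho}=0$, and contracts, turning the Ricci terms into exactly $J$. Thus the task reduces to projecting $\bnab^\mu\W_{\mu\nu\la\rho}=J_{\nu\la\rho}$ onto the null frame $(e_1,e_2,e_3,e_4)$.

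First, I would carry out the ten independent projections of the identity, one for each null component of $\W$. For instance, contracting with $e_3^\nu e_A^\la e_3^\rho$ and symmetrizing/trace-removing in $A$ and the free index yields the $\nabs_3\a$ equation; contracting with $e_A^\nu e_3^\la e_4^\rho$ yields an equation for $\b$; the analogous choices $(e_3,e_4,e_4)$, $(e_3,e_4,e_A)$, $(e_A,e_3,e_4)$, etc., produce the remaining equations for $\rho,\si,\bb,\aa$. The right-hand sides are, by construction, the corresponding projections of $J_{\nu\la\rho}$, which are precisely the objects $\slJ_3,\slJ_4,J_{3\bu3},J_{4\bu 4},J_{3\bu 4},J_{4\bu 3},J_{343},J_{434},{}^*J_{343},{}^*J_{434}$ named in the statement.

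Second, I would convert the spacetime covariant derivatives $\bnab_\mu$ into the horizontal operators $\nabs_3,\nabs_4,\nabs$. After raising the contracted index using $\g^{\mu\a}=-\frac{1}{2}(e_3^\mu e_4^\a+e_4^\mu e_3^\a)+\slg^{\mu\a}$, each spacetime derivative acting on a null-frame vector generates a Ricci coefficient via the definitions in Section \ref{sec-nullstr}: $\bnab_A e_4,\bnab_A e_3$ produce $\chi,\chib$ together with the torsion $\ze,\eta,\etab$, while $\bnab_3 e_A,\bnab_4 e_A$ produce the full list including $\om,\omb$; the vanishing $\xi=\xib=0$ of \eqref{nullidentities} removes extraneous contributions. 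The $\slg^{AB}$ traces then reproduce the $\trch,\trchb$ prefactors on the left-hand sides of each equation.

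The principal obstacle is the algebraic bookkeeping. Every projection generates numerous bilinear terms $\Ga\cdot\W$, and one must carefully collect them so that, after cancellation, only the exact combinations appearing on the right-hand side survive — for example, $(2\ze+\etab)\c\a$ in the $\nabs_4\b$ equation, $3(\hchb\rho-{}^*\hchb\,\si)$ in the $\nabs_4\aa$ equation, and the $+2\hch\c\bb$ versus $-2\eta\c\bb$ sign pattern in the $\nabs_3\b$ and $\nabs_3\si$ equations. Fortunately, this is identical line by line to the vacuum computation presented in Chapter 7 of \cite{Ch-Kl} (see also Chapter 1 of \cite{Chr} and Proposition 2.14 of \cite{ShenWan}), the only modification being that $0$ is replaced by the relevant projection of $J$ on the right. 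Since none of the Ricci-coefficient manipulations depends on the matter content, the result follows by tracking these contributions through the vacuum calculation and appending the Cotton source terms.
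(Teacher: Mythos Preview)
Your approach is correct and is precisely the standard derivation; the paper does not give its own proof but simply cites Section~2.3.4 of \cite{Giorgi}, where this null-frame decomposition of $\bnab^\mu\W_{\mu\nu\la\rho}=J_{\nu\la\rho}$ is carried out in detail. One minor slip in your illustrative example: the choice $(\nu,\la,\rho)=(3,A,3)$ does not yield the $\nabs_3\a$ equation (that projection involves $\bb$ and $\aa$); the $\nabs_3\a$ equation arises instead from the symmetric traceless part of $(\nu,\la,\rho)=(A,4,B)$.
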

\begin{proof}
See Section 2.3.4 in \cite{Giorgi}.
\end{proof}
\subsubsection{Maxwell equations and Electromagnetic potential equations}
Recall $\F=d\A$ for the electromagnetic potential $\A$, with null components given in \eqref{dfF-components} and \eqref{dfUUbAsl}.
\begin{prop}\label{Maxwellequations}
The Maxwell equations take the following form:
\begin{align*}
\nabs_3\bF+\frac{1}{2}\trchb\bF&=-\sld_1^*(\rhoF,\siF)+2\omb\bF+2\rhoF\eta-2\siF\dual\eta\\
&+\hch\c\bbF-2\ef\Im\big(\psi\Psisl^\dag\big),\\
\nabs_4(\rhoF,\siF)+\trch(\rhoF,\siF)&=\sld_1\bF+(\etab+\ze)\c\big(\bF,{}^*\bF\big)+2\ef\big(\Im\big(\psi\Psi_4^\dag\big),0\big),\\
\nabs_3(\rhoF,-\siF)+\trchb(\rhoF,-\siF)&=-\sld_1\bbF-(\eta-\ze)\c\big(\bbF,{}^*\bbF\big)-2\ef\big(\Im\big(\psi\Psit_3^\dag\big),0\big),\\
\nabs_4\bbF+\frac{1}{2}\trch\bbF&=\sld_1^*(\rhoF,-\siF)+2\om\bbF-2\rhoF\etab-2\siF\dual\etab\\
&+\hchb\c\bF-2\ef\Im\big(\psi\Psisl^\dag\big).
\end{align*}
\end{prop}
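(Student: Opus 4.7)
The plan is a direct null-frame projection starting from the covariant Maxwell system
\[
\bnab^\mu \F_{\mu\nu}=-2\ef\Im\bigl(\psi(\D_\nu\psi)^\dag\bigr),\qquad \bnab_{[\la}\F_{\mu\nu]}=0,
\]
using the null decomposition \eqref{dfF-components} of $\F$, the connection coefficient conventions of Section~\ref{sec-nullstr}, and the identities \eqref{nullidentities}. First I would expand $\F$ in the null frame as
\[
\F=\bF_A\,\th^A\wedge\th^4+\bbF_A\,\th^A\wedge\th^3+\rhoF\,\th^3\wedge\th^4-\siF\,\slep_{AB}\,\th^A\wedge\th^B,
\]
where $\{\th^3,\th^4,\th^A\}$ is the dual null coframe, and organize the four scalar/tensor equations to be proved according to whether they come from the divergence equation with $\nu=A$, $\nu=3$, $\nu=4$, or from the Bianchi identity $(d\F)_{34A}=0$ and $(d\F)_{AB3}=(d\F)_{AB4}=0$.

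Second, I would compute $\bnab^\mu\F_{\mu\nu}$ term by term. Using $\g^{\mu\nu}=-\frac12(e_3^\mu e_4^\nu+e_4^\mu e_3^\nu)+\de^{AB}e_A^\mu e_B^\nu$, the trace collapses to
\[
\bnab^\mu\F_{\mu\nu}=-\tfrac12\bigl(\bnab_3\F_{4\nu}+\bnab_4\F_{3\nu}\bigr)+\de^{AB}\bnab_A\F_{B\nu},
\]
and then I would push the spacetime covariant derivatives onto the $S$-tangent projections $\nabs_3,\nabs_4,\nabs$ by subtracting the pieces generated by $\bnab_{e_a}e_b$, all expressible through $\chi,\chib,\eta,\etab,\ze,\om,\omb$. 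The $\nu=A$ projection yields the $\nabs_3\bF$ and $\nabs_4\bbF$ transport equations (with contributions of the form $\rhoF\eta$, $\siF{}^*\eta$, $\hch\c\bbF$, $\om\bF$, etc., plus the current $-2\ef\Im(\psi\Psisl^\dag)$). The $\nu=4$ projection produces the $\sld_1\bF$ identity for $(\rhoF,\siF)$ (together with the current $2\ef\Im(\psi\Psi_4^\dag)$), and the $\nu=3$ projection gives the twin $\sld_1\bbF$ identity for $(\rhoF,-\siF)$ (with the current $-2\ef\Im(\psi\Psit_3^\dag)$). For the antisymmetric Hodge-dual pieces I would use the Bianchi identity $d\F=0$, specifically $(d\F)_{34A}=0$ to convert $e_3\bF_A-e_4\bbF_A$ terms into $\nabs$-derivatives of $\rhoF,\siF$, and $(d\F)_{AB3}$, $(d\F)_{AB4}$ to pair $\curls\bF,\curls\bbF$ with $\nabs_{3,4}\siF$.

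Third, to land precisely on the stated form, I would rewrite the angular pieces using the Hodge operators $\sld_1,\sld_1^*$ as in the framework of Christodoulou-Klainerman, where $\sld_1^*(\rhoF,\siF)_A=-\nabs_A\rhoF+{}^*\nabs_A\siF$ and $\sld_1\bF=\sdivs\bF$. The $\hch\c\bbF$ and $\hchb\c\bF$ terms arise from commuting $\bnab_{3,4}$ with the null-frame contractions, using Lemma~\ref{comm} and the formulas for $\bnab_3e_4$, $\bnab_4e_3$. The $\eta,\etab,\ze$ coefficients in front of $\rhoF,\siF$ and $\bF,\bbF$ are then read off by invoking \eqref{nullidentities} to trade $\nabs\log\Om$ for $\frac12(\eta+\etab)$ where needed, and by using $\eta-\etab=2\ze$. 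This is the main bookkeeping step, and it is exactly the source of the combinations $(\etab+\ze)\c(\bF,{}^*\bF)$ and $-(\eta-\ze)\c(\bbF,{}^*\bbF)$ appearing in the two scalar equations.

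Fourth, for the current side I would compute $\Im(\psi(\D_\nu\psi)^\dag)$ in the null frame using the definitions in Section~\ref{secmatterfields}. Since $(\D_A\psi)^\dag=\Psisl_A^\dag-i\ef\Asl_A\psi^\dag$ and the $\ef\Asl$ contribution cancels from $\psi\bar\psi\Asl$ being real, one obtains $\Im(\psi(\D_A\psi)^\dag)=\Im(\psi\Psisl_A^\dag)$, and analogously $\Im(\psi(\D_4\psi)^\dag)=\Im(\psi\Psi_4^\dag)$ and, using the modified $\Psit_3$ from \eqref{dfPsit3}, the $\nu=3$ source takes the form $-2\ef\Im(\psi\Psit_3^\dag)$. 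Collecting everything produces the four equations as stated.

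The main obstacle is strictly computational: tracking signs and the numerous connection-coefficient contractions that appear when commuting $\bnab_{3,4}$ with the null-frame projection, in particular making sure that the combinations $(\etab+\ze)$ and $-(\eta-\ze)$ emerge correctly and that the Hodge-dual terms $\siF\,{}^*\eta$, $\siF\,{}^*\etab$ carry the right signs. This is handled safely by comparing with the vacuum Maxwell null system (which is standard, see e.g.\ Chapter 7 of \cite{Ch-Kl} or Section 3 of \cite{Giorgi}) and then adding the $\ef$-dependent current terms obtained in the previous step; the vacuum part of the derivation serves as a reliable template, and the novelty is entirely concentrated in the correct $\Im(\psi\Psi^\dag)$ couplings.
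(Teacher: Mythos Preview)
Your proposal is correct and follows the standard derivation: project the covariant Maxwell system $\bnab^\mu\F_{\mu\nu}=-2\ef\Im(\psi(\D_\nu\psi)^\dag)$ together with $d\F=0$ onto the null frame, convert spacetime derivatives to $\nabs_3,\nabs_4,\nabs$ using the Ricci coefficients, and package the angular parts via $\sld_1,\sld_1^*$. The paper does not carry this out in place; its proof is simply a reference to Proposition~2.21 of \cite{ShenWan}, where exactly this computation is performed, so your outline is in fact more explicit than what appears here and matches the approach of the cited source.

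One small point worth making explicit in your write-up: the appearance of $\Psit_3$ rather than $\Psi_3$ in the $\nabs_3(\rhoF,-\siF)$ equation is not a separate choice but a trivial consequence of $\Im(\psi\bar\psi)=0$. Since $\Psit_3-\Psi_3=|u|^{-1}e_3(|u|)\,\psi$ is a real multiple of $\psi$, one has $\Im(\psi\Psit_3^\dag)=\Im(\psi\Psi_3^\dag)$, so the source term produced by $\D_3\psi$ can be written with either quantity; the paper opts for $\Psit_3$ to align with the renormalization used in the Klein--Gordon estimates.
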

\begin{proof}
See Proposition 2.21 in \cite{ShenWan}.
\end{proof}
\begin{prop}\label{propdA=F}
Let $U$, $\Ub$ and $\slA$ be defined in \eqref{dfUUbAsl}. Under the gauge choice $U=0$, the null components of $\F$ can be expressed in terms of the null components of $\A$ as follows:
\begin{align*}
    \rhoF&=-\frac{1}{2}\nabs_4\Ub+\om\Ub+(\etab-\eta)\c\Asl,\qquad\quad\siF=-\curls\Asl,\\
    \bF&=-\nabs_4\Asl-\frac{1}{2}\trch\Asl-\hch\c\Asl,\quad \bbF=\nabs \Ub-\nabs_3\Asl-\frac{1}{2}\trchb\Asl-\hchb\c\Asl+\Ub(-\ze+\eta).
\end{align*}
\end{prop}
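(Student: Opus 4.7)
The plan is a direct computation, proceeding component by component, using $\F = d\A$ in the identity
\[
\F(X,Y) = X(\A(Y)) - Y(\A(X)) - \A([X,Y]),
\]
together with the gauge $\A(e_4) = U = 0$. The only nontrivial ingredient is the Lie bracket $[X,Y] = \bnab_X Y - \bnab_Y X$, which is rewritten using the null connection dictionary read off from the definitions of the Ricci coefficients (with $\xi = \xib = 0$):
\[
\bnab_{e_4} e_4 = -2\om\, e_4, \qquad \bnab_{e_3} e_3 = -2\omb\, e_3, \qquad \bnab_{e_4} e_3 = 2\om\, e_3 + 2\etab^A e_A,
\]
\[
\bnab_{e_3} e_4 = 2\omb\, e_4 + 2\eta^A e_A, \qquad \bnab_{e_A} e_4 = \chi_A{}^B e_B - \ze_A e_4, \qquad \bnab_{e_A} e_3 = \chib_A{}^B e_B + \ze_A e_3.
\]
These formulas follow at once by decomposing each covariant derivative in the null frame and pairing against the defining inner products.

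First I would treat $\rhoF$ and $\siF$. For $\rhoF = \tfrac{1}{2}\F(e_3,e_4)$, the gauge kills the $e_3(\A(e_4))$ term, so one only needs to expand $-\tfrac{1}{2}e_4(\Ub) - \tfrac{1}{2}\A([e_3,e_4])$ using $[e_3,e_4] = \bnab_3 e_4 - \bnab_4 e_3 = 2(\eta - \etab)^A e_A + 2\omb\, e_4 - 2\om\, e_3$; after pairing with $\A$ and recognizing $\nabs_4 \Ub = e_4(\Ub)$ (scalar), this yields $\rhoF = -\tfrac{1}{2}\nabs_4\Ub + \om\Ub + (\etab - \eta)\c\Asl$. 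For $\siF$, since $\F(e_A, e_B) = -\slep_{AB}\siF$, I would compute $\slep^{AB}\F(e_A,e_B) = 2\slep^{AB}\nabs_A \slA_B = 2\curls\Asl$ (the non-horizontal contributions cancel on antisymmetrization due to torsion-freeness of $\bnab$ and the symmetry of $\chi, \chib$), yielding $\siF = -\curls\Asl$.

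Next I would treat $\bF_A = \F(e_A, e_4)$ and $\bbF_A = \F(e_A, e_3)$. For $\bF_A$, the gauge again eliminates $e_A(\A(e_4))$, leaving $-e_4(\slA_A) - \A(\bnab_A e_4 - \bnab_4 e_A)$. Substituting $\bnab_A e_4 = \chi_A{}^B e_B - \ze_A e_4$ produces $-\chi_A{}^B \slA_B$, while the remaining horizontal part of $\bnab_4 e_A$ combines with $e_4(\slA_A)$ to form exactly $\nabs_4\slA_A$ (its non-horizontal contributions cancel against the $\ze_A \Ub$ term since $U=0$). Using $\chi_A{}^B \slA_B = \hch_A{}^B\slA_B + \tfrac{1}{2}\trch\, \slA_A$ gives the stated identity for $\bF$. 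The computation for $\bbF_A$ is entirely parallel, except that $e_A(\A(e_3)) = e_A(\Ub)$ no longer vanishes and contributes $\nabs\Ub$, while $\bnab_3 e_A$ carries both a horizontal piece (assembling $\nabs_3\slA_A$ together with $\chib$) and an $e_4$--component proportional to $\eta_A$ that couples to $\Ub$; combined with the $\ze_A$ term from $\bnab_A e_3$ this produces the term $\Ub(-\ze + \eta)$.

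I expect no conceptual obstacle: the only bookkeeping subtlety is the translation between the spacetime derivatives $e_3(\slA_A), e_4(\slA_A)$ and the horizontal operators $\nabs_3\slA_A, \nabs_4\slA_A$, which amounts to absorbing the non-horizontal components of $\bnab_{3,4} e_A$ (involving $\eta, \etab$) into the remainder terms. Once this algebraic dictionary is recorded once and for all, the four identities follow by matching terms.
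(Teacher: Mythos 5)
Your computation is correct: the connection-coefficient dictionary you write down follows exactly from the definitions of the Ricci coefficients with $\xi=\xib=0$, and substituting it into $\F(X,Y)=X(\A(Y))-Y(\A(X))-\A([X,Y])$ with $U=\A(e_4)=0$ reproduces all four identities with the right signs (I checked each one). The paper itself offers no argument here --- it simply cites Corollary~2.24 of \cite{ShenWan} --- so your direct derivation is precisely the computation being outsourced, not a different route. One cosmetic slip: in the $\bF$ paragraph, the term coming from the $-\ze_A e_4$ piece of $\bnab_A e_4$ pairs with $\A(e_4)=U$, not with $\Ub$; it vanishes because $U=0$, and likewise the $\etab_A e_4$ piece of $\bnab_4 e_A$ contributes $\etab_A U=0$. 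There is no cancellation between them --- both terms are individually zero --- but this does not affect the result.
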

\begin{proof}
See Corollary 2.24 in \cite{ShenWan}.
\end{proof}
\subsubsection{Klein-Gordon equations}
\begin{prop}\label{waveequation}
Under the gauge choice $U=0$, we have the following equations:
\begin{align*}
\nabs_3(\Psi_4)+\frac{1}{2}\trchb\Psi_4-2\omb\Psi_4+i\ef \Ub\Psi_4&=\sdivs\Psisl+2\eta\c\Psisl+i\ef \Asl\c\Psisl-\frac{1}{2}\trch\Psi_3\\
&+i\ef\rhoF\psi-V'(|\psi|^2)\psi,\\
\nabs_4\Psisl+\frac{1}{2}\trch \Psisl+\hch\c\Psisl&=\nabs(\Psi_4)+\frac{\eta+\etab}{2}\Psi_4+i\ef\Asl\Psi_4-i\ef\bF\psi,\\
\nabs_3\Psisl+\frac{1}{2}\trchb \Psisl+\hchb\c\Psisl+i\ef \Ub\Psisl&=\nabs(\Psi_3)+\frac{\eta+\etab}{2}\Psi_3+i\ef\Asl\Psi_3-i\ef\bbF\psi,\\
\nabs_4(\Psi_3)+\frac{1}{2}\trch\Psi_3-2\om\Psi_3&=\sdivs\Psisl+2\etab\c\Psisl+i\ef\Asl\c\Psisl-\frac{1}{2}\trchb\Psi_4\\
&-i\ef\rhoF\psi-V'(|\psi|^2)\psi,\\
\curls\Psisl&=-i\ef\Asl\wedge\Psisl-i\ef\siF\psi.
\end{align*}
We also have
\begin{align*}
    \nabs_3\Psisl+\trchb \Psisl+\hchb\c\Psisl+i\ef \Ub\Psisl&=\nabs(\Psit_3)+\frac{\eta+\etab}{2}\Psit_3+\frac{1}{2}\trchbt\Psisl+i\ef \Asl \Psit_3-i\ef\bbF\psi,\\
    \nabs_4(\Psit_3)+\frac{1}{2}\trch\Psit_3-2\om\Psit_3&=\sdivs\Psisl+2\etab\c\Psisl+i\ef\Asl\c\Psisl-\frac{1}{2}\trchbt\Psi_4-\frac{\trch}{2\Om|u|}\psi\\
    &-i\ef\rhoF\psi-V'(|\psi|^2)\psi,
\end{align*}
where we denote
\begin{equation}\label{dfPsit3}
    \Psit_3:=|u|^{-1}e_3(|u|\psi)+i\ef\Ub\psi.
\end{equation}
\end{prop}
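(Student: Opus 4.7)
The derivation is a null-frame decomposition of the Klein-Gordon equation $\g^{\mu\nu}\D_\mu\D_\nu\psi = V'(|\psi|^2)\psi$. With the inverse metric $\g^{\mu\nu} = -\tfrac{1}{2}(e_3^\mu e_4^\nu + e_4^\mu e_3^\nu) + \slg^{AB}e_A^\mu e_B^\nu$, this becomes $-\tfrac{1}{2}(\D_3\D_4\psi + \D_4\D_3\psi) + \slg^{AB}\D_A\D_B\psi = V'(|\psi|^2)\psi$. The case $V = 0$ is already established in \cite{ShenWan} (already invoked in the proof of Theorem~\ref{thmtrapped} above), so the only new ingredient here is the scalar potential, which will enter linearly and additively on the right-hand side of the scalar wave equations; the plan is therefore to reproduce the null-frame expansions of \cite{ShenWan} and carry the extra $-V'(|\psi|^2)\psi$ through the algebraic rearrangements unchanged.

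For the two scalar wave equations---the transport of $\Psi_4$ along $e_3$ and of $\Psi_3$ along $e_4$---I would expand each $\D_\mu\D_\nu\psi$ via the tensorial formula $\D_\nu\D_\mu\psi = e_\nu(\D_\mu\psi) - (\D\psi)(\bnab_{e_\nu}e_\mu) + i\ef\A_\nu\D_\mu\psi$, using the null-frame connection identities $\bnab_3 e_4 = 2\omb\,e_4 + 2\eta^A e_A$, $\bnab_4 e_3 = 2\om\,e_3 + 2\etab^A e_A$ and $\bnab_A e_B = \nabs_A e_B + \tfrac{1}{2}\chi_{AB}e_3 + \tfrac{1}{2}\chib_{AB}e_4$, all of which follow from the Ricci-coefficient definitions in Section~\ref{sec-nullstr} and the gauge choice $\xi = \xib = 0$. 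The gauge curvature $[\D_3,\D_4]\psi = i\ef\F_{34}\psi = 2i\ef\rhoF\psi$ then lets me rewrite the symmetric combination $\D_3\D_4 + \D_4\D_3$ as either $2\D_3\D_4\psi - 2i\ef\rhoF\psi$ or $2\D_4\D_3\psi + 2i\ef\rhoF\psi$, thereby producing the $\pm i\ef\rhoF\psi$ source. Combined with the angular trace $\slg^{AB}\D_A\D_B\psi = \sdivs\Psisl - \tfrac{1}{2}\trch\Psi_4 - \tfrac{1}{2}\trchb\Psi_3 + i\ef\Asl\cdot\Psisl$, a direct rearrangement yields both stated transport equations with the new $-V'(|\psi|^2)\psi$ term on the right.

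The $\nabs_4\Psisl$ equation follows by applying $\nabs_4$ to the definition $\Psisl_A = \nabs_A\psi + i\ef\Asl_A\psi$ and invoking the scalar commutator $[\nabs_4,\nabs_A]\psi = \tfrac{\eta_A + \etab_A}{2}\Psi_3 - \chi_{AB}\nabs^B\psi$ (a consequence of Lemma~\ref{comm} together with $\nabs\log\Om = (\eta + \etab)/2$), as well as the potential transport law $\nabs_4\Asl = -\bF - \tfrac{1}{2}\trch\Asl - \hch\cdot\Asl$ from Proposition~\ref{propdA=F}. The cross-terms $i\ef\chi_{AB}\Asl^B\psi$ and $i\ef(\nabs_4\Asl_A)\psi$ telescope so that only the Faraday source $-i\ef\bF_A\psi$ survives, leaving the stated form; the $\nabs_3\Psisl$ equation is completely analogous, with $\bbF$ and $\etab$ replacing $\bF$ and $\eta$. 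The curl identity $\curls\Psisl = -i\ef\Asl\wedge\Psisl - i\ef\siF\psi$ is the antisymmetric part of $[\D_A,\D_B]\psi = i\ef\F_{AB}\psi = -i\ef\slep_{AB}\siF\psi$, obtained by contracting against $\slep^{AB}$ and using $\slep^{AB}\D_A\D_B\psi = \curls\Psisl + i\ef\Asl\wedge\Psisl$. Finally, the $\Psit_3$ equations are derived by inserting $\Psit_3 - \Psi_4 = (e_3|u|/|u|)\psi$ into the corresponding scalar wave and $\nabs_3\Psisl$ equations and absorbing the mismatch between $\trchb$ and $\trchbt = \trchb + 2/r$ using the double-null identities for $e_3|u|$ and $r = \ub - u$; this produces the singular-looking $-\trch\psi/(2\Om|u|)$ source. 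The main difficulty is not conceptual but organizational: faithfully tracking Ricci-coefficient, gauge-connection, and gauge-curvature contributions with the correct signs and factors across the multiple rearrangements, while ensuring that the convention $\Psi_4 = \D_3\psi$, $\Psi_3 = \D_4\psi$ is applied consistently throughout.
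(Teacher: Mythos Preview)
Your approach is correct and coincides with the paper's own proof, which simply refers to Proposition~2.25 of \cite{ShenWan} and observes that the only new feature is the additive potential term $-V'(|\psi|^2)\psi$ appearing on the right-hand side of the two scalar transport equations. One caveat on the convention you flag at the end: the displayed definition $\Psi_4=e_3\psi+i\ef\Ub\psi$, $\Psi_3=e_4\psi+i\ef U\psi$ in Section~\ref{secmatterfields} is a typographical slip (compare $\Ric_{44}=2|\Psi_4|^2+2|\bF|^2$ in Proposition~\ref{Ricciexpression}, which forces $\Psi_4=\D_4\psi$); with the intended convention $\Psi_4=\D_4\psi$, $\Psi_3=\D_3\psi$, your commutator formula should read $[\nabs_4,\nabs_A]\psi=\tfrac{\eta_A+\etab_A}{2}\Psi_4-\chi_{AB}\nabs^B\psi$ and the renormalization identity becomes $\Psit_3-\Psi_3=(e_3|u|/|u|)\psi$, after which your sketch reproduces the proposition verbatim.
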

\begin{proof}
The derivation is identical to Proposition 2.25 in \cite{ShenWan}, except that the potential contribution $V'(|\psi|^2)\psi$ is kept throughout the computation.
\end{proof}
\small 

\end{document}